\newcommand{\plas}{\mathfrak{p}}
\newcommand{\Pad}{\mathop{\rm Pad}}
\newtheorem{lemma}{Lemma}
\newtheorem{pros}{Proposition}[section]
\begin{document}
\thispagestyle{empty}%
\noindent\makebox[\textwidth][c]{%
	\begin{minipage}{\linewidth}
		\centering
		\includegraphics[width=0.3\linewidth]{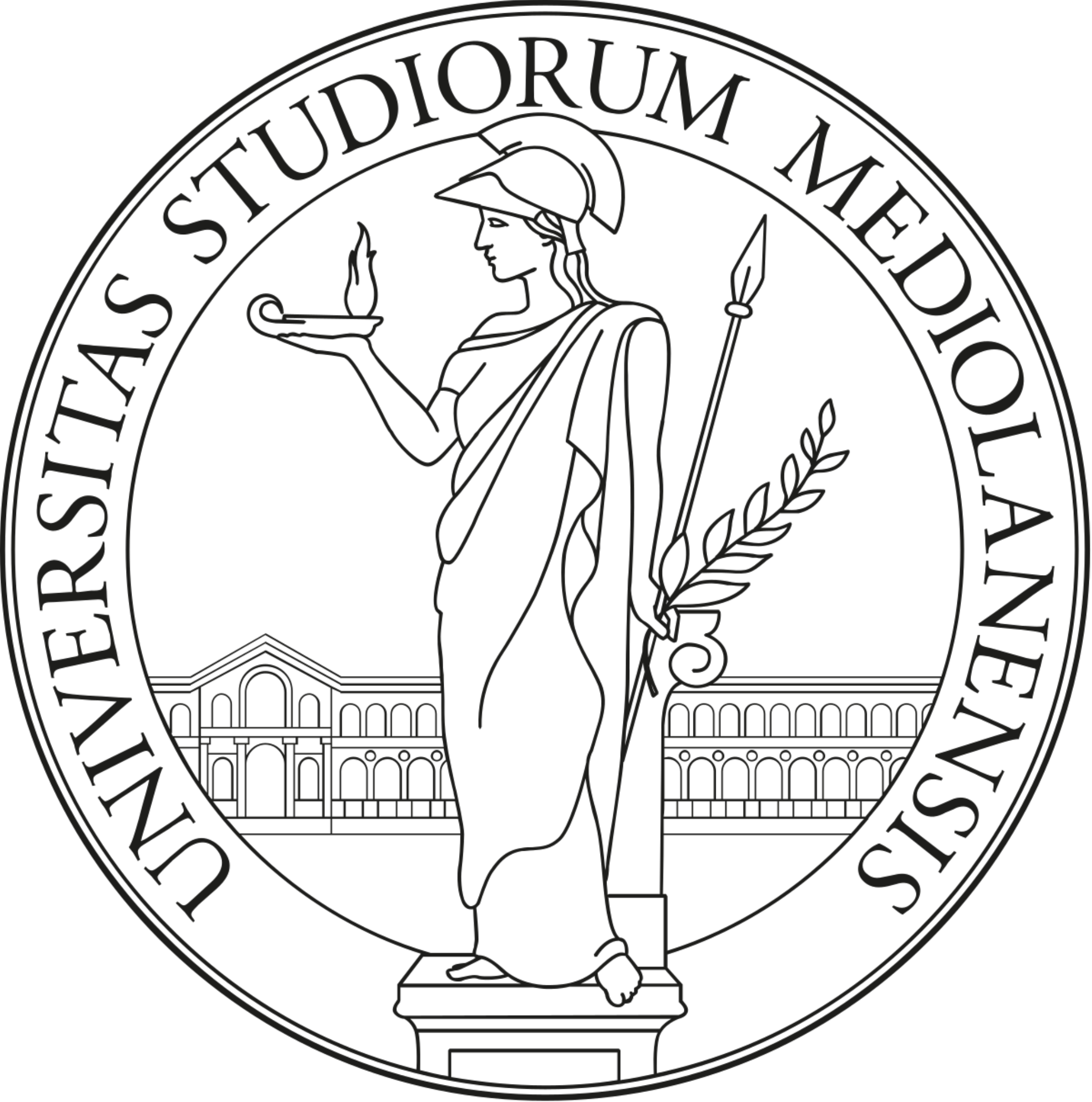}\\
		\medskip
		{{\LARGE \textbf{UNIVERSIT\`A DEGLI STUDI DI MILANO}}\\
		\Large Department of Physics\\
		\medskip
		\textbf{PhD School in Physics, Astrophysics, and Applied Physics}\\
		\bigskip
		Cycle XXXII}
		\par\vspace{2.5cm}
		{\uppercase{\Large Euclidean correlations in combinatorial optimization problems: a statistical physics approach \par}}
		\medspace
		{\large Disciplinary scientific sector: FIS/02 \par}
		\vspace{2.5cm}
		
		\begin{minipage}{0.6\linewidth}
			{\large \textbf{Director of the School:}} \\
			{\large Prof.~Matteo Paris}\par
			\medskip
			{\large \textbf{Supervisor of the Thesis:}} \\
			{\large Prof.~Sergio Caracciolo}\\
		\end{minipage}
		\hfill
		\begin{minipage}{0.35\linewidth}
			{\Large \textbf{PhD Thesis of:}} \par
			\smallskip
			{\Large Andrea Di Gioacchino}
		\end{minipage}
		\par
		\vspace{2.8cm}
		{\Large A.Y.~2019-2020}
	\end{minipage}}
\newpage

\loadgeometry{origin}

\section*{Acknowledgements}

My PhD has been a long journey, full of both beautiful and difficult moments. 

I would never have been able to arrive here without the help of many people, and I am writing these lines to express my deepest thanks to each of them.
First of all, I wish to thank my supervisor, Professor Sergio Caracciolo. I have had the luck of enjoying his precious advices, and a fruitful scientific collaboration. And I also am grateful to him for the freedom he always granted me, for all the time he spent helping me in many situations, and for his lectures on Conformal Field theory.

In addition to my supervisor, I also had other people guiding me in the stormy waters of the PhD: Professor Luca Guido Molinari, who has been an invaluable mentor in several scientific projects, as well as a wonderful teacher of Random Matrix theory, and Salvatore Mandr\`a, who supervised my work during my internship at NASA Ames in California and taught me many things about optimization and quantum computing.

In a sense, the people involved with my PhD can be compared with family members, and two of them have the role of elder brothers: Pietro Rotondo and Marco Gherardi. They have been always present for technical discussions about (the most diverse topics of) physics, but also for advices about schools, conferences, applications for post-docs, and many other important choices I made during my PhD.

At the end of a doctoral course, it is common to have several collaborators. I had the privilege to have many of them also as friends: Riccardo Capelli, Vittorio Erba, Riccardo Fabbricatore, Enrico Malatesta, Alessandro Montoli, Mauro Pastore, German Sborlini, Federica Simonetto. Thank you for all the scientific interactions, but also for all the time spent having fun together.

I also want to thank the QuAIL group at NASA Ames, for their kindness and scientific training during my internship, especially Eleanor Rieffel, Davide Venturelli, Gianni Mossi, Norm Tubman, Jeff Marshall, Eugeniu Plamadeala.

In addition to the scientific staff of the University of Milan and the ``Istituto Nazionale di Fisica Nucleare'', I also want to thank the administrative/management staff, and in particular Andrea Zanzani.

Few other very special people helped me during my PhD even though they are not experts in statistical physics: my family. They have provided me with constant support and love, and gave me the strength to get through the hardest moments. \\
Finally, thank you Greta. This (and many other things) would have been impossible without you.

\clearpage
\pagebreak

\tableofcontents

\pagebreak

\chapter{Why bothering with combinatorial optimization problems?} 
\chaptermark{Why bothering with COPs?}
Combinatorial optimization problems (COPs) arise each time we have a (finite) set of choices, and a well defined manner to assign a ``cost'' to each of them.
Given this very general (as well as rough) definition, it should not be surprising that we encounter many COPs in our everyday life: for example, it happens when we use Google Maps to find the fastest route to our workplace, or to a restaurant. 
But we deal with COPs also in much more specific situations, ranging from the creation of safer investment portfolios to the training of neural networks.
Despite their ubiquity, COPs are far from being completely understood. The most impressive example of our lack of knowledge is the so-called ``P vs NP'' problem, which puzzles theoretical computer scientists and mathematics since 1971, when Levin and Cook discovered that the Boolean satisfiability problem is NP-complete \cite{Cook1971}.

The study of COPs attracted soon the statistical physics community which, in those years, was beginning the study of spin glasses and thermodynamics of disordered systems. The connection between COPs and thermodynamics was clear since the work of Kirckpatrick, Gelatt and Vecchi \cite{Kirkpatrick1983}, and after that it became even stronger when physicists realized that ``random'' COPs (RCOPs) display phase-transition like behaviors (the so-called SAT-UNSAT transitions) \cite{Kirkpatrick1994}. The application of statistical mechanics techniques to COPs flourished after the seminal paper by Mezard and Parisi \cite{Mezard1985}, where they applied the so-called replica method to study typical properties of the random matching problem. Their results, together with those obtained after them, are astonishing and elegant, but they heavily rely on a sort of ``mean-field'' assumption: the cost of each possible solution of the COP studied is a sum of \emph{independent} random variables. Let us be more precise with an example: consider the problem of going from the left-bottom corner of a square city to the opposite one. The possible solutions (or configurations) are the sequences of streets that connect these two corners of the city, and the cost of a possible solution is the total length of the path. In the random version of the problem, one consider an \emph{ensemble} of cities, each of them with its pattern of streets, and a distribution of probability on them. In this case one is interested statistical properties of the ensemble such as the average cost of the solution, rather than the cost of the minimum-length path for a specific city (instance). 
In this example the mean-field approximation would consist in choosing the ensemble such that the length of each road is an independent random variable. On the opposite, in the original problem the \emph{Euclidean} structure of the problem introduces correlations between the street lengths, which are completely neglected in the mean-field version of the problem.
Euclidean correlations are not the only possible which are neglected in mean-field problems: using again the examples given before, assets that can be used in a portfolio are correlated (for example, shares of two companies in the same business area) and images used to train a neural network are typically ``structured'', in opposition with the hypothesis of mean-field problems.

Most of this manuscript will deal with the introduction of Euclidean correlations in RCOPs. We will see that several RCOPs can be analyzed with a well-understood formalism in one spatial dimension, and this can sometimes be extended, in very non-trivial ways, to two-dimensional problems. 

We will also discuss another route toward solutions of COPs that physicists (together with mathematics and computer scientists) are exploring in this years with intense interest: using quantum computers to solve hard combinatorial optimization problems. Even though the original idea has been discussed by Feynman in 1982 \cite{Feynman1982}, many questions are still without an answer. Here we will use Euclidean COPs as workhorse to analyze some of the open questions of the field.

This manuscript is organized as follows:
\begin{itemize}
\item In Chap.~\ref{chap::first} we introduce all the necessary formalism to deal with COPs and RCOPs from the statistical mechanics point of view. In particular we start by defining formally what a   COP is, and explaining why the statistical physics framework is a useful point of view to study COPs (Secs.~\ref{sec::cop}, \ref{sec::statphys}, \ref{sec::complexity}). We also briefly review the more relevant points (for our discussion) of spin glass theory, using the spherical p-spin problem as an example (Secs.~\ref{sec::spinglass}, \ref{sec::pspin}). Finally, we discuss large deviation theory (again using the spherical p-spin model) as a possible path to go beyond the study of the typical-case complexity for RCOPs.
\item In Chap.~\ref{chap::second} we address the problem of Euclidean correlations in RCOPs. We discuss firstly why we tackle problems starting from the 1-dimensional case (Sec.~\ref{sec::1d}), then we analyze in details several problems where our techniques can be used (Sec.~\ref{sec::matching}, \ref{sec::tsp}, \ref{sec::2factor}).
\item In Chap.~\ref{chap::third} we briefly discuss why quantum computation can be useful to solve COPs (Sec.~\ref{sec::quantcomp}), using the famous Grover problem as an example (Sec.~\ref{sec::groverorig}). Then we introduce two general algorithms of quantum computing which are used to solve COPs, the quantum adiabatic algorithm (Sec.~\ref{sec::qaa}) and the quantum approximate optimization algorithm (Sec.~\ref{sec::qaoa}). In the QAA case, we also analyze the performance of the DWave 2000Q quantum annealer to solve a specific COP problem, and we use the results obtained to address one of the current problems for the QAA, the so-called parameter setting problem (Sec.~\ref{sec::qaa_parset}).
\item Finally, in Chap.~\ref{chap::final} we summarize the main results of this work and explore the possibility of future works to further extend our understanding of COPs with correlations.
\end{itemize}
Throughout the manuscript, we make the effort to relegate the technical details of computations in the appendix, whenever possible, to lighten the text and to ease the reading. To do that, we have an appendix for each main chapter where we put the corresponding technical computations.

\chapter{Statistical physics for combinatorial optimization problems}\label{chap::first}
\chaptermark{Statistical physics for COPs}
Combinatorial optimization problems (COPs) have been addressed by using methods coming from statistical physics almost since their introduction. In this chapter we give a concise introduction of both COPs and the statistical physics of disordered systems (spin glass theory), with a focus on the links between these two fields. 
An important remark is due: both COPs and spin glass theory are deep and well-developed topics, and we do not want (neither would be able) to give a comprehensive review of them. In fact, we will limit ourselves to introduce the basic notions that we will need here and in the following chapters.
\section{Combinatorial optimization problems}\label{sec::cop}
Consider a finite set $\Omega$, that is $|\Omega|<\infty$, and a \emph{cost function} $C$ such that
\begin{equation}
	C: \Omega \to \mathbb{R}.
\end{equation}
The combinatorial optimization problem defined by $\Omega$ and $C$ consists in finding the element $\sigma^\star \in \Omega$ such that 
\begin{equation}
	\sigma^\star=\arg \min_{\sigma\in\Omega} C(\sigma).
\end{equation}
We will call the set $\Omega$ \emph{configuration space}, each element of the configuration space will be a configuration (of the system). We will call $C$ also \emph{Hamiltonian} of the system (sometimes we will also use the label $H$ for it, instead of $C$) and $C(\sigma)$ will be the cost or energy of the configuration $\sigma$. Notice that we are willingly using a terminology borrowed from the physics (and statistical mechanics) context, but up to this point this is pure appearance. However, as we will see in the following, this choice has deep root and can lead to extremely useful insights.

Let us now consider an example of COP. Suppose you and a friend of yours are invited to a bountiful feast. The two of you sit at the table, and then start discussing about who should eat what, since each dish is there in a single portion. Therefore you assign a ``value'' to each dish, and try to divide all of them in two equally-valued meal. This is the so-called ``integer partitioning'' problem: given a set $\{a_1, \dots, a_N \}$ of $N$ integer positive numbers, find whether there is a subset $A$ such that the sum of elements in $A$ is equal to the sum of those not in $A$, or their difference is 1, if $\sum_{i=1}^N a_i$ is odd. More precisely, this is the ``decision'' version of the problem, that is it admits a yes/no answer. We will see later the importance of decision problems, while we will focus here on restating the problem as an optimization one: given our set $\{a_1, \dots, a_N \}$, find the subset $A$ which minimizes the cost function
\begin{equation}\label{eq::cost_ipp}
	C(A) = \abs{\sum_{j\in A} a_j - \sum_{j\notin A} a_j}.
\end{equation}
Therefore in this case the configuration of a system is the subset $A$, and its cost is the ``un-balance'' between the elements of $A$ and those not belonging to $A$. Also notice that if one can solve the optimization problem, then the solution to the decision problem is readily obtained.


Each COP 
has some parameters which fully specify it, which most of the times are inside the cost function. These parameters are, basically, the input of our problem. When the full set of these parameters is given, we say that we have an \emph{instance} of our COP. For example, an instance of the integer partitioning problem is specified by the set $\{a_1, \dots, a_N \}$.\\
If we decide to deal with a COP in general, that is without specifying an instance, we have two choices: we can start searching for an algorithm to solve our problem for each possible value of the input, or try to say something more general about the solutions.
The first one is the direction (mostly!) taken by computer scientists (however, we will say something about it later), while physicists (mostly!) prefer to analyze the problem from the second point of view. We will follow this second road, but to do that we have to deal with the fact that the solution will depend drastically on the specific instance of the problem. \\
The way out this thorny situation consists in defining an \emph{ensemble} of instances and in giving to each of them a certain probability to be selected. Then many interesting quantities can be computed by averaging over this ensemble, so they do not depend anymore on any specific instance.
For example, let $\Omega$ and $C$ be respectively the configuration space and the cost function of a given COP. An instance is specified by the continuous parameters $\underline{x}$, so we will have $C=C_{\underline{x}}$ and the joint probability $p(\underline{x})$ over the parameters (and therefore over the instances). A quantity that we will be interested in is the average cost of the solution of our problem, which is given by
\begin{equation}\label{eq::gen_avg_on_dis}
	\overline{C^\star} = \overline{\min_{\sigma\in\Omega} C_{\underline{x}}(\sigma)} = \int \, dx \, p(\underline{x}) \min_{\sigma\in\Omega} C_{\underline{x}}(\sigma).
\end{equation}

How do we choose $p(\underline{x})$? In general, we would like to have an ensemble and a $p(x)$ such that the averages over the ensemble are representative of the typical case of our COP. That is, we hope that if we define an ensemble of integer partitioning problems, than our findings will be useful for our banquet problem.\\
This observation brings us to another important point: on one hand, we would like to have simple ensembles, where we can carry out as much analytic computations as possible; on the other hand, this is typically a oversimplified situation. For example, the standard ensemble defined for integer partitioning is composed of all the possible instances made of $N$ integers of the set \{0, 1, \dots, $2^{b-1}$\} (for a certain parameter $b$), and each of them has the same probability. In practice, this is done by choosing at random $N$ integers from our possibility set, each time we need an instance. \\
We will say something about what can be learned from this ensemble of integer partitioning, but one can immediately see that our banquet problem is considerably different: when choosing the value of each dish, you will probably have a lot of correlations. For example, you could decide to give to an ingredient, say sea bass, a high value and therefore all dishes containing sea bass will have a high, correlated value. And you could (and probably would) do the same with many other ingredients. This is an example of \emph{structure} in our instance, which is often difficult to capture with simpler ensembles where each parameter of the problem is uncorrelated with the others.

In the following, we will deal a lot with a specific kind of structure, that is the one induced by Euclidean correlations.

\section{Why statistical physics?}\label{sec::statphys}
The paradigms of statistical physics, and in particular those of spin glass theory, are particularly suited to deal with RCOPs. There are three main reasons for this fact, that we will now discuss.

\subsection{Partition functions to minimize costs}
A COP is defined by its configuration space $\Omega$ and its cost function $C$, and we are interested in finding its minimum. We introduce a fictitious temperature $T$ and its inverse $\beta$, and define the partition function of our problem as
\begin{equation}\label{eq::gen_part_func}
	Z = \sum_{\sigma \in \Omega} e^{-\beta C(\sigma)},
\end{equation}
where the name, partition function, arises from the fact that we are interpreting the cost of a configuration the energy of a (statistical) physics system. 
When the temperature is sent to zero, only the solutions of the COP, which minimizes $C(\sigma)$, are relevant in Eq.~\eqref{eq::gen_part_func}. Therefore, in this sense, a COP can be seen as the zero-temperature limit of a statistical physics problem.
We can compute many quantities starting from this point of view, but we will mainly be interested in the following:
\begin{equation}\label{eq::gen_free_en}
	F(\beta) = - \frac{1}{\beta} \log Z,
\end{equation}
since when we send $\beta\to\infty$ this quantity is the cost of the solution of our COP.
A useful consequence of the parallelism between low-temperature thermodynamics and COP that we just described is that we can use the well-developed techniques coming from the first field to address problems in the second. The first successful example of this program is the celebrate \emph{simulated annealing} algorithm, introduced by Kirkpatrick, Gelatt and Vecchi in~\cite{Kirkpatrick1983}.

Of course, there is no way we are able to compute $Z$ and $F$ for a given instance of a realistic (not over simplified) COP since both of these quantities depend on the parameters which define our instance. Here the idea of RCOPs comes in our help, and we can connect our formalism with that of disordered systems. We define an average, labeled by an overline, exactly as in Eq.~\eqref{eq::gen_avg_on_dis} and
\begin{equation}\label{eq::gen_free_en_av}
	F_{\text{av}}(\beta) = \overline{F(\beta)} = - \frac{1}{\beta} \overline{\log Z}.
\end{equation}
The computation of this quantity is at the heart of the so-called spin glass theory (see, for example, the books \cite{Mezard1987book, Dotsenko2005,Nishimori2001}) and several methods have been devised to deal with this kind of problems. Later we will review in detail one of these methods, the celebrated replica method.

Before moving to the next section, we want to add an important remark: the average done in Eq.~\eqref{eq::gen_free_en_av} is called ``with \emph{quenched} disorder'' and it very different from computing the average of the partition function first, $\overline{Z}$, and then taking its logarithm (which is called ``with \emph{annealed} disorder''). In general, the difference is that in the annealed case the disorder degrees of freedom are considered on the same footing of the ``configurational'' degrees of freedom of our systems, while in the quenched case the thermodynamic degrees of freedom are only the configurational ones, and the average over the disorder is done \emph{after} the computation of the partition function.\\
This distinction is very sharp when we take the COP/RCOP point of view: computing Eq.~\eqref{eq::gen_free_en_av} (quenched case) corresponds to take many instances of our COP, computing each time the cost of the solution, and then take the average of that. On the other hand, when we compute the annealed version of Eq.~\eqref{eq::gen_free_en_av} (that is, the one with $\log\overline{Z}$ instead of $\overline{\log Z}$) we are solving \emph{one single} instance of a COP, which in general will be different from the one we started with because of our average operation.

\subsection{Phase transition in RCOPs}\label{sec::phase_trans}
The connection between statistical physics and RCOPs  goes beyond the simple fact that we can use methods developed for the former to deal with the latter. This became clear after a first sequence of works~\cite{Goerdt1990, Chvatal1992, Kirkpatrick1994}, where it has been discovered that a certain COP, the $k$-SAT problem, when promoted to its random version, exhibits a behavior which is strongly reminiscent of a statistical-mechanics phase transition. 
In the $k$-SAT problem, the input is a sequence of $M$ clauses, in each of which $k$ variables are connected by the logical operation OR ($\vee$). There are $N$ different variables, which can appear inside the $M$ clauses also in negated form. For example $x\vee\overline{y}\vee\overline{z}$ is a possible clause of an instance of 3-SAT. The problem consists in finding an assignment to each variable such that all the clauses return \emph{TRUE}, or to say that such an assignment does not exist.\\
At the beginning of the 90s it has been discovered that, given the ratio $\alpha = M/N$ and giving the same probability to each instance of $k$-SAT with parameter $\alpha$, when $\alpha<\alpha_c$ the probability of finding an instance that can be solved goes to zero when $N\to\infty$, and if $\alpha>\alpha_c$ this probability goes to one when $N\to\infty$. This is the so-called SAT-UNSAT transition for the $k$-SAT problem, and $\alpha_c$ is a quantity which depends on $k$.\\ 
Actually, $k$-SAT problems with $k\geq3$ exhibit a sequence of phase transitions, discovered in following works (which are reviewed, for example, in Chapter 14 of~\cite{Moore2011} and treated in detail in~\cite{Zdeborova2009}), which the (random) system encounter if we change $\alpha$ from zero to $\alpha_c$. 

Other SAT-UNSAT transitions have been found in many other problems quite different from the $k$-SAT, for example the Traveling Salesman Problem (TSP)~\cite{Gent1996}, which we will discuss later, and the familiar integer partitioning problem (IPP)~\cite{Mertens1998}. Most of the times the transition is found by extensive numerical experiments, while for the IPP the critical point can be computed analytically. To have a feeling of why this transition happens, we will present here an intuitive argument which allows to obtain the correct transition point. Following Gent and Walsh~\cite{Gent1996phase}, we consider the IPP problem where $n$ values are taken from the set $\{0, \dots, B \}$. Giving a choice of a subset $A$, we compute $C$ as in Eq.~\eqref{eq::cost_ipp} and we notice that $C \leq n B$, so we can write it as a sequence of about $\log_2 n + \log_2 B$ bits. Remember that we want to take the limit $n\to\infty$, so there is no need to be very precise with the number of bits since we are in any case neglecting sub-dominant terms. Now, the problem has a solution if $C=0$ or $C=1$, therefore all the bits of $C$ but the last have to be $0$ for the problem to admit a solution. This corresponds to $\simeq \log_2 n + \log_2 B$ constraints on the choice of $A$. Let us suppose now that, for a random instance of the problem, each given choice of $A$ has probability $1/2$ of respecting each constraint. This is false, as one can easily argue, but it turns out to be a correct approximation to the leading and sub-leading order in $n$. \\
Given that there are $2^n$ different partitions of $n$ objects, the expected number of partitions which respect all constraints is
\begin{equation}
	\mathbb{E}[\mathcal{N}] = 2^{n- (\log_2 n + \log_2 B)}.
\end{equation}
The critical point is given by $ \mathbb{E}[\mathcal{N}]=1$, so
\begin{equation}
	\log_2 B = n - \log_2 n,
\end{equation}
that is $B \simeq 2^n$ to the first order. Actually, the approximation of independent constraints used is correct up to the second order, but the number of bits in $C$, that is the number of constraints, is overestimated by this simple argument. Indeed we have used the maximum $C$, which is a crude approximation of the typical one.

A more formal treatment giving the same result at the first order and the correct one at the second order can be found in the beautiful book of Moore and Mertens~\cite{Moore2011} (chapter 14) or, in a language more familiar to the statistical physics community, in~\cite{Mertens2001}. In Fig.~\ref{fig::ipp_phase_trans} we report the result of a numerical experiment showing the SAT-UNSAT phase transition of the IPP.

\begin{figure}
	\centering
	\includegraphics[width=0.95\columnwidth]{./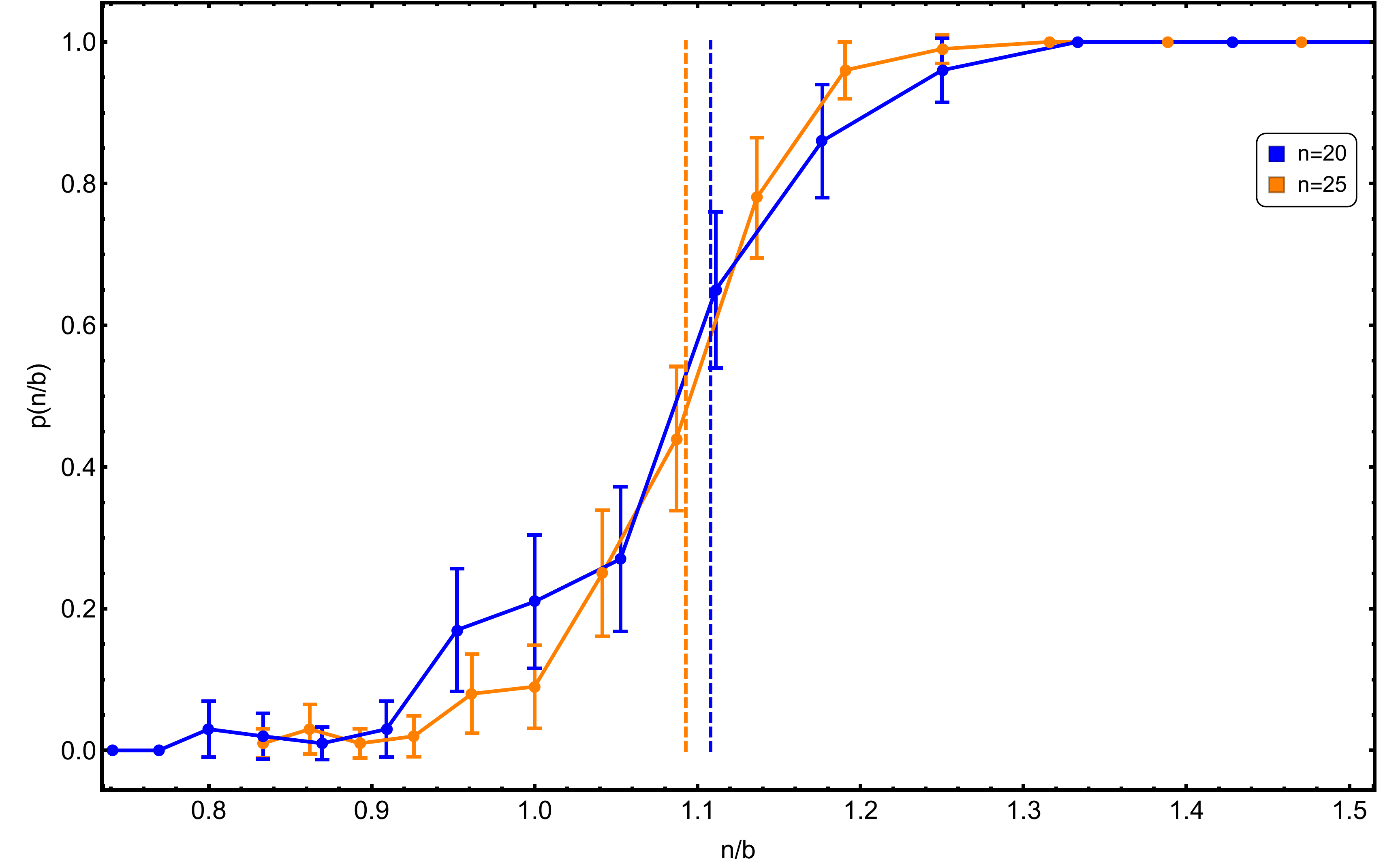}
	\caption{Numerical results for the probability of an instance of the integer partition problem with $n$ integers in the set $\{1,\dots, 2^b\}$ to have a solution, as a function of the ratio $n/b$. Each point is obtained by randomly extracting 100 instances of the problem, solving them and computing the number of instances with a solution. As we can see, at $n/b\simeq 1.1$ there is a transition from instance with a small probability of success to instances with a high probability of success. The vertical, dashed lines are the critical points given by $b = n - \log_2(n)/2$ (see~\cite{Moore2011}, chapter 14), computed up to the sub-leading term in $n$ to account for the finite size of the system. }\label{fig::ipp_phase_trans}
\end{figure}

\subsection{Back to spin models}\label{sec::cop_spin}
Another point in common of COPs and RCOPs with statistical physics, is that we can often write COPs cost functions as Hamiltonians in which the thermodynamical degrees of freedom are spin variables. For many COPs studied with statistical physics techniques this has been actually the first step. In this case a configuration of the system is given by specifying the state of all the spins. \\
Although the re-writing of the COP as a spin problem can be very useful, there is not a general procedure and in many problems with constraints (as we will discuss later) there is often a certain freedom in choosing the spin system. Indeed the minimum request the spin system has to satisfy is that given its ground state we can obtain the solution of the original COP. 

To be more concrete, let us discuss a spin system associated to the familiar IPP. Given a set $\{a_1, \dots, a_n \}$, we can specify a partition $A$ by assigning a spin variable $\sigma_i$ to each $a_i$, such that $a_i = 1$ (or $-1$) if $a_i$ belongs (or not) to $A$. As a function of these new variables, the cost function given in Eq.~\eqref{eq::cost_ipp} can be written as
\begin{equation}
	C = \abs{\sum_{i=1}^N a_i \sigma_i}.
\end{equation}
We can get rid of the absolute value by defining the Hamiltonian
\begin{equation}
	H = C^2 = \sum_{i,j=1}^{N} J_{ij} \sigma_i \sigma_j,
\end{equation}
where $J_{ij}=a_i a_j$, whose ground states correspond to the solutions of the original instance of IPP. Starting from this Hamiltonian, the problem has been analyzed in~\cite{Ferreira1998}, where the anti-ferromagnetic and random nature of the couplings $J_{ij}$ makes the thermodynamics non-trivial.\\
As a final remark, notice that there are many other Hamiltonians which are good spin models for our IPP, for example:
\begin{equation}
	H = C^4 = \sum_{i,j,k,\ell} J_{i,j,k,\ell} \sigma_i \sigma_j \sigma_k \sigma_\ell,
\end{equation}
with $J_{i,j,k,\ell}=a_i a_j a_k a_\ell$. The choice of one model rather than another is driven by the search for the simplest possible one which is well-suited to the techniques that we want to use.

\section{Complexity theory and typical-case complexity}\label{sec::complexity}
Consider a generic problem, not necessarily a combinatorial optimization one: we have an input and, according to certain rules which specify the problem, we want to get the output, that is the solution to the problem. Is a certain problem difficult or easy? Can we quantify this difficulty and say that some problems are harder than others? These are deep questions which are not completely understood, and are the holy grail (in their formalized version) of a branch of science which involves computer science, mathematics and physics and it is called \emph{complexity theory}. 

In this section we want to briefly introduce some concepts from complexity theory that will be relevant in the following and elaborate on the differences between the worst-case analysis of a COP and a typical-case analysis, which is the one usually carried out by means of disordered systems techniques.

\subsection{Worst-case point of view}
Let us focus on a specific kind of problems, those called \emph{decision} problems. In this case, we have a problem and an input and the output has to be a yes/no answer. The paradigmatic example is the k-SAT problem introduced in Sec.~\ref{sec::phase_trans}, and another example is the definition of the integer partition problem that we gave in Sec.~\ref{sec::cop}. A first attempt to measure the ``hardness'' or complexity of a problem could be done by measuring the number of operations needed to solve it. In the following, sometimes we will say ``time'' instead of ``number of operations'' for brevity, even if these two quantities are related but not equivalent. 
This definition of complexity, however, has several weaknesses:
\begin{itemize}
	\item the complexity of a problem depends on the algorithm used, while we would like to characterize the problem itself;
	\item the complexity depends on the specific instance.
\end{itemize}
Both problems are solved by introducing the concept of complexity classes. Before defining them, let us address a tricky point: in general the number of operations needed to solve a problem will increase with the input size. For example, if we have an algorithm to solve IPP, whatever algorithm it is, we expect that we will need to wait longer for the solution if we have a set of $N=100$ integers with respect to the case with $N=10$. \\
However, the exact determination of the size of an instance is a subtle point, since there is a certain freedom in deciding it. For our purposes, we will always deal with problems that admit a re-writing in terms of spin systems, so we can safely define the number of spins as the size of the instance.

Now we are ready to introduce complexity classes. A problem is said to be ``nondeterministic polynomial'' (NP) or in the NP complexity class if, given a configuration of an instance of size $N$ of the problem, the time needed to check whether this configuration is a solution of the problem scales as $\mathcal{O}(N^\alpha)$ for $N\to\infty$, where $\alpha\in \mathbb{R}$ does not depend on the configuration and on the instance. $\mathcal{O}$ is the ``Landau big-O'' notation, that is $f(x)=\mathcal{O}(g(x))$ if there are $c>0$ and $x_0>0$ such that $f(x)\leq c g(x)$ for all $x>x_0$.\\
For example, IPP is in NP: given a partition $A$ of $N$ objects, to check that this is or not the solution it is sufficient to compute the sum of $|A|$ objects, those of $N-|A|$ objects and a single operations to compare this two quantities, so $\mathcal{O}(N)$ operations.\\
Many COPs are not so easy to place in the class NP: consider the optimization variant of IPP, that is the problem of finding the minimum of the cost function Eq.~\eqref{eq::cost_ipp}, even if this is not 0 or 1. Now given a partition $A$ we can again compute its cost in $\mathcal{O}(N)$ time, but this is not a certificate that this partition is or is not the one with minimum cost. To check that, we would need to solve the whole problem, so the complexity of checking whether a given partition is the solution is the same of solving the original problem.

Another very important class is the ``polynomial'' (P) complexity class. A problem is said to be in P if there exists an algorithm which is guaranteed to solve each instance of size $N$ in a time which scales as $\mathcal{O}(N^\alpha)$ for $N\to\infty$, with $\alpha\in \mathbb{R}$. 
This class is defined such that the two issues in our definition of complexity are now solved: for a problem to be in P it is now sufficient that \emph{one} polynomial-time algorithm exists, and, for a given algorithm, the complexity is computed on the worst-case instance, that is the one where our algorithm needs more operations to reach the solution.

In 1971, Cook~\cite{Cook1971} discovered a special property of the SAT problem, a relaxation of k-SAT where the clauses are allowed to be of any size: each other problem in NP can be mapped to SAT in polynomial time, in such a way that if we know are able to solve SAT, we can obtain the solution to each other problem in NP with a polynomial overhead. In particular, if SAT turned out to be in P, each other NP problem would be in P as well. After the work of Cook, Karp~\cite{Karp1972} discovered several other problems with this property (among them, our friend the decision version of IPP) and many others have been found since then. This class of problems is called NP-complete, and these problems are sometimes referred to ``the hardest problem in NP''. \\
The question whether an algorithm which solves one of these in polynomial time exists or not is still open, it is called ``P vs NP'' problem and is one of the biggest theoretical challenge for todays computer scientists and mathematics.

Another question remains open: if the the decision version of IPP is NP-complete, to what class the optimization version of IPP belongs to? The answer is the so-called ``NP-hard'' complexity class, which contains all those problems which are almost as difficult as NP-complete problems. This means that if we were able to solve the optimization version of IPP in polynomial time, we would be able to solve also the decision IPP in polynomial time, and then all the NP problems in polynomial time.

Fig.~\ref{fig::comp_cl} is a cartoon representing the relations among the various complexity classes discussed here.

\begin{figure}
	\centering
	\includegraphics[width=0.95\columnwidth]{./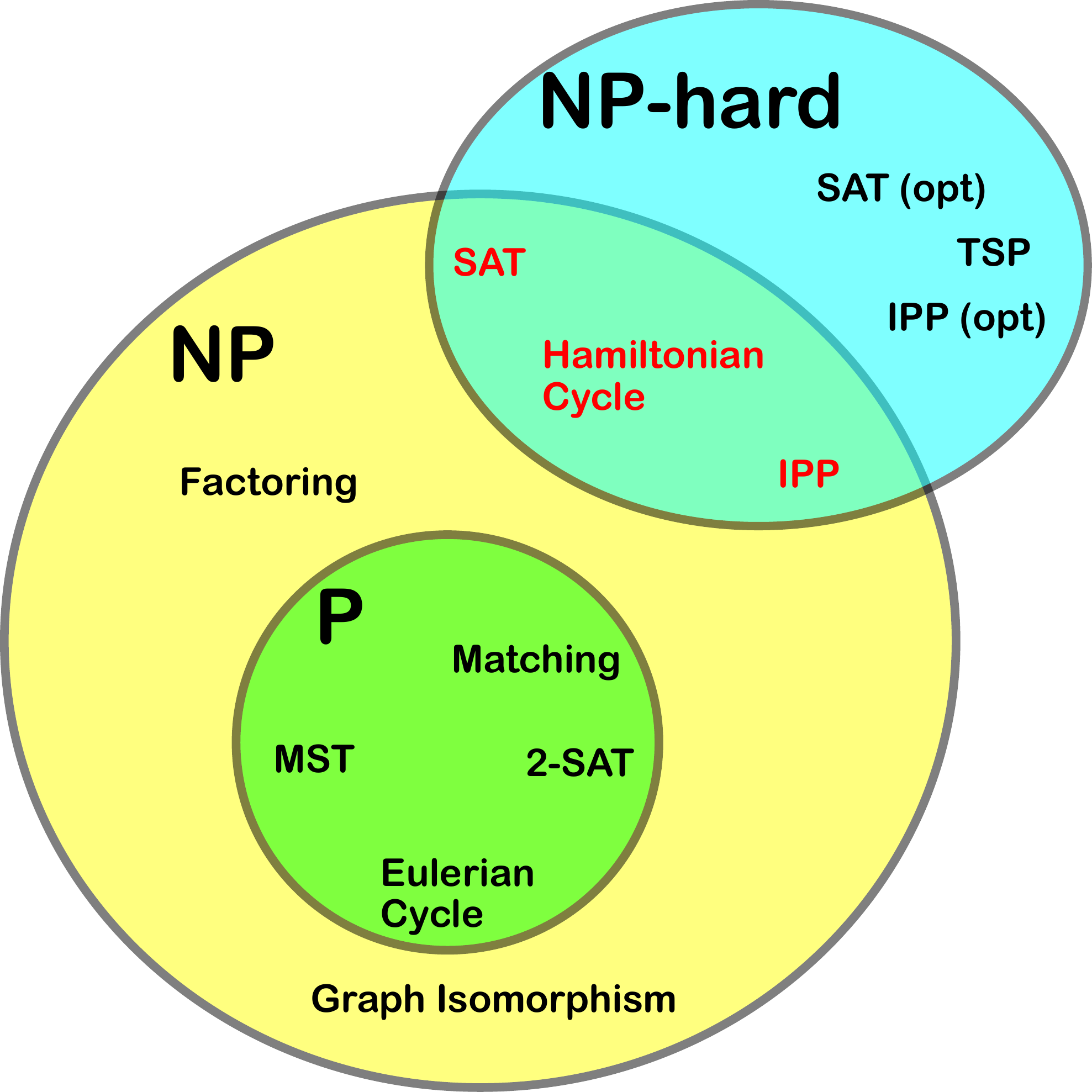}
	\caption{A cartoon highlighting the relations between the complexity classes P, NP, NP-complete and NP-hard, under the hypothesis that P$ \neq $NP. Inside each class, we wrote the names of some COPs belonging to that class. Problems in the NP-complete class are written in red and are positioned at the border of the NP class, to indicate that if one of the turned out to be in P, then all the NP class would be in P. Notice that the problems such that SAT or IPP are understood in their decision version, and their optimization version is present in the class NP-hard (we use the word ``opt'' to indicate optimization version). Finally, a list of the acronyms used: \mbox{MST $\to$ Minimum} Spanning Tree, \mbox{SAT $\to$ Satisfiability}, \mbox{IPP $\to$ Integer} Partition Problem, \mbox{TSP $\to$ Traveling} Salesman Problem.}\label{fig::comp_cl}
\end{figure}

There are many other complexity classes, and various refinement of the ideas presented here. For example, we can incorporate in our class definitions the scaling with the instance size of the usage of \emph{memory} (in addition to the time of computation). These topics are treated in detail in several very good textbooks, for example~\cite{Papadimitriou2003,Johnson1979}.

\subsection{Optimistic turn: typical case and self-averaging property}

As we have already discussed, a possible point of view consists in working out some \emph{typical} properties of a COP and this can be done through the definition of an ensemble of instances and a probability weight of each instance. This is the standard program carried out by physicists, since spin glass techniques are particularly suited for it.\\
But has this something to do with the complexity-theoretical perspective described before?
As we have seen, for a given algorithm its running time is computed on the worst-case scenario, that is the hardest instance for a that particular algorithm. However, it can be that this difficult instance has very low probability in an ensemble, so it gives a very little contribution to whatever typical quantity we are computing.\\
This reasoning could lead us to abandon the idea of describing COPs through their random formulations, but there are also some positive facts about adopting this perspective. For example, the problems that we usually observe in practical application are far from being those worst-case instances for our algorithm (and, even if they are, we can in principle use a different algorithm for which the hardest instances are different from the typical instances we encounter in practical applications).\\
This is actually related to a well-known phenomenon, called self-averaging property, which takes place in many physical systems. In practice, consider a random variable, for example the average cost of the solution of a given RCOP, $\overline{E}$. This quantity will depend on the instance size $N$ and it is said to be self-averaging if the limit for $N\to\infty$ of it is not a random variable anymore. In other words, we have
\begin{equation}
	E = \lim_{N\to\infty} \overline{E_N}
\end{equation}
and
\begin{equation}
	\lim_{N\to\infty} \overline{E_N^2} - E^2 = 0.
\end{equation}
As we will see in the following, this happens for some problems and does not happen for others, and it is an indicator which is telling us whether a random large instance of the problem is well characterized by the typical one. \\
Notice that even if each quantity of interest of the RCOP (for example, cost of the solution, number of solutions and others) is self-averaging, we can still build rare instances, which in our ensemble will have a probability which is going to zero with their size, in which this quantities are completely different from the typical case. Actually, there is a branch of physics which deals with this rare instances: it is called \emph{large deviation theory} and we will briefly discuss it at the end of this chapter, while in most of this work the focus will be on typical properties of RCOPs.

As a final remark, there is another point that we want to mention regarding the worst-case versus typical-case problems. Using the typical case approach we can locate phase transitions in RCOPs, as the SAT-UNSAT transition that we discussed in Sec.~\ref{sec::phase_trans}. It turns out that most of the times the presence of ``intrinsic hard instances'' is related to the presence of these phase transitions: for example, in the 3-SAT problems we know an algorithm (described in~\cite{Coja2009}) which is guaranteed to find the solution in polynomial time as long as the parameter $\alpha$ defined in Sec.~\ref{sec::phase_trans} is such that $\alpha<\alpha_r$, where $\alpha_r$ is a critical point where the so-called ``rigidity'' phase transition takes place (see~\cite{Achlioptas2008}).

\section{Spin Glass theory comes into play}\label{sec::spinglass}
As we said several times, the paradigm of statistical mechanics most suited to the application to RCOPs is spin glass theory. 
The two main ingredients which distinguish spin glasses from the ``standard'' statistical mechanics spin systems are \emph{quenched disorder} and \emph{frustration}, two things that we already met in our general discussion about RCOPs.
\begin{itemize}
	\item \emph{Quenched disorder}: the Hamiltonian of our system has some random variables in it, and the probability density of these variables is explicitly given. Moreover, the word quenched means that the thermodynamics of the system has to be considered \emph{after} that a specific realization of these random variables is chosen. We have already met that in the definition of RCOPs.
	\item \emph{Frustration}: consider a spin Hamiltonian (this discussion can be trivially extended to non binary-spin as well) and a random configuration of the degrees of freedom. Now randomly select a spin (or a set of $k$ spin with fixed $k$) and flip it (or all of them) only if this lowers the total energy of the system, and keep doing that until a minimum of energy is reached, choosing each time a random spin. 
	Repeat the whole experiment many times: start from a random configuration and flip spins at random until a local energy minimum is reached. If it happens that the final state is not always the same or the final states are not related by a symmetry of the initial Hamiltonian, the system is said to be frustrated.\footnote{there is a more simple but less generic definition of frustration, see for example~\cite{Dotsenko2005} (chapter 1), based on \emph{frustrated plaquettes}, that is closed chains of interactions among spins whose product is negative; however, for many RCOPs this definition is not immediately applicable, so we prefer to stick with that given here.} 
	For many systems it happens that most of these local minima have different energy from the ground state. In these cases, a frustrated system is such that we cannot find its ground state by local minimization of the energy. For example, is easy to see that (most of the instances of) optimization IPP is frustrated, and this situation applies.
\end{itemize}

The spin glass theory exploration begun with the so-called Edwards-Anderson model~\cite{Edwards1975}, whose Hamiltonian describes spins arranged in a 2-dimensional square lattice, and it is
\begin{equation}
	H_{EA} = \sum_{\left<i,j\right>} J_{ij} \sigma_i \sigma_j,
\end{equation}
where $\sigma_i$ is the spin variable number $i$ and the brackets mean that the sum has to be performed on first neighbors. The $J_{ij}$ are independent and identically distributed (IID) random variables, and two choices that are often used as probability density are
\begin{equation}
	p(J_{ij}) = \frac{1}{\sqrt{2 \pi J^2}} e^{-\frac{(J_{ij}-J_0)^2}{2 J^2}}
\end{equation}
(gaussian disorder)
\begin{equation}
	p(J_{ij}) = \frac{1}{2} \delta(J_{ij} - 1) + \frac{1}{2} \delta(J_{ij} + 1)
\end{equation}
(bimodal disorder).\\
As for the ferromagnetic 2-dimensional Ising model, the analytical solution, which basically coincide with the calculation of the partition function, could be difficult (or even impossible) to find, so a good starting point is to consider a mean-field approximation of the problem, which in this case takes the name ``Sherrington-Kirkpatrick'' model~\cite{Sherrington1975}:
\begin{equation}
	H_{SK} = \sum_{i < j} J_{ij} \sigma_i \sigma_j.
\end{equation}
Although their paper is called ``Solvable model of a spin-glass'', Sherrington and Kirkpatrick were not able to solve it in the low temperature phase, where they obtained an unphysical negative entropy. The model turned out to be actually solvable also in the low temperature phase, after a series of papers by Parisi~\cite{Parisi1979_1,Parisi1979_2,Parisi1980_1,Parisi1980_2}, where he introduced the remarkable \emph{replica-symmetry breaking} scheme.\\
The full solution of the SK model goes far beyond what we need to discuss in this work, so we suggest to the interested readers one of the several good books on the subject~\cite{Mezard1987book,Dotsenko2005,Nishimori2001} or the nice review presented in~\cite{Malatesta2019}. We will, however, perform a detailed spin glass calculation of the so-called p-spin spherical model in the next section. This will have a two-fold usefulness: it will allow us to review some important concepts of spin glass theory, such as the replica method, the concept of pure states and the replica-symmetry breaking; it will also constitute the basis for our analysis of large deviations in disordered system, which we will consider at the end of this chapter.

\subsection{An ideal playground: the spherical p-spin model}\label{sec::pspin}
The p-spin spherical model (without magnetic field) Hamiltonian is:
\begin{equation}
	H = - \sum_{i_1<i_2<\cdots<i_p} J_{i_1\cdots i_p} \sigma_{i_1} \cdots \sigma_{i_p},
\end{equation}
where $p\geq 3$ (one can also choose $p\ge2$, but the $p=2$ case is qualitatively different from the others, see for example \cite{Kosterlitz1976}), the spin variables are promoted to continuous variable defined on the real axis and subject to the ``spherical constraint''
\begin{equation}
	\sum_{i=1}^N \sigma_i^2 = N.
\end{equation}
The interaction strengths are IID random variables and their probability density is
\begin{equation}
	p(J) = \frac{N^{\frac{p-1}{2}}}{\sqrt{p! \, \pi}} e^{-J^2 \frac{N^{p-1}}{p!}}.
\end{equation}
Notice that the choice of the power of $N$ in $p(J)$ is fixed by the request of extensivity of the \emph{annealed} free energy, indeed
\begin{equation}\label{eq::pspin_annealed_comp}
\begin{split}
	\overline{Z} & = \int_{-\infty}^\infty \left( \prod_{i_1<\dots<i_p} dJ_{i_1\cdots i_p} \, p(J_{i_1\cdots i_p}) \right) \int_{-\infty}^\infty d\sigma_{i_1} \cdots d\sigma_{i_p} \, e^{-\beta H} \, \delta\left( N - \sum_i \sigma_i^2 \right)\\
	& = \int_{-\infty}^\infty D\sigma \left( \prod_{i_1<\dots<i_p} \int_{-\infty}^\infty dJ_{i_1\cdots i_p} \, p(J_{i_1\cdots i_p}) \, e^{\beta J_{i_1\cdots i_p} \sigma_{i_1} \cdots \sigma_{i_p}} \right) \\
	& = \int_{-\infty}^\infty D\sigma \prod_{i_1<\dots<i_p} e^{\frac{\beta^2}{4 N^{p-1}} p! \sigma_{i_1}^2 \cdots \sigma_{i_p}^2} \sim \int_{-\infty}^\infty D\sigma \, e^{\frac{\beta^2}{4 N^{p-1}} \left(\sum_i\sigma_{i}^2\right)^p} = \Omega_N \, e^{N \beta^2/4} \\
\end{split}
\end{equation}
where we used that
\begin{equation}\label{eq::pspin_sum_approx}
	p!\sum_{i_1<\cdots<i_p} f_{i_1,\dots,i_p} \sim \sum_{i_1, \dots, i_p} f_{i_1,\dots,i_p}
\end{equation} 
when $f_{i_1,\dots,i_p}$ is a symmetric under permutations of the indexes and in the thermodynamical ($N\to\infty$) limit (the error comes from the fact that in the right-hand side of the approximation we are also considering terms with equal indexes). We also defined
\begin{equation}
	D\sigma = d\sigma_{i_1} \cdots d\sigma_{i_p} \, \delta\left( N - \sum_i \sigma_i^2 \right).
\end{equation}
Finally, $\Omega_N$ is the surface of a $N$ dimensional sphere of radius $\sqrt{N}$. Therefore the annealed free energy is
\begin{equation}\label{eq::pspin_f_annealed}
	F_{\text{ann}} = - \frac{1}{\beta} \log \overline{Z} = - N (\beta/4 + T S_\infty),  
\end{equation}
where $S_\infty = \log \Omega_N / N \sim (1+\log (2\pi) )/2 $ (see Appendix \ref{app::sphere}). Therefore, thanks to the factor $N^{p-1}$, this quantity which has to be correct in the high temperature limit where the values of the couplings $J_{i_1 \dots i_p}$ become irrelevant, is extensive as it should be. For this reason the label $S_\infty$ is used: this quantity (as we will check) is also the infinite-temperature entropy of the quenched-disorder case.

This model has been introduced by Crisanti and Sommers in~\cite{Crisanti1992}, and used many times to probe the various aspects of spin glass theory (see, for example, the beautiful review article~\cite{Castellani2005}).\\
Our aim for this section is the computation of the quenched free energy:
\begin{equation}
	F = - \frac{1}{\beta} \overline{\log Z}.
\end{equation}

\subsubsection{The replica trick at work}
To start our computation, we will introduce the replica trick in its standard form, that is
\begin{equation}
	\overline{\log Z} = \lim_{n \to 0} \frac{\overline{Z^n} - 1}{n}.
\end{equation}
This exact identity is not a trick, so why the name? The real trick is in our way to use it: we will consider \emph{integer} values for $n$, so that we can compute $\overline{Z^n}$, which is simply the average over the disorder of the partition function of a system made by $n$ non-interacting replicas of the original system. Starting from the knowledge of $\overline{Z^n}$ for integer $n$ we will try to extend analytically our function to $n\in\mathbb{R}$ to obtain the $n\to 0$ limit. According to this interpretation, it should be clear why this procedure is called \emph{replica} trick, and we will call $n$ ``replica index'', or ``number of replicas''. \\ 
We will see that many of the manipulations done to recover a meaningful analytic extension to real values of $n$ will be impossible to justify formally, but the whole strategy, sometimes called ``replica method'', has been proved to be exact by many numerical simulations and, for some problems, also by analytical and rigorous arguments (see~\cite{Guerra2002,Guerra2003}).

The computation of $\overline{Z^n}$ is carried out in Appendix~\ref{app::pspin}, the result is
\begin{equation}\label{eq::pspin_zn_2}
	\overline{Z^n} = e^{n N \log(2\pi)/2} \int DQ \, D\lambda \, e^{-N S(Q,\lambda)},
\end{equation}
where
\begin{equation}\label{eq::pspin_zn_3}
	S (Q,\lambda) = -\frac{\beta^2}{4} \sum_{a,b} Q^p_{ab} - \frac{1}{2}\sum_{a,b} \lambda_{ab} Q_{ab} + \frac{1}{2} \log\det(\lambda),
\end{equation}
the $Q$ and $\lambda$ are $n \times n$ matrices, $Q_{aa}=1$ for all $a$, the integral with measure $DQ$ is done over the symmetric real matrices with 1 on the diagonal, the integral with measure $D\lambda$ is done over all symmetric real matrices.\\
We integrate over the $\lambda$ matrices by exploiting the saddle-point method\footnote{actually the integral on $\lambda_{ab}$ should be done over the imaginary line (with the methods discussed, for example, in Appendix B.1 of~\cite{Peliti2011}); however, at the end of the day this is perfectly equivalent to the usual saddle point, as pointed out in~\cite{Crisanti1992}.}, so at the saddle point $\lambda$ is such that
\begin{equation}
	\frac{\partial}{\partial \lambda_{ab}} S(Q,\lambda) = 0.
\end{equation}
Therefore, exploiting the formula valid for a generic matrix M
\begin{equation}\label{eq::matrix_id_deriv}
	\frac{\partial}{\partial M_{ab}} \log \det M = (M^{-1})_{ba},
\end{equation}
we obtain the equation for $\lambda$
\begin{equation}
	Q_{ab} = (\lambda^{-1})_{ab}.
\end{equation}
Putting that back into Eq.~\eqref{eq::pspin_zn_3}, we obtain
\begin{equation}\label{eq::pspin_last_action}
	S (Q) = -\frac{\beta^2}{4} \sum_{a,b} Q^p_{ab} - \frac{1}{2} \log\det Q + \frac{n}{2}.
\end{equation}
The term $n/2$ can be pulled out of the integral and together with the exponential outside the integral in Eq.~\eqref{eq::pspin_zn_2} will give a constant shift of the free energy of $-N T S(\infty)$, the same that we met already in the annealed calculation Eq.~\eqref{eq::pspin_f_annealed}.
The last step is again a saddle-point integration on the Q variables, and we obtain the free energy density, $f=F/N$,
\begin{equation}\label{eq::pspin_f}
	f = \lim_{n \to 0} - \frac{\beta}{4 n} \sum_{a,b} Q^p_{ab} - \frac{1}{2 n \beta} \log\det Q + T S(\infty),
\end{equation}
where the matrix $Q$ has $Q_{aa}=1$, is symmetric and the off-diagonal entries are given by the saddle point equations (we use again Eq.~\eqref{eq::matrix_id_deriv})
\begin{equation}\label{eq::pspin_sp}
	\frac{\beta^2 p}{2} Q_{ab}^{p-1} + (Q^{-1})_{ab} = 0.
\end{equation}

\subsubsection{Replica-symmetric ansatz and its failure (for this problem!)}
At this point of the discussion, the introduction of $n$ replicas of our system is simply a technical trick, a formal manipulation. Therefore it seems reasonable to impose \emph{symmetry} among replicas to deal with Eq.~\eqref{eq::pspin_sp}. In other words, we consider the following ansatz for the matrix $Q$:
\begin{equation}
	Q_{ab} = \delta_{ab} + q_0 (1-\delta_{ab}),
\end{equation}
that is $Q$ has 1 on the diagonal entries and $q_0$ on the off-diagonal entries. The inversion of a matrix with this form is
\begin{equation}
	(Q^{-1})_{ab}= \frac{1}{1-q_0} \delta_{ab} - \frac{q_0}{(1-q_0)(1+(n-1)q_0)}
\end{equation}
and from Eq.~\eqref{eq::pspin_sp} we obtain the equation for $q_0$ when $n\to 0$:
\begin{equation}\label{eq::pspin_q0_rs}
	\frac{\beta^2 p}{2} q_0^{p-1} - \frac{q_0}{(1-q_0)^2} = 0.
\end{equation}
The first observation is that $q_0=0$ is a solution. In this case one obtains for the free energy density
\begin{equation}
	f_{RS} = -\beta/4 + T S(\infty)
\end{equation}
where the subscript RS stands, here and in the following, for ``Replica Symmetric''. This is exactly the same result we obtained with the annealed computation, and indeed is the correct result for the high-temperature regime. This does not happen by chance: if the overlap matrix $Q$ has null off-diagonal entries, the whole replica-trick computation coincide with the annealed one, as can be checked confronting Eq.~\eqref{eq::pspin_zn_1} and Eq.~\eqref{eq::pspin_annealed_comp}.
However, the annealed calculation and the RS ansatz differ when the temperature is decreased: for $T<T_c$, Eq.~\eqref{eq::pspin_q0_rs} develop another solution, with $q_0\neq0$. Unfortunately, this solution is not \emph{stable}: one can compute the Hessian (with the second derivatives of Eq.~\eqref{eq::pspin_last_action} or directly of Eq.~\eqref{eq::pspin_f}) in the saddle point and check that the eigenvalues have different signs\footnote{for a saddle point to be stable, in general the eigenvalues have to be all positive, so that the matrix is positive-definite and we are actually in a minimum; however, in this case, for (actually quite nebulous) reasons connected to the limit of vanishing number of replicas, the saddle point would be stable if all the eigenvalues were negative.}.
This stability problem has been first noticed for the Sherrington-Kirkpatrick model in~\cite{deAlmeida1978}, and today it is well known that to go beyond this impasse, we need to give up our RS ansatz.

\subsubsection{Replicas and pure states}
The conceptual error that we made in the previous part of our calculation is to think about replicas as purely ``abstract'' mathematical objects that we exploited to ease our computation. This idea brought us to the RS ansatz, which turned out to be wrong, since it gives a unstable saddle point under a certain $T=T_c$.\\
Before trying to modify our ansatz, let us introduce some useful concepts for the description of the physics of spin glasses. The first one is the idea of \emph{pure state}. A pure state can be defined as a part of the configuration space such that the connected correlation functions decay to zero at large distances. A standard example of pure states are the two ferromagnetic phases of a Ising model in more than 2 dimension, below the critical temperature: one with positive magnetization, $\left< \sigma \right>_+ = m>0$, the other with negative magnetization $\left< \sigma \right>_- = - m<0$. In this case, the Gibbs measure splits in two components with the same statistical weight (due to the symmetry of the model), so that we have for the thermodynamical average $\left< \cdot \right>$
\begin{equation}
	\left< \cdot \right> = \frac{1}{2} \left< \cdot \right>_- + \frac{1}{2} \left< \cdot \right>_+.
\end{equation}
As one can easily see $\left< \sigma \right> = 0$, and for the connected two-point correlation function
\begin{equation}
	\left< \sigma_i \sigma_j \right> \sim \frac{1}{2} \left< \sigma \right>_-^2 + \frac{1}{2} \left< \sigma \right>_+^2  = m^2 \neq 0,
\end{equation}
where we used that the $\left< \cdot \right>_-$ and $\left< \cdot \right>_+$ are averages done inside the two pure states.

It can happen that there are more than 2 pure states, and in this case we have for the thermodynamical average
\begin{equation}
\begin{split}
	\left< A \right> & = \frac{1}{Z} \sum_{\sigma} e^{-\beta H(\sigma)} A = \frac{1}{Z} \sum_\alpha \sum_{\sigma \in \alpha} e^{-\beta H(\sigma)} A = \sum_\alpha w_\alpha \frac{1}{Z_\alpha} \sum_{\sigma \in \alpha} e^{-\beta H(\sigma)} A \\
	& = \sum_\alpha w_\alpha \left< A \right>_\alpha
\end{split}
\end{equation}
where the sum over $\alpha$ runs over all the pure states,
\begin{equation}
	Z_\alpha = \sum_{\sigma \in \alpha} e^{-\beta H(\sigma)}
\end{equation}
and
\begin{equation}\label{eq::pure_state_weight}
	w_\alpha = \frac{Z_\alpha}{Z}.
\end{equation}

Now, we go back to our p-spin spherical model and consider the quantity
\begin{equation}
	q_{\alpha \beta} = \frac{1}{N} \sum_i \left<\sigma_i\right>_\beta \left< \sigma_i \right>_\alpha
\end{equation}
where $\alpha$ and $\beta$ are indexes which label pure states, and the angle brackets mean thermodynamical average (possibly, as in this case, done inside a pure state). This quantity is the overlap between pure states, and depends on the specific instance.
Now, given the statistical weights of the pure states defined as in Eq.~\eqref{eq::pure_state_weight}, we introduce the probability $P_J(q)$ that two pure states, chosen according to their statistical weight, have overlap $q$ and is
\begin{equation}
	P_J(q) = \sum_{\alpha,\beta} w_\alpha w_\beta \, \delta(q - q_{\alpha \beta}).
\end{equation}
The index $J$ simply means that this quantity is instance dependent, so we average on the disorder to obtain $P(q) = \overline{P_J(q)}$.\\
Now, one can prove that~\cite{Mezard1987book}
\begin{equation}\label{eq::physics_of_replicas_-1}
	q^{(k)} = \frac{1}{N^k} \sum_{i_1,\dots,i_k} \overline{\left< \sigma_{i_1} \cdots \sigma_{i_k}\right>^2} = \int dq \, P(q) \, q^k.
\end{equation}
These quantities can be computed also exploiting the replica method. Consider $q^{(1)}$,
\begin{equation}\label{eq::physics_of_replicas_-11}
	q^{(1)} = \frac{1}{N} \sum_i \overline{\left< \sigma_i \right>^2}.
\end{equation}
We can insert the identity $1=\lim_{n \to 0}Z^n$ and write
\begin{equation}
\begin{split}
	q^{(1)} = \frac{1}{N} \sum_i \lim_{n \to 0} \overline{\left< \sigma_i \right>^2 Z_J^n},
\end{split}
\end{equation}
where $Z_J$ is the partition function at fixed disorder.
Now, in the spirit of the replica trick, we consider $n$ integer. We have
\begin{equation}\label{eq::physics_of_replicas_0}
\begin{split}
	q^{(1)} & = \lim_{n \to 0} \frac{1}{N} \sum_i \overline{Z_J^{n} \frac{1}{Z_J} \int D\sigma^a \, \sigma_i^a \, e^{-\beta H[\sigma_i^a]}  \frac{1}{Z_J} \int D\sigma^b \, \sigma_i^b \, e^{-\beta H[\sigma_i^b]}}\\
	& = \lim_{n \to 0} \frac{1}{N} \sum_i \overline{\int \left( \prod_{c=1}^n D\sigma^c \right) \sigma_i^a \, \sigma_{i}^b \, e^{-\beta \sum_{c=1}^n H[\sigma_i^c]}},
\end{split}
\end{equation}
where we considered $D\sigma^a = \prod_i \sigma^a_i \, \delta(N- \sum_i (\sigma_i^a) ^2)$.
Following the same step used to evaluate the free energy, we obtain
\begin{equation}\label{eq::physics_of_replicas_1}
	q^{(1)} = Q_{ab}^{(SP)},
\end{equation}
where we have necessarily $a\neq b$ because of the steps done in Eq.~\eqref{eq::physics_of_replicas_0} and $SP$ labels the value of the quantity $Q$ computed on the (correct) saddle point.
Notice that Eq.~\eqref{eq::physics_of_replicas_1} makes sense only if $Q_{ab}$ does not depend on the choice of the replica $a$ and $b$. This would have been true if the RS ansatz had been correct. Unfortunately, this is not the case. Therefore, as discussed in~\cite{Parisi1983,DeDominicis1983}, we need to average over the contribution of all the (different) pairs of replicas and we finally obtain
\begin{equation}\label{eq::physics_of_replicas_2}
	q^{(1)} = \lim_{n \to 0} \frac{2}{n(n-1)} \sum_{a<b} Q_{ab}.
\end{equation}
This can actually be generalized to
\begin{equation}
	q^{(k)} = \lim_{n \to 0} \frac{2}{n(n-1)} \sum_{a<b} Q_{ab}^k
\end{equation}
which, because of Eq.~\eqref{eq::physics_of_replicas_-1}, brings us to
\begin{equation}
	P(q) = \lim_{n \to 0} \frac{2}{n(n-1)} \sum_{a<b} \delta(q-Q_{ab}).
\end{equation}
This equation elucidates the physical meaning of the matrix $Q$: at the saddle point, the probability that two pure states have overlap $q$ is given by the \emph{fraction} of entries equal to $q$ in $Q$, or, equivalently, each entry $Q_{ab}=q$ implies the existence of to two pure states with overlap $q$.\\
There is only a little problem: the matrix $Q$ has $n(n-1)/2$ independent entries, and $n$ is going to zero! We can still define (and this is what is done) in some way the ``fraction'' of entries by considering $n$ integer (and large) and, only after all the formal manipulation, by sending $n$ to 0. But the physical intuition of $Q$ suffers this weird situation, and this is one of the reason why the replica method is rather ill-defined under a mathematical point of view. However, each single time this method has been carried out up to the end and later compared with exact methods or very precise simulations, the resulting free energy (or whatever observable one wants to compute) computed with replicas turned out to be correct. Therefore let us take as a guide the physical intuition build around the matrix $Q$ to propose a new ansatz to overcome the problems encountered with the RS one.

\subsubsection{The magic of replica-symmetry breaking}
Consider again our RS ansatz: because of the discussion on $Q$, we have seen that there is a correspondence between the values inside the matrix $Q$ and the (average) properties of the free energy landscape. In particular, the presence of a single variational parameter can be interpreted as an ansatz on the free energy landscape, that is the presence of a single pure states. Indeed consider two configurations: if we choose twice the same configuration we obtain an overlap of $Q_{aa}=1$, if we choose two different configurations in the pure state we have that their average (on the disorder and on the Gibbs measure inside the pure state) overlap is $q_0$. Since this picture turned out to be wrong (the corresponding saddle point is unstable), we need to assume the presence of more than one pure state.\\
However, the simplest possible ansatz in this direction if far from obvious, and it required a deep intuition pointed out for the first time by Parisi~\cite{Parisi1979_1}: we consider that there are many pure states of ``size'' $m$, and two possible values of the overlap between configurations taken from them, that is $q_1$ if the two configurations belong to the same pure state, $q_0$ if they belong to two different pure states. Notice that this interpretation implies that $q_1\geq q_0$. This is called \emph{one-step replica-symmetry breaking} (1RSB) ansatz and the corresponding matrix is
\begin{equation}\label{eq::pspin_1rsb_q}
	Q = (1-q_1) \, \mathbb{I} + (q_1 - q_0) \, \mathbb{E} + q_0 \,\mathbb{C},
\end{equation}
where $\mathbb{I}$ is the identity matrix, $\mathbb{E}$ is a block diagonal matrix, where each block is a $m \times m$ block with all entries equal to 1 and $\mathbb{C}$ is a matrix with constant entries equal to 1.\\
By using this form of $Q$ in Eq.~\eqref{eq::pspin_f}, we find
\begin{equation}\label{eq::pspin_f_1rsb}
\begin{split}
	- 2 \beta f_{\text{1RSB}} = & \frac{\beta^2}{2} \left( 1 + (m-1) q_1^p - m q_0^p \right) + \frac{m-1}{m} \log(1-q_1) + \\
	& + \frac{1}{m} \log\left( m (q_1-q_0) + 1- q_1 \right) + \frac{q_0}{m(q_1-q_0) + 1-q_1} -2 S(\infty) .
\end{split}
\end{equation}
The details of this computation are given in Appendix~\ref{app::pspin}. The parameters are such that $f_{\text{1RSB}}$ is minimum, so they can be found by extremizing Eq.~\eqref{eq::pspin_f_1rsb}. Notice that the 1RSB ansatz includes the RS one, since taking $m=1$ or $q_1=q_0$ gives back the RS free energy density. What we have done, in other words, is to enlarge our ansatz to search for new, stable, saddle points, in a way suggested by the underlying physical interpretation. \\

We have that the equation $\frac{\partial}{\partial q_0} f_{\text{1RSB}} = 0$ implies $q_0=0$ to have a solution which is different from the unstable RS under the critical temperature. The other two equations are:
\begin{equation}
	(m-1) \left( \frac{\beta^2}{2} p q_1^{p-1} - \frac{q_1}{(1-q_1) (1 + (m-1)q_1)} \right) = 0
\end{equation}
and
\begin{equation}
	\frac{\beta^2}{2} q_1^p + \frac{1}{m^2} \log(\frac{1-q_1}{1 + (m-1) q_1}) + \frac{q_1}{m (1 + (m-1)q_1)} =0 .
\end{equation}
The $m=1$ solution of the first equation makes the 1RSB ansatz to coincide with the RS one, and is the only solution for $T>T_c$. For $T<T_c$ (notice that this critical temperature is different from the one where the unstable replica-symmetric solution appears, see Fig.~\ref{fig::pspin_rsb}), another solution with $m\neq1$ appears, and actually is the one which gives the most relevant and \emph{stable} saddle point. A plot of the situation is given in Fig.~\ref{fig::pspin_rsb}.\\
Therefore the system at the critical temperature $T_c$ has a phase transition between the paramagnetic phase and the so called ``spin glass'' phase, where the order parameter $q^{(1)} = (1-m) q_1$ ($q^{(1)}$ is defined in Eq.~\eqref{eq::physics_of_replicas_-11}, and because of Eq.~\eqref{eq::physics_of_replicas_2}, $m$ and $q_1$ are the values of the variational parameters at the saddle point) starts to be different from 0. 

\begin{figure}
	\centering
	\includegraphics[width=0.95\columnwidth]{./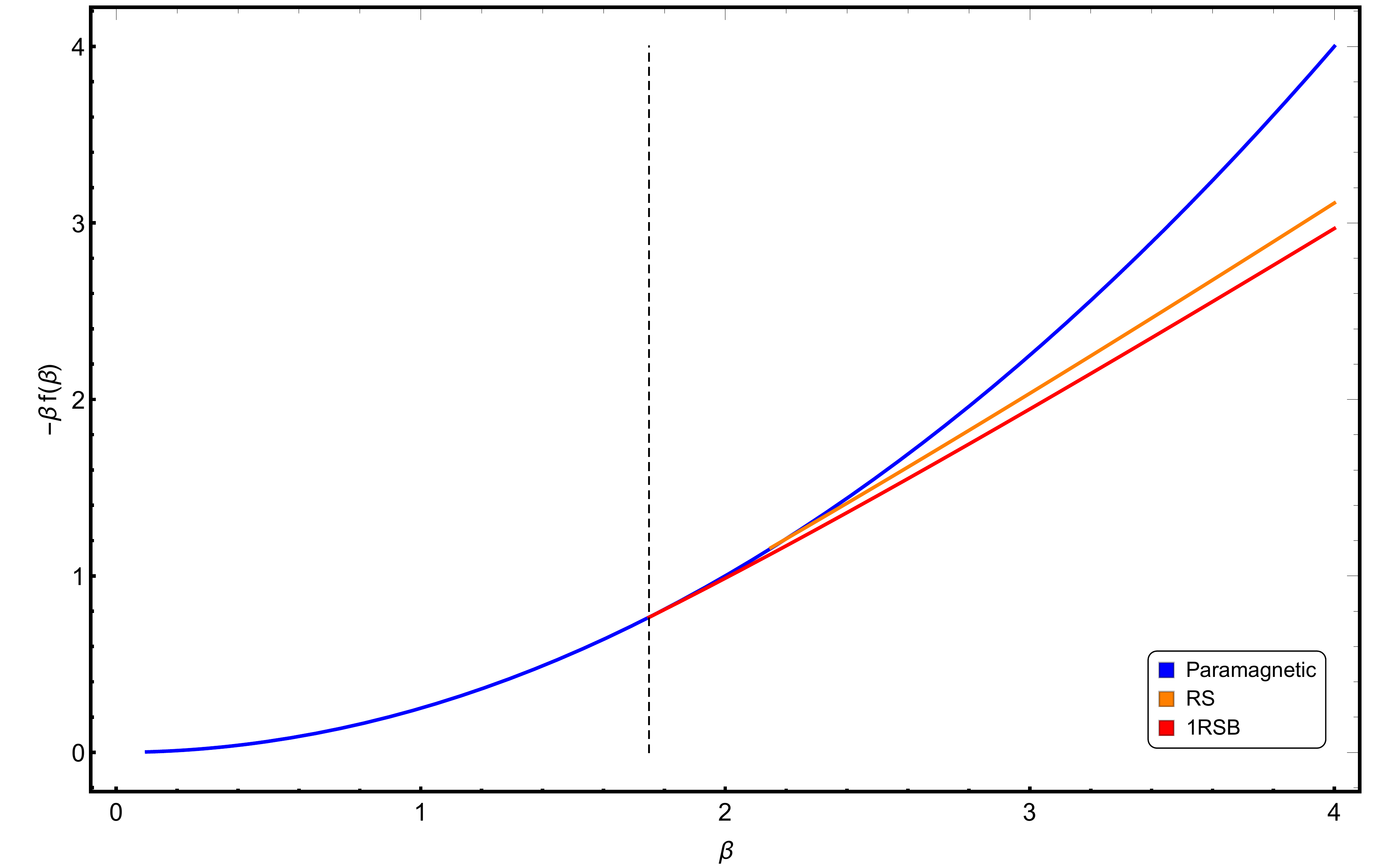}
	\caption{Numerical evaluation of the free energy density of the $p$-spin model with $p=3$, with the various ans\"atze . Notice that the function plotted is $-\beta f(\beta)$ (without the constant term $S(\infty)/\beta$) at the saddle points obtained by using the various ans\"atze described in Sec.~\ref{sec::pspin}: the blue curve is given by the annealed computation, the orange curve is given by the RS ansatz and the red curve is given by the 1RSB ans\"atze. Since this function appear in a saddle-point integration (Eq.~\eqref{eq::pspin_zn_2}), the correct one is always the smallest.
	As we can see, for $\beta<\beta_c \approx 1.7$ the paramagnetic solution (blue curve) is the only saddle point and it is the correct one. For $\beta>\beta_c$ the 1RSB solution becomes the most relevant stable saddle point, while at a smaller temperature ($\beta \approx 2.2$) the replica symmetric solution appear, but this is an unstable and not relevant saddle point.}\label{fig::pspin_rsb}
\end{figure}

\subsubsection{Spin glass and optimization problems}

What we learned with the $p$-spin spherical model is that when we deal with disorder and frustration, it can happen that the free energy landscape breaks into a plethora of pure states, which are taken into account via a RSB ansatz (we write RSB and not 1RSB because for some other models, for example the Sherrington-Kirkpatrick one, a more sophisticated ansatz, called \emph{full replica symmetry breaking}, is needed). \\
The presence of pure states is related to metastable states, that is groups of configurations separated by free energy barriers which become infinitely high in the thermodynamical limit. In turn, the presence of such metastable states results in the so-called ``ergodicity breaking''. Intuitively, this means that if the system is in a given configuration in a metastable state, even in the presence of thermal fluctuations (up to a certain temperature) it will stay in the same metastable state ``forever'', even if there are other regions of the configuration space with the same (or lower) free energy (for a more precise definition, see \cite{Vulpiani2014}, chapter 2).

Another interesting fact is that spin glass models, such as the Edwards-Anderson or the Sherrington Kirkpatrick one, can be seen as COPs whose cost function is the model Hamiltonian. Actually, it has been proved that the COP consisting in finding the ground state of Edwards-Anderson models in dimensions greater than 3 are NP-hard~\cite{Barahona1982,Bachas1984}.

Putting together the spin-glass and the optimization-problem perspectives, we can learn something about COPs (or, at least, get an interesting point of view): the difficulty in finding an algorithm to solve in polynomial time some problems seems to be related to the presence of ergodicity breaking, and therefore to RSB, in their thermodynamics. Actually, as far as we know there are no cases where there is RSB for a problem which is in the P complexity class\footnote{at first sight, the XORSAT problem (a SAT where the clauses use the logic operation XOR instead of OR) could seem a counterexample: it is in the P complexity class, but shows 1RSB when the thermodynamics is studied. However, the tractable problem consists actually in answering the question ``does this system admit solutions?'', while the optimization problems, ``what is the configuration that minimizes the number of FALSE clauses'' is NP-hard. Clearly, the thermodynamic can only say something about the optimization problem, or the generalized decision problem where we ask (for any given $n$): ``does this system admit a configuration which has up to $n$ FALSE clauses?''.}. On the opposite, it can happen that a NP-hard problem can be solved via the RS ansatz. This could be related to the fact that all the discussions about the energy landscape that we have done here are in fact about the \emph{typical} situation, while a problem is NP-hard even if only \emph{one} instance is hard (for each known algorithm).
In other words, consider a NP-hard COP. To study the thermodynamics of the corresponding disordered system, as already discussed, we need to introduce an ensemble of instances and a probability measure on it, obtaining a RCOP. Now, it can be that the ``hard'' instances belong to the ensemble but have zero weight in the thermodynamical limit for a certain choice of probability measure. Therefore also if the speculated connection between RSB and NP-hardness is correct, we would not see RSB in the thermodynamics of a problem unless we change in a suitable way our probability measure.


\section{Large deviations}\label{sec::lardev}
The standard approach of spin glass theory regards only the average over the disorder (or sometimes, also the variance) of some quantities, such as we have seen in Sec.~\ref{sec::pspin} with the p-spin spherical model free energy. On the opposite, the standard perspective of complexity theory is based on the idea of worst-case scenario. \\
A possible way to reduce the gap between these two fields is the large deviation theory. Basically, as we will see in a minute, large deviation theory (LDT) deals with the non-typical properties of random variables which depends on many other random variables. We will now introduce briefly the basic concepts of LDT, while for a more formal and comprehensive discussion we suggest to read one of the many good books~\cite{Ellis2007,Vulpiani2014} or the beautiful review~\cite{Touchette2009}.

\subsubsection{Large deviation principle}
We introduce now the large deviation principle (LDP). Consider a random variable $A_N$, which depends on an integer $N$. Let $p_{A_N}(a)$ be the probability density of $A_N$, such that $\int_B p_{A_N}(a) da = P(A_N \in B)$ is the probability that $A_N$ assumes a value in the set $B$. We say that for $A_N$ a LDP holds if the limit
\begin{equation}
	\lim_{N\to\infty} - \frac{1}{N} \log(p_{A_N}(a))
\end{equation}
exists, and in that case we introduce the \emph{rate function} of $A_N$, $I$, as
\begin{equation}\label{eq::ldp_general}
	\lim_{N\to\infty} - \frac{1}{N} \log(p_{A_N}(a)) = I(a).
\end{equation}
In other words, in a less precise but more transparent way we can write
\begin{equation}
	p_{A_N}(a) \simeq e^{-N I(a)},
\end{equation}
where the meaning of ``$\simeq$'' is given by Eq.~\eqref{eq::ldp_general}. Sometimes, as we will see, the situations where $I=\infty$ or $I=0$ in an interval are of particular interest. In these cases we say, respectively, that $p_{A_N}(a)$ decays faster than exponentially in $N$ (these are the so-called \emph{very} large deviations) or that it decays slower than exponentially.
LDT essentially consists in taking a random variable of interest and trying to understand whether a LDP holds for it, and what is its rate function.

\subsubsection{Recovering the law of large numbers and the central limit theorem}
A first comment on the LDP is that it encompasses both the law of large numbers and the central limit theorem. Indeed, consider a set of $N$ IID random variables\footnote{all this can be extend to non-IID random variables, provided that they are not too much correlated, but we will use IID random variables to keep things as simple as possible.} $x_1,\dots, x_N$ with finite mean $\left< x_i \right> = x$ and variance $\left< x \right>^2 - x^2 = \sigma^2$. Their empirical average is
\begin{equation}
	A_N = \frac{1}{N} \sum_{i=1}^{N} x_i
\end{equation}
and the law of large numbers guarantees that
\begin{equation}\label{eq::lln}
	\lim_{N\to\infty} P(|A_N - x|>\epsilon) = 0 
\end{equation}
for each $\epsilon>0$.

The central limit theorem for IID random variables extends this result by giving the details on the shape of the probability of obtaining $A_N$ inside an interval $[a,b]$:
\begin{equation}
	\lim_{N\to\infty} P(\sqrt{N}(A_N - x) \in [a,b]) = \frac{1}{\sqrt{ 2 \pi \sigma^2}}\int_a^b dz \, e^{- \frac{z^2}{2 \sigma^2}},
\end{equation}
The analogous for the probability density is
\begin{equation}
	\lim_{N\to\infty} - \frac{1}{N} \log(p_{A_N}(a)) =  \frac{(a-x)^2}{2 \sigma^2}.
\end{equation}

Now, consider that a LDP holds for our empiric average $A_N$. We then have
\begin{equation}
	p_{A_N}(a) \simeq e^{- N I(a)}.
\end{equation}
The only values of $a$ such that $\lim_{N\to\infty} p_{A_N}(a) \neq 0$ have to be all the values $a^\star$ such that $I(a^\star)=0$. Therefore we recover the law of large number by noticing that $a^\star= x$, where $x$ is the one in Eq.~\eqref{eq::lln}. Moreover, we can expand $I$ around its zero, $x$, and obtain
\begin{equation}\label{eq::ld_vs_clt}
	I(a) = \frac{1}{2} I''(x) (a-x)^2 + o((a-x)^3),
\end{equation}
where the ``small $o$'' notation means that we are neglecting terms of order $(a-x)^3$ or less relevant in the limit $a\to x$.
Therefore, we have
\begin{equation}
p_{A_N}(a) \simeq e^{- N \left(\frac{1}{2} I''(x) (a-x)^2 + o((a-x)^3) \right)},
\end{equation}
which is the central limit theorem, after identifying $I''(x) = \sigma^2$. Notice that this approximation is valid up to $|a-x| \sim N^{-1/2}$, while for larger distances from the average one needs to keep into account higher terms in the expansion Eq.~\eqref{eq::ld_vs_clt}. \\
If one has the full form of $I$, then the probability of each value of $A_N$ can be computed, also for values very far from the average $x$. This is the reason why this field is called \emph{large deviation} theory and in this sense we can consider LDT a generalization of the central limit theorem and of the law of large numbers.

\subsubsection{The G\"artner-Ellis theorem}\label{sec::scgf_props}
But how to compute rate functions? Unfortunately, there is not a general way. However, often the rate function can be computed by means of the \emph{G\"artner-Ellis} theorem, which in its simpler formulation states the following. \\
Consider the random variable $A_N$, where $N$ is an integer parameter. The \emph{scaled cumulant generating function} (SCGF) is defined as
\begin{equation}
	\psi(k) = \lim_{N\to\infty} \frac{1}{N} \log \left< e^{N k A_N} \right>,
\end{equation}
where $k\in \mathbb{R}$ and
\begin{equation}
	\left< e^{N k A_N} \right> = \int da \, p_{A_N}(a) \, e^{N k a}.
\end{equation}
If $\psi(k)$ exists and is differentiable for all $k\in\mathbb{R}$, then $A_N$ satisfies a large deviation principle, with rate function $I$ given by the Legendre transform of the SCGF, that is
\begin{equation}\label{eq::ld_ratefunction_gen}
	I(a) = \sup_{k\in\mathbb{R}} \left( k a - \psi(k) \right).
\end{equation}

We will not prove this theorem here, but the interested reader can find the proof, for example, on Ellis' book~\cite{Ellis2007} or on Touchette's review~\cite{Touchette2009}.

Here we will limit ourself to some consideration about the SCGF. First of all, its name is given by the fact that
\begin{equation}
	\left. 	\frac{\partial^n}{\partial k^n} \psi(k) \right|_{k=0} = \lim_{N\to\infty} N^{n-1} C_n,
\end{equation}
where $\partial_k^n$ denotes $n$ derivatives with respect to $k$ and $C_n$ is the $n$-th order cumulant of $A_N$. In particular,
\begin{equation}
	\left. \frac{\partial}{\partial k} \psi_k \right|_{k=0} = \lim_{N\to \infty} \left< A_N \right>
\end{equation}
and
\begin{equation}
	\left. \frac{\partial^2}{\partial k^2} \psi_k \right|_{k=0} = \lim_{N\to \infty} N \left( \left< A_N^2 \right> - \left< A_N \right>^2\right),
\end{equation}
that is the first and second derivatives of the SCGF $\psi(k)$ evaluated in $k=0$ are, respectively, the mean and the variance (times $N$) of $A_N$, in the limit of large $N$.\\
The SCGF $\psi(k)$ has some remarkable properties, that will be useful in the following:
\begin{enumerate}
	\item $\psi(0)=0$, because of normalization of the probability measure.
	
	\item The function $\psi(k)$ is convex, as can be proven by using the H\"older inequality:
	\begin{equation}
	\left< X Y \right> \le \left<X^{1/p}\right>^{p} \left<Y^{1/q}\right>^{q},\qquad 0\le p,\quad q\le 1,\quad p + q = 1.
	\label{eq:LDP_Holder}
	\end{equation}
	Indeed, if we choose $X=e^{p k_1 NA_N}$, $Y=e^{(1-p) k_2 NA_N}$, so that
	\begin{equation}
		\left< e^{[p k_1 + (1-p) k_2]NA_N} \right> \le \left<e^{ k_1 NA_N}\right>^{p} \left<e^{k_2 NA_N}\right>^{1-p},
	\end{equation}
	we now take the logarithm, divide by $N$ and, since this inequality is valid for all $N$, we can take the limit $N\to\infty$ to obtain 
	\begin{equation}
	\psi( p k_1 + (1-p) k_2) \le p \, \psi(k_1) + (1-p) \, \psi(k_2).
	\end{equation}
	
	\item The function $\psi(k)/k$ is a monotonic non-decreasing function, as can be proven from another usage of the H\"older inequality: this time we choose $X=e^{k p N A_N}$, $Y=1$. We have now
	\begin{equation}
	\left<e^{k p N A_N}\right> \le \left<e^{ k N A_N }\right>^{p}
	\end{equation}
	and we take the logarithm, divide by $N$ and get
	and taking the log	\begin{equation}
	\frac{1}{N} \log \left<e^{k p N A_N}\right> \le \frac{p}{N} \log \left<e^{ k N A_N }\right>,
	\end{equation}
	which implies
	\begin{equation}
	\psi (p k) \le p \, \psi(k).
	\end{equation}
	Since $p$ is an arbitrary number between 0 and 1, we have
	\begin{equation}
		\frac{\psi (p k)}{p} \le \, \psi(k)
	\end{equation}
	and, by dividing by $k$, we obtain that the function $\psi_N(k)/k$ must be non decreasing.

\end{enumerate}

\subsection{Large deviations of the p-spin model}\label{sec::lardev-pspin}
In this section, we follow Pastore, Di Gioacchino and Rotondo~\cite{Pastore2019} in their discussion about the large deviations of the p-spin spherical model introduced in Sec.~\ref{sec::pspin}. An interesting relation between large deviation and replica method (and replica symmetry breaking) is firstly elucidated, then exploited. We notice that similar techniques can be applied to RCOPs, once they are written as a spin glass problem.

\subsubsection{Replica trick and large deviation theory}
As we have seen, the theory of disordered systems has been mainly developed to describe the average behavior of physical observables, which one hopes to coincide with the typical one (this is true if the physical observable under discussion is self-averaging).

However, as it has been argued since the early days of the subject, one can employ spin glass techniques in a more general setting, to estimate probability distributions \cite{Toulouse1981} and fluctuations around the typical values \cite{Tanaka1989,Crisanti1990} of quantities of interest. More recently, Rivoire \cite{Rivoire2005}, Parisi and Rizzo \cite{Rizzo2008_1,Rizzo2008_2,Rizzo2010_1,Rizzo2010_2} and others \cite{Andreanov2004,Nakajima2008,Nakajima2009} followed this line of thought, providing a bridge between spin glasses (and disordered systems more in general, as in \cite{Malatesta2019_2}) 
and the theory of large deviations.
%
The key quantity providing the bridge is:
\begin{equation}
G(k) = \lim_{N\to\infty} -\frac{1}{\beta N} \log \overline{Z_N^k},
\label{eq::paperld_SCGF}
\end{equation}
where $Z_N$ is the partition function for a system of size $N$ and the bar above quantities denotes average over disorder. The argument of the logarithm is the averaged replicated partition function and $k$ is the so-called replica index. We have changed our notation for the replica number to emphasize that we will not deal here only with vanishing number of replicas. \\
From the viewpoint of large deviation theory, $S(k)$ is simply related to the scaled cumulant generating function (SCGF) of the free energy $f=\lim_{N\to\infty}f_N$ by
\begin{equation}
\psi(k) = \lim_{N\to\infty} \frac{\log \overline{e^{k N f_N}}}{N} = -\beta G(-k/\beta).
\label{eq::paperld_psifromG}
\end{equation}

We are interested in obtaining, by using the G\"artner-Ellis theorem, as much information as possible on the full form of the rate function $I (x)$. To do that, one needs to work out the SCGF for finite replica index $k$. This problem is clearly equivalent to determine the full analytical continuation of the averaged replicated partition function from integer to real number of replicas $k$ and it was extensively investigated in the early stage of the research in disordered systems in order to understand the manifestation of the (at that time surprising) mechanism of replica symmetry breaking we encountered in Sec.~\ref{sec::pspin}. Since these results are particularly interesting from the more modern large deviation viewpoint, we briefly mention the main ones in the following.\\
Van Hemmen and Palmer~\cite{vanHemmen1979} were the first ones to observe that the expression in Eq.~\eqref{eq::paperld_SCGF} must be a convex function of the replica index $k$, as we discussed Sec.~\ref{sec::scgf_props}. 
Shortly after, Rammal~\cite{Rammal1981} added that $\psi(k)/k$ must be monotonic. 
However, in some situations, the replica symmetric (RS) ansatz gives a trial SCGF which is not convex, or such that $\psi(k)/k$ is not monotonic. This problem has been analyzed for the first time in the context of the Sherrington-Kirkpatrick model. After Parisi introduced his remarkable hierarchical scheme for replica symmetry breaking, Kondor \cite{Kondor1983} argued that his full RSB solution was very likely to provide a good analytical continuation of Eq.~\eqref{eq::paperld_SCGF}, not only around $k=0$.

These results may be considered nowadays as the initial stage of a work that attempted to give mathematical soundness to the replica method. Although this vaste program is mostly unfinished, Parisi and Rizzo realized that the original analysis presented by Kondor is fundamental to investigate the large deviations of the free-energy in the SK model. Large deviations have been examined only for a few other spin glass models: Gardner and Derrida discussed the form of the SCGF in the random energy model (REM) in a seminal paper \cite{Gardner1975}, and many rigorous results have been established later on \cite{Fedrigo2007}; on the other side of the story Ogure and Kabashima \cite{Ogure2004,Ogure2009_1,Ogure2009_2} considered analyticity with respect to the replica number in more general REM-like models; Nakajima and Hukushima investigated the $p$-body SK model \cite{Nakajima2008} and dilute finite-connectivity spin glasses \cite{Nakajima2009} to specifically address the form of the SCGF for models where one-step replica symmetry breaking (1RSB) is exact.  

In this section we add one more concrete example to this list, considering the $p$-spin spherical model. In zero external magnetic field, we will show that the 1RSB calculation at finite $k$ produces a SCGF with a linear behavior below a certain value $k_c$ and a nice geometrical interpretation of this, dating back to Kondor's work on the SK model \cite{Kondor1983}, is discussed. Accordingly, the rate function is infinite for fluctuations of the free energy above its typical value, which are then more than exponentially suppressed in $N$, giving rise to a regime of \emph{very-large} deviations. This happens for several other spin glass problems, as discussed for example in \cite{Rizzo2010_1}, and many other systems showing extreme value statistics\cite{Dean2008}.

The situation changes dramatically when a small external magnetic field is turned on: the rate function becomes finite everywhere, although highly asymmetric around the typical value, and the very-large deviation feature disappears accordingly. We explain intuitively the reason of this change of regime in light of the geometrical interpretation discussed for the case without magnetic field, and argue that the introduction of a magnetic field could act as a regularization procedure for resolving the anomalous scaling of the large deviation principle for this kind of systems.

\subsubsection{Large deviations of the \texorpdfstring{$p$}{p}-spin spherical model free energy}
\label{sec::paperld_nofield}


\begin{figure}
	\centering
	\includegraphics[width=0.95\columnwidth]{./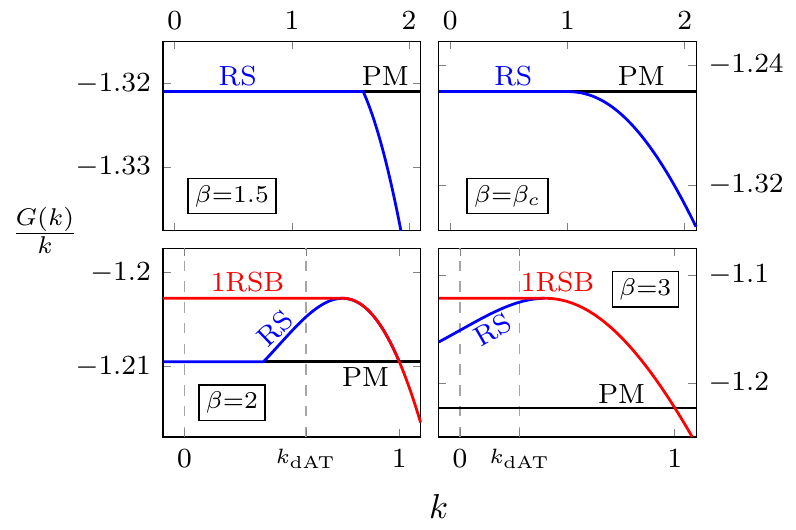}
	\caption{The function $G(k)/k$ for the ($p=3$)-spin in zero external magnetic field, for different values of $\beta$. Top-left: at high temperature ($\beta=1.5$) the 1-RSB anstatz coincides with the RS one (blue curve); the solution joins the paramagnetic line (in black) in a point $k_c>1$, where the function is not differentiable. Top-right: at $\beta=\beta_c\approx 1.706$, the junction is in $k_c=1$ and becomes smooth. Bottom line: for $\beta=2$ (left) and $\beta=3$ (right), the 1RSB solution (red curve) departs from the RS one and becomes a straight line for all the $k<k_c$, which is the point where the RS function loses its monotonicity. The critical value $k_c$ approaches zero for $\beta\to\infty$.}
	\label{fig::paperld_G(k)/k}
\end{figure}

We start our analysis from Eq.~\eqref{eq::pspin_zn_2}. After the integration on the $\lambda$ degrees of freedom, the partition function is (up to finite-size corrections in $N$):
\begin{equation}\label{eq::paperld_Zk_0}
\overline{Z_N^k} = \int DQ  \, e^{-N S(Q)}\,,
\end{equation}
where 
\begin{equation}\label{eq::paperld_action}
S(Q) = -\frac{\beta^2}{4} \sum_{\alpha, \beta = 1}^k Q_{\alpha \beta}^p - \frac{1}{2}\log \det \mathbf{Q} - k S(\infty).
\end{equation}
To evaluate the integrals on $Q$ we use again the saddle point method together with the 1RSB ansatz, which is formulated in terms of the three parameters $(q_1, q_0, m)$ in Eq.~\eqref{eq::pspin_1rsb_q}.\\
We compute $S(Q)$ in terms of the 1RSB parameters as discussed in Appendix \ref{app::pspin}, but now we do not take the limit $k\to 0$ and we obtain: 
\begin{multline}\label{eq::paperld_g_h_0}
S(k;q_0,q_1,m) = - \frac{(\beta J)^2}{4}\left[k + k(m-1)q_1^p + k(k-m)q_0^p \right]\\
-\frac{k(m-1)}{2m}\log\left({\eta_0}\right) - \frac{k}{2m}\log\left({\eta_1}\right)- \frac{1}{2}\log\left(1+ \frac{ k q_0}{\eta_1}\right)- k s(\infty),
\end{multline}
where $\eta_0 = 1- q_1$ and $\eta_1 = 1-(1-m)q_1$ are the two different eigenvalues of the 1RSB matrix $Q$ once we use that $q_0=0$ at the saddle point.
This functional is evaluated numerically at the saddle point $(\bar{q}_1, \bar{q}_0, \bar{m})$ for the 1RSB parameters for each value of $k$. 
The three parameters take values in the domains $q_1\in[0,1]$, $q_0\in[0,q_1]$, $m\in[1,k]$ (if $k>1$) or $m\in[k,1]$ (otherwise), and for $k<1$ the saddle point is obtained with a maximization of the functional instead of a minimization, as usual within the replica-method framework.
Using Eq.~\eqref{eq::paperld_psifromG}, we obtain a SCGF $\psi(k)$ which becomes linear above a certain value $k=k_c$.\\
To ease the visualization of this feature, in Fig.~\ref{fig::paperld_G(k)/k} we plot the function
$G(k)/k = S(k;\bar{q}_1, \bar{q}_0, \bar{m})/(k\beta)$ which, when $\psi(k)$ is linear, becomes an horizontal line intercepting the vertical axis in $f_{\text{typ}}$. 
The figure does not change qualitatively for $p\ge 3$. For the $p=2$ case, at low temperature the 1RSB ansatz reduces to the RS one (that is, $\bar{q}_1=\bar{q}_0$) as long as $k \geq 0$, therefore the typical values of all the thermodynamic quantities are obtained under the RS ansatz. On the opposite, for $k<0$ we need to introduce again the 1RSB ansatz which, as in the $p \geq 3$ case, gives the linear behavior of the SCGF. In other words, $k_c = 0$ for the 2-spin spherical model for $\beta>\beta_c$.

Before turning to the evaluation of the rate function, we discuss an interesting geometrical interpretation of the SCGF shape. To this aim, let us consider the RS ansatz (that is, Eq.~\eqref{eq::paperld_g_h_0} with $q_1=q_0=q$ and $m=1$).
As we can see in Fig.~\ref{fig::paperld_G(k)/k}, the RS solution (blue curve) is not monotonic for $\beta<\beta_c$. 
But as we have seen, $G(k)/k$ has to be a monotonic quantity and therefore the RS solution can be ruled out. 
We can check that the 1RSB solution gives a perfectly fine monotonic $G(k)/k$ (red curve in Fig.~\ref{fig::paperld_G(k)/k}), as one could expect due to the fact that this ansatz gives the correct typical free energy for this model. Interestingly, however, exactly the same monotonic curve can be obtained by using a much simpler geometric construction: just consider the RS solution, which is the right one for large $k$, and when $G(k)/k$  starts to be non-monotonic continue with a straight horizontal line (in the $G(k)/k$ vs $k$ plot). 
This construction actually dates back to Rammal \cite{Rammal1981} and is discussed in more detail in Appendix \ref{app::paperld_rammal}. 
Here we limit ourselves to notice that $G(k)/k$ obtained by using the 1RSB ansatz or the Rammal construction are the same because of the following facts: (i) for $k > k_c$ the 1RSB and RS ans\"atze coincide ($\bar{q}_1 = \bar{q}_0 = q \neq 0$) and $k_c$ is exactly the point where $G(k)/k$ is not monotonic anymore if one uses the RS ansatz; (ii) from the saddle point equations obtained by extremizing Eq.~\eqref{eq::paperld_g_h_0} when $k < k_c$, one obtains $\bar{q}_0=0$; (iii) the remaining saddle point equations fix $q_1$ and $m$, and one can see that these equations are identical to those needed to perform the Rammal construction, which fix the point $k_c$ and the parameter of the RS ansatz $q$.

\subsubsection{Rate function and very large deviations}

\begin{figure}
	\centering
	\includegraphics[width=0.95\columnwidth]{./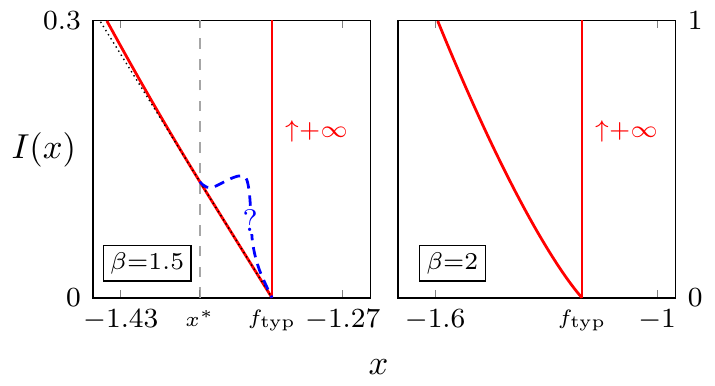}
	\caption{Rate function of the free energy for the ($p=3$)-spin in zero external magnetic field, for different values of $\beta$. The fluctuations above the typical value correspond to the linear part of the SCGF, so that the Legendre transformation gives an infinite rate function. The fluctuations below the typical value are described by the branch in red. For $\beta=1.5<\beta_c$ (left), as the SCGF is not differentiable, we obtain only the convex-hull of the true rate function; in the interval $[x^*,f_{\text{typ}}]$, where our result gives a straight segment (the part of the curve overlapping the dotted line), the true, unknown rate function is represented by the curve in blue. For $\beta=2>\beta_c$ (right) the SCGF is smooth and the G\"artner-Ellis theorem applies.}
	\label{fig::paperld_rate}
\end{figure}

Starting from the SCGF, we perform a numerical Legendre transformation to obtain the rate function according to Eq.~\eqref{eq::ld_ratefunction_gen}. The result is shown in Fig.~\ref{fig::paperld_rate} for different values of $\beta$. 
The rate function displays the following behavior: 
\begin{itemize}
	\item for $x=f_{\text{typ}}$, it is null as expected; 
	\item for $x<f_{\text{typ}}$, $I(x)$ is finite, indicating that a regular large deviation principle holds for fluctuations below the typical value. When $\beta>\beta_c$ the SCGF is smooth, so we obtain the rate function via the Gartner-Ellis theorem. On the other hand, when $\beta<\beta_c$ the SCGF is not differentiable in a point (see Fig.~\ref{fig::paperld_G(k)/k}), so we are only able to obtain the convex hull of the rate function (see Fig.~\ref{fig::paperld_rate});
	\item for $x>f_{\text{typ}}$, $I(x)=+\infty$. This is due to the linear behavior of the SCGF below $k_c$ discussed in the previous section and it is a signature of an anomalous scaling with $N$ of the rare fluctuations above the typical value.
\end{itemize}
An ambitious goal would be the identification of the correct behavior with $N$ of these very large deviations. Indeed, a more general way of stating a large deviation principle is
\begin{equation}\label{eq::paperld_ldp2}
P(f_N \in [x, x + d x]) \sim 
\begin{cases}
e^{- a_N I_-(x)}d x &	\text{if $x\le f_{\text{typ}}$}\,,\\
e^{- b_N I_+(x)}d x	&	\text{if $x>f_{\text{typ}}$}\,,\\
\end{cases}
\end{equation}
where $a_N, b_N \to \infty$ when $N\to\infty$. In other words, the fluctuations resulting in values of $x$ lower than $f_{\text{typ}}$ are given by the rate function $I_-(x)$, while those resulting in values larger than $f_{\text{typ}}$ have rate function $I_+(x)$, but with different scalings $a_N$, $b_N$. In our case, we have $a_N \sim N$, then the rate function defined in Eq.~\eqref{eq::ldp_general} can be written as
\begin{equation}
I(x) \sim \begin{cases}
I_-(x) &	\text{if $x\le f_{\text{typ}}$}\,,\\
\frac{b_N}{N} I_+(x)	&	\text{if $x>f_{\text{typ}}$}\,,\\
\end{cases}
\end{equation}
with  $b_N/N \to \infty$. For this reason, fluctuations above the typical value are referred to as ``very large deviations''. The physical explanation of the substantial difference in scaling of the deviations of thermodynamical quantities below and above their typical values resides in the different number of elementary degrees of freedom involved to obtain the corresponding fluctuation: while in the first case it is sufficient that only one of the elementary variables assumes an anomalous value below its typical, the others being fixed, in the second case all the variables have to fluctuate, a joint event with probability heavily suppressed with respect to the first one.

This argument shows the importance of the resolution of the anomalous scaling behavior leading to the very large deviations we explained above. In general, however, although the G\"artner-Ellis theorem can be extended to find rate functions for large deviation principles with arbitrary speed $a_N$, $b_N$, we lack techniques to compute the asymptotic scaling of $a_N$ and $b_N$ for large $N$, because of additional inputs needed to calculate the corresponding SCGF with a saddle-point approximation (for some other systems this problem has been solved with ad-hoc methods \cite{Andreanov2004,Dean2008}, while in \cite{Rizzo2010_1} a method is proposed in the context of the SK model).

In the next section we present the main result of our work, which could be useful to study this anomalous kind of fluctuations also in other problems: through an extension of the replica calculation to the case with an external magnetic field, we are able to numerically check that the very large deviation effect disappears. More in detail, we obtain that with a magnetic field, no matter how small, not only $a_N \sim N$ as before, but also $b_N \sim N$.

\subsection{A ``cure'' for very-large deviations: \texorpdfstring{$p$}{p}-spin model in a magnetic field}
\label{sec::paperld_field}

In this section we generalize the previous discussion to the case of non-zero magnetic field. The Hamiltonian for the model is
\begin{equation}
H = H_p  - h\sum_{i = 1}^N \sigma_i \, ,
\end{equation}
where $H_p$ is the p-spin Hamiltonian and $h$ represents an external magnetic field coupled with the spins. 

The computation of the SCGF at $h\neq 0$ goes beyond the approach of the work by Crisanti and Sommers, who only considered the typical case. In contrast to the problem with $h=0$, where the finite-$k$ calculation consists of a quite straightforward generalization of the standard one, here a more substantial effort is needed to extend the $k=0$ result. The derivation is quite technical, therefore to emphasize the discussion about the large deviation of the free energy we report here only the final expression we obtained for the SCGF, postponing the details in Appendix \ref{app::pspin_ld}.
The functional $g(\mathbf{q})$ in the 1RSB ansatz, for finite $k$ is
\begin{multline}\label{eq::paperld_g_h_non_0}
S(k;q_0,q_1,m) = -\frac{(\beta J)^2}{4}\left[k + k(m-1)q_1^p + k(k-m)q_0^p \right]
-  \frac{k\hat{q}_-}{2( \eta_2 -  k\hat{q}_{-})} \\
-\frac{k(m-1)}{2m}\log\left({\eta_0}\right) - \frac{k}{2m}\log\left({\eta_1}\right)- \frac{1}{2}\log\left(1 + \frac{k(q_0- \hat{q}_-)}{\eta_1}\right)\\
-\frac{(\beta h)^2}{2} k \left(\eta_2 - k\hat{q}_- \right) -k s(+\infty)\,,
\end{multline}
where $\hat{q}_-$ depends on the combination $\beta h$ and the parameters of the 1RSB ansatz (its full form is given in Eq.~\eqref{eq::paperld_hatq} of Appendix \ref{app::pspin_ld}) and now $\eta_0 = 1-q_1$, $\eta_1 = 1 - (1-m) q_1- m q_0$ and $\eta_2 = 1 - (1-m) q_1- (m-k) q_0 $ are the three eigenvalues of $Q$ (now we do not have anymore $q_0 = 0$). 

Again, we numerically compute and plot $G(k)/k = S(k;\bar{q}_1, \bar{q}_0, \bar{m})/(k\beta)$ in Fig.~\ref{fig::paperld_scgf_h_non_0}, where again $\bar{q}_1, \bar{q}_0, \bar{m}$ are the solutions of the saddle point equations, obtained by extremization of Eq.~\eqref{eq::paperld_g_h_non_0}. The most striking feature of these plots is the difference from those represented in Fig.~\ref{fig::paperld_G(k)/k}: all the horizontal lines disappear and their place is taken by curves (again given by the 1RSB ansatz) with non-null derivative.
Let us analyze more closely what is happening and why the external magnetic field is changing the behavior of the system.
As discussed in the last part of Sec.~\ref{sec::paperld_nofield}, one can apply the Rammal construction to correct the non-monotonic behavior of the RS version of $G(k)/k$ (plotted as a blue curve in Fig.~\ref{fig::paperld_scgf_h_non_0}). Exactly as in the $h=0$ case, the resulting function will be monotonic and will have an horizontal line, which is the smooth continuation of $G(k)/k$ from $k_m$, the point where it loses monotonicity. However, as one can see from Fig.~\ref{fig::paperld_scgf_h_non_0}), the result will not be the 1RSB solution.
This difference from the $h=0$ case can be seen as a consequence of the saddle point equations: now the equation for $q_0$ is non-trivial and so either $\bar{q}_0, \bar{q}_1$ and $\bar{m}$ depends on $k$ also in 1RSB phase, giving rise to the non-trivial behavior of $G(k)/k$ also for $k<k_c$. 
Notice that another interesting feature appears: when $h=0$ we have that $k_c$, the point where the 1RSB solution becomes different from the RS one, coincide with $k_m$, the point where $G(k)/k$ obtained by the RS ansatz loses monotonicity. With $h\neq 0$ we have that $k_c>k_m$ for $\beta>\beta_c$, that is the 1RSB branch departs from the RS one before (coming from large $k$) the point where $G(k)/k$ starts to be not monotonic.
Finally, we numerically checked that the shape of $G(k)/k$ below $k_c$ depends on $p$.

This change in the SCGF has an important effect, in turn, on the rate function: taking the numerical Legendre transformation of the SCGF we now obtain a continuous curve, meaning that very rare fluctuations are disappeared, see Fig.~\ref{fig::paperld_rate_h_non_0}. In other words, now the two quantities $a_N$ and $b_N$ introduced in Eq.~\eqref{eq::paperld_ldp2} are such that $a_N \sim N$ and $b_N \sim N$. This effect is present also for very small magnetic field, even though $I(x)$ is more and more asymmetrical around $x=f_{\text{typ}}$ as we decrease $h$.
This observation brings to a natural question, which for now remains open: can this effect be exploited to obtain insights on the very large fluctuations - that is how are they suppressed with the system size? And what is the corresponding (finite) rate function?

\begin{figure}
	\centering
	\includegraphics[width=0.95\columnwidth]{./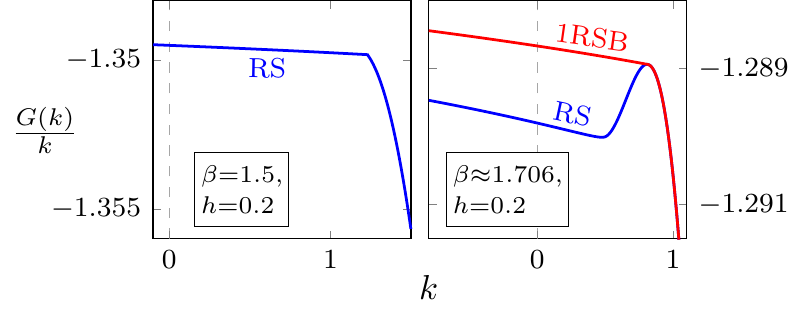}
	\caption{The function $G(k)/k$ for the ($p=3$)-spin in a magnetic field $h=0.2$, for different values of $\beta$: $\beta = 1.5<\beta_c(h)$ (left), $\beta=\beta_c(h=0)>\beta_c(h)$ (right). The application of a magnetic field washes out the linear behavior at small $k$ observed in zero magnetic field.
	}
	\label{fig::paperld_scgf_h_non_0}
\end{figure}

\begin{figure}
	\centering
	\includegraphics[width=0.95\columnwidth]{./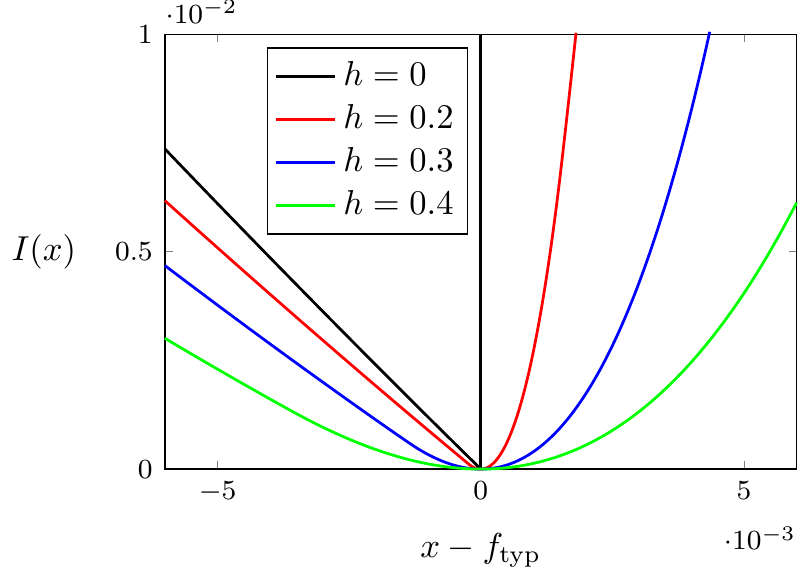}
	\caption{Rate function of the free energy for the ($p=3$)-spin at $\beta = 3$, for different values of the external magnetic field. The infinite branch of the rate functions in Fig.~\ref{fig::paperld_rate} is replaced by a curve gradually less steep as the magnetic field is increased.}\label{fig::paperld_rate_h_non_0}
\end{figure}

\chapter{In practice: from mean-field to Euclidean problems}\label{chap::second}
\chaptermark{From mean field to Euclidean problems}
From now on we will deal with a very specific class of COPs, the so-called \emph{Euclidean} problems. The main characteristics of these problems are:
\begin{itemize}
	\item an instance is specified by giving the positions of a certain number of points in a subspace (often compact) of $\mathbb{R}^d$;
	\item the cost function depends on the distances between pairs of these points;
	\item each of these problems allows for a natural definition in terms of a problem on a graph.
\end{itemize}
We will deal with certain specific problems, that is the matching and assignment problem, the traveling salesman problem and the 2-factor problem. \\ 
In all these cases, the RCOP version of these problems will be defined by considering an hypercube of side 1 and a certain factorized probability density for the point positions, that will then be IID random variables. Therefore the quantity of interest, which is in our case the cost of the solution, will be averaged over the point positions.\\
All these problems can be also studied in the so-called mean-field approximation, where instead of throwing the points according to a probability density and computing the distances, one directly chooses a probability density for the distances. If this probability density is factorized, each distance is a IID random variable and in this way we are neglecting correlations among distances. Notice that, on the opposite, these correlations emerge from the Euclidean structure of the space when we compute distances after having thrown the points, even if they are chosen in a IID way.

Often we will refer to these mean-field results to make a comparison with our finite-dimensional results, and also because under the mean-field approximation the replica method can be (most of the times) carried out to obtain the quantity of interest. 
On the other hand, in genuine Euclidean problems the emergence of the aforementioned correlations prevents us to successfully apply replica methods. To overcome this technical issue, we will deal with problems in low number of dimensions $d$ ($d=1$ and, when possible, $d=2$) since they are simpler, and we will focus on the search for a way to obtain the average cost of the solution without using the replica method. 

\section{Euclidean problems in low dimension}
\subsection{A very quick introduction to graph terminology}\label{sec::graph}
Here we introduce the key concepts about graphs that we will use profusely in the following.\\
Let us start with the definition of a graph: given a set $A$ of labels (typically $A=\mathbb{N}$), a graph $G$ is specified by two sets, the vertex set $V \subset A$ and the edge set $E \subset A\times A \times \dots \times A = A^\ell $ and we say that $G=(V, E)$, the element of $V$ are the \emph{vertices} or \emph{nodes} of $G$ and the elements of $E$ are the \emph{edges} or \emph{links} of $G$. \\
\emph{Multilinks} (or \emph{multiple edges}) are two or more edges connecting the same points. A \emph{self-loop} is an edge in which the same point appears more than once.
A graph without multilinks and self-loops is called \emph{simple} graph. From now on, we will consider always simple graphs with $E \subset A\times A$ (that is, $\ell=2$).\\
A graph is said to be \emph{undirected} if the following holds: given an edge $(i,j)\in E$, then $(j,i)\in E$ (or, alternatively, the edges are unorderd pairs of vertices). As a further restriction, we will deal only with undirected graphs.\\
It is customary to represent $G$ as a collection of points, which correspond to the vertices, and lines, which correspond to the edges, such that between vertices $v_i$ and $v_j$ there is a line if and only if $(v_i, v_j) \in E$.

We will say that a graph is \emph{weighted} if there is a weight $w_{ij} \in \mathbb{R}$ associated to each link $\epsilon_{ij} = (i,j)$.\\
Two vertices are said to be \emph{adjacent} if there is a link connecting them. Given a vertex $i$ we say that the \emph{neighborhood} of $i$, which we will denote as $\partial i$, is the set of all vertices adjacent to $i$. Given a vertex $i$, the number of vertices adjacent to him $|\partial i |$ is said to be its \emph{degree}.\\
We introduce the \emph{adjacency matrix} $A$ of a graph:
\begin{equation}
	A_{ij} = \left\{
	\begin{aligned}
	& 1 & & \text{if $(i,j) \in E$;}\\
	& 0 & & \text{otherwise.}
	\end{aligned}	
	\right.
\end{equation}
Notice that for undirected graphs, $A$ is symmetric.
We define the \emph{Laplacian matrix} $L$ of a graph:
\begin{equation}
	L_{ij} = \left\{
	\begin{aligned}
	 &- 1 &  &\text{if $i\neq j$, $(i,j) \in E$;}\\
	 &\sum_{j,j\neq i} 1 &  & \text{if $i=j$;}\\
	 & 0 & & \text{if $i\neq j$, $(i,j) \notin E$.}
	\end{aligned}	
	\right.
\end{equation}
When the graphs are weighted, we define the weighted adjacency matrix as
\begin{equation}
A_{ij} = \left\{
\begin{aligned}
& w_{ij} & & \text{if $(i,j) \in E$;}\\
& 0 & & \text{otherwise}
\end{aligned}	
\right.
\end{equation}
and the weighted Laplacian matrix as
\begin{equation}
L_{ij} = \left\{
\begin{aligned}
&- w_{ij} &  &\text{if $i\neq j$, $(i,j) \in E$;}\\
&\sum_{j,j\neq i} w_{ij} &  & \text{if $i=j$;}\\
& 0 & & \text{if $i\neq j$, $(i,j) \notin E$.}
\end{aligned}	
\right.
\end{equation}
where $w_{ij}$ is the weight associated to the edge $(i,j)$.\\

Another useful concept is the \emph{walk}, that is an alternating series of vertices and edges such that two consecutive vertices are linked by the interleaving edge. Pictorially, this is actually a ``walk'' on the graphical representation of a graph. When the vertices and edges are all different, the walk is called \emph{path}. A graph is \emph{connected} if there is a path connecting each pair of vertices, and it is said to be \emph{disconnected} otherwise. The length of a path is its number of vertices, and the \emph{distance} between two vertices is the length of the shortest path connecting them. If such a path does not exist, we say that the distance is infinite. A walk or path is closed if the starting vertex is the same of the ending one. A closed path is called \emph{cycle} or \emph{loop}. There are two special kinds of cycles: the \emph{Eulerian} cycle is such that it passes through each edge of the graph, the \emph{Hamiltonian} cycle is such that it passes through each vertex of the graph. If a graph does not contain any cycle, it is called \emph{forest}. If it is also connected, it is called \emph{tree}.\\

There are some classes of graphs which one encounters particularly often, because of their regularity properties (see Fig.~\ref{fig::graphs}):
\begin{itemize}
	\item the \emph{$k$-regular} graphs is such that for each $v\in V$, we have $|\partial V| = k$, that is each vertex has degree $k$; 
	\item a \emph{complete} graph is such that for each pair of vertices $i, j \in V$, $(i,j) \in E$, that is each pair of vertices are connected by an edge; in particular, a complete graph with $N$ vertices is $N$-regular and we will use for it the symbol $\mathcal{K}_N$;
	\item a graph $G=(V,E)$ is $p$-partite if we can partition $V$ in $p$ non-empty subsets such that there are no edges of $G$ connecting vertices which belong to the same subset; we will consider in the following only the case $p=2$: in this case we say that the graph is \emph{bipartite};
	\item a graph $G=(V,E)$ which is bipartite in such a way that each subset of vertices has the same number of vertices ($|V_1| = |V_2| = |V|/2$) and such that each vertex of a subset is connected with all the vertices of all the other subsets is called \emph{complete bipartite}; in particular, a complete bipartite graph with $2N$ vertices is $N$-regular and and we will use for it the symbol $\mathcal{K}_{N,N}$.
\end{itemize}

\begin{figure}
	\centering
  \begin{subfigure}[t]{.4\linewidth}
	\centering
		\begin{tikzpicture}[scale=1.9]
		\node[draw,circle,inner sep=2pt,fill=red] (1) at (0,0) {};
		\node[draw,circle,inner sep=2pt,fill=red] (2) at (2,0) {};
		\node[draw,circle,inner sep=2pt,fill=red] (3) at (2,2) {};
		\node[draw,circle,inner sep=2pt,fill=red] (4) at (0,2) {};
		\node[draw,circle,inner sep=2pt,fill=red] (5) at (0.5,0.5) {};
		\node[draw,circle,inner sep=2pt,fill=red] (6) at (1.5,0.5) {};
		\node[draw,circle,inner sep=2pt,fill=red] (7) at (1.5,1.5) {};
		\node[draw,circle,inner sep=2pt,fill=red] (8) at (0.5,1.5) {};
		\draw[line width=1pt,black]  (1) to (2);
		\draw[line width=1pt,black]  (2) to (3);
		\draw[line width=1pt,black]  (3) to (4);
		\draw[line width=1pt,black]  (4) to (1);
		\draw[line width=1pt,black]  (5) to (6);
		\draw[line width=1pt,black]  (6) to (7);
		\draw[line width=1pt,black]  (7) to (8);
		\draw[line width=1pt,black]  (8) to (5);
		\draw[line width=1pt,black]  (1) to (5);
		\draw[line width=1pt,black]  (2) to (6);
		\draw[line width=1pt,black]  (3) to (7);
		\draw[line width=1pt,black]  (4) to (8);
	\end{tikzpicture}
	\caption{Example of a 3-regular graph.}
\end{subfigure}
\begin{subfigure}[t]{.4\linewidth}
	\centering
	\begin{tikzpicture}[scale=0.5]
		\def\n{12}
		\foreach\x in{1,...,\n}{
			\draw (\x*360/\n: 4cm) node[draw,circle,inner sep=2pt,fill=red] (\x) {};
		}
		\foreach\x in{1,...,\n}{
			\foreach\y in{1,...,\n}{
				\ifnum\x=\y\relax\else
				\draw[line width=0.5pt,black]  (\x) to (\y);
				\fi
			}
		}
	\end{tikzpicture}
	\caption{Graphic representation of $\mathcal{K}_N$ with N=12.}
\end{subfigure}

\medskip

\begin{subfigure}[t]{.4\linewidth}
	\centering
	\begin{tikzpicture}[scale=1]
		\def\nr{6} 
		\def\nb{6} 
		\foreach\x in{1,...,6}{
			\node[draw,circle,inner sep=2pt,fill=red] (\x) at (\x,0) {};
		}
		\foreach\x in{7,...,12}{
			\node[draw,circle,inner sep=2pt,fill=blue] (\x) at (\x-6,-2) {};
		}
		\foreach\x in{1,...,6}{
			\foreach\y in{7,...,12}{
				\draw[line width=0.5pt,black]  (\x) to (\y);
			}
		}
	\end{tikzpicture}
	\caption{Graphic representation of $\mathcal{K}_{N,N}$ with N=6.}
\end{subfigure}
\caption{Examples of several classes of graphs.}\label{fig::graphs}
\end{figure}
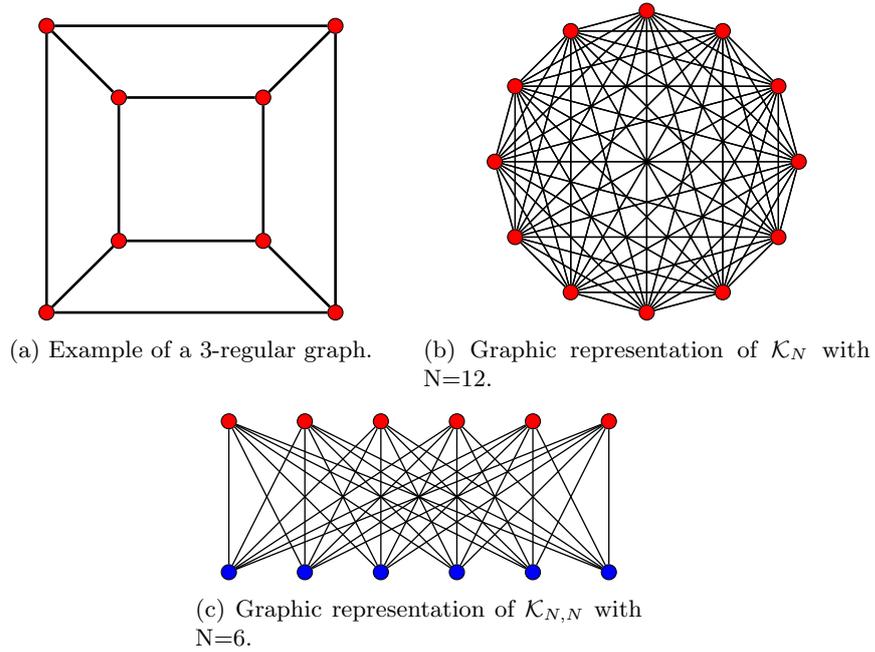

A \emph{subgraph} $G' = (V' , E')$ of the graph $G=(V,E)$ is such that $V'\subset V$ and $E'\subset E$. 
A \emph{spanning subgraph} or \emph{factor} is a subgraph such that $V'=V$.
A k-factor is a factor that is k-regular. In particular, 1-factors are also called \emph{(perfect) matchings}\footnote{1-regular subgraph which are not spanning are sometimes called matchings, and the word ``perfect'' is used if the subgraph is spanning; however, since we will only use this second case, we will from now on drop the adjective ``perfect''.}, or \emph{assignments} when the graph is bipartite. 2-factors are called \emph{loop coverings}. Finally, whenever a spanning subgraph is a tree, it is called \emph{spanning tree}.


Finally, for completeness, we add that sometimes the disorder in COPs defined over graphs is introduced directly at the graph level with the concept of \emph{random graphs}, that is a probability distribution over a set of graphs with certain properties. The most used random graphs are:
\begin{itemize}
	\item \emph{k-regular random graphs} - a graph is randomly chosen among all those with $N$ vertices which are $k$ regulars. Therefore, each graph has the same probability of being generated.
	\item \emph{Erd\"os-R\'enyi graphs} - given a set of $N$ vertices, each possible link is realized with fixed probability $p$.
	\item \emph{Barab\'asi-Albert graphs} (also known as \emph{preferential attachment graphs}) - one vertex at a time is added to the graph; if there are other vertices in the graph, the probability that the new one has a link with the already present vertex $v$ is $k_v/ \mathcal{N}$ where $k_v$ is the degree of $v$ and $\mathcal{N} = \sum_v k_v$ is the normalization constant.
\end{itemize}

\subsection{Why one dimension?}\label{sec::1d}
A Euclidean problem can be seen as a problem on a weighted graph, which typically is complete or complete bipartite. Indeed, an instance of such a problem is specified when the positions of all the involved points are given, and the cost function depends on the distances between pairs of these points. 
Therefore we can restate the problem on a graph as follows: each point chosen in the Euclidean space corresponds to a vertex of the graph and the weight of the link connecting two points is its distance computed in the Euclidean $d$-dimensional space. For this reason we say that the graph is \emph{embedded} in the Euclidean $d$-dimensional space. In the next sections, we will see how the cost function of Euclidean problems has often a very simple interpretation when the problem is casted in graph language.\\

From the next Section, we will start our analysis by considering problems on a graph embedded in one dimension. Now, the problems we will encounter are known to be in the P complexity class as long as they are in one dimension, so why do we are so interested in them?\\
The first reason is that interesting phenomena, such as non-self-averaging solution costs, can appear also in one dimension (as we will see later). The second is that the one-dimensional case is much simpler than the higher-dimensional one, and can lead to insight useful for the latter.

Let us now go through our general strategy to address one-dimensional RCOPs. Our aim is to compute the cost of the solution, averaged over the disorder (that is, over the ensemble of instances defined by probability density for the point positions).\\
The first step consists in finding the \emph{structure} of the solution of our problem. Indeed in one dimension (and, unfortunately, only in one!) we can sort the points according to their position, so that if we have the positions $x_1 < \dots < x_N$ for $N$ points, we say that that the \emph{first} point is the one in $x_1$, the \emph{second} is the one in $x_2$ and so on. Therefore, the point are \emph{ranked} according to their position.
We will see that often the solution is given in terms of this point rank, rather them the point specific positions. 
This will allow us to find the configuration which minimizes the cost, and then to reach our goal it will be enough to average the cost of this configuration over the point positions.\\
However, as we will see, for some problems the optimal configuration does depend on the specific point positions (and not only on their rank) even in one dimension.
In these cases we will still be able to work out bounds for the cost, by carefully analyzing the full set of possible solutions.\\

We will also see how this one-dimensional approach to Euclidean RCOPs will help us to make exact predictions for the limiting (in the large problem size) behavior of the average cost, even for some problems which are NP-hard (in two or more dimensions).

\section{Matching problem}\label{sec::matching}

\subsection{An easy problem?}
We start by the most general definition of the problem, which is the following:
consider a weighted graph $G=(V,E)$, $w_{e}$ being the weight of the edge $e \in E$. Let us denote by $\mathcal{M}$ of matchings of this graph. To each matching $M=(V,E_M)\in \mathcal{M}$ we associate a cost
\begin{equation}
	C_M = \sum_{e\in E_M} w_e.
\end{equation}
The matching problem consists in deciding whether $\mathcal{M} = \varnothing$ or not, and if $\mathcal{M} \neq \varnothing$ then the \emph{weighted} matching problem consists in finding the matching with the minimum cost. Notice that we can easily recast the matching problem as a weighted matching problem as follows: given $G=(V,E)$ with $N$ vertices, build a weighted complete graph $\mathcal{K}_N$, where the weight of a link $e$ is 0 if $e\in E$, 1 if $e\notin E$. Now solve the weighted matching problem on this weighted complete graph, and if the solution cost is 0 then $G$ has at least one matching, if the cost is greater than 0 then $G$ does not have matchings.\\
Therefore from now on we will only consider the weighted matching problem, which we will simply call matching problem. This problem, even in this very general graph setting, is in the P complexity class thanks to the work of Kuhn~\cite{Kuhn1955}, who discovered a polynomial algorithm called \emph{Hungarian algorithm}, and several other works~\cite{Edmonds1965,Micali1980, Edmonds2003,Lovasz2009} in which that algorithm is extended and made faster.\\
From this point on, we will only consider matchings on complete graphs $\mathcal{K}_{2N}$ and on complete bipartite graphs $\mathcal{K}_{N,N}$. Notice that usually, when the graph is bipartite, the matching problem is called \emph{assignment problem}.\\
Let us come back to the question of the problem complexity, and consider more closely the matching problem on complete graphs. The graph $\mathcal{K}_{2N}$ has always
\begin{equation}
	(2 N - 1)!! = \prod_{k=0}^{N-1} (2N- 1 -2k) = \frac{\prod_{k=0}^{2N} (2N-k)}{\prod_{k=0}^{N-1} (2N-2k)} = \frac{(2N)!}{2^k N!} \sim \sqrt{2} e^{N (\log (2N)+1)}
\end{equation}
perfect matchings, where in the last step we used the Stirling approximation for the factorial for large $N$,
\begin{equation}
	N! \sim \sqrt{2 \pi N} \left(\frac{N}{e}\right)^N.
\end{equation}
This is of course an enormous number which makes brute force approaches immediately unusable.
Similarly, a bipartite graph $\mathcal{K}_{N,N}$ has
\begin{equation}
	N! \sim \sqrt{2 \pi N} \left(\frac{N}{e}\right)^N = \sqrt{2 \pi} e^{N (\log N-1) + \frac{1}{2} \log N}
\end{equation}
assignments.\\

According to our discussion in Sec.~\ref{sec::statphys}, we can write down a spin Hamiltonian for this problem as follows: given a graph $G=(V, E)$ (which we restrict to be complete or complete bipartite), we associate a binary variable to each edge of the graph, and $x_{ij} = 1$ (or 0) if the edge $(i,j)$ is present (or not) in the configuration (set of edges) $x$.
The cost function is
\begin{equation}\label{eq::matching_binary}
C(x) = \frac{1}{2} \sum_{(i, j) \in E} w_{ij} x_{ij}.
\end{equation}
where the factor 1/2 is present because if $(i,j)\in E$, also $(j,i) \in E$.
We also need to require that $x$ is a matching, that is
\begin{equation}\label{eq::matching_constraint}
\sum_{j \in \partial i} x_{ij} = 1
\end{equation}
for each $i$. \\
%
%
We can proceed in two ways: 
\begin{itemize}
	\item we can add these constraints in a hard manner, that is by restricting the configuration space to those states for which Eq.~\eqref{eq::matching_constraint} is satisfied (constraints of this kind are ofter referred to as \emph{hard constraints});
	\item we can modify the cost function so that at least the minimum-energy configuration satisfies Eq.~\eqref{eq::matching_constraint}, for example by using
	\begin{equation}
		C_{\text{soft}}(x) = \frac{1}{2} \sum_{(i, j) \in E} w_{ij} x_{ij} + \lambda \left( \sum_{j \in \partial i} x_{ij}-1\right)^2,
	\end{equation}
	where $\lambda$ is a free parameter to be chosen sufficiently large (constraints of this kind are ofter referred to as \emph{soft constraints}).
\end{itemize}
In this Chapter we will always impose hard constraints, but in the next Chapter we will see that, to overcome some technical problems, sometimes it is necessary to use the soft variant.\\
At this point, to obtain a genuine spin Hamiltonian, we should do the change of variables
\begin{equation}\label{eq::matching_spin}
	x_{ij} = \frac{\sigma_{ij} + 1}{2},
\end{equation}
so that to the binary variable $x_{ij}=\{1,0\}$ we associate a spin variable $\sigma_{ij} = \{1,-1\}$. The resulting Hamiltonian is
\begin{equation}
	H(\sigma) = \frac{1}{4}\sum_{(i, j) \in E} w_{ij} \sigma_{ij} + C,
\end{equation}
where $C=\sum_w w /2$. As we can see, the Hamiltonian of this problem is trivial, and all the non-trivial part comes from the constraint term, which in terms of the new spin variables is 
\begin{equation}
	\sum_{j \in \partial i} \sigma_{ij} = 2 + C',
\end{equation}
where $C' = N-1$ for $G=\mathcal{K}_N$ and $C' = N$ for $G=\mathcal{K}_{N,N}$ and is the number of vertices adjacent to each vertex. By using either hard or soft constraints, one can check that the problem Hamiltonian is frustrated (in the sense discussed in Sec.~\ref{sec::spinglass}). Therefore, even if we know that an algorithm which solves the problem in polynomial time does exist, the energy landscape is far from being trivial for a generic choice of the weights $w_{ij}$.

\subsection{Mean field version}
We introduce the disorder in the matching/assignment problem to treat it as a RCOP. In the mean field case, we do it by choosing a probability density function for the weights $w_{ij}$ so that they are IID random variables. \\
The focus of this work is the Euclidean version of several problems, where the weights are correlated, but before than considering that more complicated case, we will very quickly review the mean field case following the original paper by M\'ezard and Parisi~\cite{Mezard1985} (the interested readers can find the details of the computations in that paper, but also in one of these PhD theses~\cite{Sicuro2016, Malatesta2019}).\\
For the weights, we consider the probability density
\begin{equation}\label{eq::matching_meanf_ddp}
	p(w) = \theta(w) e^{-w}.
\end{equation}
We can write the partition function for the complete graph $\mathcal{K}_{2N}$ using Eqs.~\eqref{eq::matching_binary} and \eqref{eq::matching_constraint} as
\begin{equation}
	Z = \left( \prod_{\substack{i,j=1 \\ i < j}}^{2N} \sum_{x_{ij}=0,1} \right) e^{- \beta  \sum_{i < j} w_{ij} x_{ij}} \left[ \prod_{i=1}^{2N} \delta\left( \sum_{j, j \neq i} x_{ij}, 1 \right) \right],
\end{equation}
where we have written the Kronecker delta as $\delta(a,b)$ instead of $\delta_{a,b}$ for notational convenience.
By using the integral representation of the $\delta$ we can sum over the binary variables and obtain (remember that $x_{ij} = x_{ji}$)
\begin{equation}
	Z = \left[ \prod_{i=1}^{2N} \frac{d \lambda_i}{2\pi} e^{i \lambda_i} \right] \prod_{i<j} \left(1 + e^{-\beta w_{ij} -i (\lambda_i + \lambda_j)}\right),
\end{equation}
which is the starting point for a replica computation of the free energy at zero temperature with quenched disorder, which coincide with the average cost of the solution. The computation is far from trivial, but it is similar conceptually (even though there are some technical differences) to the one we performed for the $p$-spin spherical model in Sec.~\ref{sec::pspin}. A remarkable difference is that this time a RS ansatz is enough to solve the problem. The result is
\begin{equation}
	\lim_{\beta\to\infty} \lim_{N\to\infty} - \frac{1}{\beta} \overline{\log Z} = \frac{\pi^2}{12},
\end{equation}
that is, for the average cost of the solution we have
\begin{equation}
	\overline{E_N} \sim \frac{\pi^2}{12}
\end{equation}
for $N\gg1$.
This same approach can be used for the assignment on the complete bipartite graph $\mathcal{K}_{N,N}$ and the result has a factor 2 of difference:
\begin{equation}
\overline{E^{(\text{bip})}_N} \sim \frac{\pi^2}{6}.
\end{equation}

At this point, one can wonder if there is a simple way to guess the fact that the cost of the solution is (on average) of order 1 for $N\to \infty$, and if there is a simple way to explain this factor 2 of difference. To answer that, notice that, even though $\overline{w_i}=1$, the minimum among $n$ IID random variables can be computed by obtaining the cumulative:
\begin{equation}
	P(\min w_i \geq x) = \prod_{i=1}^n P(w_i \geq x) = e^{- n x}.
\end{equation}
From this, we can obtain
\begin{equation}
	p(\min w_i) = - \frac{\partial}{\partial x} P(\min w_i \geq x) = n e^{-n x}
\end{equation}
and so, on average,
\begin{equation}
	\overline{\min w_i} = N \int_0^\infty dx \, x \,  e^{-n x} = \frac{1}{n}.
\end{equation}
Therefore, one can reasonably expect that, if we need to find the matching of minimum cost of $\mathcal{K}_{2N}$, each of the $N$ edges chosen in that matching will be have a cost close to the minimum of a set of $2N-1\sim 2N$ IID random variables drawn from the probability given in Eq.~\eqref{eq::matching_meanf_ddp}, so $1/(2N)$. Since we have $N$ edges in the cost function, if each edge of the matching were independent from the others, the total cost would have been $\sim 1/2$. However, there are the constraints, that prevent a matching from being composed only of minimum cost links, and the extra cost due to this fact raises the total cost to $\pi^2/12 \simeq 0.82$. 
A similar argument can be used for the bipartite matching, and in this case one has that again the matching is composed by $N$ edges, but each edge now has to be chosen among $N$ IID random variables with probability Eq.~\eqref{eq::matching_meanf_ddp}, and this gives the factor 2 of difference.

\subsection{Going in one dimension}\label{sec::mat_1d}
\subsubsection{Assignment problem on complete bipartite graphs}
In this section we will  focus on the Euclidean matching problem, beyond the mean field approximation. Therefore, let us state the problem in the Euclidean setting, starting from the assignment: consider two sets of points in $\mathbb{R}^d$ labeled by their coordinates, $\mathcal{R} = \{r_1, \dots, r_N\}$ (red points) and $\mathcal{B}= \{b_1, \dots, b_N \}$ (blue points). We want to match each red point to one and only one blue point such that a certain function of the distances between matched points is minimized. \\
Since we can connect the blue points only to the red points and vice versa, the problem can be seen as a matching on a bipartite complete graph $\mathcal{K}_{N,N}$. Now, each choice of a matching corresponds to a permutation of $N$ objects, $\pi\in S(N)$, and vice versa. 
The cost function assigns a cost to each permutation as follows:
\begin{equation}
	E_N^{(p)}[\pi] = \sum_i \left| r_i - b_{\pi(i)} \right|^p, 
\end{equation}
where $p\in\mathbb{R}$ is a parameter. We will focus here on the $p>1$ case. The points $p=1$ and $p=0$ are special points where there can be many solutions~\cite{Boniolo2014}, but apart from that they share the properties about the typical cost with, respectively, the $p>1$ and $p<0$ case. For a study on the properties of this problem with $p<0$, see~\cite{Caracciolo2017}, while some properties regarding the region $0<p<1$ are given in \cite{Caracciolo2019}.

In this problem, the disorder is introduced at the level of point positions: here we will consider the case when the coordinates of each point is an IID random variable, distributed with a flat probability density on the interval [0,1].\\
For this problem, there is no way to proceed directly with the plain replica method, due to the fact that now, even though the point positions are uncorrelated, their distances (which are the relevant variables in the cost function) are. One can try to take into account the Euclidean correlations as corrections to the mean field case~\cite{Mezard1988, Lucibello2017}, but here we will follow another path (which has been presented in~\cite{Caracciolo2019_1}).\\
We will use the fact that for $p>1$ the optimal solution is the identity permutation once both sets of points have been ordered~\cite{McCannRobert1999,Boniolo2014}. The proof of this is obtained by noticing that once one knows what is the optimal solution of the case $N=2$ (so two blue and two red points), then the solution is found by simply repeating this argument for each possible choice of two blue and two red points. 
It follows that, in these cases, the optimal cost is
\begin{equation}
E_N^{(p)} =  \sum_{i=1}^N | r_i - b_i |^p.
\end{equation}
Now we need to average over the disorder. To do that, we will use the Selberg integrals~\cite{Selberg1944}
\begin{equation}\label{eq::selberg}
\begin{split}
S_n(\alpha, \beta, \gamma) & :=  \left(\prod_{i=1}^n \int_0^1 \ dx_i \, x_i^{\alpha-1} (1-x_i)^{\beta -1}\right) \left |\Delta(x)\right |^{2 \gamma} \\
& = \prod_{j=1}^n \frac{\Gamma( \alpha + (j-1) \gamma) \Gamma(\beta + (j-1) \gamma)\Gamma(1 + j \gamma)}{\Gamma(\alpha+ \beta + (n+j-2) \gamma)\Gamma(1 +  \gamma)} 
\end{split}
\end{equation}
where
\begin{equation}
\Delta(x) := \prod_{1\leq i < j \leq n} (x_i-x_j)  \, 
\end{equation}
with $\alpha, \beta, \gamma \in \mathbb{C}$ and $\Re(\alpha)>0$, $\Re(\beta)>0$, $\Re(\gamma) > \min(1/n, \Re(\alpha)/(n-1), \Re(\beta)/(n-1) )$,
see~\cite[Chap. 8]{Andrews1999}. Selberg integrals are a generalization of Euler Beta integrals, which are recovered by setting $n=1$.

In Appendix \ref{app::orderstat} we compute the probability that, once we have ordered our points, the $k$-th point is in the interval $[x, x+dx]$. By using that result, given in Eq.~\ref{eq::ordstat_pk}, and the Selberg integral from Eq.~\eqref{eq::selberg}
\begin{equation}
\begin{split}
S_2\left(k, N-k+1, \frac{ p}{2}\right)  
& = \left(\prod_{i=1}^2 \int_0^1 \ dx_i \, x_i^{k-1} (1-x_i)^{N-k}\right) \left |x_2-x_1\right |^{p} \\
& = \frac{\Gamma(k) \Gamma(N-k+1)\Gamma\left(k+\frac{p}{2}\right)\Gamma\left(N-k+1+\frac{p}{2}\right)\Gamma(1+ p)}{\Gamma\left(N+1+\frac{p}{2}\right) \Gamma(N+1+p)\Gamma\left(1+\frac{p}{2}\right)}\, ,
\end{split}
\end{equation}
we get that the average of the $k$-th contribution is given by
\begin{equation}\label{ass}
\begin{split}
\overline{| r_k - b_k |^p} =  & \int_0^1 \ dx\, \int_0^1 \ dy\, P_k(x) \, P_k(y)\, |y-x|^p \\
= &  \left(  \frac{\Gamma(N+1)}{\Gamma(k) \, \Gamma(N-k+1)} \right)^2 \, \left(\prod_{i=1}^2 \int_0^1 \ dx_i \, x_i^{k-1} (1-x_i)^{N-k}\right) \left |x_2-x_1\right |^{p} \\
= & \left(  \frac{\Gamma(N+1)}{\Gamma(k) \, \Gamma(N-k+1)} \right)^2 \, S_2\left(k, N-k+1, \frac{p}{2}\right)\\
= & \frac{\Gamma^2(N+1)\Gamma\left(k+\frac{p}{2}\right)\Gamma\left(N-k+1+\frac{p}{2}\right)\Gamma(1+p)}{\Gamma(k) \Gamma(N-k+1) \Gamma\left(N+1+\frac{p}{2}\right) \Gamma(N+1+p)\Gamma\left(1+\frac{p}{2}\right) } \, 
\end{split}
\end{equation}
and therefore we get the exact result
\begin{equation}\label{eq::exact_cost_ass1d}
\begin{split}
\overline{E_N^{(p)}} & =  \frac{\Gamma^2(N+1)\Gamma(1+p)}{\Gamma\left(N+1+\frac{p}{2}\right) \Gamma(N+1+p)\Gamma\left(1+\frac{p}{2}\right) }   \sum_{k=1}^N\frac{\Gamma\left(k+\frac{p}{2}\right)\Gamma\left(N-k+1+\frac{p}{2}\right)}{\Gamma(k) \Gamma(N-k+1)} \\
& =  \frac{\Gamma\left(1 + \frac{p}{2}\right)}{p+1} \, N\, \, \frac{  \Gamma(N+1)}{\Gamma\left(N+1 + \frac{p}{2}\right)} ,
\end{split}
\end{equation}
where we made repeated use of the duplication and Euler's inversion formula for $\Gamma$-functions
\begin{subequations}
	\begin{align}
	\Gamma(z) \Gamma\left( z + \frac{1}{2} \right) & = 2^{1 - 2 z} \sqrt{\pi} \, \Gamma(2z) \\
	\Gamma(1-z) \Gamma(z) & = \frac{\pi}{\sin( \pi z)}\, .
	\end{align}
\end{subequations}
For large $N$ we obtain, at the first order,
\begin{equation}\label{eq::cost_assgn_1d_largeN}
\overline{E_N^{(p)}} \sim \frac{\Gamma\left(1 + \frac{p}{2}\right)}{p+1} \, N^{1-\frac{p}{2}}.
\end{equation}

\subsubsection{Matching problem on the complete graph}
A similar technique can be carried out to compute the cost of the matching problem on the complete graph $\mathcal{K}_{2N}$. Indeed, again by studying the case $N=4$, it can be shown that the optimal solution for $p>1$ consists always in, once we have sorted the point such that $x_1\leq \dots \leq x_{2N}$, matching $x_1$ with $x_2$, $x_3$ with $x_4$ and so on. Therefore the optimal cost is
\begin{equation}
	E_N^{(p)} =  \sum_{i=1}^N ( x_{2i} - x_{2i-1} )^p.
\end{equation}
To compute the average cost, it is convenient to define the variables $\phi_i =x_{i+1} - x_i$. The cost of the solution in this new variables reads
\begin{equation}
	E_N^{(p)} =  \sum_{i=1}^N \phi_{2i-1}^p 
\end{equation}
Since the $2N$ points are uniformly chosen in the unit interval and then ordered, their joint distribution is
\begin{equation}
	p(x_1, \dots, x_{2N}) = (2N)! \prod_{i=0}^{2N} \theta(x_{i+1}-x_i),
\end{equation}
with $x_0=0$ and $x_{2N+1}= 1$. Therefore, the probability distribution function of the $\phi_i$ variables is
\begin{equation}
	p(\phi_1, \dots, \phi_{2N}) = (2N)! \, \delta\left(\sum_{i=1}^{2N} \phi_i, 1\right) \, \prod_{i=0}^{2N} \theta(\phi_i).
\end{equation}
From this we can compute the marginal probability of the $k$-th spacing $\phi_k$, which is
\begin{equation}
\begin{split}
	p^{(1)}(\phi_k) & = (2N)! \, \left[ \prod_{\substack{a=0 \\ a \neq k}}^{2N} \int_0^\infty d \phi_a \right] \delta\left(\sum_{i=1}^{2N} \phi_i, 1\right)\\
	& = (2N)! \, i^{2N} \lim_{\epsilon\to 0^+} \int_{-\infty}^\infty \frac{d \lambda}{2\pi} \frac{e^{-i \lambda (1-\phi_k)}}{(\lambda + i \epsilon)^{2N}}\\
	& = \left\{
	\begin{aligned}
	& 2N \, (1-\phi_k)^{2N-1} & & \text{if $0 < \phi_k < 1$;}\\
	& 0 & & \text{otherwise}
	\end{aligned}	
	\right.
\end{split}
\end{equation}
where we inserted a small imaginary part at the denominator in the second step to be able to use the residue method to perform the integral. Notice that the result does not depend on $k$ and, by exploiting the Euler Beta integral (Eq.~\eqref{eq::selberg} with $n=1$), we get
\begin{equation}
	\overline{\phi_k^p} = \int d \phi_k \, p^{(1)}(\phi_k) \, \phi_{k}^p = \frac{\Gamma(2N+1)\Gamma(1+p)}{\Gamma(2N+1+p)}.
\end{equation}
Therefore we finally obtain
\begin{equation}\label{eq::matching_cost_exact_1d}
	E_N^{(p)} = N \frac{\Gamma(2N+1)\Gamma(1+p)}{\Gamma(2N+1+p)}.
\end{equation}
For large $N$ we obtain, at the first order,
\begin{equation}\label{eq::matching_cost_largeN}
\overline{E_N^{(p)}} \sim \frac{\Gamma(p+1)}{2^p} \, N^{1-p}.
\end{equation}

\subsection{Assignment in two dimensions and beyond}


We have seen how to exploit properties of the solution structure to compute the average cost of the matching problem solution in one spatial dimension, for both the complete and complete bipartite version of the problem. However, we just scratched the surface: there are many known results, and many open questions about this fascinating COP. We will mention some of them here.

First of all, the \emph{scaling} in $N$ for $N\to\infty$ of the average solution cost is known for all number of dimension $d$. In particular, for the Euclidean matching problem on the complete graph $\mathcal{K}_N$ embedded in $d$ dimensions, where the cost of a link is the distance between points to the $p\geq 1$, we have
\begin{equation}\label{eq::matc_scaling_mono}
	\overline{E^{(p)}_N} \sim A^{(p)}_d N^{1-p/d}
\end{equation}
These scalings can be obtained by a qualitative reasoning, such as the fact that if there are $N$ points in a volume $V=1$, then the distance between two first neighbors is $\sim N^{-1/d}$ and therefore the cost of that link is $\sim N^{p/d}$ and we have $N$ of these links. A formal proof of Eq.~\eqref{eq::matc_scaling_mono} is given in~\cite{Steele1997,Yukich2006}.
As we have seen, when $d=1$ we have
\begin{equation}
	A^{(p)}_1 = \frac{\Gamma(p+1)}{2^p},
\end{equation}
and we are actually able to compute the average cost for each finite $N$. It is known that the first correction, in every $d$, scales as $\mathcal{O}(N^{-p/d})$~\cite{Houdayer1998}, as we can again check in one dimension by starting from Eq.~\eqref{eq::matching_cost_exact_1d}. The exact value of the constant $A^{(p)}_d$ is not known for $d>1$.

For the assignment problem on the complete bipartite graph $\mathcal{K}_{N,N}$, in $d$ dimensions with the parameter $p\geq 1$ we know that~\cite{Talagrand1992,Ajtai1984}:
\begin{equation}\label{eq::matc_scaling_bi}
	E^{(p)}_N \sim \left\{
	\begin{aligned}
	& B_{1}^{(p)} N^{1-p/2}  												& & \text{$d=1$;}\\
	& B_{2}^{(p)} N	\left( \frac{\log(N)}{N} \right)^{p/2}					& & \text{$d=2$;}\\
	& B_{d}^{(p)} N^{1-p/d}													& & \text{$d>2$.}
	\end{aligned}	
	\right.
\end{equation}
The scaling differences between the bipartite problem and the one with a single kind of points are due, intuitively, to the fact that in the former case if we consider the problem restricted to a small region of space we can have fluctuations of the \emph{relative} density of points of one kind with respect to those of the other kind. Clearly, this is not possible when there is a single kind of points. This fact, in turn, implies the presence of longer links even at a ``microscopical'' level, giving rise to the different behavior between this two versions of the matching problem. However, this difference is less and less important as we go in higher number of dimensions.\\
When $p=2$, additional results are known~\cite{Caracciolo2014} for the case with periodic boundary conditions (so that the points are chosen on a $d$-dimensional torus):
\begin{equation}\label{eq::largeN_mat_bip_p2}
	E^{(p)}_N \sim \left\{
	\begin{aligned}
	& \frac{1}{3} + \mathcal{O}(1/N)										& & \text{$d=1$;}\\
	& \frac{\log(N)}{2 \pi} + \mathcal{O}(1)								& & \text{$d=2$;}\\
	& N^{1-p/d}\left(B_{d}^{(p)}+\frac{\zeta(1)}{2\pi^2}N^{2/d-1}\right)	& & \text{$d>2$,}
	\end{aligned}	
	\right.
\end{equation}
where $\zeta(x)$ is the Epstein $\zeta$ function.
Notice that in $d>2$ only the coefficient of the first correction is analytically known, while the leading-term coefficient is not.
The result in $d=2$, $p=2$ has been extended to the case of open boundary condition~\cite{Caracciolo2015, Ambrosio2019}, and the asymptotic result given in Eq.~\eqref{eq::largeN_mat_bip_p2} is proven to be correct also in this case.\\
As for the problem on the complete graph, in $d=1$ Eq.~\eqref{eq::exact_cost_ass1d} gives the cost and all the corrections for each value of $p$.
For the case with $p\neq2$ and $d>1$ neither the coefficient $B_{2}^{(p)}$ nor the scaling of the corrections in $N$ is known.

Several other results are known about the self-averaging property of the solution cost: 
\begin{itemize}
	\item for the matching problem (on complete graph), it has been proven~\cite{Steele1997,Yukich2006} that the cost is self averaging in any number of dimension $d$;
	\item for the assignment problem (on complete bipartite graph), in $d=1$ one can check that the cost is not self-averaging with methods similar to those used in Sec.~\ref{sec::mat_1d} (see~\cite{Caracciolo2017}), while for $d>2$ it is known that the cost is self-averaging~\cite{Houdayer1998}; in $d=2$, this question is still open. 
\end{itemize}


\section{Traveling salesman problem}\label{sec::tsp}
In this section we will analyze an archetypal combinatorial optimization problems, which has been fueling a considerable amount of research, from its formalization to the present day.
Given $N$ cities and $N (N-1)/2$ values that represent the cost paid for traveling between all pairs of them, the traveling salesman problem (TSP) consists in finding the tour that visits all the cities and finally comes back to the starting point with the least total cost to be paid for the journey.
The first formalization of the TSP can be probably traced back to the Austrian mathematician Karl Menger, in the 1930s~\cite{Menger1932}. As it belongs to the class of NP-complete problems, see Karp and Steele in~\cite{Lawler1985}, one of the reason for studying the TSP is that it could shed light on the famous P vs NP problem discussed in Sec.~\ref{sec::complexity}.
Many problems in various fields of science (computer science, operational research, genetics, engineering, electronics and so on) and in everyday life (lacing shoes, Google maps queries, food deliveries and so on) can be mapped on a TSP or a variation of it, see for example Ref.~\cite[Chap.~3]{Reinelt1994}  for a non-exhaustive list.
Interestingly, the complexity of the TSP seems to remain high even if we try to modify the problem. For example, the Euclidean TSP, where the costs to travel from cities are the Euclidean distances between them, remains NP-complete~\cite{Papadimitriou1977}. The bipartite TSP, where the cities are divided in two sub-sets and the tour has to alternate between them, is NP-complete too, as its Euclidean counterpart.\\
The traveling salesman problem is one of the most studied combinatorial optimization problems, because of the simplicity in its statement and the difficulty of its solution. In this section, after defining the problem explicitly, we review the most recent works regarding the average cost of the solutions in one and two dimensions~\cite{Caracciolo2018_1, Caracciolo2018_2, Caracciolo2019_2}.

\subsection{Traveling on graphs}
In a generic graph, the determination of the existence of an Hamiltonian cycle is an NP-complete problem (see Johnson and Papadimitriou in~\cite{Lawler1985}).
However, here we will deal with complete graphs $\mathcal{K}_N$, where at least one Hamiltonian cycle exists for $N>2$, and bipartite complete graphs $\mathcal{K}_{N,N}$, where at least an Hamiltonian cycle exits for $N>1$.\\
Let us denote by $\mathcal H$ the set of Hamiltonian cycles of the graph $\mathcal{G}$. Let us suppose now that a weight $w_e > 0$ is assigned to each edge $e \in \mathcal{E}$ of the graph $\mathcal{G}$. We can associate to each Hamiltonian cycle $h\in \mathcal{H}$ a total cost
\begin{equation}\label{eq::tsp_cost_general}
E(h) :=  \sum_{e\in h} w_e.
\end{equation}
In the (weighted) Hamiltonian cycle problem we search for the Hamiltonian cycle $h\in \mathcal{H}$ such that the total cost in Eq.~\eqref{eq::tsp_cost_general} is minimized, i.e., the optimal Hamiltonian cycle $h^*\in \mathcal{H}$ is such that
\begin{equation}
E(h^*) = \min_{ h\in \mathcal{H}} E(h)\, . \label{h^*}
\end{equation}
When the $N$ vertices of $\mathcal{K}_N$ are seen as cities and the weight for each edge is the cost paid to cover the route distance between the cities, the search for $h^*$ is called the \emph{traveling salesman problem} (TSP). For example, consider when the graph $\mathcal{K}_N$ is embedded in $\mathbb{R}^d$, that is for each $i\in [N] =\{1,2,\dots,N\}$ we associate a point $x_i\in \mathbb{R}^d$,  and for $e=(i,j)$ with $i,j \in [N]$ we introduce a weight which is a function of the Euclidean distance $w_e = |x_i-x_j|^p$ with $p\in \mathbb{R}$, as we did previously for the matching problem. When $p=1$, we obtain the usual Euclidean TSP.
Analogously for the bipartite graph $\mathcal{K}_{N,N}$ we will have two sets of points in $\mathbb{R}^d$, that is the red $\{r_i\}_{i\in [N]}$ and the blue $\{b_i\}_{i\in [N]}$ points and the edges connect red with blue points with a cost
\begin{equation}\label{eq::tsp_bip_pb}
w_e = |r_i-b_j|^p \, . 
\end{equation}
When $p=1$, we obtain the usual bipartite Euclidean TSP.

Also this COP can be promoted to be a RCOP in many ways, and the simplest correspond to the mean-field case: the randomness is introduced by considering the weights $w_e$ independent and identically distributed random variables, thus neglecting any correlation due to the Euclidean structure of the space. In this case the problem is called random TSP and has been extensively studied by disordered system techniques such as replica and cavity methods ~\cite{Vannimenus1984,Orland1985,Sourlas1986,Mezard1986,Mezard1986a,Krauth1989,Ravanbakhsh2014} and by a rigorous approach \cite{Wastlund2010}. In the random Euclidean TSP~\cite{Beardwood1959,Steele1981,Karp1985,Percus1996,Cerf1997}, instead, the point positions are generated at random as IID random variables, and as a consequence the weights will be correlated.
Also in this case we are interested in finding the average optimal cost
\begin{equation}
\overline{E}  = \overline{E(h^*)}\,,
\end{equation}
and its statistical properties.

\subsection{TSP on bipartite complete graphs}
\subsubsection{Hamiltonian cycles and permutations}
We shall now restrict to the complete bipartite graph $\mathcal{K}_{N,N}$. 
Before turning to the computation of the average cost of the TSP solution in one dimension, let us discuss some general properties, valid in every dimension number, and the relationship between the TSP on bipartite graphs and the assignment problem discussed before.\\
Let $\mathcal{S}_N$ be the group of permutation of $N$ elements. For each $\sigma, \pi\in  \mathcal{S}_N$, the sequence of edges for $i\in [N]$
\begin{equation}\label{eq::tsp_bip_corris}
\begin{aligned}
	e_{2i-1} = & \, (r_{\sigma(i)}, b_{\pi(i)} )  \\
	e_{2i} = & \, (b_{\pi(i)}, r_{\sigma(i+1)}) 
\end{aligned}
\end{equation}
where $\sigma(N+1)$ must be identified with $\sigma(1)$, defines a Hamiltonian cycle.
More properly, it defines a Hamiltonian cycle with starting vertex $r_1=r_{\sigma(1)}$ with a particular orientation, that is 
\begin{equation}
h[(\sigma, \pi)] := (r_1 b_{\pi(1)} r_{\sigma(2)} b_{\pi(2)} \cdots r_{\sigma(N)} b_{\pi(N)}) 
= (r_{1} C ) \,,
\end{equation}
where $C$ is an open walk which visits once all the blue points and all the red points with the exception of $r_1$. Let $C^{-1}$ be the open walk in opposite direction. This defines a new, dual, couple of permutations which generate the same Hamiltonian cycle
\begin{equation}
h[(\sigma, \pi)^\star] := (C^{-1} r_1) = (r_1 C^{-1} ) = h[(\sigma, \pi)]\, ,
\end{equation}
since the cycle $(r_1 C^{-1} )$ is the same as $(r_1 C)$ (traveled in the opposite direction).
By definition
\begin{align}
\begin{split}
h[(\sigma, \pi)^\star] = (r_1 b_{\pi(N)} r_{\sigma(N)} b_{\pi(N-1)} r_{\sigma(N-1)} \cdots b_{\pi(2)}  r_{\sigma(2)} b_{\pi(1)} ) \, .
\end{split}
\end{align}
Let us introduce the cyclic permutation $\tau \in \mathcal{S}_N$, which performs a left rotation, and the inversion $I \in \mathcal{S}_N$. That is $\tau(i) = i+1$ for $i\in [N-1]$ with  $\tau(N) = 1$ and $I(i) = N+1 -i$. In the following we will denote a permutation by using the second row in the usual two-row notation, that is, for example $\tau = (2, 3, \cdots ,N, 1)$ and $I= (N, N-1, \dots, 1)$. Then
\begin{equation}\label{eq::tsp_bip_d1}
h[(\sigma, \pi)^\star] =  h[( \sigma \circ \tau \circ I, \pi \circ I)] \, . 
\end{equation}
There are $N! \, (N-1)!/2$ Hamiltonian cycles for $\mathcal{K}_{N,N}$.
Indeed the couples of permutations are $(N!)^2$ but we have to divide them by $2N$ because of the $N$ different starting points and the two directions in which the cycle can be traveled. 

From Eq.~\eqref{eq::tsp_bip_corris} and weights of the form given in Eq.~\eqref{eq::tsp_bip_pb}, we get an expression for the total cost
\begin{align}\label{eq::tsp_bip_costgen}
\begin{split}
E[h[(\sigma, \pi)]] =  \sum_{i \in [N]}\left[  |r_{\sigma(i)} - b_{\pi(i)}|^p + |r_{\sigma \circ \tau(i)} - b_{\pi(i)}|^p \right] \, .
\end{split}
\end{align} 
Now we can re-shuffle the sums and we get
\begin{align}\label{eq::tsp_bip_dec}
\begin{split}
E[h[(\sigma, \pi)]] &= \sum_{i \in [N]}  |r_{i} - b_{\pi \circ \sigma^{-1}(i)}|^p + 
\sum_{i \in [N]}  |r_{i} - b_{\pi \circ \tau^{-1} \circ \sigma^{-1}(i)}|^p \\
& = E[m(\pi \circ \sigma^{-1})] + E[m(\pi \circ \tau^{-1} \circ \sigma^{-1})] 
\end{split}
\end{align}
where $E[m(\lambda)]$ is the total cost of the assignment $m$ in $\mathcal{K}_{N,N}$ associated to the permutation $\lambda\in \mathcal{S}_N$.
The duality transformation given in Eq.~\eqref{eq::tsp_bip_d1}, interchanges the two matchings because
\begin{subequations}
	\begin{align}
	\begin{split}
	\mu_1 := \pi \circ \sigma^{-1} \, \to & \;  \pi \circ I \circ I \circ \tau^{-1} \circ \sigma^{-1} = \pi \circ \tau^{-1} \circ \sigma^{-1}
	\end{split}
	\\
	\begin{split}
	\mu_2 := \pi \circ \tau^{-1} \circ \sigma^{-1}  \, \to & \; \pi \circ I \circ \tau^{-1} \circ I \circ \tau^{-1} \circ \sigma^{-1} = \pi \circ \sigma^{-1} 
	\end{split}
	\end{align}
\end{subequations}
where we used 
\begin{equation}\label{eq::tsp_bip_Itau}
I \circ \tau^{-1} \circ I = \tau.
\end{equation}
The two matchings corresponding to the two permutations $\mu_1$ and $\mu_2$ have no edges in common and therefore each vertex will appear twice in the union of their edges. Remark also that
\begin{equation}
\mu_2 = \mu_1 \circ \sigma \circ \tau^{-1} \circ \sigma^{-1}
\end{equation}
which means that $\mu_1$ and $\mu_2$ are related by a permutation which has to be, as it is $\tau^{-1}$, a unique cycle of length $N$. It follows that, if $h^*$ is the optimal Hamiltonian cycle and $m^*$ is the optimal assignment,
\begin{equation}\label{eq::ineq_mat_tsp}
E[h^*] \; \geq \; 2 \, E[m^*] \, .
\end{equation}

\subsubsection{Traveling on a line... and tying shoelaces!}
Here we shall focus on the one-dimensional case, where both red and blue points are chosen uniformly in the unit interval $[0,1]$.
Remember that, as seen in Sec.~\ref{sec::mat_1d}, given two sets of sorted points in increasing order, the optimal assignment is defined by the identity permutation $\mathcal{I}=(1,2,\dots, N)$. We will compute the average cost of the solution of the TSP on bipartite complete graphs, similarly to what we have done with the matching problem: as first step we will obtain the general structure of the solution, and as second step we will use this information to perform the average over the disorder.\\
From now on, we will assume $p>1$ and that both red and blue points are ordered, i.e. $r_1 \le \dots \le r_N$ and $b_1 \le \dots \le b_N$.
Let 
\begin{equation}\label{eq::tsp_bip_sigmatilde}
\tilde{\sigma}(i) = 
\begin{cases}
	2i-1 & i \leq  (N+1)/2 \\
	2N -2i +2 & i > (N+1)/2 
\end{cases}
\end{equation}
and 
\begin{equation}\label{eq::tsp_bip_pitilde}
\tilde{\pi}(i) = \tilde{\sigma}\circ I(i) = \tilde{\sigma}(N+1-i)  =
\begin{cases}
2i  & i < (N +1)/2 \\
2N - 2i +1 & i \geq  (N +1)/2 
\end{cases}
\end{equation}
the couple $(\tilde{\sigma}, \tilde{\pi})$  will define a Hamiltonian cycle $\tilde{h}\in \mathcal{H}$. More precisely, according to the correspondence given in Eq.~\eqref{eq::tsp_bip_corris}, it contains the edges
for even $N$, 
\begin{subequations}\label{eq::tsp_bip_1d_even}
	\begin{align}
	\tilde{e}_{2i-1} = & \,  
	\begin{cases}
	(r_{2i-1}, b_{2i})   & i \leq  N/2 \\
	(r_{2N-2i+2}, b_{2N-2i +1})    & i > N/2
	\end{cases} \\
	\tilde{e}_{2i} = & \,  \begin{cases}
	(b_{2i}, r_{2i+1})   & i <  N/2 \\
	(b_{N}, r_{N}) & i = N/2 \\
	(b_{2N-2i+1}, r_{2N-2i})    &  N/2< i < N \\
	(b_{1}, r_{1})    &  i = N
	\end{cases}
	\end{align}
\end{subequations}
while for  $N$ odd
\begin{subequations}\label{eq::tsp_bip_1d_odd}
	\begin{align}
	\tilde{e}_{2i-1} = & \,  
	\begin{cases}
	(r_{2i-1}, b_{2i})   & i <  (N-1)/2 \\
	(r_{N}, b_{N}) & i = (N-1)/2 \\
	(r_{2N-2i+2}, b_{2N-2i +1})    & i > (N-1)/2
	\end{cases} \\
	\tilde{e}_{2i} = & \,  \begin{cases}
	(b_{2i}, r_{2i+1})   & i <  (N-1)/2 \\
	(b_{2N-2i+1}, r_{2N-2i})    &  (N-1)/2< i < N \\
	(b_{1}, r_{1})    &  i = N \, .
	\end{cases}
	\end{align}
\end{subequations}

\begin{figure}
	\begin{center}
		\begin{tikzpicture}[scale=0.6]
		\node[draw,circle,inner sep=1.5pt,fill=black,text=white,label=above:{\footnotesize $r_1$}] (r1) at (1,3) {};
		\node[draw,circle,inner sep=1.5pt,fill=black,text=white,label=above:{\footnotesize $r_2$}] (r2) at (3,3) {};
		\node[draw,circle,inner sep=1.5pt,fill=black,text=white,label=above:{\footnotesize $r_3$}] (r3) at (5,3) {};
		\node[draw,circle,inner sep=1.5pt,fill=black,text=white,label=above:{\footnotesize $r_4$}] (r4) at (7,3) {};
		\node[draw,circle,inner sep=1.5pt,fill=white,text=white,label=below:{\footnotesize $b_1$}] (b1) at (1,0) {};
		\node[draw,circle,inner sep=1.5pt,fill=white,text=white,label=below:{\footnotesize $b_2$}] (b2) at (3,0) {};
		\node[draw,circle,inner sep=1.5pt,fill=white,text=white,label=below:{\footnotesize $b_3$}] (b3) at (5,0) {};
		\node[draw,circle,inner sep=1.5pt,fill=white,text=white,label=below:{\footnotesize $b_4$}] (b4) at (7,0) {};
		\draw[line width=1pt,gray]  (r1) to (b2);
		\draw[line width=1pt,gray]  (r2) to (b3);
		\draw[line width=1pt,gray]  (r3) to (b4);
		\draw[line width=1pt,gray]  (b1) to (r2);
		\draw[line width=1pt,gray]  (b2) to (r3);
		\draw[line width=1pt,gray]  (b3) to (r4);
		\draw[line width=1pt,gray] (r1) to[in=110, out=-110] (b1);
		\draw[line width=1pt,gray] (r4) to[in=70, out=-70] (b4);
		\end{tikzpicture}
	\end{center}
	\caption{The optimal Hamiltonian cycle $\tilde{h}$ for $N=4$ blue and red points chosen in the unit interval and sorted in increasing order.} \label{fig::tsp_bip_exh}
\end{figure}
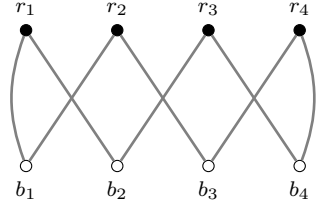

The cycle $\tilde{h}$ is the analogous of the \emph{criss-cross} solution introduced by Halton~\cite{Halton1995} (see Fig.~\ref{fig::tsp_bip_exh}). 
In his work, Halton studied the optimal way to tie a shoe. This problem can be seen as a peculiar instance of a 2-dimensional bipartite Euclidean TSP with the parameter which tunes the cost $p=1$.
One year later, Misiurewicz~\cite{Misiurewicz1996} generalized Halton's result giving the least restrictive requests on the 2-dimensional TSP instance to have the criss-cross cycle as solution.
Other generalizations of these works have been investigated in more recent papers~\cite{Polster2002,Garcia2017}.
In Appendix \ref{app::tsp_bip} we prove that for a convex and increasing cost function the optimal Hamiltonian cycle is provided by $\tilde{h}$.

\subsubsection{Statistical properties of the solution cost}
Similarly to the assignment problem, we can exploit a generalization of the Selberg integral given in Eq.~\eqref{eq::selberg} (see~\cite[Sec. 8.3]{Andrews1999}),
\begin{equation}\label{eq::selberg_gen}
\begin{split}
B_n(j, k; \alpha, \beta, \gamma) \! & :=  \! \left(\prod_{i=1}^n \int_0^1 \!\!\! dx_i \, x_i^{\alpha-1} (1-x_i)^{\beta -1}\!\right) \!\!\left( \prod_{s=1}^j \! x_s \! \right) \!\!\! \left( \prod_{s=j+1}^{j+k} \!\! (1-x_s) \! \right) \!\! \left|\Delta(x)\right |^{2 \gamma}\\
& \, = 
\, S_n(\alpha, \beta, \gamma)\, \frac{\prod_{i=1}^j [\alpha +(n-i)\gamma] \prod_{i=1}^k [\beta +(n-i)\gamma] }{\prod_{i=1}^{j+k} [\alpha+\beta +(2n-1-i)\gamma]},
\end{split}
\end{equation}
to compute the average solution cost for each $N$. By using Eq.~\eqref{eq::selberg_gen} and the probability that given $N$ ordered points on a line the $k$-th is in $[x,x+dx]$, Eq.~\eqref{eq::ordstat_pk}, we obtain:
\begin{equation}
\begin{split}
& B_2\left(1,1; k, N-k,\frac{p}{2} \right) = \\
& =  \int_{0}^{1} dx_1 \, \int_{0}^{1} dx_2 \, x_1^{k-1} \, x_2^k\, (1-x_1)^{N-k} (1-x_2)^{N-k-1} \left|x_1-x_2 \right|^p \\
& = \frac{\left(k+\frac{p}{2}\right)\left(N-k+\frac{p}{2}\right)}{(N+p)\left(N+\frac{p}{2}\right)} \,S_2\left(k,N-k,\frac{p}{2} \right) \\
& = \frac{\Gamma(k)\Gamma(N-k) \, \Gamma(p+1) \, \Gamma\left(k+\frac{p}{2}+1\right) \, \Gamma\left(N-k+\frac{p}{2}+1\right)}{\Gamma(N+p+1) \, \Gamma\left(N+\frac{p}{2}+1\right) \, \Gamma\left( 1+ \frac{p}{2}\right)} \, .
\end{split}
\end{equation}
Therefor we get
\begin{equation}
\begin{split}
\overline{\left| b_{k+1}-r_k \right|^p} & = \overline{\left| r_{k+1}-b_k \right|^p} = \int_0^1 \, dx\, \int_0^1 \, dy\, P_k(x) \, P_{k+1}(y) \, \left| x-y \right|^p  \\
& = \frac{\Gamma^2(N+1)}{\Gamma(k) \, \Gamma(N-k) \, \Gamma(k+1)\, \Gamma(N-k+1)} \times\\
& \hphantom{=} \times \int_{0}^{1} dx \, dy \, x^{k-1} \, y^k (1-x)^{N-k} (1-y)^{N-k-1} \left|x-y \right|^p \\
& =  \frac{\Gamma^2(N+1)}{\Gamma(k) \, \Gamma(N-k) \, \Gamma(k+1)\, \Gamma(N-k+1)}\,  B_2\left(1,1; k, N-k,\frac{p}{2} \right)\\
& = \frac{\Gamma^2(N+1) \, \Gamma(p+1) \, \Gamma\left(k+\frac{p}{2}+1\right) \, \Gamma\left(N-k+\frac{p}{2}+1\right)}{\Gamma(k+1) \, \Gamma(N-k+1) \, \Gamma(N+p+1) \, \Gamma\left(N+\frac{p}{2}+1\right) \, \Gamma\left( 1+ \frac{p}{2}\right)} \,.
\label{ass2}
\end{split}
\end{equation}
from which we obtain
\begin{equation}
\sum_{k=1}^{N-1} \overline{\left| b_{k+1}-r_k \right|^p} = 2 \, \Gamma(N+1) \Gamma(1+p) \left( \frac{(N+p+1) \, \Gamma(\frac{p}{2})}{4 (p+1) \, \Gamma(p) \, \Gamma(N+1+\frac{p}{2})}-\frac{1}{\Gamma(N+1+p)} \right).
\end{equation}
In addition
\begin{equation}
\begin{split}
\overline{\left| r_1 -b_1 \right|^p} = \overline{\left| r_N -b_N \right|^p} & = N^2 \int_{0}^{1} dx \, dy \, (xy)^{N-1} \left| x-y \right|^p \\
& = N^2 S_2\left(N,1,\frac{p}{2}\right) = \frac{N \, \Gamma(N+1) \, \Gamma(p+1)}{\left(N+\frac{p}{2}\right) \, \Gamma(N+p+1)} \,.
\end{split}
\end{equation}
Finally, the average optimal cost for every $N$ and every $p>1$ is
\begin{equation}\label{eq::tsp_bip_costtsp}
\begin{split}
\overline{E_N^{(p)}} &  = 2 \left(\overline{\left| r_1 -b_1 \right|^p} +\sum_{k=1}^{N-1} \overline{\left| b_{k+1}-r_k \right|^p} \right)\\
& = 2 \, \Gamma(N+1) \left[ \frac{(N+p+1) \, \Gamma \left( 1+ \frac{p}{2} \right)}{(p+1) \, \Gamma\left( N+1+\frac{p}{2} \right)} - \frac{2 \, \Gamma (p+1)}{(2N+p) \, \Gamma(N+p)} \right] \,.
\end{split}
\end{equation}
Notice that, for large $N$,
\begin{equation}
\lim\limits_{N \to \infty} N^{p/2-1} \overline{E_N^{(p)}} = 2\, \frac{\Gamma\left(\frac{p}{2}+1\right)}{p+1} \,,
\end{equation}
which is twice the cost of the assignment problem in the limit of large $N$, Eq.~\eqref{eq::cost_assgn_1d_largeN}.
The case $p=2$ of Eq.~\eqref{eq::tsp_bip_costtsp} is confronted with numerical simulation in Fig.~\ref{fig::tsp_bip_plot}.

\begin{figure}
	\centering
	\includegraphics[width=0.95\columnwidth]{./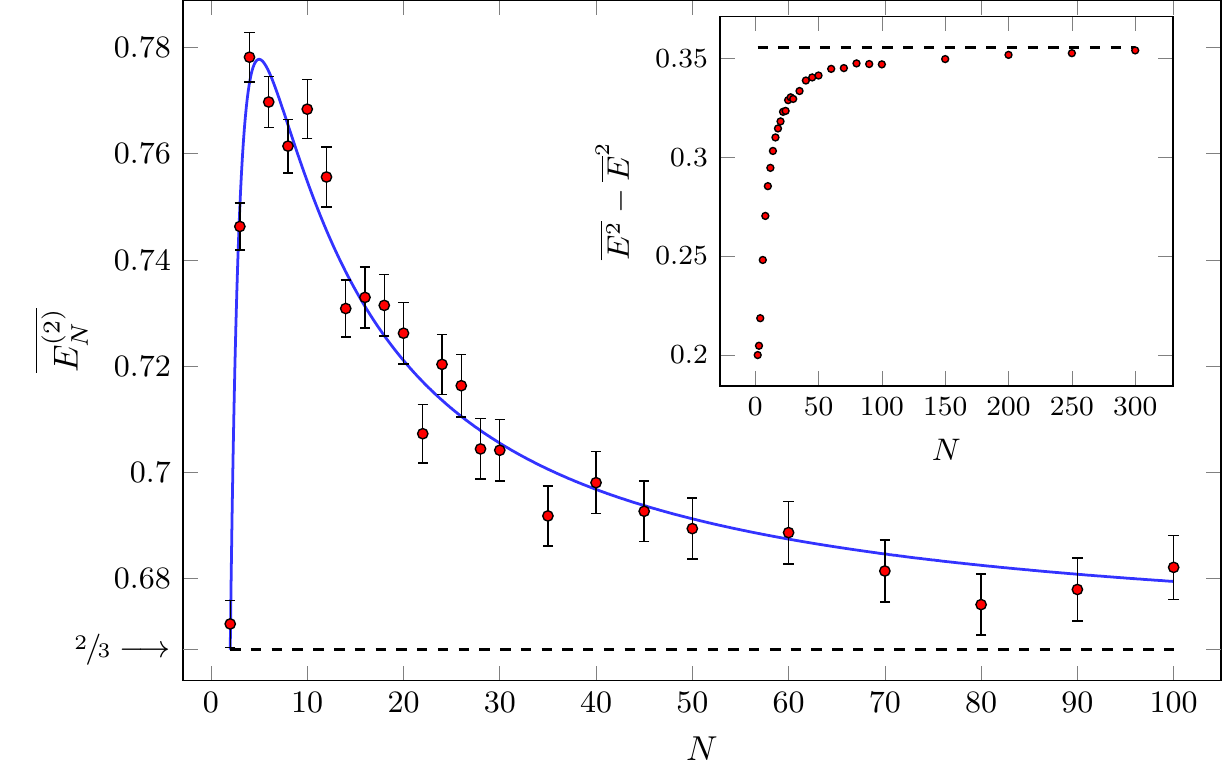}
	\caption{Numerical results for $\overline{E_N^{(2)}}$ for several values of $N$. The continuous line represents the exact prediction given in Eq.~\eqref{eq::tsp_bip_costtsp} and the dashed line gives the value for infinitely large $N$. For every $N$ we have used $10^4$ instances. In the inset we show the numerical results for the variance of the cost $E_N^{(2)}$ obtained using the exact solution provided by Eq.~\eqref{eq::tsp_bip_sigmatilde} and Eq.~\eqref{eq::tsp_bip_pitilde}. The dashed line represents the theoretical large $N$ asymptotic value. Error bars are also plotted but they are smaller than the mark size.}\label{fig::tsp_bip_plot}
\end{figure}

Finally, we can compute in the thermodynamical limit the variance of the solution cost, to check if this quantity is self-averaging or not. 
Given two sequences of $N$ points randomly chosen on the segment $[0,1]$, the probability for the difference $\phi_k$ in the position  between the $(k+1)$-th and the $k$-th points is
\begin{align}\label{eq::tsp_bip_phi_k}
\begin{split}
\Pr\left[\phi_k\in d \phi\right] & = k(k+1)\, \binom{N}{k} \, \binom{N}{k+1} \,d\phi_k \\
& \int dx \, dy  \, \delta(\phi_k - y +x) \, x^{k-1} \, y^{k} \, (1-x)^{N-k} (1-y)^{N-k-1} 
\end{split}
\end{align}
Proceeding as in the case of the assignment discussed in ~\cite{Boniolo2014, Caracciolo2014_2}, one can show that these random variables $\phi_k$ converge (in a weak sense specified by Donsker's theorem) to $\phi(s)$, which is a difference of two Brownian bridge processes~\cite{Caracciolo2017}. \\
One can write the re-scaled average optimal cost as
\begin{equation}
\overline{E_p} \equiv \lim_{N\to \infty} N^{\frac{p}{2}-1} \, \overline{ E_N^{(p)} }. 
\label{Ep}
\end{equation}
By starting at finite $N$ with the representation given in Eq.~\eqref{eq::tsp_bip_phi_k},
the large $N$ limit  can be obtained by setting $k=Ns+\frac{1}{2}$ and introducing the variables $\xi$, $\eta$ and $\varphi$ such that
\begin{equation}
x=s+\frac{\xi}{\sqrt N},\quad y=s+\frac{\eta}{\sqrt N},\quad \phi_k=\frac{\varphi(s)}{\sqrt N},
\end{equation}
in such a way that $s$ is kept fixed when $N\to +\infty$.
We obtain, at the leading order,
\begin{equation}
\begin{split}
\Pr \left[\varphi(s)\in  d\varphi\right] & = \, d\varphi\iint\delta\left(\varphi-(\eta-\xi)\right)\frac{\exp\left(-\frac{\xi^2+\eta^2}{2 s(1-s)}\right)}{2\pi s(1-s)}d\xi \,d\eta\\
=&\,\frac{1}{\sqrt{4 \pi s(1-s)}}\exp\left\{-\frac{1}{4 s(1-s)} \varphi ^2\right\}d\varphi. \label{varphi-d}
\end{split}
\end{equation}
Similarly, see for example~\cite[Appendix A]{Caracciolo2014_2}, it can be derived that the joint probability distribution $p_{t,s}(x,y)$ for $\varphi(s)$ is (for $t<s$) a bivariate Gaussian distribution 
\begin{align}
\begin{split}
p_{t,s}(x,y) = \overline{\delta(\varphi(t)-x) \, \delta(\varphi(s)-y)} = \frac{e^{-\frac{x^2}{4t}-\frac{(x-y)^2}{4(s-t)} - \frac{y^2}{4(1-s)} } }{4 \pi \sqrt{t(s-t)(1-s)}}.
\end{split}
\end{align}
This allows to compute, for a generic $p>1$, the average of the square of the re-scaled optimal cost 
\begin{equation}
\label{eq:costvariance}
\overline{E^2_p}=
4 \int_0^1 \!dt \int_0^1 ds \, \overline{\left|\varphi(s)\right|^p \left|\varphi(t)\right|^p},
\end{equation}
which is 4 times the corresponding one of a bipartite matching problem. 
In the case $p=2$, the average in Eq.~(\ref{eq:costvariance}) can be evaluated by  using the Wick theorem for expectation values in a Gaussian distribution
\begin{equation}
\overline{E^2_2}=
4 \int_0^1 \!ds \int_0^s \!dt 
\int_{-\infty}^{\infty}\!dx\,dy\,p_{t,s}(x,y) \, x^2 y^2=
\frac{4}{5} \,,
\end{equation}
and therefore
\begin{equation}
\overline{E^2_2}-\overline{E_2}^2 = \frac{16}{45} = 0.3\bar{5}.
\end{equation}
This result is in agreement with the numerical simulations (see inset of Fig.~\ref{fig::tsp_bip_plot}) and proves that the re-scaled optimal cost is not a self-averaging quantity.

\subsection{TSP on complete graphs}\label{sec::tsp_mono_1d}
Analogously to our previous analysis of the TSP on complete bipartite graphs, we can address the complete graph case. We will be able to study the problem not only in the $p>1$ case, but also in the $0<p<1$ and $p<0$ cases. However, as we will see, this last case is particularly tricky and we will not be able to univocally determine the structure of the solution. Nonetheless, we will overcome the difficulty and obtain an upper and lower bound for the average cost, which become strict in the large-$N$ limit. 
\subsubsection{Optimal cycles on the complete graph}
We shall consider the complete graph $\mathcal{K}_{N}$ with $N$ vertices, that is with vertex set $V=[N]:=\{1,\dots, N\}$. This graph has $(N -1)!/2$ Hamiltonian cycles. 
Indeed, each permutation $\pi$ in the symmetric group of $ N $ elements, $\pi \in \mathcal{S}_{N}$, defines an Hamiltonian cycle on $\mathcal{K}_{N}$. 
The sequence of points $(\pi(1), \pi(2),\dots,\pi(N),\pi(1))$ defines a closed walk with starting point $\pi(1)$, but the same walk is achieved by choosing any other vertex as starting point and also by following the walk in the opposite order, that is, $(\pi(1), \pi(N),\dots,\pi(2),\pi(1))$. As the cardinality of $\mathcal{S}_{N}$ is $N!$ we get that the number of Hamiltonian cycles in $\mathcal{K}_{N}$ is $N! / (2 N)$.

In this section, we characterize the optimal Hamiltonian cycles for different values of the parameter $p$ used in the cost function. Notice that $p=0$ and $p=1$ are degenerate cases, in which the optimal tour can be found easily by looking, for example, at the $0<p<1$ case.

\paragraph{The $p>1$ case}
We start by proving the shape of the optimal cycle when $p>1$, for every realization of the disorder.
Let us suppose, now, to have $N$ points $\mathcal{R} = \{r_i\}_{i=1,\dots,N}$ in the interval $[0,1]$. As usual we will assume that the points are ordered, i.e. $r_1 \le \dots \le r_N$. Let us define the following Hamiltonian cycle
\begin{equation}\label{eq::tsp_mono_p>1}
h^* = h[\tilde{\sigma}] = (r_{\tilde{\sigma}(1)}, r_{\tilde{\sigma}(2)}, \dots, r_{\tilde{\sigma}(N)}, r_{\tilde{\sigma}(1)})
\end{equation}
with $\tilde{\sigma}$ defined as in Eq.~\eqref{eq::tsp_bip_sigmatilde}. In Appendix~\ref{app::tsp_mono} we prove that the Hamiltonian cycle which provides the optimal cost is $h^*$.\\
The main ideas behind the proof is that we can introduce a complete bipartite graph in such a way that a solution of the bipartite matching problem on it is a solution of our original problem, with the same cost. Therefore, using the results known for the bipartite problem, we can prove the optimality of $h^*$. \\
A graphical representation of the optimal cycle for $p>1$ and $N=6$ is given in Fig.~\ref{fig::tsp_mono_fig1}, left panel.
\begin{figure*}[t]
	\begin{subfigure}{0.5\linewidth}
		\centering
		\includegraphics[width=0.9\columnwidth]{./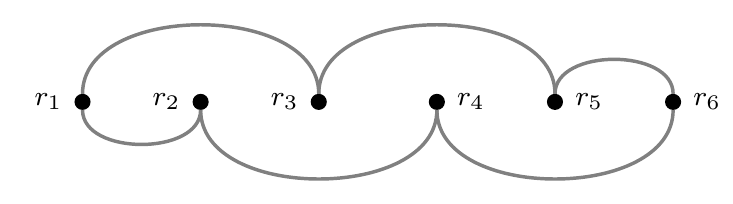}
	\end{subfigure} \hfill
	\begin{subfigure}{0.5\linewidth}
		\centering
		\includegraphics[width=0.9\columnwidth]{./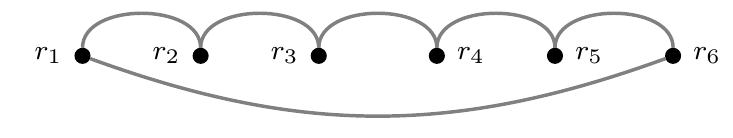}
	\end{subfigure}
	\caption{Optimal solutions for $N=6$, for the $p>1$ (left) and $0<p<1$ (right) cases. Notice how, in the $0<p<1$ case, when all the arcs are drawn in the upper half-plane above the points than there is no crossing between the arcs.}\label{fig::tsp_mono_fig1}
\end{figure*}

\paragraph{The $0<p<1$ case}
Given an ordered sequence $\mathcal{R} = \{r_i\}_{i=1,\dots,N}$ of $N$ points in the interval $[0,1]$, with $r_1 \le \dots \le r_N$, if $0 < p <1$ and if
\begin{equation}\label{eq::tsp_mono_0<p<1}
h^* = h[{\mathds{1}}] = (r_{{\mathds{1}}(1)}, r_{{\mathds{1}}(2)}, \dots, r_{{\mathds{1}}(N)}, r_{{\mathds{1}}(1)})
\end{equation}
where $\mathds{1}$ is the identity permutation, i.e.:
\begin{equation}
\mathds{1}(j) = j 
\end{equation}
then the Hamiltonian cycle which provides the optimal cost is $h^*$.\\
The idea behind this result is that we can define a crossing in the cycle as follows: let $\{r_i\}_{i=1,\dots,N}$ be the set of points, labeled in ordered fashion; consider two links $(r_i,r_j)$ and $(r_k,r_\ell)$ with $i<j$ and $k<\ell$; a crossing between them occurs if $i<k<j<\ell$ or $k<i<\ell<j$.
This corresponds graphically to a crossing of lines if we draw all the links as, for example, semicircles in the upper half-plane. In the following, however, we will not use semicircles in our figures to improve clarity (we still draw them in such a way that we do not introduce extra-crossings between links other than those defined above). An example of crossing is in the following figure

\begin{figure}[h]
	\centering
	\includegraphics[width=0.4\columnwidth]{./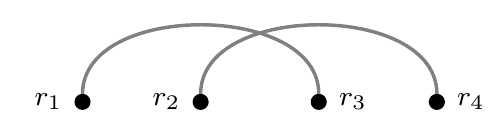}
\end{figure}

\noindent where we have not drawn the arcs which close the cycle to emphasize the crossing. Now, as shown in~\cite{Boniolo2014}, if we are able to swap two crossing arcs with two non-crossing ones, the difference between the cost of the original cycle and the new one simply consists in the difference between a crossing matching and a non-crossing one,  that is positive when $0<p<1$. Therefore the proof of the optimality of the cycle in Eq.~\eqref{eq::tsp_mono_0<p<1}, which is given in Appendix \ref{app::tsp_mono}, consists in showing how to remove a crossing (without breaking the cycle into multiple ones) and in proving that $h^*$ is the only Hamiltonian cycle without crossings (see Fig.~\ref{fig::tsp_mono_fig1}, right panel).

\paragraph{The $p<0$ case}
Here we study the properties of the solution for $p<0$. Our analysis is based, again, on the properties of the $p<0$ optimal matching solution. In~\cite{Caracciolo2017} it is shown that the optimal matching solution maximizes the total number of crossings, since the cost difference of a non-crossing and a crossing matching is always positive for $p<0$. This means that the optimal matching solution of $2N$ points on an interval is given by connecting the $i$-th point to the $(i+N)$-th one with $i = 1, \dots, N$; in this way every edge crosses the remaining $N-1$. 
Similarly to the $0<p<1$ case, suppose now to have a generic oriented Hamiltonian cycle and draw the connections between the vertices in the upper half plain (as before, eliminating all the crossings which depend on the way we draw the arcs). Suppose it is possible to identify a matching that is non-crossing, then the possible situations are the following two (we draw only the points and arcs involved in the non-crossing matching):
\begin{figure}[h]
	\centering
	\includegraphics[width=0.4\columnwidth]{./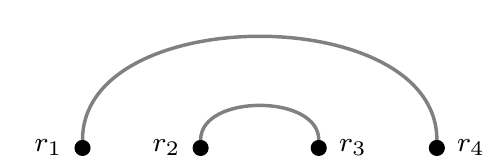} 
\end{figure}
\begin{figure}[h]
	\centering
	\includegraphics[width=0.4\columnwidth]{./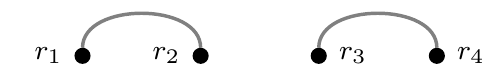}
\end{figure}
In Appendix~\ref{app::tsp_mono}, we discuss the move that allows to replace non-crossing matchings by crossing ones, in such a way that the cycle that contains the matching remains an Hamiltonian cycle.
This move is such that the cost of the new configuration is lower than the cost of the old one, since the cost gain is the difference between the costs of a non-crossing and a crossing matching, which is always positive for $p<0$.

In this manner the proof for $p<0$ goes on the same line of $0<p<1$, but instead of finding the cycle with no crossings, now we look for the one or ones that maximize them. However, as we will see in the following, one must distinguish between the $N$ odd and even case. In fact, in the $N$ odd case, only one cycle maximizes the total number of crossings, i.e. we have only one possible solution. In the $N$ even case, on the contrary, the number of Hamiltonian cycles that maximize the total number of crossings are $\frac{N}{2}$. 

\paragraph{The $p<0$ case: N odd}

Given an ordered sequence $\mathcal{R} = \{r_i\}_{i=1,\dots,N}$ of $N$ points, with $N$ odd, in the interval $[0,1]$, with $r_1 \le \dots \le r_N$, consider the permutation $\sigma$ defined as:
\begin{equation}\label{sigma}
\sigma(i) = 
\begin{cases}
	1 & \hbox{for } i = 1 \\
	\frac{N-i+3}{2} & \hbox{for } $even $ i>$1$ \\
	\frac{2N-i+3}{2} & \hbox{for } $odd $ i>$1$
\end{cases}
\end{equation}
This permutation defines the following Hamiltonian cycle:
\begin{equation}\label{eq::tsp_mono_p<0_odd}
h^*:=h[\sigma]=(r_{\sigma(1)}, r_{\sigma(2)},\dots,r_{\sigma(N)}) .
\end{equation}
The Hamiltonian cycle which provides the optimal cost is $h^*$.\\
The proof consist in showing that the only Hamiltonian cycle with the maximum number of crossings is $h^*$. As we discuss in Appendix~\ref{app::tsp_mono}, the maximum possible number of crossings an edge can have is $N-3$. The Hamiltonian cycle under exam has $N(N-3)/2$ crossings, i.e. every edge in $h^*$ has the maximum possible number of crossings. Indeed, the vertex $a$ is connected with the vertices $a+\frac{N-1}{2} \pmod N$ and $a+\frac{N+1}{2}  \pmod N$. The edge $(a,a+\frac{N-1}{2} \pmod N)$ has $2\frac{N-3}{2}=N-3$ crossings due to the $\frac{N-3}{2}$ vertices $a+1\pmod N, a+2\pmod N,\dots, a+\frac{N-1}{2}-1\pmod N$ that contribute with 2 edges each. This holds also for the edge $(a,a+\frac{N+1}{2} \pmod N)$ and for each $a\in[N]$. As shown in Appendix~\ref{app::tsp_mono} there is only one cycle with this number of crossings.

An example of an Hamiltonian cycle discussed here is given in Fig.~\ref{fig::tsp_mono_fig2}.
\begin{figure}
	\centering
	\includegraphics[width=0.5\columnwidth]{./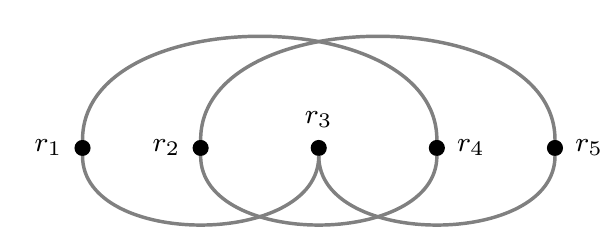}
	\caption{This is the optimal TSP and 2-factor problem solution for $N=5$, in the $p<0$ case. Notice that there are no couples of edges which do not cross and which can be changed in a crossing couple.}\label{fig::tsp_mono_fig2}
\end{figure}

\paragraph{The $p<0$ case: N even}

\begin{figure*}[ht]
	\begin{subfigure}[b]{0.5\linewidth}
		\centering
		\includegraphics[width=0.7\columnwidth]{./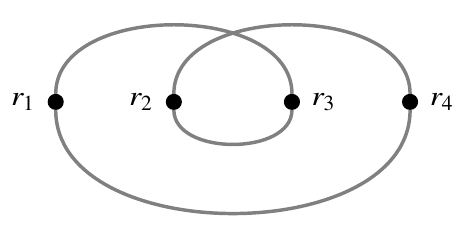}
	\end{subfigure}
	\begin{subfigure}[b]{0.5\linewidth}
		\centering
		\includegraphics[width=0.7\columnwidth]{./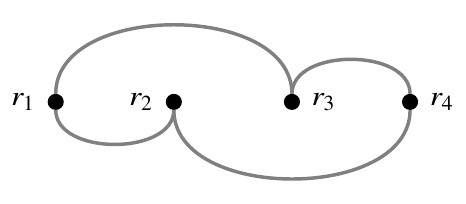}
	\end{subfigure}
	\caption{The two possible optimal Hamiltonian cycles for $p<0$, $N=4$. For each specific instance one of them has a lower cost than the other, but differently from all the other cases ($p>0$ or $N$ odd) the optimal cycle is not the same for each disorder instance. }\label{fig::tsp_mono_fig3}
\end{figure*}	

In this situation, differently from the above case, the solution is not the same irrespectively of the disorder instance. More specifically, there is a set of possible solutions, and at a given instance the optimal is the one among that set with the lowest cost. We will show how these solutions can be found and how they are related.\\
Given the usual sequence of points $\mathcal{R} = \{r_i\}_{i=1,\dots,N}$ of $N$ points, with $N$ even, in the interval $[0,1]$, with $r_1 \le \dots \le r_N$, if $p <0$, consider the permutation $\sigma$ such that:
\begin{equation}
\sigma(i)=\begin{cases}
	1  &\mbox{for } i =1\\
	\frac{N}{2}-i+3 &\mbox{for even } i \leq \frac{N}{2}+1  \\
	N-i+3 &\mbox{for odd } i \leq \frac{N}{2}+1\\
	i-\frac{N}{2} &\mbox{for even } i>\frac{N}{2}+1\\
	i &\mbox{for odd } i>\frac{N}{2}+1\\
\end{cases}
\end{equation}

Given $\tau \in \mathcal{S}_N$ defined by $\tau(i)=i+1$ for $i \in [N-1]$ and $\tau(N)=1$, we call $\Sigma$ the set of permutations  $\sigma_k, k=1,...,N$ defined as:
\begin{equation} 
\sigma_k(i)=\tau^k(\sigma(i))
\end{equation} 
where $\tau^k=\tau \circ \tau^{k-1}$.
The optimal Hamiltonian cycle is one of the cycles defined as
\begin{equation}\label{eq::tsp_mono_p<0_even}
	h_k^*:=h[\sigma_k]=(r_{\sigma_k(1)}, r_{\sigma_k(2)},\dots,r_{\sigma_k(N)}).
\end{equation}
An example with $N=4$ points is shown in Fig.~\ref{fig::tsp_mono_fig3}.
In Appendix~\ref{app::tsp_2f_mono} the 2-factor (or loop covering) with minimum cost is obtained. The idea for the proof of the TSP is to show how to join the loops in the optimal way in order to obtain the optimal TSP. The complete proof of the optimality of one among the cycles in Eq.~\ref{eq::tsp_mono_p<0_even} is given in Appendix~\ref{app::tsp_2f_mono}.

\subsubsection{Statistic properties of the solution cost}
Now we can use the insight on the solutions just obtained to compute typical properties of the optimal cost for various values of $p$.

\begin{figure}
	\centering
	\includegraphics[width=0.8\columnwidth]{./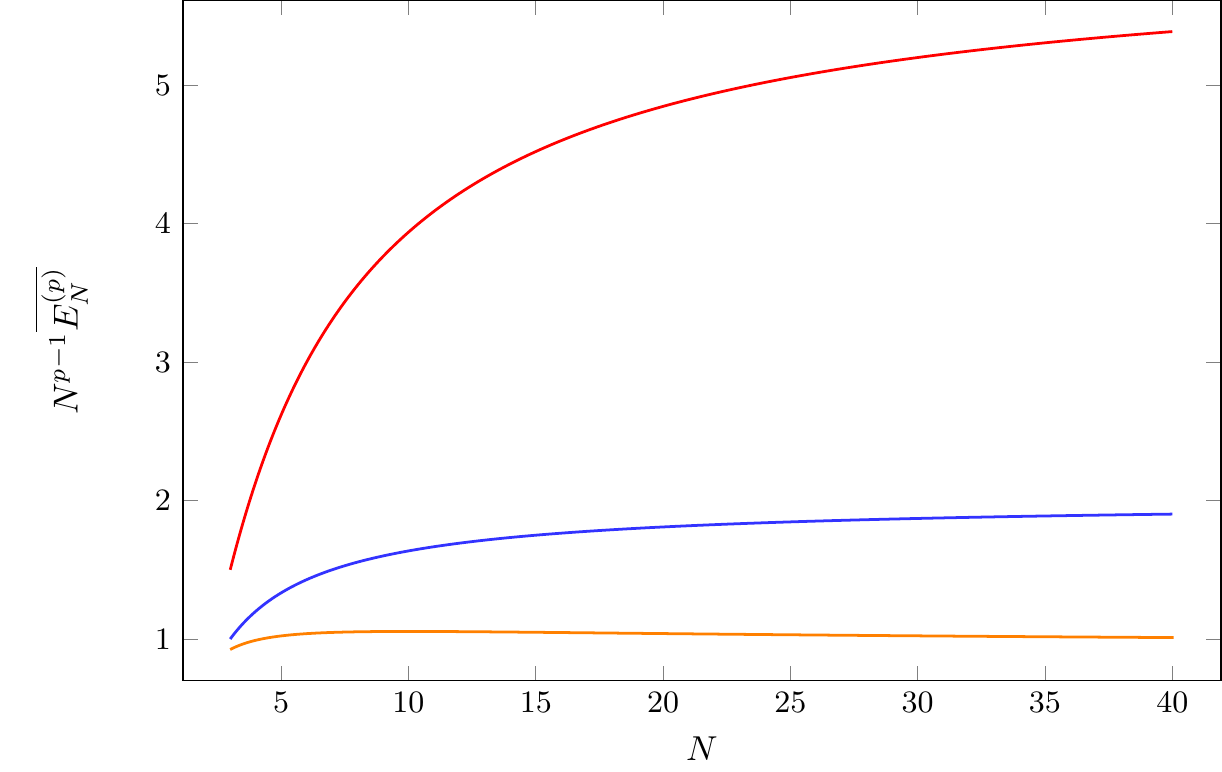} 
	\caption{Rescaled average optimal cost for $p=2$, 1, 0.5 (from top to bottom).} \label{fig::tsp_mono_TSP1}
\end{figure}
In Appendix \ref{app::orderstat} we computed the probability of finding the $l$-th point in $[x,x+dx]$, Eq.~\eqref{eq::ordstat_pk}, and the probability $p_{l, l+k}(x, y) \, dx \, dy$ of finding the $l$-th point in $[x,x+dx]$ and the $s$-th point in $[y,y+dy]$, Eq.~\eqref{eq::ordstat_pk_2}. From these equations, it follows that
\begin{equation}\label{eq::tsp_mono_k}
\int \,dx\, dy\, (y-x)^\alpha\, p_{l, \, l+k}(x, y) = \frac{\Gamma(N+1)\, \Gamma(k+\alpha)}{\Gamma(N+\alpha+1)\, \Gamma(k)   } 
\end{equation}
independently from $l$, and, therefore, in the case $p>1$ we obtain
\begin{equation}\label{eq::tsp_mono_AOC}
\overline{E_N[h^*]} = \left[ (N-2) (p+1) +2 \right] \, \frac{\Gamma(N+1)\, \Gamma(p+1)}{\Gamma(N+p+1)   } 
\end{equation}
and in particular for $p=2$ 
\begin{equation}
\overline{E_N[h^*]} = \frac{ 2\, (3 N - 4)}{(N+1) (N+2)}\, ,
\end{equation}
and for $p=1$ we get
\begin{equation}\label{eq::tsp_mono_p1}
\overline{E_N[h^*]} = \frac{ 2\, (N - 1)}{N+1}\,.
\end{equation}
In the same way one can evaluate the average optimal cost when $0<p<1$, obtaining
\begin{equation}
\begin{split}
	\overline{E_N[h^*]} = \frac{\Gamma(N+1)}{\Gamma(N+p+1)} \left[(N-1) \, \Gamma(p+1) + \frac{\Gamma(N + p - 1)}{\Gamma(N-1)} \right]
\end{split}
\end{equation}
which coincides at $p=1$ with Eq.~\eqref{eq::tsp_mono_p1} and, at $p=0$, provides  $\overline{E_N[h^*]}  = N$. For large $N$, we get
\begin{equation}
\lim_{N\to \infty} N^{p-1} \overline{E_N[h^*]}  = 
\begin{cases}
	\Gamma(p+2) & \hbox{for \,}  p\ge 1 \\
	\Gamma(p+1) & \hbox{for \,}  0< p<1\, .
\end{cases}
\end{equation}
The asymptotic cost for large $N$ and $p>1$ is $2(p+1)$ times the average optimal cost of the matching problem on the complete graph $\mathcal{K}_N$ given in Eq.~\eqref{eq::matching_cost_largeN} (notice that in Eq.~\eqref{eq::matching_cost_largeN} the cost is normalized with $N$ and the number of points is $2N$, differently from what we do here). This factor $2(p+1)$ is another difference with respect to the bipartite case, where we have seen that the cost of the TSP is twice the cost of the assignment problem for large $N$, independently of $p$.

For $p<0$ and $N$ odd we have only one possible solution, so that the average optimal cost is
\begin{equation}
\begin{split}
\overline{E_N[h^*]} = \frac{\Gamma(N+1)}{2\Gamma(N+p+1)} \left[(N-1) \frac{\Gamma\left(\frac{N+1}{2}+p\right)}{\Gamma \left( \frac{N+1}{2} \right)} + (N+1) \frac{\Gamma\left(\frac{N-1}{2}+p\right)}{\Gamma \left( \frac{N-1}{2} \right)}\right] \,.
\end{split}
\end{equation}
For large $N$ it behaves as
\begin{equation}
\lim_{N\to \infty}  \frac{\overline{E_N[h^*]}}{N} = \frac{1}{2^p} \,,
\end{equation}
which coincides with the scaling derived before for $p=0$. Note that for large $N$ the average optimal cost of the TSP problem is two times the one of the corresponding matching problem for $p<0$~\cite{Caracciolo2017}. 

\begin{figure}
	\centering
	\includegraphics[width=0.8\columnwidth]{./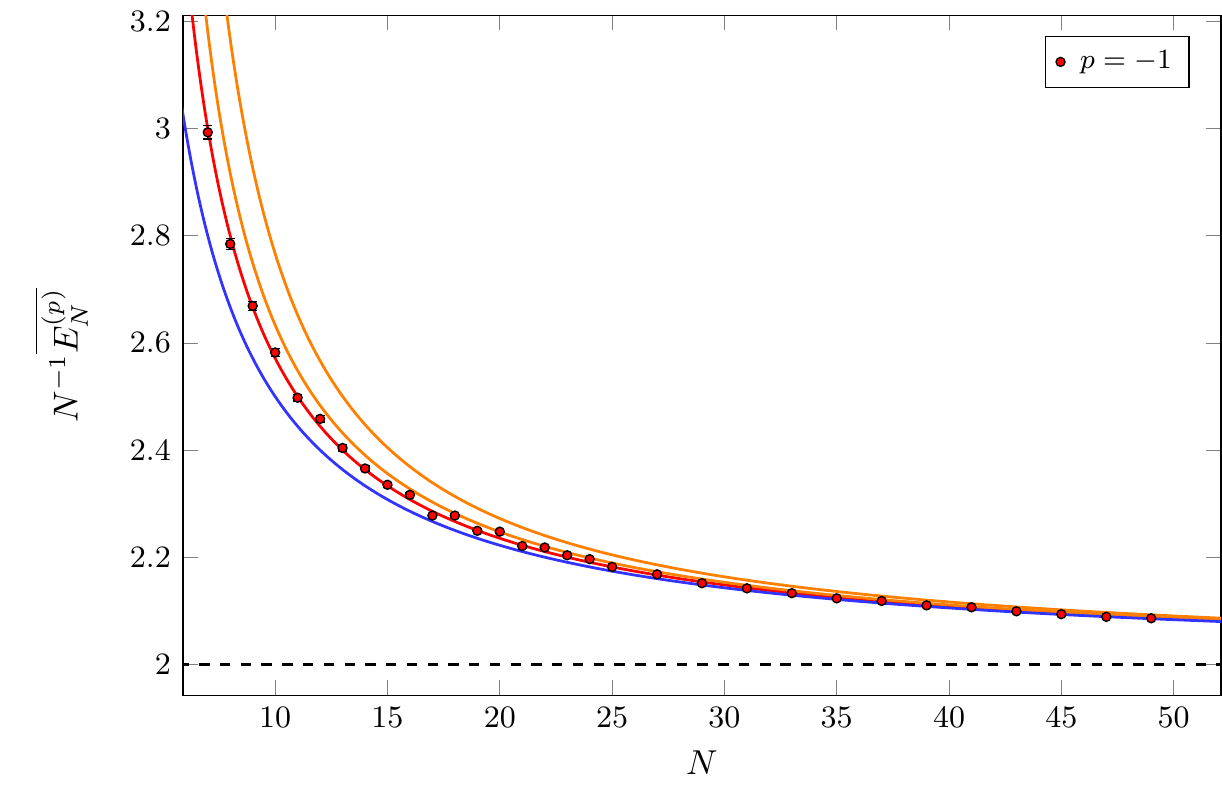} 
	\caption{Rescaled average optimal cost in the $p=-1$ case. The red points and line are respectively the result of a numerical simulation and the theoretical prediction in the odd $N$ case. The blue line is the 2 times the theoretic value of the optimal matching. The orange lines (from top to bottom) are the average costs $\overline{E_N[h_1]}$ and $\overline{E_N[h_2]}$ defined in Eqs.~\eqref{eq::tsp_mono_h_1} and \eqref{eq::tsp_mono_h_2} respectively. The dashed black line is the large $N$ limit of all the curves.} \label{fig::tsp_mono_TSP2}
\end{figure}

For $N$ even, instead, there are $N/2$ possible solutions. One can see $N/2-1$ of these share the same average energy, since they have the same number of links with the same $k$ of Eq.~\eqref{eq::tsp_mono_k}. These solutions have 2 links with $k=N/2$, $N/2$ links with $k=N/2+1$ and $N/2-2$ links with $k=N/2+1$. We denote this set of configurations with $h_1$ (although they are many different configurations, we use only the label $h_1$ to stress that all of them share the same average optimal cost) and its average cost is
\begin{equation}\label{eq::tsp_mono_h_1}
\begin{split}
\overline{E_N[h_1]} =& \frac{\Gamma(N+1)}{\Gamma(N+p+1)} \left[ \frac{N}{2} \frac{\Gamma\left( \frac{N}{2} + p-1 \right)}{\Gamma \left( \frac{N}{2}-1 \right)}  \right.\\
& \left. + \left( \frac{N}{2}-2 \right) \frac{\Gamma \left( \frac{N}{2} + p + 1 \right)}{\Gamma \left( \frac{N}{2} +1 \right)} + 2 \frac{\Gamma\left( \frac{N}{2}+p \right)}{\Gamma \left( \frac{N}{2} \right)}\right].
\end{split}
\end{equation}
The other possible solution, that we denote with $h_2$ has 2 links with $k=N/2-1$, $N/2$ links with $k=N/2+1$ and $N/2-1$ links with $k=N/2+1$ and its average cost is
\begin{equation}\label{eq::tsp_mono_h_2}
\begin{split}
\overline{E_N[h_2]} = & \frac{\Gamma(N+1)}{\Gamma(N+p+1)} \left[ \left(\frac{N}{2}-1\right) \frac{\Gamma\left( \frac{N}{2} + p-1 \right)}{\Gamma \left( \frac{N}{2}-1 \right)} \right. \\
& \left. + \left( \frac{N}{2}-1 \right) \frac{\Gamma \left( \frac{N}{2} + p + 1 \right)}{\Gamma \left( \frac{N}{2} +1 \right)} + 2 \frac{\Gamma\left( \frac{N}{2}+p \right)}{\Gamma \left( \frac{N}{2} \right)}\right].
\end{split}
\end{equation}
In Fig.~\ref{fig::tsp_mono_TSP1} we plot the analytical results for $p=0.5$, $1$, $2$ and in Fig.~\ref{fig::tsp_mono_TSP2} we compare analytical and numerical results for $p=-1$. In particular, since $\overline{E_N[h_1]} > \overline{E_N[h_2]}$, $\overline{E_N[h_2]}$ provides our best upper bound for the average optimal cost of the $p=-1$, $N$ even case. The numerical results have been obtained by solving $10^4$ TSP instances using its linear programming representation.

Now we investigate whether the optimal cost is a self-averaging quantity.
We collect in Appendix~\ref{app::tsp_mono_2Moment} all the technical details concerning the evaluation of the second moment of the optimal cost distribution $\overline{E_N^2}$, which has been computed for all number of points $N$ and, for simplicity, in the case $p>1$ and it is given in Eq.~\eqref{eq::app_tsp_mono_SecondMoment}.
In the large $N$ limit it goes like
\begin{equation}
\lim_{N\to \infty} N^{2(p-1)} \overline{E_N^2[h^*]} = \Gamma^2(p+2)
\end{equation}
i.e. tends to the square of the rescaled average optimal cost. This proves that the cost is a self-averaging quantity. 
Using Eq.~\eqref{eq::app_tsp_mono_SecondMoment} together with Eq.~\eqref{eq::tsp_mono_AOC} one gets the variance of the optimal cost. In particular for $p=2$ we get
\begin{equation}
\sigma_{E_N}^2=\frac{4 (N (5 N (N+13)+66)-288)}{(N+1)^2 (N+2)^2 (N+3) (N+4)} \,,
\label{Variance}
\end{equation}
which goes to zero as $\sigma_{E_N}^2 \simeq 20/N^3$.

\subsection{The bipartite traveling salesman problem in two dimensions}
We have seen that in one dimension the cost of the solution of the bipartite TSP is twice that the cost of the assignment problem. This actually holds also in two dimensions, where the bipartite TSP is a genuine NP-hard problem. I.e.~for any given choice of the positions of the points, in the asymptotic limit of large $N$,  the cost of the bipartite TSP converges to twice the cost of the assignment. However, this claim is non-trivial and it requires several results introduced previously, together with a scaling argument which we present in this section. This is another noticeable example where information about average properties of the solution of a hard COP can be obtained even in more than one dimension and in the presence of Euclidean correlations.

\subsubsection{Scaling argument}

Given an instance of $N$ blue and $N$ red point positions, let us consider the optimal assignment $\mu^*$ on them. Let us now consider $N$ points which are taken between the red an blue point of each edge in $\mu^*$ and call $\mathcal{T}^*$ the optimal ``monopartite'' TSP solution on these points. For simplicity, as these $N$ points we take the blue points. 

We shall use $\mathcal{T}^*$ to provide an ordering among the red and blue points. Given two consecutive points in $\mathcal{T}^*$, for example $b_1$ and $b_2$, let us denote by $(r_1,b_1)$ and $(r_2,b_2)$ the two edges in $\mu^*$ involving the blue points $b_1$ and $b_2$ and let us consider also the new edge $(r_1, b_2)$.
We have seen that, in the asymptotic limit of large $N$, the typical distance between two matched points in $\mu^*$ scales as $ (\log N/N)^{1/2}$ (see Sec.~\ref{sec::matching}) while the typical distance between two points matched in the monopartite case scales only as $1/N^{1/2}$~\cite{Beardwood1959}, that is (for all points but a fraction which goes to zero with $N$)
\begin{equation}
\begin{split}
& w_{(b_1,r_1)} = \left( \alpha_{11} \frac{\log N}{N} \right)^{\frac{p}{2}}, \\ 
& w_{(b_2,r_1)} =\left[\beta_{22} \frac{1}{N} + \alpha_{11} \frac{\log N}{N} - \gamma \frac{\sqrt{\log N}}{N} \right]^{\frac{p}{2}}.
\end{split}
\end{equation}
where $(\alpha_{11} \log N / N)^{1/2}$ is the length of the edge $(r_1,b_1)$ of $\mu^*$, $(\beta_{22} / N)^{1/2}$ is the length of the edge $(b_1, b_2)$ of $\mathcal{T}^*$ and $\gamma = 2 \sqrt{\alpha_{11} \beta_{22}} \cos\theta$, where $\theta$ is the angle between the edges $(r_1,b_1)$ of $\mu^*$ and $(b_1,b_2)$ of $\mathcal{T}^*$.

This means that, typically,  the difference in cost 
\begin{equation}
\Delta E = w_{(b_2,r_1)} - w_{(b_1,r_1)} \sim \frac{(\log N)^{\frac{p-1}{2}}}{N^{\frac{p}{2}}}
\end{equation}
is small as compared to the typical cost $(\log N/N)^\frac{p}{2}$ of one edge in the bipartite case.
To obtain a valid TSP solution, which we call $h^A$, we add to the edges $\mu^* = \{(r_1,b_1), \dots,(r_N, b_N)\}$ the edges $\{(r_1,b_2), \dots, (r_{N-1}, b_N), (r_N, b_1)\}$, see Fig.~\ref{Fig::tsp_2d_figscale}.

\begin{figure}
	\centering
	\includegraphics[]{./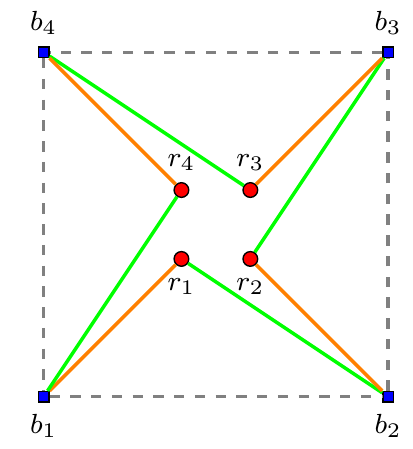}
	\caption{The optimal assignment $\mu^*$ is given by the orange edges $\{ (r_1, b_1), (r_2 ,b_2), (r_3, b_3), (r_4 ,b_4) \}$. 
		The monopartite TSP (gray dashed edges) among blue points provides the necessary ordering. In order to obtain the TSP ${b_1, r_1, b_2, r_2, b_3, r_3, b_4, r_4, b_1}$ in the bipartite graph we have to add the green edges $\{((r_1, b_2), (r_2, b_3), (r_3, b_4), (r_4, b_1)\}$.
	}\label{Fig::tsp_2d_figscale}
\end{figure}

Of course $h^A$ is not, in general, the optimal solution of the TSP. However, because of Eq.~\eqref{eq::ineq_mat_tsp}, we have that
\begin{equation}
E[h^A] \geq E[h^*]  \geq 2\, E[\mu^\star]
\end{equation}
and we have shown that, for large $N$, $E[h^A]$ goes to $2\, E[\mu^\star]$ and therefore also $E[h^*]$ must behave in the same way. 
Notice also that our argument is purely local and therefore it does not depend in any way on the type of boundary conditions adopted, therefore it holds for both open and periodic boundary conditions. 

An analogous construction can be used in any number of dimensions. However, the success of the procedure lies in the fact that the typical distance between two points in $\mu^*$ goes to zero slower than the typical distance between two consecutive points in the monopartite TSP. This is true only in one and two dimensions, and as we have already said, it is related to the importance of fluctuations in the number of points of different kinds in a small volume.

This approach allowed us to find also an approximated solution of the TSP which improves as $N\to\infty$. However, this approximation requires the solution of a \emph{monopartite} TSP on $N/2$ points, corroborating the fact that the bipartite TSP is a hard problem (from the point of view of complexity theory). 

\begin{figure*}[ht]
	\begin{subfigure}[t]{0.49\linewidth}
		\centering
		\includegraphics[width=0.95\columnwidth]{./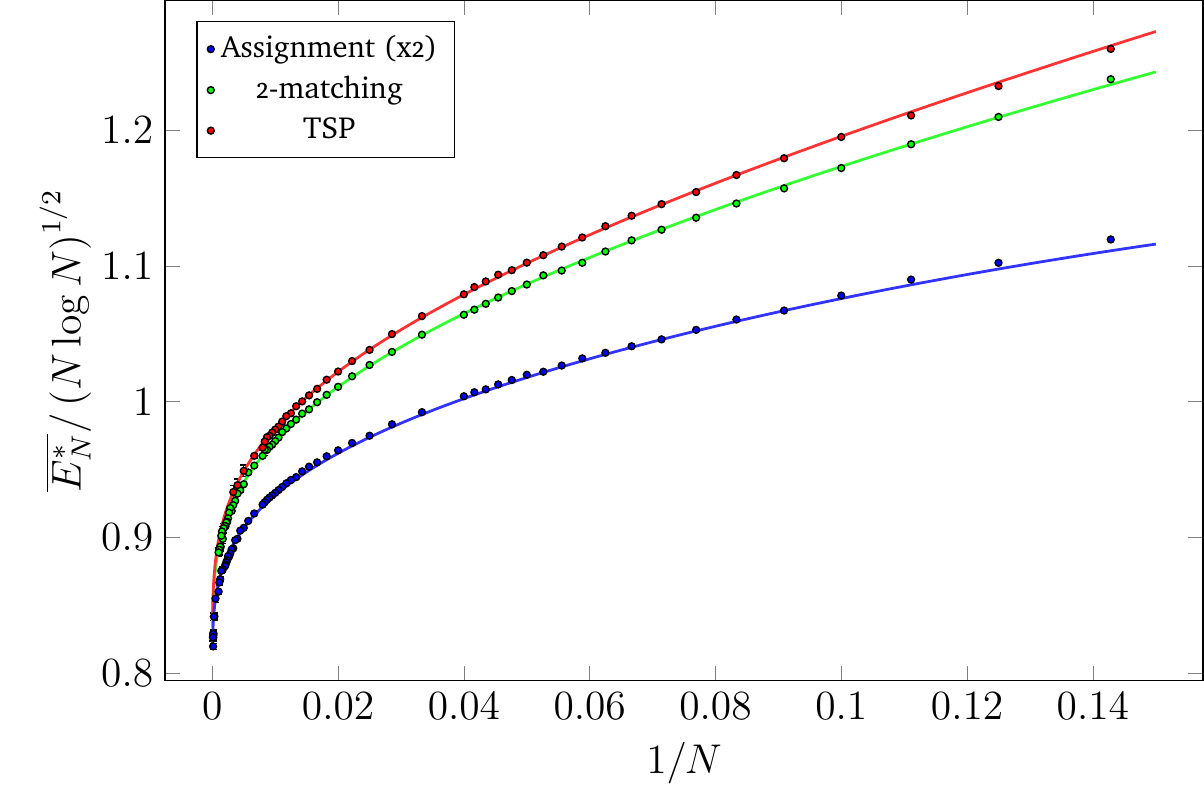}
		\label{Fig::p=1}
	\end{subfigure} \hfill
	\begin{subfigure}[t]{0.49\linewidth}
		\centering
		\includegraphics[width=0.95\columnwidth]{./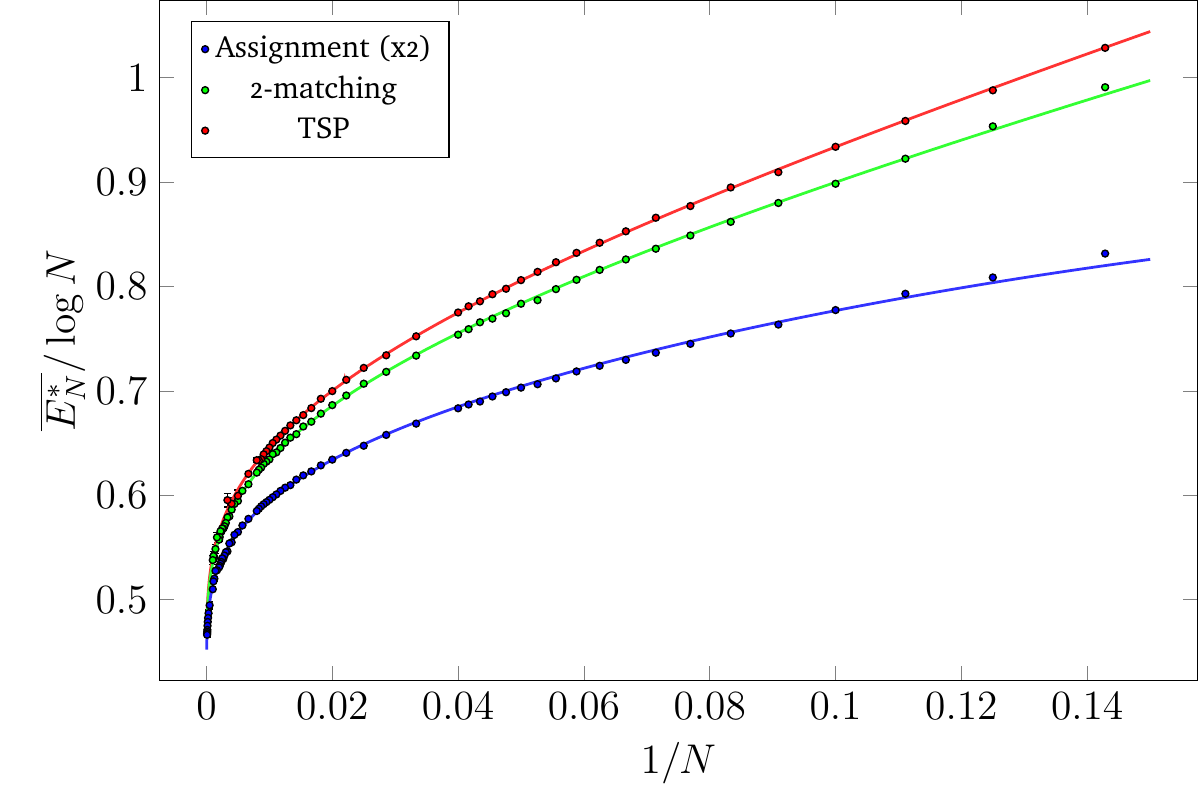}
		\label{Fig::p=2}
	\end{subfigure}
	\caption{Numerical results for $p=1$ (left panel) and $p=2$ (right panel) for the TSP (red points, top), the 2-factor, which is defined in Sec.~\ref{sec::2factor} (green points, middle), and 2 times the assignment problem (blue points, bottom) in the open boundary condition case. Continuous lines are numerical fit to the data.}
	\label{Fig::tsp_2d_AverageCost}
\end{figure*}
\begin{table*}[ht]
	\centering
	\begin{tabular}{llll} 
		{$p=1$} & $a_{1}$ & $a_{2}$ & $a_{3}$ \\
		\hline
		TSP & 0.717(2) & 1.32(1) & $-0.513(1)$ \\ 
		Assignment & 0.714(2) & 1.17(2) & $-0.77(2)$ \\  
	\end{tabular}
	\quad \quad \quad \quad \quad \quad \quad \quad
	\begin{tabular}{llll} 
		{$p=2$} & $a_{1}$ & $a_{2}$ & $a_{3}$ \\
		\hline
		TSP & 0.321(5) & 1.603(2) & $-0.428(6)$ \\ 
		Assignment & 0.31831 & 1.502(2) & $-1.05(1)$  \\ 
	\end{tabular}
	\caption{Comparison between fit factors in assignment and TSP, for $p=1$, $2$.  We have doubled the factors for the assignment to verify our hypothesis. For $p=2$, we have reported the theoretical value of $a_1$ which is $1/\pi$. 
	}
	\label{tab::tsp_2d_p1}
\end{table*}

\subsubsection{Numerical results}

We confirm our theoretical predictions performing numerical simulations on both assignment and bipartite TSP. We have considered the case of open boundary conditions.

For what concerns the assignment problem, we have implemented an in-house solver based on the LEMON optimization library~\cite{Deszo2011}, which is based on the Edmonds' blossom algorithm~\cite{Edmonds1965}. In the case of the TSP, the most efficient way to tackle the problem numerically is to exploit its \textit{linear} or \textit{integer programming} formulation. 

To validate our argument, we solved for the assignment problem (with $p=1,2$) $10^{5}$ independent instances for $2 \leq N \leq 125$, $10^{4}$ independent instances for $150 \leq N \leq 500$, and $10^{3}$ independent instances for $600 \leq N \leq 1000$. In the TSP case, the computational cost is dramatically larger; for this reason the maximum number of points we were able to achieve with a good numerical precision using integer programming was $N=300$, also reducing the total number of instances. 

An estimate of the asymptotic average optimal cost and finite size corrections has been obtained using the fitting function for $p=1$
\begin{equation}
\label{eq:p1_cost}
f^{(p=1)}(N) = \sqrt{N\log{N}} \left( a_{1}  +  \frac{a_{2}}{\log N} + \frac{a_{3}}{\log^2 N} \right)
\end{equation}
while, for $p=2$
\begin{equation}
\label{eq:p2_cost}
f^{(p=2)}(N) =  \log N \, \left( a_{1}  +  \frac{a_{2}}{\log N} + \frac{a_{3}}{\log^2 N} \right) \,.
\end{equation}
These are the first 3 terms of the asymptotic behavior of the cost of the assignment problem~\cite{Ajtai1984,Caracciolo2014}. 
Parameters $a_2$ and $a_3$ for $p=2$ were obtained fixing $a_1$ to $1/\pi$. In Fig.~\ref{Fig::tsp_2d_AverageCost} we plot the data and fit in the case of open boundary conditions. Results are reported in Table~\ref{tab::tsp_2d_p1}.

To better confirm the behavior of the average optimal cost of the TSP, we also performed some numerical simulations using a much more efficient solver, that is the Concorde TSP solver~\cite{Applegate2006}, which is based on an implementation of the Branch-and-cut algorithm proposed by Padberg and Rinaldi~\cite{Padberg1991}. The results for the leading term of the asymptotic average optimal cost are confirmed while a small systematic error due to the integer implementation of the solver is observed in the finite size corrections.

These numerical checks, together with our scaling argument, demonstrate that, as already obtained in one dimension,
\begin{equation}
\lim_{N\to\infty} \frac{\overline{E[h^*]}}{\overline{E[\mu^*]}} =  2  \, .
\end{equation}
This implies, for the special case $p=2$, by using the second line of Eq.~\eqref{eq::largeN_mat_bip_p2}, an exact, analytical result: $\lim_{N\to\infty} (\overline{E^[h^*]}/ \log N) =   1/\pi$.
In general, the evaluation of the large $N$ value of the cost of solutions of the bipartite TSP is reduced to the solution of the matching problem with the same number of points, which requires only polynomial time.
This seems to be a peculiar feature of the bipartite problem: the ``monopartite'' TSP cannot be approached in a similar way.

\subsection{Other known results}

There are many other very interesting research papers about the TSP. Here we limit ourselves to report some results regarding the average of the solution cost in higher dimension. 

As for the matching case, the mean field version of the problem can be studied with replica methods~\cite{Mezard1986} and the so-called cavity method~\cite{Krauth1989}. When the graph is complete and the links weights are IID random variables distributed according to the law
\begin{equation}
	\rho(\ell) \underset{\ell \to 0}{\sim} \frac{\ell^r}{r!}
\end{equation}
where $r$ is a parameter (notice that the behavior of the distribution far from $\ell=0$ is irrelevant), we have that, for large $N$,
\begin{equation}
	\overline{ E_N^{(r)} } \sim N^{1-1/(r+1)} L_r,
\end{equation}
where $L_r$ can be computed numerically up to the desired precision.
Notice that there result are, as in the matching case, obtained by using a RS ansatz (or the analogous hypothesis for the cavity method) and are confirmed by extensive numerical simulations (thus certifying the exactness of the RS ansatz for this problem).

Another famous result (which is actually one of the first about the RCOP version of the TSP) due to Beardwood, Halton and Hammersley~\cite{Beardwood1959} is about euclidean TSP in $d>1$ when the points are chosen with flat distribution in a volume $V$, and the cost function is the total length of the tour (that is $p=1$). In that case, we have, for large $N$
\begin{equation}
	\overline{ E_N }\sim C_d N^{1-1/d} V^{1/d}.
\end{equation}
where the constant $C_d$ is unknown analytically and has been estimated, up to a certain precision, for several number of dimensions by solving numerically the TSP and averaging the cost of the solution.

\section{Between matching and TSP: 2-factor problem}\label{sec::2factor}
\subsection{Meet the 2-factor problem}
In this Section we will deal with the 2-factor problem which consists, given an undirected graph, in finding a spanning subgraph that contains only disjoint cycles (that is, a 2-factor). For this reason this problem is also called \emph{loop covering} of a graph.\\
The 2-factor problem can be seen as a relaxation of the TSP, in which one has the additional constraint that there must be a unique cycle. We mention that also this problem can be studied using replica (and cavity) methods in the mean field case: one finds that, for large number of points, its average optimal cost is the same of that of the TSP.

In the following we will study the 2-factor problem in one dimension and in two dimensions, both on the complete bipartite graph and, only in one dimension, on the complete graph. The disorder in this problem will be introduced by drawing the points independently from the uniform distribution over the compact interval $[0,1]$ or over the square $[0,1]\times [0,1]$, as we did for the other random Euclidean COPs studied. 
As in the previous investigations, the weights on the edges are chosen as the Euclidean distance between the corresponding points on the interval or square, to the $p$. 

This problem can be seen as an intermediate problem between the assignment (or matching) and the TSP: indeed, in the former case we search for the minimum-cost 1-factor of a graph, while in the latter we are interested in the minimum-cost $N$-factor if the graph has $N$ vertices.\\
Nonetheless, when tackled in one dimension for $p>1$, we will see that there is an important difference between the 2-factor and the other studied problems: while almost for every instance of the problem there is only one solution, by looking at the whole ensemble of instances it appears an exponential number of possible solutions scaling as $\plas^N$, where $\plas$ is the plastic constant (see Appendix \ref{app::plastic_constant}). 
This is in contrast with the matching and TSP cases, where we have seen that, for $p>1$, for every realization of the disorder the configuration that solves the problem is always the same. Moreover, also for $p<0$, when for the TSP in the complete graph there are more than one possible optimal tour, they are $N$ different possibilities (when the graph has $2N$ vertices), while for the 2-factor we have an exponential number (in $N$) of them.

Let us start by defining formally the problem. Consider a graph $\mathcal{G}$ and the set of 2-factors of this graph, $\mathcal{M}_2$.
Suppose now that a weight $w_e > 0$ is assigned to each edge $e \in \mathcal{E}$ of the graph $\mathcal{G}$. We can associate to each 2-factor $\nu\in \mathcal{M}_2$ a total cost
\begin{equation}\label{eq::2f_E}
	E(\nu) :=  \sum_{e\in \nu} w_e \, .
\end{equation}
In the (weighted) 2-factor problem we search for the 2-factor $\nu^*\in \mathcal{M}_2$ such that the total cost in Eq.~\eqref{eq::2f_E} is minimized, that is
\begin{equation}\label{eq::2f-nu^*}
	E(\nu^*) = \min_{ \nu\in \mathcal{M}_2} E(\nu)\, . 
\end{equation}
If $\mathcal{H}$ is the set of Hamiltonian cycles for the graph $\mathcal{G}$, of course $\mathcal{H} \subset \mathcal{M}_2$ and therefore if $h^*$ is the optimal Hamiltonian cycle, we have
\begin{equation}\label{eq::ineq_2f_tsp}
	E[h^*] \geq E[\nu^*] \,,
\end{equation}
which is a relation between the cost of the solution of the 2-factor problem and the TSP on the same graph.

From now on we specialize to the Euclidean version of the problem, and so when the graph is complete or complete bipartite and is embedded in $[0,1]^d \subset \mathbb{R}^d$. 
For the complete case $\mathcal{G} = \mathcal{K}_N$ at each vertex $i\in [N] =\{1,2,\dots,N\}$ we associate a point $x_i\in [0,1]^d$, and for each $e=(i,j)$ with $i,j \in [N]$ we introduce a cost which is a function of their Euclidean distance
\begin{equation}
w_e = |x_i-x_j|^p \,  \label{eq::2f_p}
\end{equation}
with $p\in \mathbb{R}$. 
Analogously for the complete bipartite graph $\mathcal{K}_{N,N}$, we have two sets of points in $[0,1]^d$, that is, say, the red $\{r_i\}_{i\in [N]}$ and the blue $\{b_i\}_{i\in [N]}$ points, and the edges connect red points with blue points with a cost
\begin{equation}
w_e = |r_i-b_j|^p \, . \label{eq::2f_pb}
\end{equation}
For a discussion on this problem on an arbitrary graph $\mathcal{G}$, see~\cite{Zecchina2011} and references therein.

Let us now focus on the case of complete bipartite graph $\mathcal{K}_{N,N}$, where each cycle in a 2-factor must have an even length. 
Let $\mathcal{S}_N$ be the symmetric group of order $N$ and consider two permutations $\sigma, \pi \in \mathcal{S}_N$. If for every $i\in [N]$ we have that $\sigma(i) \neq \pi(i)$, then the two permutations define the 2-factor $\nu(\sigma,\pi)$ with edges
\begin{align}
e_{2i-1} \; := \; & (r_i, b_{\sigma(i)})\\
e_{2i} \; := \; & (r_i, b_{\pi(i)})
\end{align}
for $i\in[N]$. And, vice versa, for any 2-factor $\nu$ there is a couple of permutations  $\sigma, \pi \in \mathcal{S}_N$,  such that for every $i\in [N]$ we have that $\sigma(i) \neq \pi(i)$.

It will have total cost
\begin{equation}
E[\nu(\sigma, \pi)] = \sum_{i\in [N]} \left[ |r_i - b_{\sigma(i)}|^p +  |r_i - b_{\pi(i)}|^p \right] \, .
\end{equation}
By construction, if we denote by $\mu[\sigma]$ the matching associated to the permutation $\sigma$ and by
\begin{equation}
E[\mu(\sigma)] := \sum_{i\in [N]}  |r_i - b_{\sigma(i)}|^p
\end{equation}
its cost, we soon have that
\begin{equation}
E[\nu(\sigma, \pi)] = E[\mu(\sigma)] + E[\mu(\pi)]
\end{equation}
and we recover that
\begin{equation}
\label{eq::2f_bip::inequality}
E[\nu^*] \geq 2\, E[\mu^*],
\end{equation}
i.e.~the cost of the optimal 2-factor is necessarily greater or equal to twice the optimal 1-factor. Together with inequality~\eqref{eq::ineq_2f_tsp}, which is valid for any graph, we obtain that
\begin{equation}
\label{eq::2f_Inequalities} 
E[h^*] \geq E[\nu^*] \geq 2\, E[\mu^*] \, .
\end{equation}

Previously in this Chapter we have seen that in the limit of infinitely large $N$, in one dimension and with $p>1$, the average cost of the optimal Hamiltonian cycle is equal to twice the average cost of the optimal matching (1-factor). We conclude that the average cost of the 2-factor must be the same. Moreover, since inequality \eqref{eq::2f_Inequalities} holds also in 2 dimensions, also in that case the cost of the 2-factor problem has the same limit, for large $N$ of that obtained for assignment and bipartite TSP (see Fig.~\ref{Fig::tsp_2d_AverageCost}).\\
In the following we will denote with $\overline{E_{N,N}^{(p)}[\nu^*]}$ the average optimal cost of the 2-factor problem on the complete bipartite graph. 
Its scaling for large $N$ will be the same of the TSP and the matching problem, that is the limit
\begin{equation}
\lim\limits_{N\to \infty} \frac{\overline{E_{N,N}^{(p)}[\nu^*]}}{N^{1-p/2}} = E^{(p)}_B \,,
\end{equation}
is finite. 

On the complete graph $\mathcal{K}_N$ inequality (\ref{eq::2f_bip::inequality}) does not hold, since a general 2-factor configuration cannot always be written as a sum of two disjoint matchings, due to the presence of odd-length loops. Every 2-factor configuration on the complete graph can be determined by only one permutation $\pi$, satisfying $\pi(i) \ne i$ and $\pi(\pi(i)) \ne i$ for every $i\in [N]$. The cost can be written as
\begin{equation}
E[\nu(\pi)] = \sum_{i\in [N]}  |x_i - x_{\pi(i)}|^p \,.
\end{equation}
The two constraints on $\pi$ assure that the permutation does not contain fixed points and cycles of length 2. 
In the following we will denote with $\overline{E_{N}^{(p)}[\nu^*]}$ the average optimal cost of the 2-factor problem on the complete graph. Even though inequality~(\ref{eq::2f_bip::inequality}) does not hold, we expect that for large $N$, the average optimal cost scales in the same way as the TSP and the matching problem, i.e. as
\begin{equation}
\lim\limits_{N \to \infty} \frac{\overline{E_{N}^{(p)}[\nu^*]}}{N^{1-p}} = E^{(p)}_M \,.
\end{equation}
Later we will give numerical evidence for this scaling.

\subsection{2-factor in one dimension on complete bipartite graphs}
Here we will consider the case $p>1$, that is the weight associated to an edge is a convex and increasing function of the Euclidean distance between its two vertices. This section is taken from \cite{Caracciolo2018_3}.
Let us now look for the optimal solutions for the 2-factor.

The possible solutions for $N=6$ and are represented schematically in Fig.~\ref{fig::2f_N=6}. For $N=7$ there are three solutions and so on.

The first observation that we can do is that in any optimal 2-factor $\nu^*$ all the loops must be in the shoelace configuration, that is the one that we found for the TSP.\\
Indeed in each loop there is the same number of red and blue points and the result we proved for the one dimensional bipartite TSP shows indeed that the shoelace loop is always optimal (when the number of loops used has to be one).

Moreover, in any optimal 2-factor $\nu^*$ there are no loops with more than 3 red points.
Indeed, as soon as the number of red points (and therefore blue points) in a loop is larger than 3, a more convenient 2-factor is obtained by considering a 2-factor with two loops. In fact, as can be seen in Fig.~\ref{Fig::2f_bip}, the cost gain is exactly equal to the difference between an ordered and an unordered matching which we know is always negative for $p>1$.

From these two considerations, it follows that in any optimal bipartite 2-factor $\nu^*$ there are only shoelaces loops with 2 or 3 red points. \\
The reason why there is not a solution which is always the optimal independently on the point positions is that two different 2-factors in this class are not comparable, that is all of them can be optimal in particular instances. For example, the possible solutions for $N=6$ and are represented schematically in Fig.~\ref{fig::2f_N=6}. For $N=7$ there are three solutions and so on.

But how many of these possible solutions there are? 
At given number $N$ of both red and blue points there are at most $\Pad(N-2)$  optimal 2-factor $\nu^*$. $\Pad(N)$ is the $N$-th {\em Padovan} number, see Appendix~\ref{app::2factor}, where it is also shown that for large $N$
\begin{equation}
\Pad(N) \sim \plas^N
\end{equation}
with $\plas$ the {\em plastic} number (see Appendix~\ref{app::plastic_constant} for a discussion on this constant). 

Actually, for values of $N$ which we could explore numerically, we saw that all $\Pad(N-2)$ possible solutions appear as optimal solutions in the ensemble of instances.

\begin{figure}[ht]
	\begin{subfigure}[t]{0.48\linewidth}
		\centering
		\includegraphics[width=1\columnwidth]{./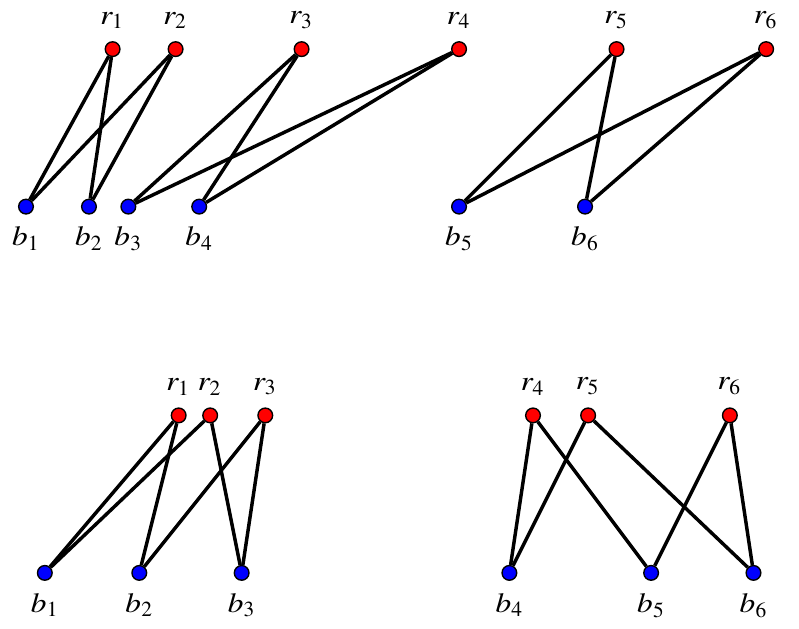}
		\caption{\footnotesize Two instances whose optimal solutions are the two possible $\nu^*$ for $N=6$ on the complete bipartite graph $\mathcal{K}_{N,N}$. For each instance the blue and red points are chosen in the unit interval and sorted in increasing order, then plotted on parallel lines to improve visualization.} \label{fig::2f_N=6}
	\end{subfigure} \hfill
	\begin{subfigure}[t]{0.48\linewidth}
		\centering
		\includegraphics[width=1\columnwidth]{./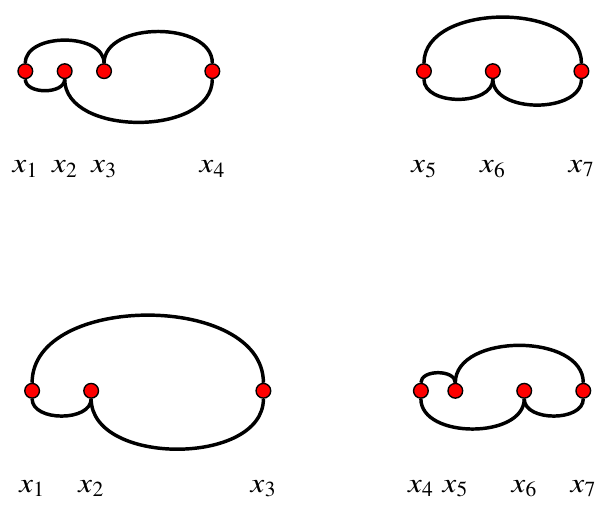}
		\caption{\footnotesize Two instances whose optimal solutions are the two possible  $\nu^*$ for $N=7$ on the complete graph $\mathcal{K}_N$. For each instance the points are chosen in the unit interval and sorted in increasing order. }  \label{fig::2f_N=7_mono}
	\end{subfigure}
	\caption{Optimal solutions for small $N$ cases.}
\end{figure}

\begin{figure*}[ht]
	\begin{subfigure}[t]{0.49\linewidth}
		\centering
		\includegraphics[width=0.9\columnwidth]{./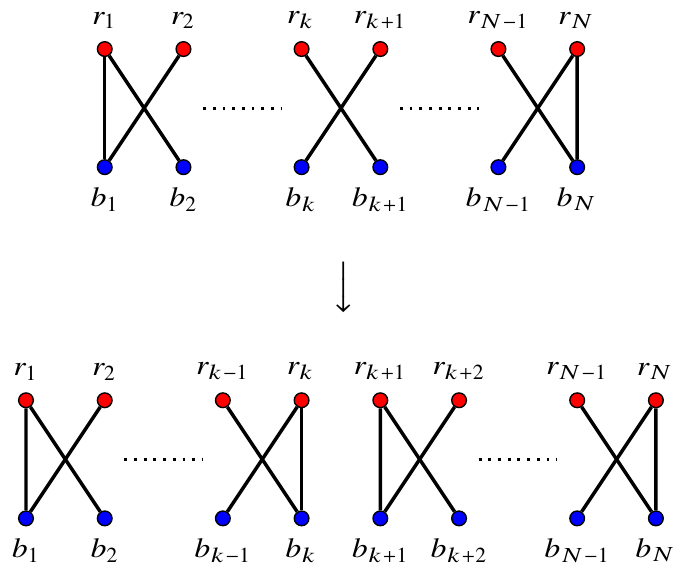}
		\caption{\footnotesize$\mathcal{K}_{N,N}$ case}
		\label{Fig::2f_bip}
	\end{subfigure} \hfill
	\begin{subfigure}[t]{0.49\linewidth}
		\centering
		\includegraphics[width=1\columnwidth]{./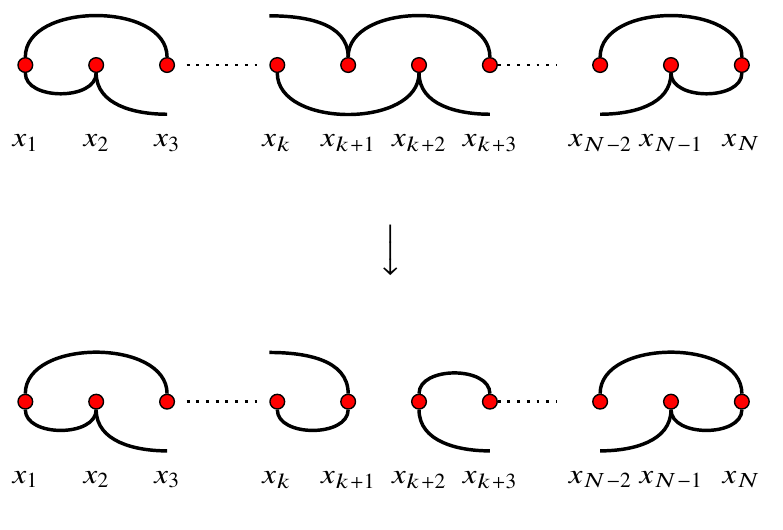}
		\caption{\footnotesize $\mathcal{K}_N$ case}
		\label{Fig::2f_mono}
	\end{subfigure}
	\caption{Result of one cut of the shoelace in two smaller ones for both the complete bipartite and complete graph cases. The cost gained is exactly the difference between an unordered matching and an ordered one.}
\end{figure*}

\subsubsection{Cost for finite $N$}
We have already seen that Eq.~\eqref{eq::2f_Inequalities} guarantees that in the large $N$ limit the average solution cost of the 2-factor problem is the same of the bipartite TSP (with the same $N$).

\begin{figure}[ht]
	\centering
	\includegraphics[scale=1, keepaspectratio]{./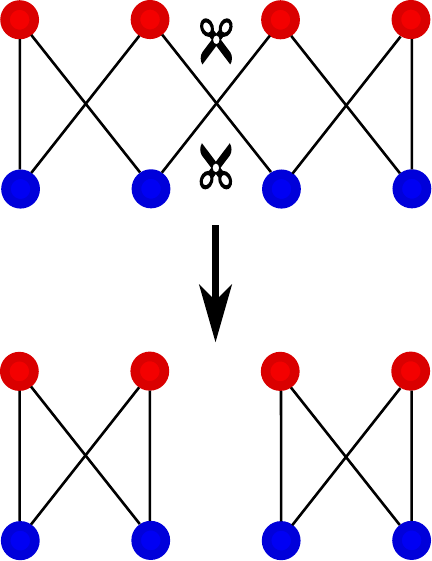} 
	\caption{Graphical representation of the cutting operation which brings from the optimal TSP cycle (top) to a possible optimal solution of the 2-factor problem (bottom). Here we have represented the $N=4$ case, where the cutting operation is unique. Notice that blue and red points are chosen on a interval, but here they are represented equispaced on two parallel lines to improve visualization.} \label{fig::2f_fig_cut2}
\end{figure}

We have proved that, for every value of $N$, the optimal 2-factor solution is always composed by an union of shoelaces loops with only two or three points of each color. As a consequence of this fact, differently from the assignment and the TSP cases, different instances of the disorder can have different spanning subgraphs that minimize the cost function. In particular these spanning subgraphs can always be obtained by ``cutting'' the optimal TSP cycle (see Fig.~\ref{fig::2f_fig_cut2}) in a way which depends on the specific instance. 
This ``instance dependence'' makes the computation of the average optimal cost particularly difficult. However, Eq.~\eqref{eq::2f_Inequalities} guarantees that the average optimal cost of the 2-factor problem is bounded from above by the TSP average optimal cost and from below by twice the assignment average optimal cost. Since in the large $N$ limit these two quantities coincide, one obtains immediately the large $N$ limit of the average optimal cost of the 2-factor problem.
Unfortunately, this approach is not useful for a finite-size system. But we can use Selberg integrals to obtain an upper bound: indeed we can compute the average cost obtained by ``cutting'' the TSP optimal cycle in specific ways.
When we cut at the $k$-position the optimal TSP into two different cycles we gain an average cost
\begin{equation}
E_k^{(p)} =   \overline{\left| b_{k+1}-r_k \right|^p} + \overline{\left| r_{k+1}-b_k \right|^p} - \overline{\left| b_{k+1}-r_{k+1} \right|^p} - \overline{\left| b_{k}-r_k \right|^p} \,.
\end{equation}
By using Eq.~\eqref{eq::ordstat_pk} and the generalized Selberg integral given in Eq.~\eqref{eq::selberg_gen}, we obtain
\begin{equation}
\begin{split}
\lefteqn{ \overline{\left| b_{k}-r_k \right|^p} - \overline{\left| b_{k+1}-r_k \right|^p} = }\\
& = \frac{\Gamma^2(N+1) \, \Gamma(p+1) \, \Gamma \left( k+\frac{p}{2} \right) \, \Gamma \left( N-k+ \frac{p}{2} +1 \right)}{\Gamma(k) \, \Gamma(N-k+1) \, \Gamma(N+p+1) \, \Gamma\left( N + \frac{p}{2} +1 \right) \, \Gamma \left( \frac{p}{2} +1 \right)}  \left[ 1- \frac{k+ \frac{p}{2}}{k} \right] \\
& = - \frac{p}{2}\frac{\Gamma^2(N+1) \, \Gamma(p+1) \, \Gamma \left( k+\frac{p}{2} \right) \, \Gamma \left( N-k+ \frac{p}{2} +1 \right)}{\Gamma(k+1) \, \Gamma(N-k+1) \, \Gamma(N+p+1) \, \Gamma\left( N + \frac{p}{2} +1 \right) \, \Gamma \left( \frac{p}{2} +1 \right)} \,,
\end{split}
\end{equation}
and similarly
\begin{equation}
\begin{split}
\lefteqn{\overline{\left| b_{k+1}-r_{k+1} \right|^p} - \overline{\left| r_{k+1}-b_k \right|^p} = }\\ 
& = \frac{\Gamma^2(N+1) \, \Gamma(p+1) \, \Gamma \left( k+\frac{p}{2}+1 \right) \, \Gamma \left( N-k+ \frac{p}{2} \right)}{\Gamma(k+1) \, \Gamma(N-k) \, \Gamma(N+p+1) \, \Gamma\left( N + \frac{p}{2} +1 \right) \, \Gamma \left( \frac{p}{2} +1 \right)} \left[ 1- \frac{N-k+ \frac{p}{2}}{N-k} \right] \\
& = - \frac{p}{2}\frac{\Gamma^2(N+1) \, \Gamma(p+1) \, \Gamma \left( k+\frac{p}{2}+1 \right) \, \Gamma \left( N-k+ \frac{p}{2} \right)}{\Gamma(k+1) \, \Gamma(N-k+1) \, \Gamma(N+p+1) \, \Gamma\left( N + \frac{p}{2} +1 \right) \, \Gamma \left( \frac{p}{2} +1 \right)} \,.
\end{split}
\end{equation}
Their sum is
\begin{equation}\label{costcut}
E_k^{(p)}  = \frac{p}{2} \frac{\Gamma^2(N+1) \, \Gamma(p+1) \, \Gamma \left( k+\frac{p}{2} \right) \, \Gamma \left( N-k+ \frac{p}{2} \right)}{\Gamma(k+1) \, \Gamma(N-k+1) \, \Gamma(N+p) \, \Gamma\left( N + \frac{p}{2} +1 \right) \, \Gamma \left( \frac{p}{2} +1 \right)} \,,
\end{equation}
For $p=2$ this quantity is 
\begin{equation}
E_k^{(2)} =  \frac{2}{(N+1)^2}.
\end{equation}
Since this quantity does not depends on $k$, for $p=2$ the best upper bound for the average optimal cost is given by summing the maximum number of cuts that can be done on the optimal TSP cycle. Therefore for $N$ even the 2-factor with lowest average energy is $\nu_{(2,2,\dots,2)}$ and then
\begin{equation}
\overline{E_{N,N}^{(2)}[\nu_{(2,2,\dots,2)}]} = \frac{2}{3} \frac{N^2 + 4 N -3}{(N+1)^2} - \frac{N-2}{(N+1)^2} = \frac{1}{3} \frac{N (2 N + 5)}{(N+1)^2}\,, \label{eq::2f_bound_even}
\end{equation}
is an upper bound for the optimal average cost since, even though this configuration has the minimum average cost, for every fixed instance of disorder there can be another one which is optimal. For $N$ odd, one of the 2-factors with lowest average energy is $\nu_{(2,2,\dots,2,3)}$ and
\begin{equation}
\overline{E_{N,N}^{(2)}[\nu_{(2,2,\dots,2,3)}]} = \frac{2}{3} \frac{N^2 + 4 N -3}{(N+1)^2} - \frac{N-3}{(N+1)^2} = \frac{1}{3} \frac{2 N^2 + 5 N + 3}{(N+1)^2}. \label{eq::2f_bound_odd}
\end{equation}
Therefore that essentially the upper bound for the optimal average cost for even and odd large $N$ is the same. For $p=2$, these bounds are compared with the results of numerical simulations in Fig.~\ref{fig::2f_bip_plot}.

For $p\neq2$, $E_k$ depends on $k$. In particular, for $1 < p < 2$ the cut near to 0 and 1 are (on average) more convenient than those near the center. For $p>2$ the reverse is true (see Fig.~\ref{fig::2f_fig_cut}).
 For $p\neq2$, however, this sum does not give a simple formula.

\begin{figure}[t]
	\centering
	\includegraphics[width=0.9\columnwidth]{./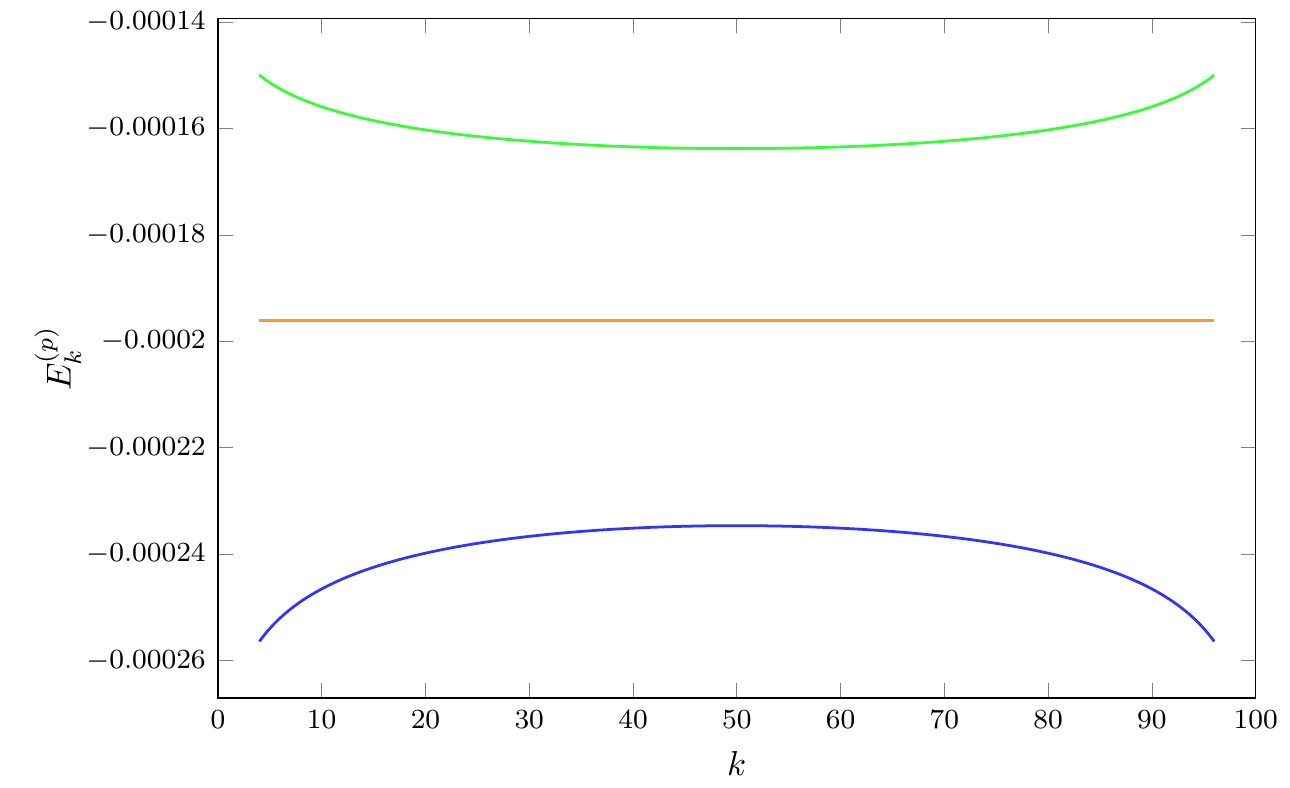} 
	\caption{Plot of $E_k^{(p)}$ given in Eq.~\eqref{costcut} for various values of $p$: the green line is calculated with $p=2.1$, the orange with $p=2$ and the blue one with $p=1.9$; in all cases we take $N=100$.} \label{fig::2f_fig_cut}
\end{figure}

\begin{figure*}
	\vspace{1cm}
	\begin{subfigure}[t]{0.48\linewidth}
		\centering\includegraphics[scale=0.5]{./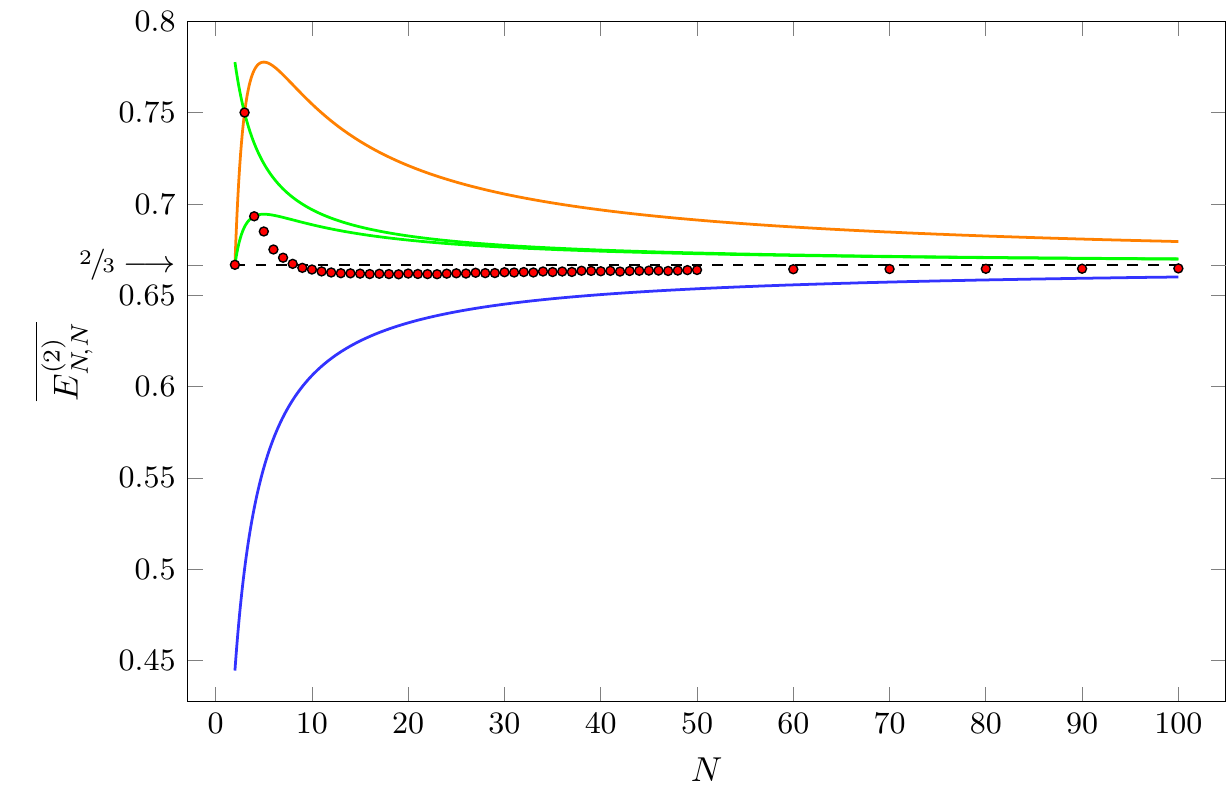} 
		\caption{\footnotesize $\mathcal{K}_{N,N}$ case with $p=2$. The orange line is the cost of the TSP given in Eq.~\eqref{eq::tsp_bip_costtsp} for $p=2$; the green lines are, from above, the cost of the optimal fixed 2-factor $\nu_{(2,2,\dots,2,3)}$ given in Eq.~\eqref{eq::2f_bound_odd} and $\nu_{(2,2,\dots,2)}$ given in Eq.~\eqref{eq::2f_bound_even}. The dashed black line is the asymptotic value $\frac{2}{3}$ and the blue continuous one is twice the cost of the optimal 1-matching $\frac{2}{3}\frac{N}{N+1}$. Red points are the results of a 2-factor numerical simulation, in which we have averaged over $10^7$ instances.} \label{fig::2f_bip_plot}
	\end{subfigure} \hfill
	\begin{subfigure}[t]{0.48\linewidth}
		\centering\includegraphics[scale=0.5]{./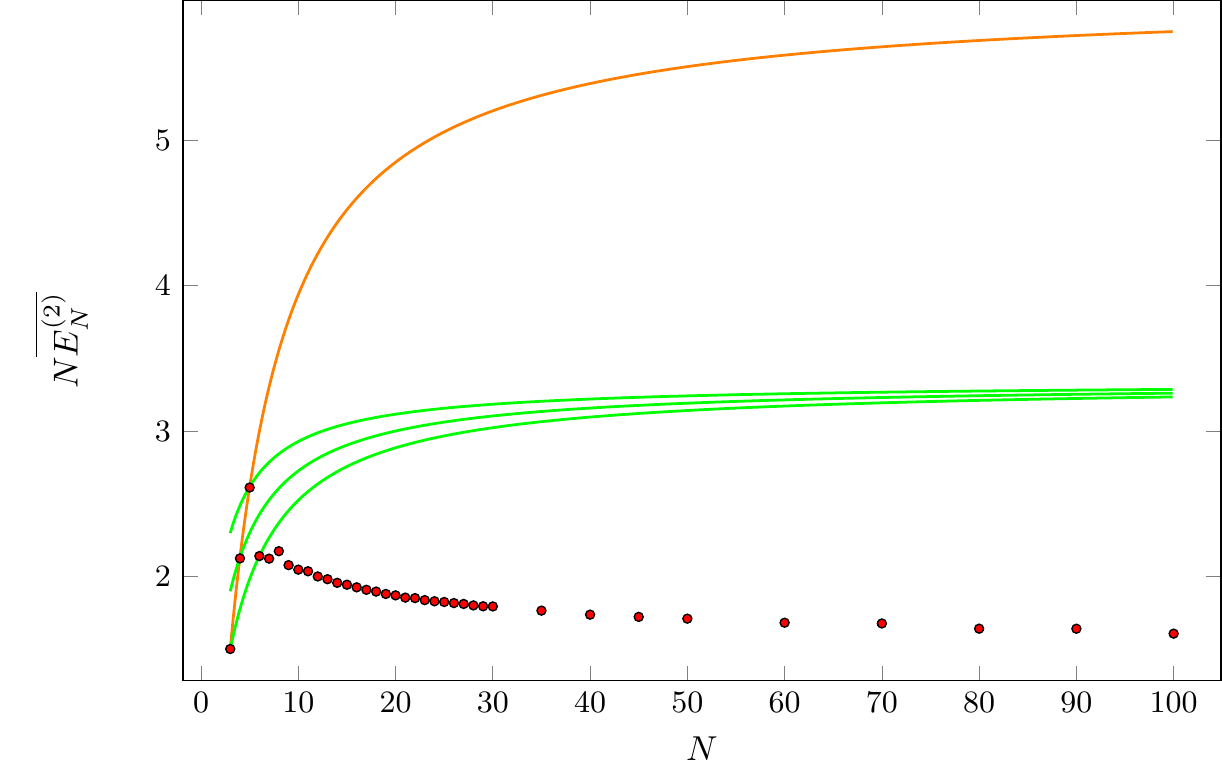} 
		\caption{\footnotesize $\mathcal{K}_N$ case with $p=2$.  Here the average cost is rescaled with $N$. The orange line is the cost of the TSP given in Eq.~\eqref{eq::tsp_mono_AOC} for $p=2$. The green lines are from above the cost of the fixed 2-factor $\nu_{(3,3,\dots,3,5)}$ given in Eq.~\eqref{multipleOfThree_plus2}, $\nu_{(3,3,\dots,4)}$ given in Eq.~\eqref{multipleOfThree_plus1} and $\nu_{(3,3,\dots,3)}$ given in Eq.~\eqref{multipleOfThree}.  
		Red points are the results of a numerical simulation for the 2-factor, in which we have averaged over $10^5$ instances for $N \le 30$, $10^4$ for $30 < N \le 50$ and $10^3$ for $N>50$.} \label{fig::2f_mono_plot}
	\end{subfigure}
	\caption{Average optimal costs for various $N$ and for $p=2$.}
\end{figure*}

\subsection{2-factor in one dimension on complete graphs}
Finally, we analyze here the 2-factor problem in one dimension on complete graphs, in the $p>1$ case.
The possible solutions for the 2-factor on complete graph can be constructed by cutting in a similar way the corresponding TSP solution into smaller loops as can be seen pictorially in Fig.~\ref{Fig::2f_mono}. Note that one cannot have a loop with two points. Analogously to the bipartite case we have analyzed before, each loop that form the 2-factor configuration must be a shoelace. However the length of allowed loops will be different, since one cannot cut, on a complete graph, a TSP of 4 and 5 points in two smaller sub-tours. 
Therefore, on the complete graph, in the optimal 2-factor $\nu^*$ there are only loops with 3, 4 or 5 points.

In Fig.~\ref{Fig::2f_mono} we represent the two solutions when $N=7$.
In Appendix~\ref{app::2factor} we prove that, similarly to the bipartite case, the number of  2-factor solutions is at most $g_N$ on the complete graph, which for large $N$ grows according to
\begin{equation}
g_N \sim \plas^N \,.
\end{equation}
Also in this case we verified numerically, for accessible $N$, that the set of possible solutions that we have identified is actually realized by some instance of the problem.

Using these informations on the shape of the solution, we turn to the evaluation of bounds on its cost. Let us first evaluate the cost gain when we cut a TSP solution cycle in two ``shoelaces'' (we keep using here the word shoelace to indicate the cycle which is the solution to the TSP on the complete graph) sub-cycles. For $p>1$ the cost gain doing one cut can be written as
\begin{equation}
\begin{split}
\overline{\left(x_{k+1}-x_k\right)^p} + \overline{\left(x_{k+3}-x_{k+2}\right)^p} - \overline{\left(x_{k+3}-x_{k+1}\right)^p} - \overline{ \left(x_{k+2}-x_{k}\right)^p} & \\
= - \frac{2 \,p \, \Gamma(N+1) \, \Gamma(p+1)}{\Gamma(N+p+1)} &  \,.
\end{split}
\end{equation}
For example for $N=6$ (in which the solution is unique since 6 can be written as a sum of 3, 4 and 5 in an unique way as 3+3) and $p=2$ we have
\begin{equation}
\overline{E_6^{(2)}} = \frac{1}{2} - \frac{1}{7} = \frac{5}{14} \,.
\end{equation}
If $N$ is multiple of 3, the lowest 2-factor is, on average, the one with the largest number of cuts i.e. $\nu_{(3,3,\dots,3)}$. The number of cuts is $(N-3)/3$ so that the average cost of this configuration is
\begin{equation}
\begin{aligned}
\overline{E_N^{(p)}[\nu_{(3,3,\dots,3)}]} & 
= N \left( \frac{p}{3} + 1\right) \frac{\Gamma(N+1)\, \Gamma(p+1)}{\Gamma(N+p+1)} \,.
\label{multipleOfThree}
\end{aligned}
\end{equation}
Instead if $N$ can be written as a multiple of 3 plus 1, the minimum average energy configuration is $\nu_{(3,3,\dots ,3,4)}$, which has $(N-4)/3$ cuts and
\begin{equation}
\begin{aligned}
\overline{E_N^{(p)}[\nu_{(3,3,\dots,4)}]} & 
= \left[ N \left( \frac{p}{3} + 1\right) + \frac{2}{3}p \right]\frac{ \Gamma(N+1) \, \Gamma(p+1)}{\Gamma(N+p+1)} \,.
\label{multipleOfThree_plus1}
\end{aligned}
\end{equation}
The last possibility is when $N$ is a multiple of 3 plus 2, so the minimum average energy configuration is $\nu_{(3,3,\dots ,3,5)}$, with $(N-4)/3$ cuts and
\begin{equation}
\begin{aligned}
\overline{E_N^{(p)}[\nu_{(3,3,\dots,5)}]} & 
= \left[ N \left( \frac{p}{3} + 1\right) + \frac{4}{3}p \right]\frac{ \Gamma(N+1) \, \Gamma(p+1)}{\Gamma(N+p+1)} \,.
\label{multipleOfThree_plus2}
\end{aligned}
\end{equation}
In the limit of large $N$ all those three upper bounds behave in the same way. For example
\begin{equation}
\lim\limits_{N \to \infty} \overline{E_N^{(p)}[\nu_{(3,3,\dots,3)}]} = N^{1-p} \left( 1 + \frac{p}{3} \right) \Gamma(p+1) \,.
\end{equation} 
Note that the scaling of those upper bounds for large $N$ is the same of those of matching and TSP.
For $p=2$, these bounds are compared with the results of numerical simulations in Fig.~\ref{fig::2f_mono_plot}.

\chapter{Quantum point of view}\label{chap::third}
\chaptermark{Quantum point of view}
In this Chapter we will deal with another field which lies between physics and computer science: quantum computing. Quantum computers have been considered for the first time by Feynmann to simulate quantum systems (or, better, physical systems in which quantum effects are relevant). 
We, on the other hand, will focus on the possibility of using quantum computers to solve hard combinatorial optimization problems. After the important works by Shor and Grover, many concepts about quantum algorithms to solve COPs have been understood, and we will discuss some of them.
We will then specialize in the so called \emph{quantum adiabatic algorithm}, in the form of the \emph{simulated annealing}, which is usable today in the largest chip that performs computations using quantum effects, i.e.~the D-Wave machine.
Finally, we will briefly comment on a recent and promising approach to approximate (and sometimes also solve) COPs in gate models, the famous \emph{quantum approximate optimization algorithm}. 
\section{Quantum computation for combinatorial optimization problems}\label{sec::quantcomp}
\sectionmark{Quantum computation for COPs}
The study of quantum computation is flourishing in these recent years for two main reasons: the discovery of powerful quantum algorithms (Shor~\cite{Shor1999} and Grover~\cite{Grover1997}) in the late 90s, and the advent of real computers able to exploit quantum effects during the computation.\\
As a consequence, there are many good books (\cite{Nielsen2000, Rieffel2011, Kaye2007, Mermin2007}) and reviews (for example, \cite{Aharonov1999}) where a complete introduction to the subject can be found. Here we will focus on quantum algorithms for COPs, disregarding completely other fundamental topics as, for example, quantum error correction and fault-tolerant quantum computation.

\subsection{Gate model of quantum computing}
The basic building block of classical computation is the bit, which can be in state 0 or 1. The quantum version of that is the \emph{qubit}, which is a two level system. Therefore its general state is
\begin{equation}
	\ket{q} = a \ket{0} + b \ket{1},
\end{equation}
where $a$ and $b$ are complex numbers we require $\abs{a}^2+\abs{b}^2 = 1$, so that the state is normalized. We represent
\begin{equation}\label{eq::qc_repr}
	\ket{0} = \begin{pmatrix}
	1 \\ 0
	\end{pmatrix} \quad \text{and} \quad 
	\ket{1} = \begin{pmatrix}
	0 \\ 1
	\end{pmatrix}
\end{equation}
and we will refer to this as the \emph{computational basis}.
When we have $N$ qubits, the computational basis is the set of states
\begin{equation}\label{eq::quant_comp_state}
	\ket{q_1} \otimes \ket{q_2} \otimes \cdots \otimes \ket{q_N},
\end{equation}
for each choice of $q_i \in \{0, 1\}$. Therefore the Hilbert space describing the state of a $N$ qubit system is $2^N$ dimensional.

Let us now come back for a moment to the classical world: if we have a system of $N$ bits, we have $2^N$ possible states of our system. Let us see the state of our system (computer) as a basis vector of the $2^N$-dimensional space $\mathbb{C}^{2^N}$,
\begin{equation}\label{eq::classical_comp_state}
	\ket{s} = \begin{pmatrix}
	b_1 \\ \vdots \\ b_{2^N}
	\end{pmatrix},
\end{equation}
where only one bit $b_i$ is 1, and all the other are 0 (we use the braket formalism also for this representation of classical states). 
For example, we have for a two-bit system
\begin{equation}
	\ket{00} = \begin{pmatrix}
	0 \\ 0 \\ 0 \\ 1
	\end{pmatrix}, \quad
	\ket{01} = \begin{pmatrix}
	0 \\ 0 \\ 1 \\ 0
	\end{pmatrix}, \quad
	\ket{10} = \begin{pmatrix}
	0 \\ 1 \\ 0 \\ 0
	\end{pmatrix}, \quad
	\ket{11} = \begin{pmatrix}
	1 \\ 0 \\ 0 \\ 0
	\end{pmatrix}.
\end{equation}
Therefore, it seems that quantum computers could be more powerful of classical computers simply because we can store much more information in $N$ qubits than in $N$ bits, since in the former case the system can be in any of the linear combinations (with unit $\ell_2$ norm) of the $2^N$ basis vectors, while in the latter it lives \emph{inside the basis}. \\
However, this is not the end of the story: a \emph{deterministic} program for a classical computer, in this formalism, can be seen as a matrix which is applied to $\ket{s}$ and modify the state of the system. For example, if we have a two-bit system and we want to have assign 1 to the second bit, we apply the matrix
\begin{equation}
	M = \begin{pmatrix}
	1 & 1 & 0 & 0 \\ 0 & 0 & 0 & 0 \\ 0 & 0 & 1 & 1 \\ 0 & 0 & 0 & 0 
	\end{pmatrix},
\end{equation}
so that
\begin{equation}
	M \ket{00} = \ket{01}, \quad M \ket{01} = \ket{01}, \quad M \ket{10} = \ket{11}, \quad M \ket{11} = \ket{11}.
\end{equation}
In general, a computation will be a matrix with elements $M_{ij} \in \{0,1\}$ such that $\sum_i M_{ij} = 1$ for each $j$, since this condition correspond to the fact that we want our matrix to map one basis state in another basis state.

Nonetheless, we can do something closer to quantum computing. For example, we could have in our code instructions like ``with probability 1/2, assign 1 to the second bit''. This kind of instructions, which are not deterministic, are captured by using \emph{stochastic} matrices, that is with elements $\sum_j M_{i,j}=1$ but now with the only restriction that $M_{ij}\geq0$. 
For our case:
\begin{equation}
	M_s = \begin{pmatrix}
	1 & 1/2 & 0 & 0 \\ 0 & 1/2 & 0 & 0 \\ 0 & 0 & 1 & 1/2 \\ 0 & 0 & 0 & 1/2 
	\end{pmatrix},
\end{equation}
and now we have, for example,
\begin{equation}
	M_s \ket{00} = \begin{pmatrix}
	0 \\ 0 \\ 1/2 \\ 1/2
	\end{pmatrix}
	= \frac{1}{2} \ket{00} + \frac{1}{2} \ket{01}.
\end{equation}
This result has to be interpreted as follows: ``if we use the computer program $M_s$ with input state $\ket{00}$, with 1/2 or probability the output state will be $\ket{00}$ and with 1/2 it will be $\ket{01}$''.\\
And this is very close to the meaning of a quantum state for a qubit: if the state is
\begin{equation}
	\ket{q} = \frac{1}{\sqrt{2}} \ket{0} + \frac{1}{\sqrt{2}} \ket{1},
\end{equation}
and we measure the qubit in the computational basis, we have 1/2 of probability of obtaining 0 and 1/2 of obtaining 1.

Therefore, if we allow for ``stochastic'' instructions in our code, we can really have ``superpositions'' of basis states of the form given in Eq.~\eqref{eq::classical_comp_state}, provided that their coefficients are positive and sum to 1:
\begin{equation}
	\ket{s} = \sum_{i=1}^{2^N} a_i \ket{s_i},
\end{equation}
where the $\ket{s_i}$ are the basis states given in Eq.~\eqref{eq::classical_comp_state},  $\sum_i a_i = 1$ and $a_i\geq 0$.

Let us now turn to the standard gate model of quantum computation. Similarly to the stochastic classical computation case, we have a state of $N$ qubits
\begin{equation}
	\ket{q} = \sum_{i=1}^{2^N} a_i \ket{q_i},
\end{equation}
where the states are as in the classical case, but now $a_i$ are complex numbers such that $\sum_i \abs{a_i}^2 = 1$.
Given a state, the computation is done by multiplying the state for a \emph{unitary} matrix $U$ and then measuring the state in the computational basis.\\
Notice that, physically, this means that the initial state $\ket{q_0}$ of the system is evolved with the Hamiltonian $H$ such that
\begin{equation}
	\text{Texp}\left(- \frac{i}{\hbar} \int_{t_0}^{t_1} dt \, H(t) \right) \ket{q_0} = U \ket{q_0}.
\end{equation}



\subsection{Quantum versus Classical: the story of interference and entanglement}
As we have seen, there are two main differences between stochastic classical and quantum computation: in the first case the ``amplitudes'' of each basis state are positive quantities which sum to 1 (so they are probabilities). In the second case, the amplitudes are complex numbers and their modulus squared sum to 1. In fact, it turns out that the power of quantum computing is not due to the fact that amplitudes are complex numbers, but rather to the (less stringent) fact that they can assume \emph{negative} values \cite{Bernstein1997}. The reason is that with negative amplitudes we can create interference phenomena to decrease the probability of unwanted output states and increase that of the solution to our problem. \\
Let us deepen this intuition with a practical example: consider a system of 2 qubits. We need to define two (actually very important) gates: the \emph{Hadamard} gate $H$, that is defined by
\begin{equation}
	H \ket{0} = \frac{\ket{0} + \ket{1}}{\sqrt{2}} = \ket{+}, \quad 	H \ket{1} = \frac{\ket{0} - \ket{1}}{\sqrt{2}} = \ket{-},
\end{equation}
and therefore in the representation used in Eq.~\eqref{eq::qc_repr}, we have:
\begin{equation}
	H = \frac{1}{\sqrt{2}}\begin{pmatrix}
	1 & 1 \\
	1 & -1
	\end{pmatrix}.
\end{equation}
The other gate we need is a two-qubit one, the \emph{CNOT} gate defined by
\begin{equation}
	C_{not} = \ket{0} \bra{0} \otimes \mathbb{I} + \ket{1} \bra{1} \otimes X , 
\end{equation}
where $\mathbb{I}$ is the identity $2\times2$ matrix and $X$ is the Pauli matrix
\begin{equation}
	X = \begin{pmatrix}
	0 & 1 \\
	1 & 0
	\end{pmatrix}.
\end{equation}
The Hadamard gate $H$ is such that a qubit in the state $\ket{0}$ or $\ket{1}$ has equal probability to be measured in $\ket{0}$ or $\ket{1}$ after $H$ is applied. In this sense, the application of $H$ has a similarity with the classic operation of randomly flipping a bit (to some extent!). \\
Also the $C_{not}$ gate has a simple actions on the computational basis states: if the first qubit is $\ket{0}$, it does nothing; if the first qubit is $\ket{1}$, the second qubit is flipped.\\
Now consider the state $\ket{01}$ and apply firstly $H$ to both qubits
\begin{equation}
	\ket{01} \rightarrow H \otimes H \ket{01} = \ket{+ -} = \frac{\ket{00} - \ket{01} + \ket{10} - \ket{11}}{\sqrt{2}} 
\end{equation}
and then the $C_{not}$ gate
\begin{equation}
	\ket{- +} \rightarrow  C_{not} \ket{- +} = \frac{\ket{00} - \ket{01} + \ket{11} - \ket{10}}{\sqrt{2}} = \ket{--}
\end{equation}
and, finally, again the $H$ gate to the second qubit
\begin{equation}\label{eq::qc_ex_int}
	\ket{- -} \rightarrow \mathbb{I} \otimes H \ket{--} = \ket{- 1} = \frac{\ket{01} - \ket{11}}{\sqrt{2}}.
\end{equation}
Therefore, after the process, we have equal probability to be in the states $\ket{11}$ and $\ket{01}$. 
If we try to replicate classically this short algorithm, we can try to do the following: take the two bits in $\ket{01}$, and randomly flips them. Notice that if we make our measurements here, the results of the classical and quantum systems are indistinguishable. Then, we ask a friend of ours to look the first bit and change the second if it is 1, otherwise do nothing. Again, at this point there has been no interference of probabilities and we could not distinguish the qubits and the bits systems: each possible outcome is equally probable. Finally, randomly flip the second bit again. Clearly, after the first step, each outcome has the same probability classically, in sharp contrast with \eqref{eq::qc_ex_int}: in the quantum system, the probability of the outcomes $\ket{10}$ and $\ket{00}$ is zero!
In Fig.~\ref{fig::qc_interference_paths} there is a graphical representation of the situation.
\begin{figure}
		\centering
	\begin{tikzpicture}[scale=1]
	\node (1) at (0,0) {$\ket{01}$};
	\node (2) at (2,1) {$\ket{10}$};
	\node (3) at (2,-1) {$\ket{01}$};
	\node (4) at (2,3) {$\ket{11}$};
	\node (5) at (2,-3) {$\ket{00}$};
	\node (6) at (5,1) {$\ket{10}$};
	\node (7) at (5,-1) {$\ket{01}$};
	\node (8) at (5,3) {$\ket{11}$};
	\node (9) at (5,-3) {$\ket{00}$};
	\node (10) at (8,1) {$\ket{10}$};
	\node (11) at (8,-1) {$\ket{01}$};
	\node (12) at (8,3) {$\ket{11}$};
	\node (13) at (8,-3) {$\ket{00}$};
	\draw[->,line width=1pt,black]  (1) to (2);
	\draw[->,line width=1pt,red]  (1) to (3);
	\draw[->,line width=1pt,red]  (1) to (4);
	\draw[->,line width=1pt,black]  (1) to (5);
	\draw[->,line width=1pt,black]  (2) to (8);
	\draw[->,line width=1pt,red]  (3) to (7);
	\draw[->,line width=1pt,red]  (4) to (6);
	\draw[->,line width=1pt,black]  (5) to (9);
	\draw[->,line width=1pt,red]  (6) to (10);	
	\draw[->,line width=1pt,black]  (6) to (12);
	\draw[->,line width=1pt,black]  (7) to (11);	
	\draw[->,line width=1pt,black]  (7) to (13);
	\draw[->,line width=1pt,black]  (8) to (10);
	\draw[->,line width=1pt,red]  (8) to (12);
	\draw[->,line width=1pt,black]  (9) to (11);	
	\draw[->,line width=1pt,black]  (9) to (13);
	\end{tikzpicture}
	\caption{Paths of amplitude if we apply $H\otimes H$, then $C_{not}$, then $\mathbb{I} \otimes H$ to the initial state $\ket{01}$ In this case, when more than one line originates from the same state, the probability is equally divide; if more than one line ends on the same state, the probability is summed; if a line is red, the amplitude is negative.}\label{fig::qc_interference_paths}
\end{figure}
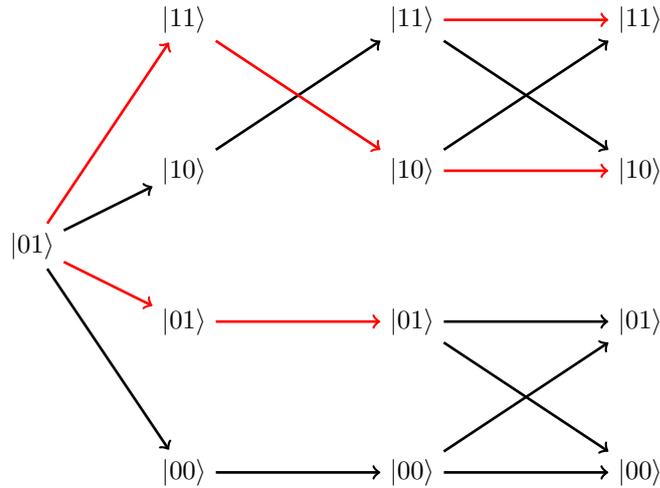

Another important difference between the classical and quantum case, is the \emph{entanglement}. A two-qubit state is said to be entangled if it cannot be written as tensor product of two single-qubit states\footnote{multiple-qubit states can be entangled or not depending on the tensor decomposition under consideration: for example, the state $(\ket{0000}+\ket{1111}+\ket{0101}+\ket{1010})/2$ is entangled in the sense that cannot be written as single-qubit tensor product, but it is un-entangled in the sense that it can be written as tensor product of two two-qubit states.}. 
A famous entangled two-qubit state is
\begin{equation}\label{eq::qc_bell}
	\ket{\psi} = \frac{\ket{00} + \ket{11}}{\sqrt{2}}.
\end{equation}
The question is: are there any differences between this state and a classical state of two strongly correlated qubits? Consider that two qubits are in the state $\ket{\psi}$. These two qubits are bring far away, and then one of them is measured (in the computational basis) and suppose that the outcome is 0: instantaneously we know that, whenever the other will be measured, the outcome will be again 0. This is not necessarily a quantum effect: suppose that your cousin randomly writes a 0 or a 1 on a paper and prepares two identical copies of that. Then she sends one copy to you and one to your brother into an envelope. When you will look at your paper, you will immediately know the content of the other envelope. So what is the point in quantum entanglement? The best explanation requires another ingredient: non-commuting observables.
Actually, we need two sets of two non-commuting single-qubit observables: let us consider those associated to the Pauli matrix $X$ and the Pauli matrix $Z$,
\begin{equation}
	Z = \begin{pmatrix}
	1 & 0 \\
	0 & -1
	\end{pmatrix},
\end{equation}
and those associated with Hadamard gate $H$ and with $H'$ defined as
\begin{equation}
	H' = X H X = \frac{1}{\sqrt{2}}\begin{pmatrix}
	-1 & 1 \\
	1 & 1
	\end{pmatrix}.
\end{equation}
Equipped with the ability to measure these observables, let us take two qubits in the state $\ket{\psi}$ and give one of them to Mario and one of them to Luigi\footnote{the usual names for these two guys are Alice and Bob.}. Now, Mario can measure his qubit, let us say the first one, with the observables associated to $X$ and $Z$. Therefore, if he measures on his qubit the observable associated to $X$, his expected outcome is 
\begin{equation}
	\left< X \otimes \mathbb{I} \right> = \bra{\psi} X \otimes \mathbb{I} \ket{\psi},
\end{equation} 
and similarly for $Z$. An analogous situation holds for Luigi, with $H$ and $H'$ instead of $X$ and $Z$. Now, let us suppose that Mario and Luigi randomly choose which measurement they do. 

There are 4 possible different situations, and the expected values of the measurements are
\begin{equation}
	\left< Z \otimes H \right> = - 	\left< Z \otimes H' \right> = 	\left< X \otimes H \right> = 	\left< X \otimes H' \right> = \frac{1}{\sqrt{2}}.
\end{equation}
Therefore, if we take the quantity
\begin{equation}
	W = Z \otimes H - Z \otimes H' + X \otimes H + X \otimes H',
\end{equation}
we expect that
\begin{equation}\label{eq::bell_eq}
	\left< W \right> = 2 \sqrt{2}.
\end{equation}

But here is the best part: we ask Mario and Luigi to go \emph{very} far away, like a light-year or so. Then we hypothesize that whatever Mario does with his qubit, it will not change in any way Luigi's qubit (this is called \emph{locality} hypothesis). Moreover, we assume that Mario's qubit \emph{does} have a value for the measurement of both $X$ and $Z$, and analogously for Luigi's qubit (this is the \emph{realistic} hypothesis.) Let us elaborate a little about that: classically we could have non-commutating observables, in the sense that some measurements can interfere with others, and so the order in which we perform measurements matters. But if we have two devices which do those measurements, we (in the classical point of view) do not doubt that the system does have at any moment certain values that we could measure. We are doing exactly that hypothesis here: Mario's qubit has a certain value for the measure related to $X$, say $x_M$, and another value for $Z$, say $z_M$. The problem is that we do not know that values, because both $x_M$ and $z_M$ can be -1 or 1 with probability 1/2 because the initial state is $\ket{\psi}$ (a completely analogous situation holds for Luigi's qubit).
But under these hypothesis, we can write
\begin{equation}
	W = z_M h_L - z_M h_L' + x_M h_L + x_M h_L' = z_M (h_L - h_L') + x_M (h_L + h_L').
\end{equation}
Now, remember that all these quantities can be only -1 or 1. Therefore if $h_L=h_L'$ we have $W = 2 x_M$, otherwise $W = 2 z_M$. As a consequence, we obtain a form of the so-called \emph{Bell inequalities}:
\begin{equation}\label{eq::bell_ineq}
	\abs{W} \leq 2,
\end{equation}
which contradicts Eq.~\eqref{eq::bell_eq}. This ``paradox'' has been noticed for the first time by Einstein, Podolsky and Rosen~\cite{Einstein1935}, but today many experiments have confirmed that the inequality in \eqref{eq::bell_ineq} is violated: our reality is not local and realistic. 
Moreover, this explains the difference between entanglement and classical correlation: for entangled qubits Eq.~\eqref{eq::bell_eq} holds, while the reasoning which brought us to the inequality in \eqref{eq::bell_ineq} is correct if we have classical correlated variables.

It has been showed that for an algorithm working with pure states, entanglement among a number of qubit which scale as $\mathcal{O}(N)$ ($N$ being the input size) is necessary for that algorithm not to be efficiently simulated by classical computers~\cite{Jozsa2003}. However, even though entanglement and interference are two important resources which are not available to classical computers, the power of quantum computation has more subtle origins which are not completely understood today \cite[Section 13.9]{Rieffel2011}.

\subsection{An example: Grover algorithm}\label{sec::groverorig}
Grover algorithm is an excellent example of the power of quantum computing at work. We can state the problem as follows. We are given a oracle $f$ such that $f(i) \in \{0,1\}$ for each $i\in \{1,\dots,N\}=[N]$. We do not know anything about the internal structure of the oracle, that is we have no idea of what the oracle is actually computing. The only thing we know is that \emph{one} among the set of possible inputs, $k\in[N]$, is such that $f(k)=1$ and $f(i)=0$ for $i\neq k$. Our aim is to find $k$.\\
Now, classically the only way to proceed is to try all the possible inputs: on average, we will need $N/2$ queries to the oracle (i.e.~applications of $f$), and $N-1$ queries in the worst-case.\\
Let us now start with the quantum algorithm. We have two registers, one can be in any state $\ket{i}$ where $i\in [N]$ (therefore, it can be represented by $\log N$ qubits) and the other is an additional qubit in state $\ket{q}$. Therefore the state of the whole system is $\ket{i} \otimes \ket{q}$. Let us suppose that the oracle works as follows: it is implemented by a unitary $U_f$ such that, for $\ket{q}=\ket{0}$ or $\ket{1}$, $U_f \ket{i} \otimes \ket{q} = \ket{i} \otimes\ket{q \oplus f(i)} $, where $\oplus$ denotes addition modulo 2. Equivalently, we can see that
\begin{equation}
	U_f \ket{i} \otimes \ket{q} = \begin{cases}
	\ket{i} \otimes X \ket{q} & \text{if $i=k$};\\
	\ket{i} \otimes \ket{q} & \text{if $i\neq k$}.
	\end{cases}
\end{equation}
In other words, $U_f \ket{i} \otimes \ket{q} = \ket{i} \otimes X^{f(i)} \ket{q} $. Therefore the oracle flips the qubit in the second register if in the first one there is the value $k$ such that $f(k)=1$, otherwise it left the qubit untouched (this reasoning is valid for the qubit in computational basis states).
Now, we prepare the first register in the superposed state 
\begin{equation}\label{eq::grov_1_unif}
	\ket{\psi} = \frac{1}{\sqrt{N}} \sum_{j=1}^N \ket{j}
\end{equation}
and the qubit in the second register in the state $\ket{-}$. When we apply the oracle to the system, we obtain
\begin{equation}
	U_f \ket{\psi -} = \frac{1}{\sqrt{N}} \sum_{j\neq k}\ket{j -} - \frac{1}{\sqrt{N}}  \ket{k -} = U \otimes \mathbb{I} \ket{\psi -},
\end{equation}
where the operator $U$ is defined by
\begin{equation}\label{eq::grover_1_phase_op}
	U = \mathbb{I} - 2 \ket{k} \bra{k}.
\end{equation}
Since the second qubit is left untouched by the application of $U_f$, we will stop writing him down (the remaining part of the algorithm works on the first register). However, keep in mind that each time we apply $U$, we are querying the oracle once.\\
We also need another operator, the \emph{diffusion} operator $D$, defined as
\begin{equation}\label{eq::grover_1_diff_op}
	D = 2 \ket{\psi} \bra{\psi} - \mathbb{I},
\end{equation}
where $\ket{\psi}$ is given in Eq.~\eqref{eq::grov_1_unif}.
This operator is unitary (it can be written as $-e^{i \pi \ket{\psi} \bra{\psi}}$) and can be efficiently implemented in $\sim \log N$ elementary gates (see, for example, \cite[Section 9.1.3]{Rieffel2011}).\\
A great simplification for the analysis of Grover algorithm comes from the fact that the only operators involved are those in Eq.~\eqref{eq::grover_1_phase_op} and in Eq.~\eqref{eq::grover_1_diff_op}. Since these operators can be written in terms of the projector on $\ket{\psi}$ and on $\ket{k}$ (and identities), we can restrict our analysis to the two-dimensional space spanned by these two vectors.
In this space, a basis is composed by the two vectors $\{\ket{k}, \ket{\nu}\}$, where
\begin{equation}
\ket{\nu} = \frac{1}{\sqrt{N-1}} \sum_{j\neq k} \ket{j}.
\end{equation}
We represent
\begin{equation}
	\ket{k} = \begin{pmatrix}
	0 \\
	1
	\end{pmatrix} \quad
	\ket{\nu} = \begin{pmatrix}
	1 \\
	0
	\end{pmatrix}
\end{equation}
and we have
\begin{equation}
	DU = \begin{pmatrix}
	\cos \theta & - \sin \theta \\
	\sin \theta & \cos \theta
	\end{pmatrix},
\end{equation}
where $\cos \theta = 1-2/N$ and since $\cos \theta \sim 1 - \theta^2/2$ for small $\theta$, we obtain $\theta \sim 2/\sqrt{N}$. Therefore the application of the operator $DU$ corresponds to a rotation of an angle $\theta$.\\
Now, we start from the state $\ket{\psi}$, which is close to $\ket{\nu}$ for large $N$. The state $\ket{i}$ is orthogonal to $\ket{\nu}$, so their relative angle is $\pi/2$. Therefore we need to apply the operator $DU$ for 
\begin{equation}
	t = \frac{\pi/2}{\theta} \sim \frac{\pi}{4} \sqrt{N}
\end{equation}
times to rotate the initial state to the target state.\\
After this operation, the probability of obtaining $k$ with a measure is
\begin{equation}
\abs{\bra{k} (DU)^t \ket{\psi}} \geq \cos^2\theta \sim 1.
\end{equation}
Since each usage of the operator U corresponds to a query to the oracle, we are doing $\pi/4 \sqrt{N}$ queries for large $N$, which is much less than in the classical case. Finally, we note that this algorithm is \emph{optimal}, in the sense that it has been proved that $\pi/4 \sqrt{N}$ is the minimum number of queries to the oracle to solve the problem, independently of the algorithm~\cite{Bennett1997,Boyer1998,Zalka1999}.

\section{Quantum Adiabatic Algorithm}\label{sec::qaa}
The gate model of quantum computing is not the only model possible. Actually, there are many others and in this section we will focus on the \emph{quantum adiabatic computation} model. Its introduction dates back to the works of Apolloni~\cite{Apolloni1989}, and at the beginning it was called \emph{quantum annealing}. The original idea was to design an algorithm similar to the simulated annealing one, but able to exploit \emph{quantum}, rather then thermal, fluctuations to escape local minima. \\
Only later, when experiments with quantum systems able to physically implement quantum annealing~\cite{Brooke1999} started to appear, the quantum annealing (or adiabatic) algorithm (QAA) becomes something which required a dedicated (quantum) device~\cite{Farhi2000}. Up to that point, the Hamiltonians used to evolve the quantum systems were composed of non-positive off-diagonal entries in the computational basis (\emph{stoquastic} Hamiltonians), but it turned out that if we allow the system to evolve with non-stoquastic Hamiltonians, then the QAA is as general as the gate model (that is, each gate-model algorithm can be re-casted as a QAA with a polynomial overhead)~\cite{Aharonov2007}.\\
Today the interest in QAA is still very high, mainly because the hope that this model of computation can provide speedup to solve NP-hard combinatorial optimization problems. To this end, some devices are available to test QAA, the most famous of them being the D-Wave system: their latest architecture, called \emph{Pegasus}, has more than 5000 qubit arranged in a topology which allows each of them to be connected with 15 others.

In this section we will review the quantum adiabatic theorems which are the theoretical backbone of QAA, and we will see QAA at work with the Grover problem. After that we will discuss one of the many unsolved problems regarding QAA (and, more specifically, implementation on real-world devices) which is called \emph{parameter setting} problem~\cite{DiGioacchino2019}.


\subsection{Why it could work... and why not}\label{sec::qaa_pandc}
The QAA consists in the following: consider a starting Hamiltonian, $H_0$, which is easy to implement and with a known ground state easy to prepare as well. Now encode the solution of a COP in the ground state of another Hamiltonian $H_1$ and define the Hamiltonian
\begin{equation}\label{eq::qaa_ham}
	H(s) = A(s) H_0 + B(s) H_1
\end{equation}
so that $A(0)=1, B(0)=0, A(1)=0$ and $B(1) = 1$. Now prepare a system in the ground state of $H_0$ and let it evolve with $H(s)$, changing $s$ from 0 to 1. The functions $A$ and $B$ are called \emph{schedule}, and we are guaranteed that the system will always remain in the instantaneous ground state of $H(s)$ provided that the change of $H$ is ``slow enough''. Therefore at the end of the evolution the system will be in the ground state of $H_1$ and a measurement will give as outcome the result of our original problem. But how slow is ``slow enough''? The answer is in the \emph{adiabatic theorem}, that we now state (in its simpler form, a nice review of the various versions can be found in~\cite{Albash2018}).

Consider a Hamiltonian $H_{t_f}(t)$ which depends on time and on the parameter $t_f$, such that $H_{t_f}(s t_f) = H(s)$ with $s \in[0,1]$. Basically this is equivalent to the requirement that once the Hamiltonian $H_{t_f}(t)$ depends on time only through the form $s=t/t_f$, which is the case for the QAA. Now, consider the set of eigenstates $\ket{\epsilon_j(s)}$ with $j\in\{0,1,\dots\}$ such that
\begin{equation}
	H(s) \ket{\epsilon_j(s)} = \epsilon_j(s) \ket{\epsilon_j(s)},
\end{equation}
and the values $\epsilon_j(s)$ are ordered in increasing order. Therefore $\ket{\epsilon_0(s)}$ is the instantaneous ground state. Now, the adiabatic theorem~\cite{Amin2009} states that, if the system is prepared in the state $\ket{\epsilon_j(0)}$ at $s=0$, it will remain in the same instantaneous eigenstate provided that
\begin{equation}\label{eq::adiab_theo}
	\frac{1}{t_f}\max_{s\in[0,1]} \frac{\abs{\bra{\epsilon_i(s)} \partial_s H(s) \ket{\epsilon_j(s)}}}{\abs{\epsilon_i(s)- \epsilon_j(s)}^2}\ll 1
\end{equation}
for each $i\neq j$. Since one is typically interested in the ground state we can set $i=0$. Moreover, notice that we can always bound the numerator from above with 1, therefore we are guaranteed to stay in the ground state if $t_f \Delta^2 \gg 1$, where
\begin{equation}
	\Delta = \min_{s\in[0,1]} \left( \epsilon_1(s) - \epsilon_0(s) \right)
\end{equation}
is usually called \emph{spectral gap} (or simply gap). 
In conclusion, the adiabatic theorem suggests us to choose $t_f = \eta \Delta^{-2}$, with $\eta \gg 1$. Notice that the typical situation is that $\Delta$ depends on the problem size $N$, as we will see in the following. Since $\eta$ has to be large but we can fix it such that it will not depend on $N$, the complexity of the QAA is entirely given by the dependence on $N$ of $\Delta$.

\subsection{A solvable case: Grover again}\label{sec::qaa_grover}
The adiabatic version of Grover's algorithm has a nice story: it has been proposed as one of the first example of application of QAA~\cite{Farhi2000}, but the result was disappointing. Indeed, no speedup with respect to the classical case was found. Only later, Roland and Cerf~\cite{Roland2002} understood how to recover the Grover speedup in the adiabatic setting. Here we review their results.

As in the standard Grover case, we have $N$ states $\ket{i}$ and a marked state $\ket{m}$ which we do not know a priori, and we want to find. We use as initial state the uniform superposition
\begin{equation}
	\ket{\psi} = \frac{1}{N} \sum_i \ket{i}.
\end{equation}
The Hamiltonian that we use to evolve the system is
\begin{equation}
	H(s) = (1-a(s)) H_0 + a(s) H_1,
\end{equation}
with
\begin{equation}
	H_0 = \mathbb{I} - \ket{\psi} \bra{\psi}
\end{equation}
and
\begin{equation}
	H_1 = \mathbb{I} - \ket{k} \bra{k}.
\end{equation}
Notice that $\ket{\psi}$ is the ground state of $H_0$ with eigenvalue 0, and $\ket{k}$ is the ground state of $H_1$, again with eigenvalue 0. For this problem the schedule is completely determined by the choice of $a$.\\
Now we need to evaluate the eigensystem of $H(s)$ in order to choose a proper schedule $s=s(t)$. Notice that we start from the state $\ket{\psi}$ and therefore, since the Hamiltonian only depends on projectors on $\ket{\psi}$, $\ket{k}$ and identities, the evolution remains in the subspace spanned by $\ket{\psi}$ and $\ket{k}$. A basis of this space is \{$\ket{k}$,$\ket{\nu}$\}, where
\begin{equation}
	\ket{\nu} = \frac{1}{\sqrt{N-1}} \sum_{j\neq k} \ket{j}.
\end{equation}
In this subspace, where the non-trivial evolution of the initial state happens, we use
\begin{equation}
	\braket{\nu}{\psi} = \sqrt{1- \frac{1}{N}}, \quad \braket{k}{\psi} = \frac{1}{\sqrt{N}}, \quad \braket{k}{\nu} = 0
\end{equation}
and we obtain
\begin{equation}
\begin{split}
	& \bra{k} H_0 \ket{k} = 1- \frac{1}{N}, \ \bra{k} H_0 \ket{\nu} = \bra{\nu} H_0 \ket{k} = -\frac{1}{\sqrt{N}} \sqrt{1- \frac{1}{N}}, \ \bra{\nu} H_0 \ket{\nu} = \frac{1}{N},\\
	& \bra{k} H_1 \ket{k} = 0, \ \bra{k} H_1 \ket{\nu} = \bra{\nu} H_1 \ket{k} = 0, \ \bra{\nu} H_1 \ket{\nu} = 1.
\end{split}
\end{equation}
At this point, we compute the eigenvalues of the matrix $H(s)$ restricted to this $2$-dimensional space (the other eigenvalue is 1, with degeneracy $N-2$) and we obtain
\begin{equation}
\begin{split}
	& \epsilon_0(s) = \frac{1}{2}- \sqrt{1 - 4 \left(1- \frac{1}{N}\right) a (1-a)}\\
	& \epsilon_1(s) = \frac{1}{2}+\sqrt{1 - 4 \left(1- \frac{1}{N}\right) a (1-a)}.
\end{split}
\end{equation}
Therefore we have, for the instantaneous gap:
\begin{equation}
	g(s) = \epsilon_1(s) - \epsilon_0(s) = \sqrt{1 - 4 \left(1- \frac{1}{N}\right) a (1-a)}.
\end{equation}
In conclusion, the minimum gap is obtained at $a=1/2$ (see inset of Fig.~\ref{fig::grover_schedule}).
If we take $a=s$ (linear schedule), we obtain that the minimum gap is
\begin{equation}
	\Delta = g(1/2) = \frac{1}{\sqrt{N}}
\end{equation}
and therefore, by using Eq.~\eqref{eq::adiab_theo} and $\abs{\bra{\epsilon_0(s)} \partial_s H(s) \ket{\epsilon_1(s)}}\leq 1$, we have
\begin{equation}
	\frac{N}{t_f} \ll 1
\end{equation}
and so we need $t_f \gg N$ (notice that this means $t_f = \eta N$ with $\eta$ some small, N-independent parameter). This disappointing result can be improved by a more careful choice of the schedule $a(s)$. Indeed, let us consider again Eq.~\eqref{eq::adiab_theo}. In this case, we have
\begin{equation}
	\frac{1}{t_f} \max_{s\in[0,1]} \abs{\frac{d a}{d s}} \frac{\abs{\bra{\epsilon_1(s)} \partial_a H(a) \ket{\epsilon_0(s)}}}{g(s)^2}\ll 1.
\end{equation}
Therefore, we can require that, for each $s\in[0,1]$,
\begin{equation}
	\frac{1}{t_f}\abs{\frac{d a}{d s}} \frac{1}{g(s)^2} = \eta,
\end{equation}
where $\eta \ll 1$ is a small parameter. We obtain the differential equation for $a(s)$,
\begin{equation}\label{eq::grover_diff_eq}
	\frac{d a}{d s} = t_f \eta \left( 1 - 4 \left(1- \frac{1}{N}\right) a (1-a) \right),
\end{equation}
with the initial condition $a(0)=0$ and $t_f$ has to be chosen such that $a(1)=1$. 
The result of Eq.~\eqref{eq::grover_diff_eq} for $a(s)$ is plotted in Fig.~\ref{fig::grover_schedule}, while for $t_f$ we obtain
\begin{equation}
	t_f=\frac{\pi}{2 \epsilon} \sqrt{N}.
\end{equation}

\begin{figure}[!htb]
	\centering
	\begin{tikzpicture}
	\node[anchor=south west,inner sep=0] (image) at (0,0) {\includegraphics[width=\textwidth]{./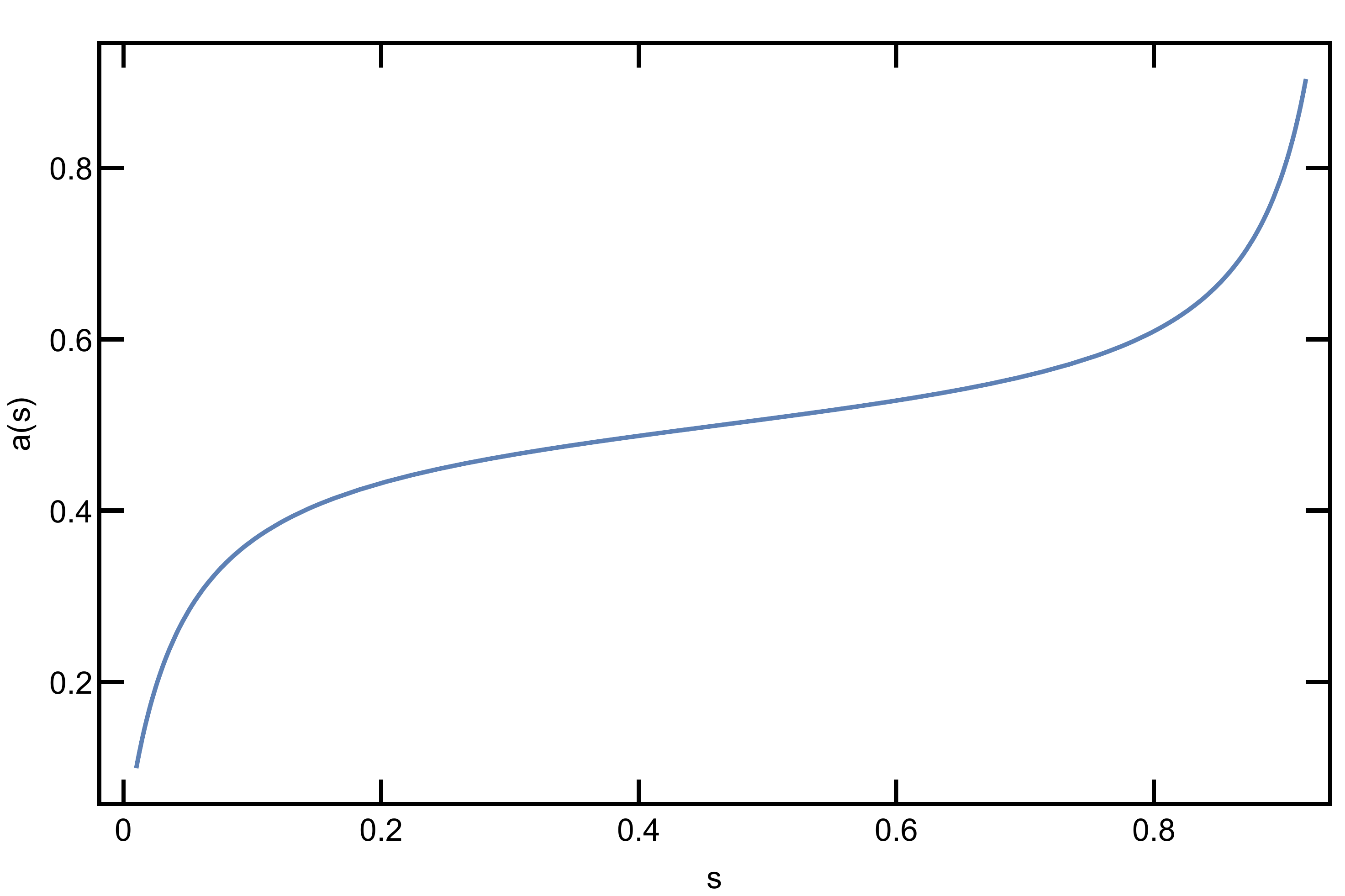}};
	\begin{scope}[x={(image.south east)},y={(image.north west)}]
	\node[anchor=south west,inner sep=0] (image) at (0.1,0.55) {\includegraphics[width=0.4\textwidth]{./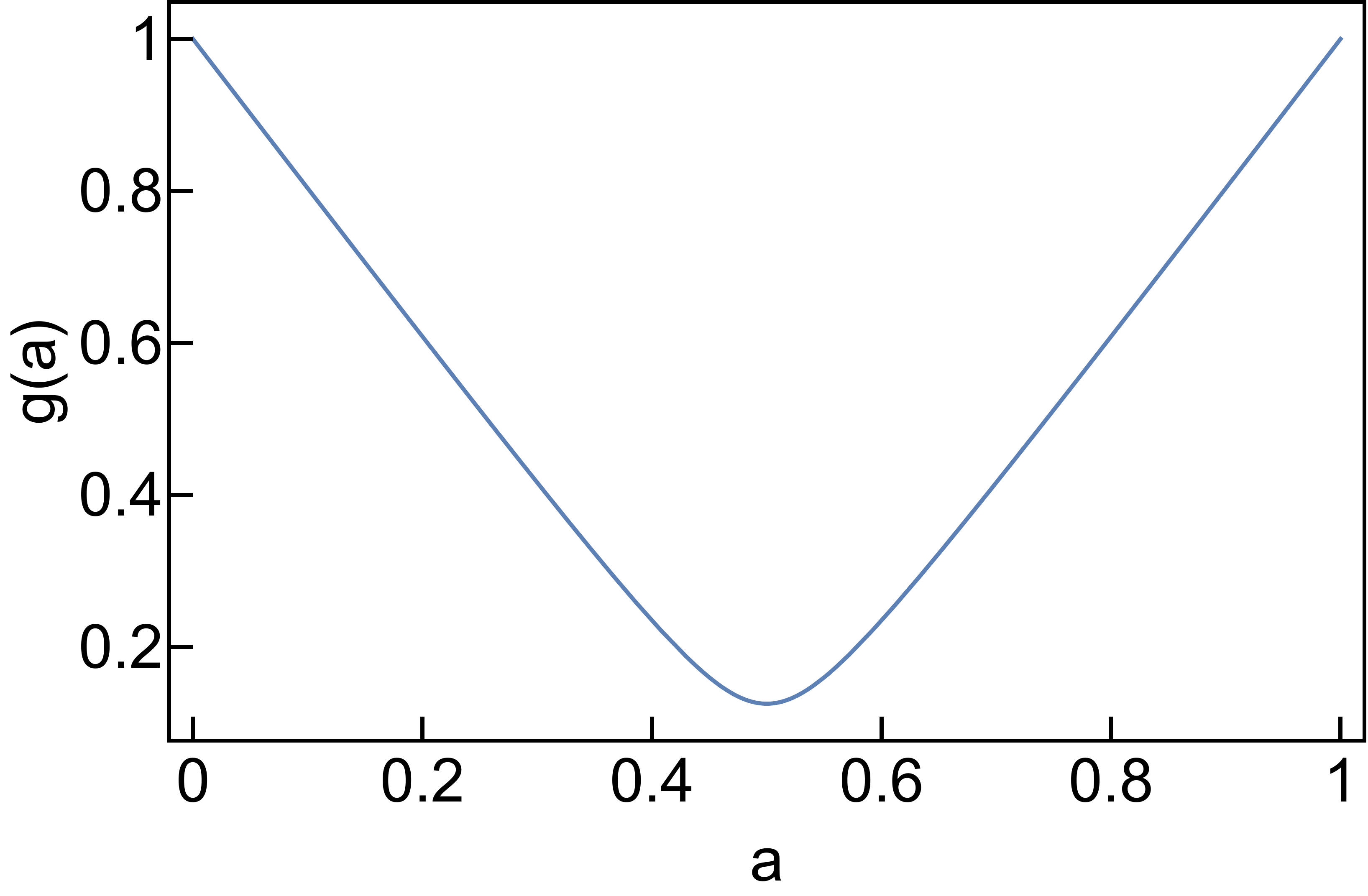}};
	\end{scope}
	\end{tikzpicture}
	\caption{Plot of the solution of Eq.~\eqref{eq::grover_diff_eq}, with $N=64$ and $t_f \epsilon = \sqrt{N} \pi/2.$ As we can see, $a(s)$ changes faster when $s$ is close to 0 or 1, that is (see inset) when the gap is large, and it changes more slowly when $s$ is closer to 0.5, that is to the minimum gap.}\label{fig::grover_schedule}
\end{figure}

This example is instructive and allows us to see an important point: the Hamiltonian can be changed quickly when the gap is large, but the annealing schedule has to slow down where the gap is small. Unfortunately, the computation of the gap is extremely difficult in most cases of interest, and therefore some more heuristic treatment is usually applied.

\subsection{The parameter setting problem}\label{sec::qaa_parset}

Since its introduction, the QAA has been thoroughly studied to understand whether it can be useful to tackle computationally hard problems faster than classical algorithms~\cite{Farhi2000, Kadowaki1998}. The usual comparison is with the classical Simulated Annealing (SA) algorithm and its variants, such as the Parallel Tempering (PT). The general idea at the heart of the hoped success of QAA is that quantum fluctuations could be more effective then thermal ones in exploring rough energy landscapes (even though there are also other possible kind of advantages of quantum algorithms over classical ones \cite{Baldassi2018}). This intuition has been built mostly by using very simple toy-models, such as the highly symmetric Hamming weight problem~\cite{Farhi2002,Muthukrishnan2016} or oracular problems (as the Grover problem analyzed in Sec.~\ref{sec::qaa_grover}), but conceptual arguments proving any kind of quantum speedup for real-world problems lack to date, despite the significant efforts made~\cite{Denchev2016,Mandra2016}. \\
The recent appearance of quantum annealers of relevant size, such as the D-Wave 2000Q, which allows to control about 2000 physical qubits, provided a more pragmatic road: we are now in the exciting position of doing some actual experiments using these annealers to solve certain COPs, and then compare the performances with those of classical solvers.\\
However, many practical issues appear in this case, most of which are related to the fundamental question ``how can we do a fair comparison?''~\cite{Ronnow2014, Mandra2018}. 
It has been soon understood that one needs to carefully choose the problems to be solved.
The first step is to consider COPs that admit a rewriting as Quadratic Unconstrained Binary Optimization (QUBO) problems, that is, in the same spirit of Sec.~\ref{sec::cop_spin}, as
\begin{equation}\label{eq::para_qubo_general}
H_0 = \sum_{i, j} J_{i,j} x_i x_j + \sum_i h_i x_i,
\end{equation}
where $x_i\in\{0,1\}$ and the values of the couplers $J_{i,j}$ and those of the local fields $h_i$ are used to specify the problem and the instance.\\ 
However, this is not enough: to exploit in the best way possible the effect of quantum fluctuations, one has to consider problems with a sufficiently complex energy landscape. Often this is achieved by studying problems whose thermodynamics presents a spin glass phase at low temperature. Unfortunately, the present architecture of qubit interactions in the D-Wave system does not allow to have this kind of difficult problems~\cite{Katzgraber2014} without an extra step, that is the embedding of a different interaction graph into the D-Wave qubit interaction graph (which is called Chimera graph for the D-Wave machines up to 2000Q).
To do that, we need to introduce an extra term in the QUBO Hamiltonian, which embodies some constraints needed to embed the graph~\cite{Choi2008,Choi2011}. \\
Moreover, there are many other COPs whose QUBO formulation itself requires a hard-constraint term, such as the traveling salesman problem, the matching problem, the knapsack problem, the 1-in-3 satisfiability problem and many others~\cite{Lucas2014}. In all these cases, one has an Hamiltonian of the form
\begin{equation}\label{eq::para_h_tot}
H(\lambda) = H_P + \lambda H_C,
\end{equation}
where $H_P$ is the problem Hamiltonian, which is written in QUBO form and $H_C$ is the Hamiltonian, written again in QUBO form, which ensures the constraints by giving a penalty in energy to the configurations which break one or more of them.

Here we address the problem of choosing the value of the parameter $\lambda$. An easy recipe for this choice does not exists: indeed $\lambda$ has to be large enough so that the ground state of our problem (which is the state we are after) has no broken constraint, but it has been argued theoretically \cite{Choi2008} and observed experimentally \cite{Venturelli2015} that a small value provides better performances.

\subsubsection{Optimal choice of parameters: framework}
Consider a COP defined by a cost function $E: \Omega \to \mathbb{R}$, where $\Omega$ is a discrete set. We will refer to this problem as the ``logical'' problem. Consider now that this problem admits a QUBO version. This means that we also have another, ``embedded'', Hamiltonian $H_P : \{0,1\}^N = \mathcal{B} \to \mathbb{R} $ ($N$ is the number of binary variables that we need to encode the problem) and an invertible function $\phi: \Omega \to \mathcal{S} \subseteq \mathcal{B} $ such that $H_P(\phi(\sigma)) = E(\sigma)$ for each $\sigma \in \Omega$. Now consider the case $\mathcal{S} \subset \mathcal{B}$: $H_P$ will give an energy also to elements of the boolean hypercube in $\mathcal{B} \setminus \mathcal{S} = \mathcal{S}^c$, which do not correspond to acceptable configurations of the logical problem.

As an example, let us consider again the matching problem introduced in Sec.~\ref{sec::matching}: given a graph $G = (V, L)$ and a weight $w_\ell \geq 0$ associated to each edge $\ell \in L$, let us call $\mathcal{A}$ the set of all matchings. 
To obtain the QUBO form of this problem, we assign to each edge $\ell$ a binary variable $x_\ell$ which is 1 or 0 if the edge is used or not in the configuration $x$. As we have seen in Sec.~\ref{sec::matching}, if we want a Hamiltonian in QUBO form, we need to introduce a soft constraint and we obtain
%
%
%
\begin{equation}\label{eq::para_matching_qubo}
H_\lambda(x) = H_P + \lambda H_C = \sum_{\ell \in L} w_\ell x_\ell + \lambda \sum_{\nu \in V} \left( 1 - \sum_{\ell \in \partial\nu} x_\ell \right)^2,
\end{equation}
where the quadratic term, provided that $\lambda$ is large enough, enforces the fact that (at least in the ground state) each point has to be connected to exactly one another point.

Let us define 
\begin{equation}
E_{gs} = \min_{\sigma \in \Omega} E(\sigma), \qquad \mathcal{E}_{gs}(\lambda) = \min_{x\in \mathcal{B}}  H_\lambda (x).
\end{equation}
The ``minimum'' value of the parameter, $\lambda^\star$, is the smallest $\lambda \in \mathbb{R}^+$ such that
\begin{equation}
E_{gs} = \mathcal{E}_{gs}(\lambda).
\end{equation}

We define the ``optimal'' value for the parameter $\lambda$, for a fixed heuristic algorithm, as the one such that the time-to-solution (TTS) (see Appendix \ref{app::parameter_tts}, for a definition of TTS) of this algorithm is minimized. 
Therefore the optimal parameter depends in general on the algorithm we are going to use. However, if we focus on annealing algorithms with local moves, it is possible to build some intuition that the optimal parameter is (at least close to) the minimum parameter. 
Indeed, this kind of heuristic algorithms are used to explore complex energy landscapes  
and the idea behind classical/quantum annealing is roughly to exploit thermal/quantum fluctuations to overcome the energy barriers which separate low-energy configurations, so that we can explore these configurations and pick the optimal one. \\
Now consider the case in which the barrier to overcome is given by the $H_C$ term in Eq.~\eqref{eq::para_matching_qubo}, that is because of a penalty term: 
if the coupling term is lowered, the height of the barrier is lowered so the annealing can proceed faster.
This happens, for example, when the Hamming distance between couples of allowed configurations is always larger than 1 (if the algorithm only performs single spin flips): in this case the algorithm has to overcome a barrier given by the penalty term each time it changes the system configuration from one in $\mathcal{S}$ to another in $\mathcal{S}$, passing through $\mathcal{S}^c$. 
An explicit example of this is the matching problem: indeed if the system is in an allowed configuration, the closest allowed configuration is at distance 4 and it corresponds to the swap of two matched points.
Moreover, it is easy to check that this is again the case for many other combinatorial optimization problems relevant for both practical and theoretical analyses.

In Appendix \ref{app::parameter_toy}, we investigate the effect of changing $\lambda$ with a toy model example, where all the computations can be done analytically. In the following, on the other hand, we will firstly provide and discuss an algorithm to find the minimum value of $\lambda$ (in some cases), and we will apply it to study the effect of the choice of $\lambda$ for a specific combinatorial optimization problem.

\subsubsection{Optimal choice of parameters: an algorithm}
The usual strategy to obtain a good constraint term $H_C$ is to find some set of constraints that the binary variables have to respect to be mapped in a logical configuration by $\phi^{-1}$. Then $H_C$ is implemented such that it increases if the number of broken constraints increases, and is zero if no constraint is broken. 
For example, for the matching problem we have that given a vertex $\nu$, only one among the edges in $\partial\nu$ has to be used. So we have one constraint for each point, and the term that we inserted in Eq.~\eqref{eq::para_matching_qubo} is positively correlated to the number of broken constraints.
%

We denote with $E_0^{(k)}$ the minimum over the set of configurations $x$ with $k$ broken constraints of $H_P(x)$. So, for example, $E_{gs} = E_0^{(0)}$.
Therefore, the minimum parameter $\lambda^\star$ is the smallest possible such that
\begin{equation}\label{eq::para_general_minb}
E_0^{(0)} < k \lambda + E_0^{(k)},
\end{equation}
for $k=1,\dots,M$, where $M$ is the maximum number of constraints that can be broken in a single configuration. Therefore we have
\begin{equation}\label{eq::para_general_eq_for_par}
\lambda^\star > \max_{k\in\{ 1,2,\dots, N \}} \frac{E_{0}^{(0)} - E_0^{(k)}}{2 k} = \max_{k\in\{ 1,2,\dots, N \}} \lambda_k,
\end{equation}
where $\lambda_k = (E_{0}^{(0)} - E_0^{(k)})/(2 k)$.\\
This inequality cannot be used efficiently to obtain $\lambda^\star$ as it is: the computation of each $\lambda_k$ could be even more difficult than solving the original problem. On the other hand, one can obtain an approximation of each $\lambda_k$: to do this, one needs to approximate $E_{0}^{(0)}$ from above and $E_0^{(k)}$ from below. But also in this case, one still needs to compute all the $M$ different $\lambda_k$'s, and for most of the interesting problems $M$ scales with the system size $N$. 
This happens, for example, for our working example, the matching problem, where the number of constraints is the number of vertices to be matched.
To worsen the situation, the computation of $E_0^{(k)}$ requires the minimization of the energy over all the possible ways of breaking $k$ constraints, and this number can grow exponentially in $N$ (as it happens, for example, for the matching problem).
However, if we can prove that 
\begin{equation}\label{eq::para_general_order}
\lambda_1 \geq \lambda_2 \geq \dots\geq \lambda_N
\end{equation}
then $\lambda^\star$ can easily be found by estimating $\lambda_1$ and taking the smallest value such that
\begin{equation}\label{eq::para_min_param_1}
\lambda^\star > \lambda_1.
\end{equation}
Let us give some qualitative arguments to understand why Eq.~\eqref{eq::para_general_order} is a reasonable expectation. We have that $\lambda_k \geq \lambda_{k+1}$ if and only if
\begin{equation}\label{eq::para_general_dim}
\begin{split}
E_0^{(0)} - E_0^{(1)} + E_0^{(1)} +\dots -E_0^{(k-1)} & + E_0^{(k-1)}- E_0^{(k)} \\
& \geq k \, (E_0^{(k)}-E_0^{(k+1)}).
\end{split}
\end{equation}
If we prove that
\begin{equation}\label{eq::para_general_gainorder}
E_0^{(n-1)}-E_0^{(n)} \geq E_0^{(n)}-E_0^{(n+1)},
\end{equation}
for each $n=0,1,\dots,N$, then inequality~\eqref{eq::para_general_dim} immediately follows (this is a sufficient but not necessary condition).
This condition is nothing but the fact that the maximum gain in energy that we can obtain by breaking the $n$-th constraint is lower than the one that we obtain by breaking the $(n+1)$-th, for each value of $n$.\\
Actually the inequality given in~\eqref{eq::para_general_order} is satisfied for some problems, but not for all of them. In particular, it depends on both $H_P$ and $H_C$ and in Appendix \ref{app::parameter_failure} we show a specific problem and a specific choice of $H_C$ for which this condition does not hold.
We will see that for the matching problem defined as in Eq.~\eqref{eq::para_matching_qubo} this conditions is satisfied. Finally, notice that there are other algorithms that can be used to find the minimum parameter $\lambda^\star$: when the algorithm we discuss here is not applicable, these methods can be an alternative strategy. However, as we discuss in Appendix \ref{app::parameter_other}  using the example of the matching problem, the differences in performances among these methods can be quite relevant.

\subsubsection{An explicit example: the matching problem}
As we already discussed, the matching problem is in the P complexity class. 
However it is empirically known that for many problems in the P class, heuristic algorithms such as SA still need an exponential time to find the exact solution. 
When this problem is written in QUBO form it is one of the simplest possible constrained problems: quadratic terms are in the penalty term only, and the problem is trivial without it. On the top of that, we have seen that the structure of the physical energy landscape, that is logical configurations separated by non-logical ones, is common to many other problems.
Therefore the matching problem is an ideal starting point to study the effects of the choice of the penalty term coupling parameter.\\
Another, more practical, reason to choose this problem is that, since it is polynomial, we can compute $\lambda_1$ (as defined in Eq.~\eqref{eq::para_general_eq_for_par}) in polynomial time, and we will see that for the QUBO form that we will use for this problem, condition~\eqref{eq::para_general_order} holds. Therefore we can in polynomial time find the minimum parameter, and test in a realistic problem if that is the optimal value.
Notice that we will actually use the exact solution of the problem to obtain the minimum parameter, since our objective is to understand the effect of the choice of the parameter rather than providing an algorithm to find the minimum parameter itself.
Nonetheless, for more interesting (NP-hard) problems one cannot use the solution of the problem, but, as we have discussed previously, approximate solutions together with our technique could be used to obtain good values for the parameter.

Let $\mathcal{A}$ be the set of all the possible matchings for our problem graph $G = (V, L)$. We define
\begin{equation}
E_N = \min_{\sigma\in\mathcal{A}} (E_{\sigma}),
\end{equation}
where $2 N = |V|$ is a measure of the problem size. 

To discuss the inequality~\eqref{eq::para_general_gainorder} in this case, we need to analyze how $E_0^{(k)}$ is obtained. 
Firstly, notice that constraints are always broken in pairs. Now consider a configuration with $2 k$ broken constraints, with $k>0$. Suppose that we can find a point $x$ which is endpoint of $m > 1$ edges. Now consider the configuration obtained by removing $m-1$ links which have $x$ as endpoint: as it is clear from Eq.~\eqref{eq::para_matching_qubo}, we will have a lower cost given by the penalty term of $H_\lambda$, and a lower cost given by the weight term. Therefore, the way to break $2 k$ constraints which minimizes $H_\lambda$ is obtained by configurations which have $2k$ points which are not matched with any other point, that is by \emph{ignoring} $2k$ points of the initial set of $2N$ points in the matching.
Therefore we introduce the symbol
\begin{equation}
\mathcal{E}_{N-k} = E_0^{(k)},
\end{equation}
where we dropped the subscript $0$ to shorten the notation and we stress the fact that we can interpret $E_0^{(k)}$ as the optimal matching when we can ignore $2k$ points. Notice that $\mathcal{E}_{N} = E_N$.
The inequality that we want to prove is then
\begin{equation}\label{eq::para_gainorder}
\mathcal{E}_{n+1}-\mathcal{E}_{n} \geq \mathcal{E}_{n}-\mathcal{E}_{n-1},
\end{equation}
for each $n=0,1,\dots,N$.
The proof is rather technical and is given in full details in Appendix \ref{app::parameter_proof}. Here we only sketch its structure and the main ideas behind it:
\begin{itemize}
	\item we prove that if a point is ignored when $2 k$ constraint can be broken, it will also be ignored when $2(k+1)$ can be broken (``stability'' property);
	\item using the previous fact, we can prove Eq.~\eqref{eq::para_gainorder} (``order'' property).
\end{itemize}
Both the points are proven by building sub-optimal matchings by using pieces of the solutions with cost $\mathcal{E}_{n+1}$ and $\mathcal{E}_{n-1}$ and using the fact that the solutions with costs $\mathcal{E}_{n+1}$, $\mathcal{E}_{n-1}$ and $\mathcal{E}_{n}$ are optimal.

\subsubsection{Numerical results for the matching problem}
The aim of this section is to numerically study the relevance of the choice of the parameter value in terms of performances for the matching problem, where the minimum parameter can be found in polynomial time. This will also allow us to discuss our qualitative picture introduced previously, at least for this specific example. 
We used an exact, polynomial-time, solver to compute the energy of the optimal solution, both when no constraints are broken (to obtain $\mathcal{E}_N$) and when one constraint is broken (we broke it in every possible way, and the minimum of the energies obtained is $\mathcal{E}_{N-1}$). Here, with one broken constraint we mean that we are ignoring 2 points to be matched, as discussed in the previous section. Therefore we can break a constraint in $N (N-1)$ ways, where $N$ is the number of points, and so the procedure to find the minimum parameter is still polynomial. Once we obtained the minimum parameter, we run the classical and quantum algorithms using that value and values at a fixed distance from it. We then computed the time to solution, which is used throughout this section as measure of performance.\\
Let us give now some details about our numerical analysis: we considered the matching problem on a specific graph, which is a 2-dimentional regular lattice of vertices, where each vertex has 4 edges which connect the vertex with its nearest neighbors (the vertices on the boundaries have less edges because we used open boundary conditions). We used this specific graph because in this case the problem of the minor embedding in the D-Wave 2000Q hardware graph (the Chimera graph) is moderate, so we can focus on the effects caused by the change of the penalty term of the matching problem, neglecting the fact that also the penalty term of the minor embedding problem plays a role in determining the performances. The weights of the edges are randomly extracted among the set of integers $\{0, 8, 16 \}$. We need to use multiples of 8 to have integers number after the re-writing of the problem in terms of QUBO form and then again in Ising variables (which is the form of input Hamiltonian used in both out parallel tempering algorithm and the D-Wave 2000Q). We decided to use only integers and such a small set to avoid precision problems, which are particularly severe in noisy devices such as the D-Wave 2000Q. 
Then, for each system size analyzed, the parameter $\lambda$ is chosen as $\lambda = \lambda^\star + 2 \delta$, where $\delta\in\{-2, -1, 0, 1, 2, 3 \}$. We did not consider a finer grid of values for $\delta$ because our computations are done with finite precision, and too close values of $\delta$ would be indistinguishable. In Fig.~\ref{fig_histo} we present histograms of the values of $\lambda^\star$ obtained for system at various sizes. A first important consequence of this plot is that the value of $\lambda^\star$ is not a self-averaging quantity (at least not for the system sizes explored here). On the opposite, the variance of $\lambda^\star$ we obtained is increasing with the system size. 

\begin{figure}[tb]
	\centering
	\includegraphics[width=\columnwidth,keepaspectratio]{./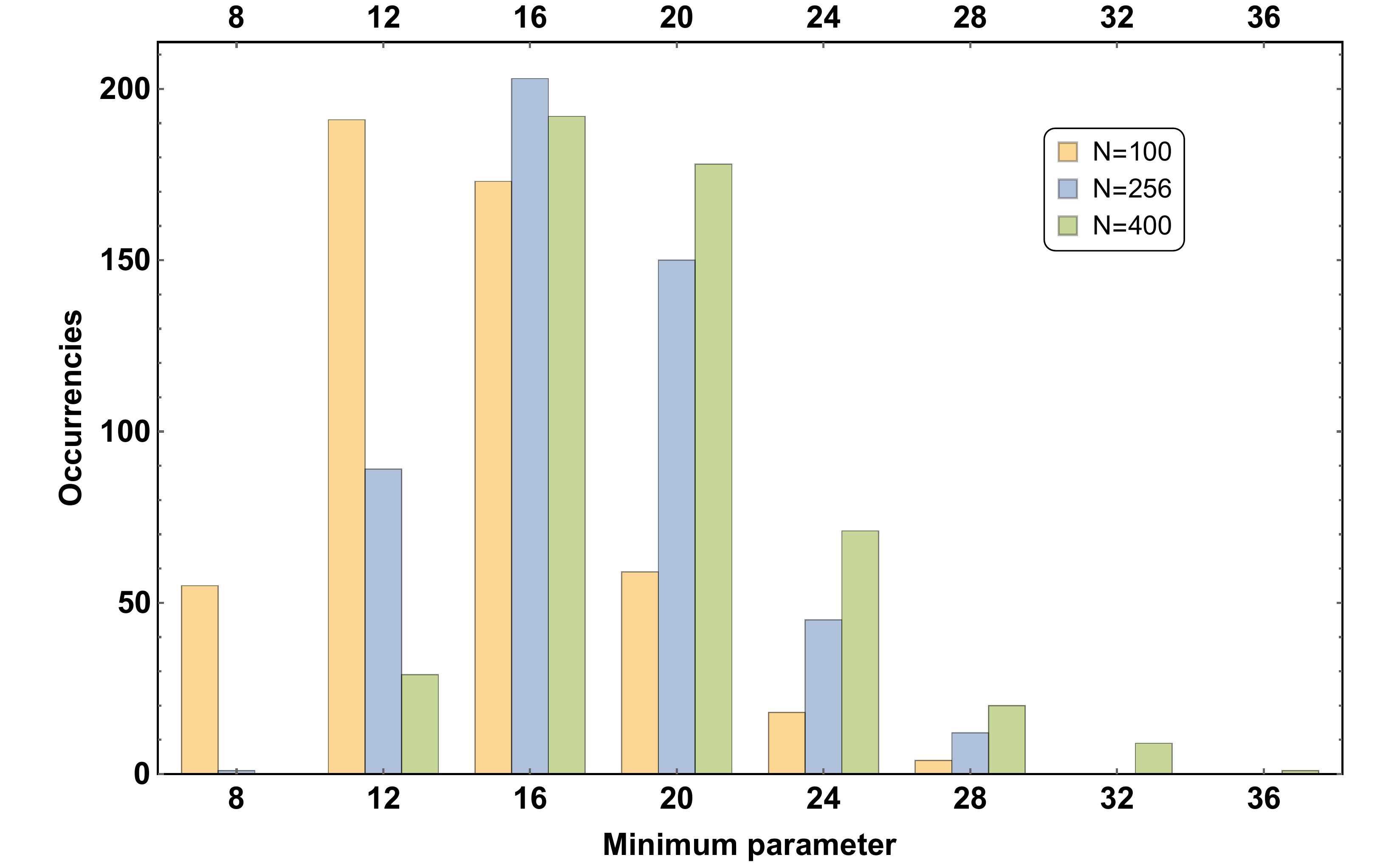}
	\caption{Histograms of the minimum coupling parameter obtained for 500 different instances of the matching problem, for system sizes of 100, 256 and 400. The optimal parameters are always multiple of 4 because of our initial choice of link weights.}\label{fig_histo}
\end{figure}

\paragraph{Classical heuristic algorithm}
We used the Parallel Tempering (PT) algorithm included in the NASA's Unified Framework for Optimization (UFO). We analyzed systems with sizes up to 484 points (that is, a lattice with 22 points of side length) that means, because of our QUBO embedding, about 900 binary variables.  
To choose the temperatures for the PT we considered the energy scale given by the penalty term parameter, and we multiplied it for two constants (one for the lowest and one for the highest temperatures) which are found by maximizing the number of times that the PT algorithm finds either the GS or a forbidden configuration with lower energy than the GS, with $\delta=0$. This is done to have the cleanest possible values of TTS close to $\delta=0$, and we have checked that the qualitative picture (regarding the TTS scaling with $\delta$) does not change when varying the temperatures. 
To obtain the TTS we proceeded as follows: we randomly generated 500 instances, and run the PT 500 times for each instance. For $N=400$ and $N=484$, the number of different instances is reduced to 250.
When the PT algorithm succeeded in finding a good solution, we recorded the time used; when it failed in the time given, or it found a solution with energy lower than the GS (because of broken constraints), we recorded a failure and so ``infinite'' time to find the solution. We do that because once the system is trapped in a local minimum of the energy landscape, to escape from that it will require (typically) much more time than that allowed to each run of the algorithm.
Using the data collected in this way, we can compute the TTS, and the results are shown in Figs. \ref{fig_tts_cl_1} and \ref{fig_tts_cl_2}, as functions respectively of $\delta$ at fixed $N$ and of $N$ at fixed $\delta$.

\begin{figure}[tb]
	\centering
	\includegraphics[width=\columnwidth,keepaspectratio]{./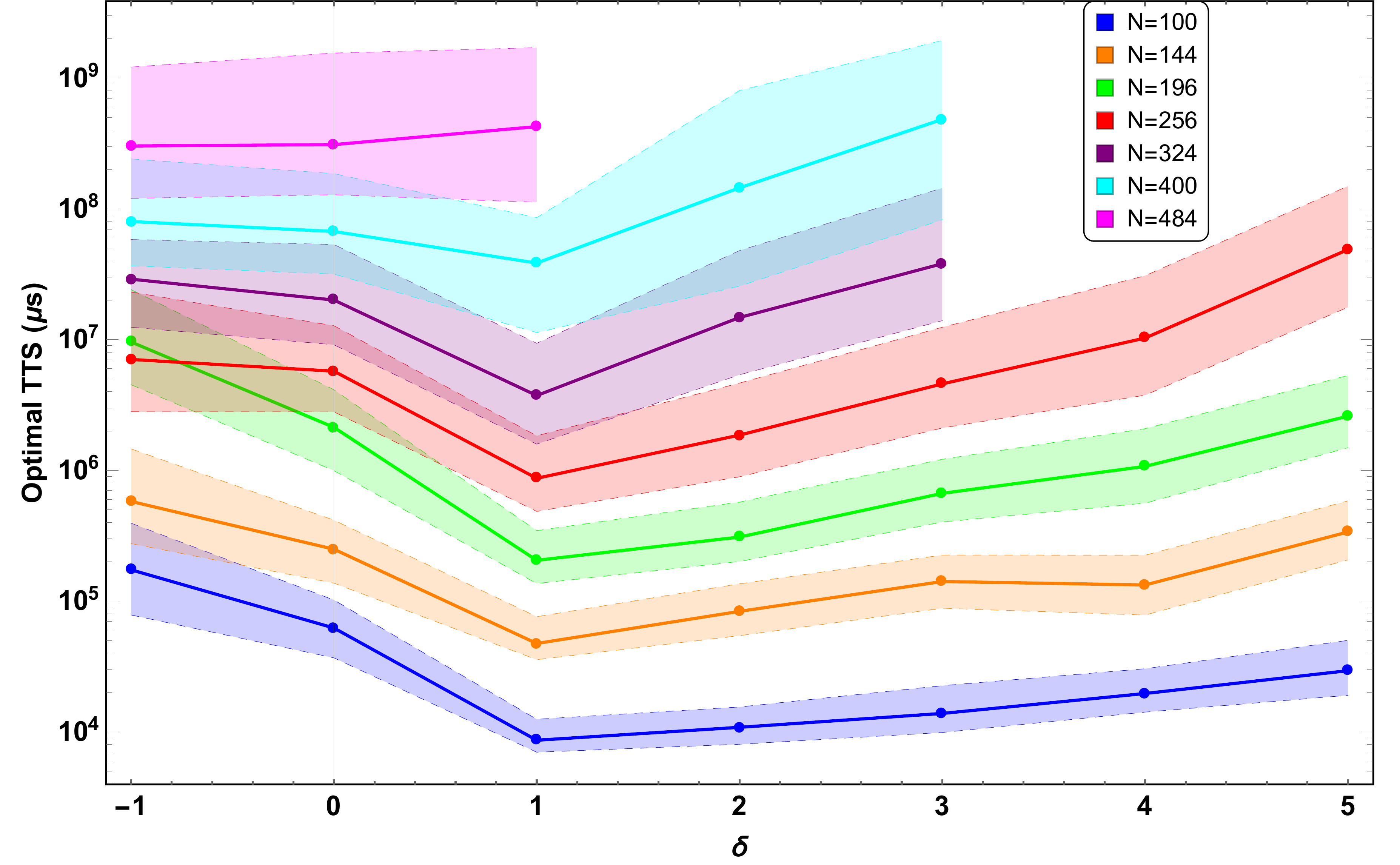}
	\caption{Time-to-solutions for the matching problem, shown at fixed N as function of the distance from the optimal parameter. The solid lines connect points computed using the 50-percentile of instances, the dashed lines corresponds to the 35-percentile (below solid lines) and 65 percentile (above solid lines).}\label{fig_tts_cl_1}
\end{figure}

\begin{figure}[tb]
	\centering
	\includegraphics[width=\columnwidth,keepaspectratio]{./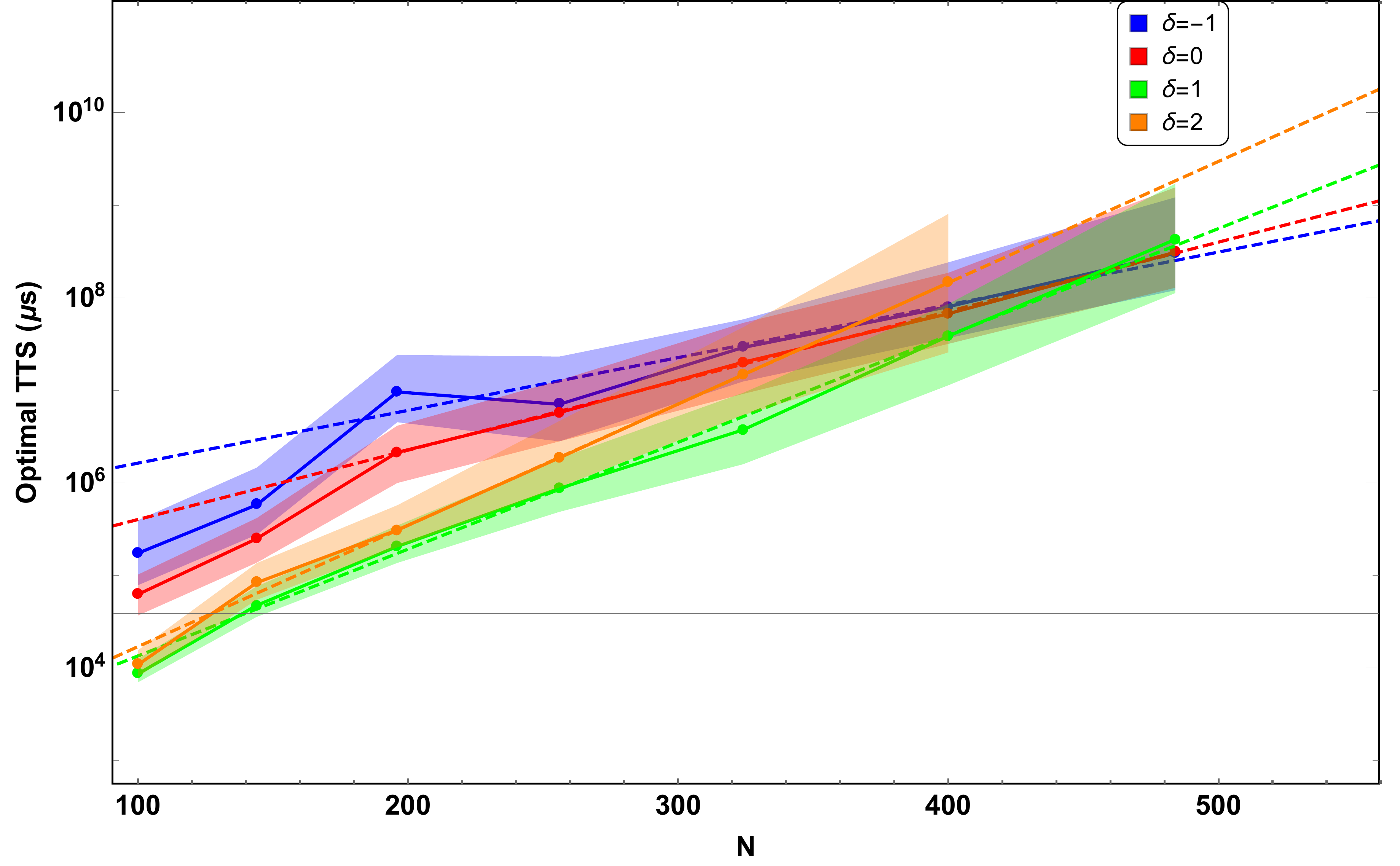}
	\caption{Time-to-solutions for the matching problem, shown at fixed N as function of the distance from the optimal parameter. The solid lines connect points computed using the 50-percentile of instances, the shaded areas corresponds to the 35-percentile (below) and 65-percentile (above), the dashed lines are the best fit of the form $A e^{B N}$
		.}\label{fig_tts_cl_2}
\end{figure}

Let us now comment the results obtained: from Fig. \ref{fig_tts_cl_1} we can see how, as intuitively predicted, the use of parameters close to the minimum results in faster annealing. This effect is more important as $N$ increase.
The 50-percentile shows that, at $N=484$, the maximum system size analyzed here, the optimal choice is $\delta=-1$. Notice that we do not plot $\delta=-2$ in Fig. \ref{fig_tts_cl_1} because at all system sizes considered here, at least one percentile of the TTS exceeded the maximum allowed (which was $10^{20}$ $\mu s$). The reason is that in that case many instances are never solved by the algorithm. On the contrary, for the largest system sizes $\delta=-1$ also maximized the number of instances where the algorithm found the solution at least once.
Therefore it seems that, for large $N$, the use of values of the parameter slightly lower that the minimum is preferred, at least for this problem.
This is investigated in more details in Fig. \ref{fig_tts_cl_1}, where the 50-percentile of the TTS is fitted with a function of the form $A e^{B N}$. The obtained value of $B$ are in Table \ref{table_cl} and show that a moderate exponential speedup in TTS can be obtained using $\delta=-1$ and that $\delta=0$ gives a small exponential speedup against $\delta=1$, which in turn gives a small exponential speedup against $\delta=2$ and so on. 
Another very important consideration is that in Fig. \ref{fig_tts_cl_1} we do not have plotted points corresponding to at least one percentile lines exceeding the maximum limit of $10^{20}$ $\mu s$. This is the reason why the curves become shorter as $N$ increases. This means that outside an ``acceptable interval'' of values around $\lambda^\star$ the performance of the PT algorithm rapidly spoils, and most of the instances are never solved. Moreover, this interval becomes 
smaller and smaller as the system size increases. This means that to use the PT algorithm we need to be more and more precise in finding $\lambda^\star$ and that this parameter has to be found with a pre-processing applied to each instance since, as discussed previously, even at large system size it depends on the specific instance.

\begin{table}[ht]
	\centering
	\begin{tabular}{lc} 
		$\delta$ & $B$ \\
		\hline
		-1 & $(1.3 \pm 0.2) \cdot 10^{-2}$     \\ 
		0  & $(1.72 \pm 0.02) \cdot 10^{-2}$   \\ 
		1  & $(2.6 \pm 0.1) \cdot 10^{-2}$     \\ 
		2  & $(3.01 \pm 0.01) \cdot 10^{-2}$
	\end{tabular}
	\caption{Fitting parameter $B$ for fit equation $A e^{Bx}$ for values value of $\delta$. The fitted data and the fitting curves are those used in Fig. \ref{fig_tts_cl_2}.}\label{table_cl}
\end{table}

\paragraph{Quantum heuristic algorithm}
The quantum computations are performed using the D-Wave 2000Q quantum annealer. In particular, we have embedded the problem in the Chimera graph, so that the logic Ising variables are mapped to ferromagnetic chains of length 4 after the minor embedding, except the Ising variables on the boundaries of the square lattice which correspond to shorter chains. Then we run the QAA for system of side sizes $N=16,36,64$. Larger systems up to $N=256$ (that is, a lattice with 16 points of side length) are in principle possible for the D-Wave 2000Q chip, but they resulted in too few solved instances. Notice that an instance at $N=64$ is a matching of 64 logical points, which corresponds, after the QUBO and the minor embedding, to a problem of $\sim 500$ qubits. Indeed, the starting graph that we have chosen for the matching problem is such that each vertex can be mapped in a 8-qubit unit cell of the Chimera graph, and to do that each binary variable is mapped in a 4-qubit ferromagnetic chain. Notice that the chain lengths are independent on $N$. We used a majority voting technique to correct chains broken at the end of the annealing (when there is no majority, the chain is randomly corrected). 
We set the annealing parameters (annealing time and ferromagnetic coupling for the embedded chains) such that the average number of successful annealings is maximized (we noticed that these settings do not depend in a relevant way on the value of $\delta$ that we use to build the instances).

To obtain the TTS, we generated 100 random instances as discussed in the previous section, and each instance is submitted for $10^4$ runs of the D-Wave 2000Q. The relevant parameter is then the probability of finding the GS for a fixed instance, which is averaged over the instances and plotted in Fig. \ref{fig_tts_q1}.
From these probabilities one can obtain easily the TTS, which we show in Fig. \ref{fig_tts_q2} to ease the comparison with the classical case. Notice that in this case the curves correspond to percentiles, while in Fig. \ref{fig_tts_q1} we have plotted averages and standard deviations as errors.

Unfortunately, due to the small system size that we can analyze in this case, we cannot obtain firm conclusions. However, it seems reasonable to expect that the same problems observed in the classical case can be repeated here: in particular it is still true that a choice of $\delta=-1$ is a particularly bad for small $N$, but this choice improves (i.~e.~it is less worse) as $N$ increases. On the other hand, up to the size $N=8$ the choice of a $\delta>0$ is the optimal, even so a too big value start spoiling the performances. Moreover, as can be seen more explicitly in Fig. \ref{fig_tts_q1}, it is still true that for larger system size the choice of the parameter $\lambda$ becomes more relevant in terms of performances.

An interesting question is why the quantum annealer is not able to solve problem of size $N=100$ or larger. We think that the precision problems have a role, but another reason could be also the structure of the embedded (QUBO) energy landscape itself: in particular, we think that the fact that logical states are always separated by not-acceptable states might be a severe obstacle for quantum annealers. Notice that this is true also for each problem which is embedded in the hardware graph in such a way that each QUBO binary variable is now a chain of qubits. However in this case one can use majority voting or other methods to correct configurations with this constraint broken. In our case (as in other many interesting problems) a simple correction as the majority voting does not exist, so if this is the reason for the failure of the quantum annealer on this problem, other ways to enforce constraints have to be designed to solve this kind of problems.

\begin{figure}[tbp]
	\centering
	\includegraphics[width=\columnwidth,keepaspectratio]{./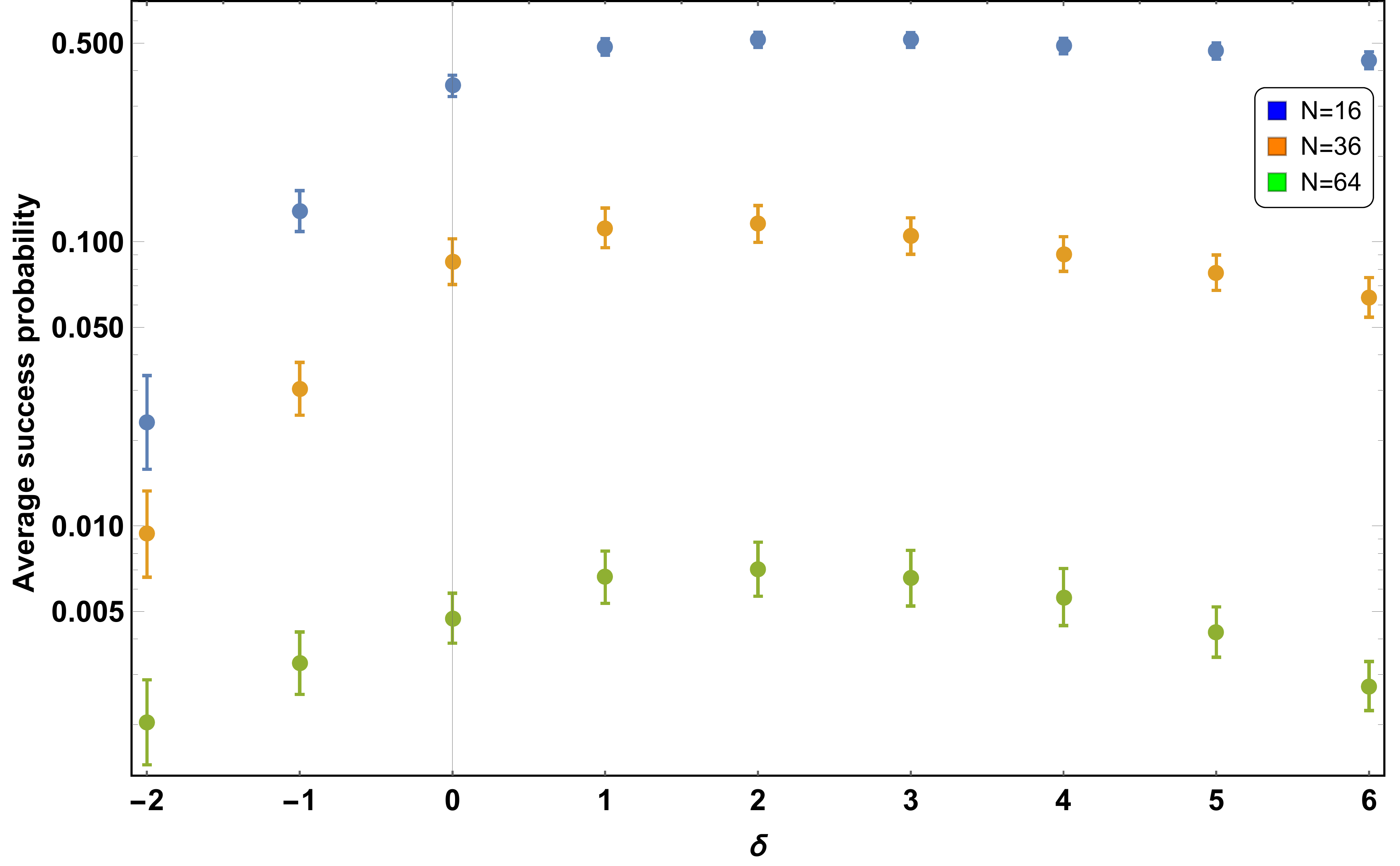}
	\caption{Average probability of finding the solution for the matching problem, shown at fixed N as function of the distance from the optimal parameter. Each point is obtained averaging on 100 different instances, and the probability is computed running the annealing $10^4$ times. These results are obtained using the D-Wave 2000Q quantum annealer hosted at NASA Ames Research Center.}\label{fig_tts_q1}
\end{figure}
\begin{figure}[tbp]
	\centering
	\includegraphics[width=\columnwidth,keepaspectratio]{./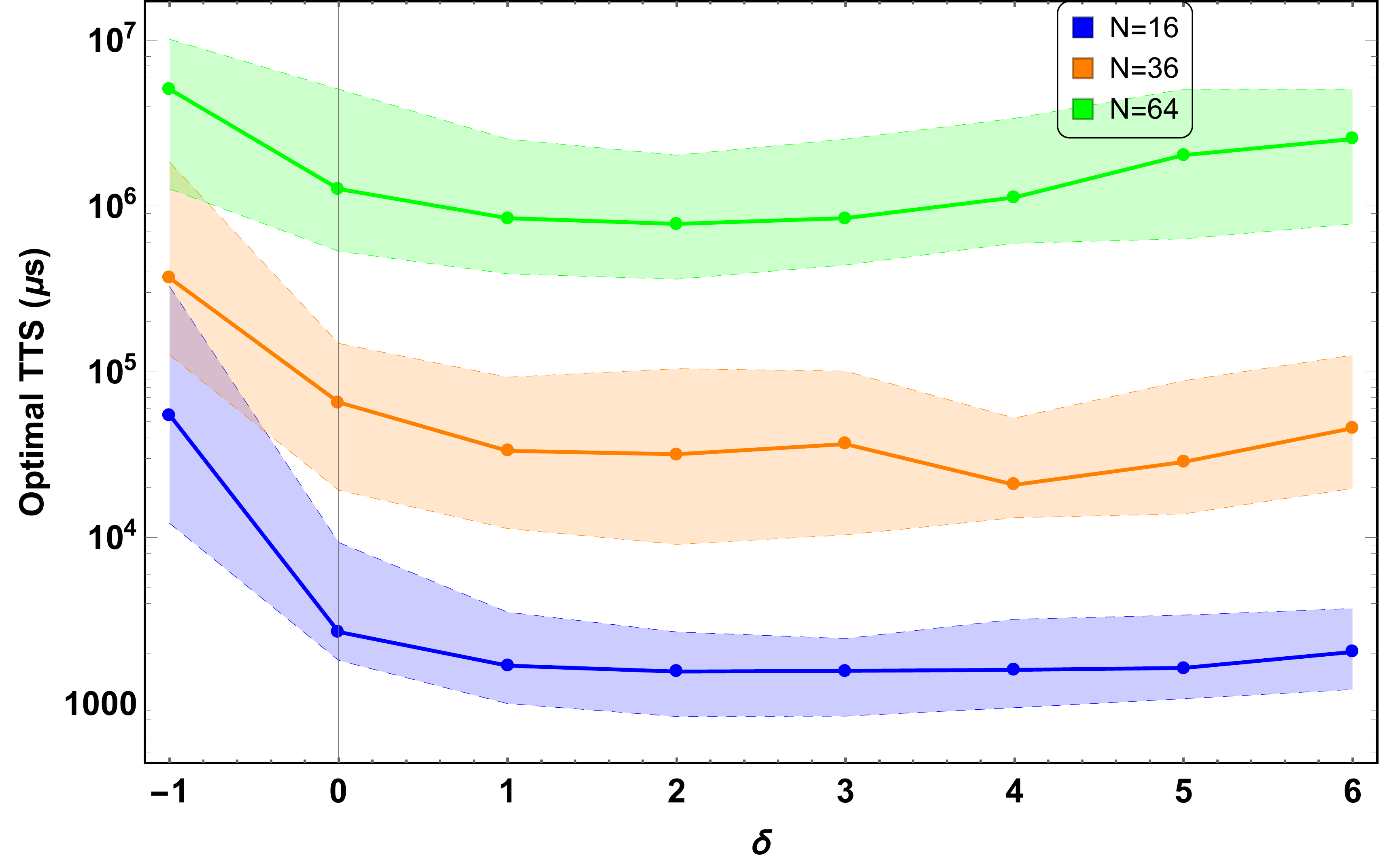}
	\caption{Time-to-solutions for the matching problem, shown at fixed N as function of the distance from the optimal parameter. These results are obtained starting from the same data set used for Fig. \ref{fig_tts_q1}. The solid lines connect points computed using the 50-percentile of instances, the dashed lines corresponds to the 35-percentile (below solid lines) and 65 percentile (above solid lines). These results are obtained using the D-Wave 2000Q quantum annealer hosted at NASA Ames Research Center.}\label{fig_tts_q2}
\end{figure}

\section{Quantum Approximate Optimization Algorithm}\label{sec::qaoa}
As a final section, we mention here briefly a relatively new algorithm to tackle COPs with gate-based quantum computers.

We start from the following question: how can we ``practically'' implement QAA on a gate-based quantum computer? Suppose we are able to apply the following gates
\begin{equation}\label{eq::qaoa_mixing}
	D(\beta) = e^{i \beta H_0}
\end{equation}
and 
\begin{equation}\label{eq::qaoa_phase}
	U(\gamma) = e^{i \gamma H_1}
\end{equation}
for generic values of $\beta$ and $\gamma$, where $H_0$ and $H_1$ are those used in Eq.~\eqref{eq::qaa_ham}. 
Therefore we can apply the Suzuki-Trotter formula,
\begin{equation}
	e^{X+Y} = \lim_{m\to\infty} \left( e^{X/m} e^{X/m} \right)^m,
\end{equation}
so that we write our evolution as\footnote{notice that the Suzuki-Trotter formula can be immediately generalized to time-ordered integrals.}
\begin{equation}
\begin{split}
	\text{Texp} & \left[- \frac{i}{\hbar} \int_0^1 dt \left( A(t) H_0 + B(t) H_1 \right) \right] \\
	&= \lim_{m\to \infty} \left( \text{Texp}\left[- \frac{i}{\hbar} H_0 \int_0^1 dt \frac{A(t)}{m} \right] \text{Texp}\left[- \frac{i}{\hbar} H_1 \int_0^1 dt  \frac{B(t)}{m} \right] \right)^m.
\end{split}	
\end{equation}
Therefore, to implement our QAA, we simply need to use alternatively the gates $D(\gamma)$ and $U(\beta)$, with very small parameters $\beta$ and $\gamma$, many times. 
After applying $p$ times each gate, we obtain the state
\begin{equation}\label{eq::qaa_trotterized}
	\ket{\bm{\beta}, \bm{\gamma}} = D(\beta_p) U(\gamma_p) \cdots D(\beta_1) U(\gamma_1) \ket{\psi}
\end{equation}
where $\ket{\psi}$ is our initial state, which, according to the QAA, has to be the ground state of $H_0$.
Now, the intuition behind QAOA is the following \cite{Farhi2014}: give up the idea of sending $p\to\infty$ and choose freely the sets $\{\beta_1, \dots, \beta_p\}$ and $\{\gamma_1, \dots, \gamma_p\}$. Notice that at this point we are far from the adiabatic situation (where $p\to\infty$ and $\beta_i$, $\gamma_i$ are small). Therefore we can choose the parameters such that
\begin{equation}
	E(\bm{\beta}, \bm{\gamma}) = \abs{\bra{\bm{\beta}, \bm{\gamma}} H_1 \ket{\bm{\beta}, \bm{\gamma}}}^2
\end{equation}
is maximized. Once the parameters are chosen, we can prepare the state by applying our sequence of gates and then measure the state in the computational basis: we are not guaranteed that the final state will be the ground state of our system (unless we have taken $p=\infty$), but because of our choice of the parameters we will end up in a low energy state with high probability. This algorithm, in a slightly generalized form that we will introduce in the remaining part of this section, is called \emph{quantum approximate optimization algorithm} (QAOA).\\
Since all the difficulty is in choosing the $2p$ parameters $\bm{\beta}$ and $\bm{\gamma}$, and we accept to be also very far from the adiabatic limit, usually the initial state is chosen to be the uniform superposition of all the states in the computational basis, independently from the Hamiltonian $H_0$.
In particular, a typical choice is
\begin{equation}
\ket{\psi}= \bigotimes_{i=1}^{N} \ket{+} 
\end{equation}
as initial state and
\begin{equation}
	H_0 = \sum_i X_i
\end{equation}
where $X_i$ is the Pauli matrix $X$ acting on the $i$-th spin. As $H_1$, the Hamiltonian of the COP one wants to solve is used. Notice again that this algorithm is very general, but for fixed $p$ no guarantees that the system ends up in the solution of the problem can be given. For this reason, this algorithm is mainly (but not only, see \cite{Farhi2016, Wecker2016, Jiang2017}) used for approximation purposes.

The operator $D$ defined in Eq.~\eqref{eq::qaoa_mixing} is often called \emph{mixing} operator, while $U$ given in Eq.~\eqref{eq::qaoa_phase} is called \emph{phase} operator. There is a lively line of research about the performances of QAOA, encouraged several findings:
\begin{itemize}
	\item in \cite{Farhi2014_2}, QAOA with $p=1$ is proved to be the best algorithm known to find approximate solutions for a particular COP, the so-called E2Lin2. Notice that shortly after that work,  a new classical algorithm which currently holds the record for this specific problem has been found \cite{Barak2015}. However, this proved the power of QAOA as an approximation algorithm;
	\item under complexity theory assumptions which are generally believed to be true (in the same sense that P is believed to be different from NP), it can be proved \cite{Farhi2016} that the output of a QAOA with $p\geq 1$ cannot be efficiently sampled with a classical algorithm;
	\item the Grover problem is simple enough to allow for an analytical treatment up to $1\ll p \ll N$ and it has been found \cite{Jiang2017} that a periodic choice of the parameters $\bm{\beta}$ and $\bm{\gamma}$ gives a quasi-optimal algorithm (that is, the algorithm requires $\alpha \log(N)$ queries to the oracle for large $N$, but $\alpha$ is higher than that of the Grover algorithm).
\end{itemize}

The interest in low-$p$ QAOA algorithms is also motivated by the recent availability of general-purpose gate-based quantum computers. Indeed shallow circuits as those needed to implement this kind of algorithms should be supported by devices available in the near future.


\chapter{Conclusions}\label{chap::final}
\section{A summary of our results}
In this thesis we reviewed many recent results in the realm of combinatorial optimization problems, mainly from the point of view of statistical mechanics. This perspective allowed us to compute the average value of the solution cost in random versions of many COPs, even in the presence of Euclidean correlations. 
Several original results stemmed during the research we have performed to complete this work:
\begin{itemize}
	\item we computed the average optimal length of the solution of the TSP in one dimension in complete bipartite \cite{Caracciolo2018_1} and complete \cite{Caracciolo2019_2} graphs;
	\item we analyzed a problem with an exponential number of possible solution even in one dimension, the 2-factor problem, and we have been able to provide also in this case bounds on its average solution cost \cite{Caracciolo2018_3};
	\item we discovered a novel application of the famous Selberg integrals to RCOPs, which allowed us to give (in some cases) exact prediction for finite $N$ \cite{Caracciolo2019_1};
	\item we extended our techniques and, thanks to a smart scaling argument, we have been able to compute the average optimal tour length for the NP-hard bipartite TSP and for the bipartite 2-factor problem in two dimensions \cite{Caracciolo2018_2}.
\end{itemize}

Other relevant results originated during this work have been obtained by leaving behind the study of averaged quantities, and focusing on the statistics of large (and very large) fluctuation, in the context of the p-spin spherical model, where we have proved that diverse and interesting regimes of large deviations are tuned by turning on and off an external magnetic field \cite{Pastore2019}.


Finally, we have discussed the possibility of using quantum computing to solve combinatorial optimization problems. 
The analysis of the various possible strategy to do that, and more specifically of the quantum annealing algorithm, has led to interesting findings about how to set the parameters of heuristic algorithms, which could be useful also for classical algorithms as simulated annealing and parallel tempering \cite{DiGioacchino2019}.

\section{Future directions}
As usually in Science, solving old problems has, as a side effect, the challenging result of opening new questions. 

Here we comment upon some possible paths that could start from this work and would (possibly) lead to further understanding of the statistical physics of combinatorial optimization problems with correlations:
\begin{itemize}
	\item Euclidean correlations have been successfully incorporated in the realm of random matrices, with the introduction of Euclidean Random Matrices \cite{Mezard1999,Amir2010,Goetschy2013}; several COPs can be rewritten as the computation of some property (for example, the determinant or the permanent) of a matrix, and therefore the random version of the problem is connected to the random matrix theory. Then we could devise a way to exploit the powerful and well-developed formalism of random matrix theory to tackle the computation of average properties of the solutions of combinatorial optimization problems;
	\item we have discussed here mainly Euclidean correlations. However, these are only one of the many kinds of correlations which in many cases occur in COPs. For example, the structure of real-world data used as input for many tasks, such as classification and feature extraction, is far from trivial and some efforts have been spent to explore the nature and the effect of these correlations \cite{Goldt2019, Erba2019, Rotondo2019}. In the same spirit, we could try to use the techniques developed in this work to deal with other kinds of correlations than the Euclidean ones, and to understand more deeply how correlations affect the ``difficulty'' of combinatorial optimization problems.
\end{itemize}

Our work on large deviations introduces new questions as well. One in particular, which deserves further investigations, regards the application of the magnetic field to obtain (analytically or numerically) the power of $N$ (some measure of the system size) in the exponential suppression of fluctuation. This is relevant for each problem where we find non-standard large deviation scalings with $N$, and would head towards a quite ``general'' method to compute this scaling.


Finally, much work has been done and much more remains to be done to really understand the potential of quantum computers to solve combinatorial optimization problems. While the investigations done here on the parameter setting for the quantum annealing algorithm can be considered a (small) step in that direction, it would be extremely interesting to be able to address the question of the effects of choosing different parameters in QAOA-type algorithms to solve, or approximate, combinatorial optimization problems.

\appendix

\chapter{Volume and surface of a sphere}\label{app::sphere}
Consider a $N$ dimensional sphere of radius $R$. Its volume is given by
\begin{equation}\label{eq::vol_Nsph-1}
	V_N = \int_{-\infty}^\infty d x_1 \cdots \int_{-\infty}^\infty d x_N \, \theta\left(R^2 - \sum_i x^2_i \right)	=  \Omega_N \int_{0}^R dr \, r^{N-1} = \frac{\Omega_N R^N}{N} ,
\end{equation}
where $\Omega_N$ is the surface area of the unit-radius sphere in $N$ dimensions (that is, the integral of all the $N-1$ angular variables in Eq.~\eqref{eq::vol_Nsph-1}).
The value of $\Omega_N$ can be computed with the following trick: consider the integral
\begin{equation}
	\int_{-\infty}^{\infty} dx_1 \cdots	\int_{-\infty}^{\infty} dx_N \, e^{-\sum_i x_i^2} = 	\left(\int_{-\infty}^{\infty} dx \, e^{-x^2}\right)^N = \pi^{N/2}.
\end{equation}
But we can also use spherical coordinates and write
\begin{equation}
	\int_{-\infty}^{\infty} dx_1 \cdots	\int_{-\infty}^{\infty} dx_N \, e^{-\sum_i x_i^2} = \Omega_N \int_{0}^{\infty} dr \, e^{-r^2} r^{N-1} = \frac{1}{2} \Omega_N \, \Gamma\left(\frac{N}{2}\right).
\end{equation}
Therefore
\begin{equation}
	\Omega_N = \frac{2 \pi^{N/2}}{\Gamma\left(\frac{N}{2}\right)}
\end{equation}
and
\begin{equation}
	V_N = \frac{2 \pi^{N/2}}{N} \frac{R^N}{\Gamma\left(\frac{N}{2}\right)}.
\end{equation}
Finally, as one can check from Eq.~\eqref{eq::vol_Nsph-1} (remembering that, in distributinal sense, $\frac{\partial}{\partial x}\theta(x-x_0)= \delta(x-x_0)$), we obtain the surface of a $N$ dimensional sphere of radius $R$ as derivative of its volume:
\begin{equation}
	S_N = \frac{\partial}{\partial R} V_{N} = 2 \pi^{N/2} \frac{R^{N-1}}{\Gamma\left(\frac{N}{2}\right)}.
\end{equation}

\chapter{Calculations for the p-spin spherical model}\label{app::pspin}
\chaptermark{p-spin model calculations}
\section{The replicated partition function}
The averaged replicated partition function of the p-spin spherical model is
\begin{equation}
	\overline{Z^n} = \int DJ  \int D\sigma \exp\left[ \beta \sum_{a=1}^n \sum_{i_1 < \cdots < i_p} J_{i_1\cdots i_p} \sigma_{i_1}^a \cdots \sigma_{i_p}^a \right],
\end{equation}
where
\begin{equation}
	\int DJ = \prod_{i_1<\dots<i_p} \int_{-\infty}^\infty d J_{i_1\cdots i_p} \, p(J_{i_1\cdots i_p})
\end{equation}
and
\begin{equation}
	\int D\sigma = \int_{-\infty}^\infty \prod_{a=1}^{n} \left( \prod_{i=1}^N  d\sigma_i^a \, \delta\left( N - \sum_i \sigma_{i}^a \right) \right).
\end{equation}
Notice that, as discussed in the main text, we are considering an \emph{integer} number of replicas $n\geq 1$. We now integrate over the disorder and, exploiting again Eq.~\eqref{eq::pspin_sum_approx} we get
\begin{equation}\label{eq::pspin_zn_1}
\begin{split}
	\overline{Z^n} & = \int D\sigma \prod_{i_1<\dots<i_p} \exp\left[ \frac{\beta^2}{4} N^{1-p} p!  \sum_{a,b=1}^n \sigma_{i_1}^a \sigma_{i_1}^b \cdots \sigma_{i_p}^a \sigma_{i_p}^b \right] \\
	& \sim \int D\sigma \exp\left[ \frac{\beta^2}{4} N^{1-p} \sum_{a,b=1}^n \left( \sum_{i=1}^N \sigma_{i}^a \sigma_{i}^b \right)^p \right].
\end{split}
\end{equation}
Now we introduce the ``overlap''\footnote{this name comes from the analogous step in problems with binary spin, where this quantity actually is a simple function of the number of spins pointing in the same directions in the replicas $a$ and $b$} between the replicas $a$ and $b$
\begin{equation}
	Q_{ab} = \frac{1}{N} \sum_i \sigma_{i}^a \sigma_{i}^b,
\end{equation}
which, simply by looking at the final form of Eq.~\eqref{eq::pspin_zn_1}, emerges as relevant variable. Clearly $Q_{ab}$ is a symmetric $n \times n$ matrix, and we have $Q_{aa} = 1$ for each $a$ because of the spherical constraint.
In order to rewrite our partition function in term of this new variable, we use the identity 
\begin{equation}
	1 = N^{n(n-1)/2} \int \prod_{a<b} d Q_{ab} \, \delta \left( N Q_{ab} - \sum_i \sigma_{i}^a \sigma_{i}^b \right)
\end{equation}
which we actually rewrite, using that (in distributional sense)
\begin{equation}
	\delta(x-x_0) = \int_{-\infty}^{\infty} \frac{dk}{2 \pi} \, e^{i k (x-x_0)},
\end{equation}
as
\begin{equation}
	1 = \int \left(\prod_{a<b} N d Q_{ab} \right) \int \left( \prod_{a<b} \frac{d \lambda_{ab}}{2 \pi} \right) \exp[i N \sum_{a<b} \lambda_{ab} Q_{ab} - \sum_i \sum_{a<b} \sigma_i^a \lambda_{ab} \sigma_i^b].
\end{equation}
As a last step, we write the spherical constraint as
\begin{equation}
\begin{split}
	\prod_a \delta\left( N - \sum_i \sigma_{i}^a \right) & = \int \left( \prod_a \frac{d\lambda_{aa}}{2 \pi} \right) \exp[i N \sum_a \lambda_{aa} - \sum_i \sum_{a} \sigma_i^a \lambda_{aa} \sigma_i^a] \\
	& = \int \left( \prod_a \frac{d\lambda_{aa}}{4 \pi} \right) \exp[i \frac{N}{2} \sum_a \lambda_{aa} - \frac{1}{2}\sum_i \sum_{a} \sigma_i^a \lambda_{aa} \sigma_i^a]
\end{split}
\end{equation}
and obtain
\begin{equation}\label{eq::pspin_to_compute_zn}
\begin{split}
	\overline{Z^n} & = \left( \prod_{a<b} \int_{-\infty}^{\infty} d Q_{ab} \right)  \left( \prod_{a\leq b} \int_{- i \infty}^{i\infty} d\lambda_{ab} \right) \left(\prod_{i=1}^N \int_{-\infty}^{\infty} d\sigma_{i} \right)  \cdot \\
	& \cdot \exp\left[\frac{N \beta^2}{4} \sum_{a,b} Q_{ab}^p + \frac{N}{2} \sum_{a, b} \lambda_{ab} Q_{ab} - \frac{1}{2} \sum_i \sum_{a, b} \sigma_{i}^a \lambda_{ab} \sigma_i^b \right],
\end{split}
\end{equation}
where $Q_{aa}=1$ (it is not an integration variable), we have written explicitly the integrations ranges to stress that we made change of variables to remove the factors $i$ from the exponents, and we dropped all the pre-factors, since they will not play a role in the final free entropy because they do not scale exponentially in $N$.
The integration over the spin variables can now be performed by using that
\begin{equation}
	\int \left( \prod_{i=1}^N d x_i \right) e^{-\frac{1}{2} \sum_{i,j} x_i A_{ij} x_j} = \sqrt \frac{(2 \pi)^N}{\det A},
\end{equation}
and we finally obtain Eqs.~\eqref{eq::pspin_zn_2} and \eqref{eq::pspin_zn_3} of the main text.
\section{1RSB free energy}
Given the 1RSB ansatz for $Q$, Eq.~\eqref{eq::pspin_1rsb_q}, to obtain the free energy in terms of the variational parameters $q_0,q_1,m$ we need to compute $\frac{1}{n}\sum_{a,b} Q_{ab}^p$ and $ \frac{1}{n} \log\det Q$, and take the limit of small $n$.\\
About the first part, we have that $Q$ has $n$ entries equal to 1 on the diagonal, $m(m-1)$ entries equal to $q_1$ for each block for a total of $n(m-1)$ entries equal to $q_1$, and the remaining $n^2 - n m$ entries equal to $q_0$, therefore\footnote{ this is another weird thing of the replica trick: we are sending $n$ to zero, but we are also supposing $0<m<n$ and we do not want to send $m$ to zero. Actually, the correct thing to do is to suppose that when $n\to 0$ we obtain for $m$ the relations $0\leq m \leq 1$.}:
\begin{equation}\label{eq::pspin_app_1rsb_1}
	\frac{1}{n} \sum_{a,b} Q_{ab}^p = 1 + (m-1) q_1^p + (n - m ) q_0^p \sim 1 + (m-1) q_1^p + m q_0^p.
\end{equation}
The second piece is slightly more tricky: one has to first notice that $[ \mathbb{E}, \mathbb{C} ]=0$, so a single orthogonal matrix such that both $\mathbb{E}$ and $\mathbb{C}$ are diagonalized exists. Now, notice that $(1/m)\mathbb{E}$ and $(1/n)\mathbb{C}$ are both projector. $(1/m)\mathbb{E}$ projects on the subspace of $\mathbb{R}^n$ generated by vectors with the first, second and so on groups of $m$ components equal, $(1/n)\mathbb{C}$ on the subspace generated by the constant vector. 
This observation makes clear that the eigenvalues of $\mathbb{C}$ are $n$ with degeneracy 1 and 0 with degeneracy $n-1$, while $\mathbb{E}$ has 1 eigenvalue equal to $m$ and $m-1$ eigenvalues equal to 0 for each block. Finally, since the constant vector is eigenvector of both $\mathbb{E}$ and $\mathbb{C}$ with eigenvalues $m$ and $n$ respectively, we have that the matrix Q has, as eigenvalues:
\begin{itemize}
	\item $1-q_1 + (q_1-q_0) m + q_0 n$ with degeneracy 1, corresponding to the non-null eigenvalue of $\mathbb{C}$;
	\item $1-q_1 + (q_1-q_0) m$ with degeneracy $n/m-1$, corresponding to the other $n/m-1$ non-null eigenvalues of $\mathbb{E}$;
	\item $1-q_1$ with degeneracy $n-n/m$, corresponding to the null eigenvalues of both $\mathbb{C}$ and $\mathbb{E}$.
\end{itemize}
Therefore we have
\begin{equation}\label{eq::pspin_app_1rsb_2}
\begin{split}
	\frac{1}{n} \log \det Q = & \frac{m - 1}{m} \log(1-q_1) + \frac{1}{m} \log(m (q_1-q_0) + 1- q_1) +\\
	& + \frac{1}{n} \left[- \log(m (q_1-q_0) + 1- q_1) + \log(1-q_1 + (q_1-q_0) m + q_0 n) \right] \\
	= & \frac{m - 1}{m} \log(1-q_1) + \frac{1}{m} \log(m (q_1-q_0) + 1- q_1) +\\
	& + \frac{1}{n} \left[\log(1 + \frac{n q_0}{1-q_1 + (q_1-q_0) m})\right] \\
	\sim & \frac{m - 1}{m} \log(1-q_1) + \frac{1}{m} \log(m (q_1-q_0) + 1- q_1) + \\
	& + \frac{q_0}{m(q_1-q_0) + 1 - q_1}.
\end{split}
\end{equation}
Using Eqs.~\eqref{eq::pspin_app_1rsb_1} and \eqref{eq::pspin_app_1rsb_2} in Eq.~\eqref{eq::pspin_f}, we obtain Eq.~\eqref{eq::pspin_f_1rsb}.

\section{Rammal construction}\label{app::paperld_rammal}
In this appendix we report the details of the geometrical construction reproducing the solution for the SCGF obtained with a 1RSB ansatz with $q_0 = 0$. The following observations are traced back to Rammal's work \cite{Rammal1981} and can be found in \cite{Kondor1983} (similar considerations in \cite{Ogure2004,Nakajima2008,Nakajima2009}). We reproduce here the reasoning not only as an historical curiosity: first of all, we see it as an enlightening approach to the problem of the continuation of the replicated partition function to real number of replicas, particularly suitable for a finite $k$ analysis. Moreover, we note that this interpretation, whenever it works, gives a flavor of ``uniqueness'' (though not in a strict mathematical sense) to the resulting solution, being based only on the properties of convexity and extremality that the SCGF $\psi(k)$ must have. In this respect, a generalization of this result would be of great interest in order to better understand the necessity of Parisi hierarchical RSB procedure, which has been dubbed as ``magic'' even in relatively recent works, like \cite{Dotsenko2011}; however, a true geometrical interpretation of the full machinery of RSB, beyond the simple case considered here, still lacks. Finally, in the context of this paper we are able to show a case where the construction gives the correct answer (the $p$-spin spherical model at zero external magnetic field) and a case where it fails (when the field is switched on).


The explicit evaluation of the SCGF $\psi(k)$ is performed within replica theory: an ansatz is imposed on the form of the replica overlap matrix, the number of replicas $k$ is then continued from integer to real values, the corresponding $G(k)$ is evaluated with the saddle-point method for large $N$ and finally a check is performed \textit{a posteriori} to verify its validity. In the SK model, the system originally considered by Rammal, at low temperatures the replica symmetric ansatz, which still gives the correct values of the positive integer momenta of the partition function, fails to produce a sensible solution for the SCGF at $k<1$, in at least three way:
\begin{itemize}
	\item it becomes unstable under variations around the saddle point (de Almeida-Thouless instability~\cite{deAlmeida1978}) below $k=k_{\text{dAT}}$;
	\item it produces a $G(k)$ that is non-concave (and so a non-convex $\psi(k)$) around $k=k_{\text{conv}}$, meaning that $G''(k)$ changes sign at $k_{\text{conv}}$;
	\item it produces a $G(k)/k$ that loses monotonicity a $k=k_m$.
\end{itemize}
In the SK model $k_{\text{dAT}}$ is the largest ($k_{\text{dAT}}>k_m>k_{\text{conv}}$), and so it is the first problem one encounters in extrapolating the RS solution from integer values of $k$. However, from the point of view of convexity and monotonicity alone, Rammal proposed to build a marginally monotone $G(k)/k$ in a minimal way, starting from the RS and simply keeping it constant below $k_m$ at the value $G(k_m)/k_m$. While the resulting function is not the correct one for the SK model, which needs a full RSB analysis to be solved, surprisingly enough for the spherical $p$-spin in zero magnetic field this approach reproduces the solution obtained with a 1RSB ansatz with $q_0 = 0$ (see Fig.~\ref{fig::paperld_G(k)/k}). Notice that in the present model the RS solution suffers from the same inconsistencies as in the SK model, but now $k_m$ is the largest of the three problematic points.

To convince the reader that the two approaches are actually equivalent we prove, as final part of this appendix, that without an external magnetic field the 1RSB solution of the spherical $p$-spin and the 
Rammal construction coincide. 
In order to obtain this result, we have to prove that:
\begin{itemize}
	\item the 1RSB solution for $G(k)/k$ becomes a constant below $k=k_c$, which is defined as the point where the RS and 1RSB ans\"atze branch out, as we did in the main text;
	\item this constant is the same as the one in the Rammal construction, that is $G(k_m)/k_m$;
	\item the points $k_c$ and $k_m$ are the same.
\end{itemize}
As $k_c$ is the point where the RS solution is not optimal anymore, for $k<k_c$ we have $\bar{q}_0=0$, as discussed in \cite{Crisanti1992}. Let us now consider Eq.~\eqref{eq::paperld_g_h_0} with $q_0=0$: differentiating with respect to $q_1$ and $m$ and setting the results equal to 0 we get the equations for $\bar{q}_1$ and $\bar{m}$, which read

\begin{equation}\label{eq::appr_sp1}
\left\{
\begin{aligned}
& \mu \, \bar{q}_1^{p-2} -  \frac{1}{(1-\bar{q}_1)(1-(1-\bar{m} )\bar{q}_1)} = 0\\
& \frac{\mu}{2} \bar{m}^2 \bar{q}_1^p - \frac{1}{2} \log \left( 1+\frac{\bar{m}\,  \bar{q}_1}{1 - \bar{q}_1} \right)  + \frac{\bar{m}}{2} \frac{\bar{q}_1}{1-(1-\bar{m})\bar{q}_1} = 0
\end{aligned}
\right.
\end{equation}
where $\mu = p (\beta J)^2/2$. These equations can be solved numerically (as we did to obtain the plots in the main text), but to show our point here we do not really need the explicit solution. Indeed it is enough to notice that $\bar{m}$ and $\bar{q}_1$ do not depend on $k$ and therefore $g(k; 0, \bar{q}_1, \bar{m})/k$ is a constant.
Then, we need to check that it is the same constant as the one obtained by Rammal. Again starting from Eq.~\eqref{eq::paperld_g_h_0}, by putting $q_1=q_0=q$ we obtain the RS solution, which is
\begin{equation}
\begin{split}
g_{0} (k;q) = &- \frac{(\beta J )^2}{4} \left[k + k (k-1) q^p\right] - \frac{k-1}{2} \log(1-q)\\
&- \frac{1}{2} \log\left[ 1 - (1-k) q \right] - k s(\infty).
\end{split}
\end{equation}
In this case, extremizing with respect to $q$, we have an equation which gives the RS solution on the saddle point, $\bar{q}$. To find $k_m$, we then require $\frac{\partial}{\partial k} g_{0} /k  = 0$. The two resulting equations are:
\begin{equation}\label{eq::appr_sp2}
\left\{
\begin{aligned}
&     \mu \, \bar{q}^{p-2} -  \frac{1}{(1-\bar{q})\left[1-(1-k_m )\bar{q}\right]} = 0\\
& \frac{\mu}{2} k_m^2 \bar{q}^p - \frac{1}{2} \log \left( 1+\frac{k_m\,  \bar{q}}{1 - \bar{q}} \right)  
+ \frac{k_m}{2} \frac{\bar{q}}{1-(1-k_m)\bar{q}}  = 0
\end{aligned}
\right.
\end{equation}
that are exactly Eqs.~\eqref{eq::appr_sp1} with $k_m$ instead of $\bar{m}$ and $\bar{q}$ instead of $\bar{q}_1$. Therefore $k_m = \bar{m}$ and $\bar{q} = \bar{q}_1$ and one can check that
\begin{equation}
\frac{g(k; 0, \bar{q}, k_m)}{k} = \frac{g_{0} (k_m, q)}{k_m} \,.
\end{equation}
It only remains to prove that $k_c$ and $k_m$, which in general can be different points, are actually the same. As the 1RSB ansatz gives the correct solution for the present model, the corresponding SCGF must be convex and thus, in particular, continuous. The only way to obtain a continuous function which is equal to the RS one above $k_c$ and to the Rammal's constant below, is to take $k_c = k_m$, and so the two functions coincide everywhere.

\section{1RSB with magnetic field}\label{app::pspin_ld}

To obtain \eqref{eq::paperld_g_h_non_0}, the starting point is the p-spin Hamiltonian with magnetic field
\begin{equation}
	H = - \sum_{i_1<i_2<\cdots<i_p} J_{i_1\cdots i_p} \sigma_{i_1} \cdots \sigma_{i_p} - h \sum_i  \sigma_{i}.
\end{equation}
In the presence of a magnetic field, one can perform the average over the disorder and proceed exactly as discussed in  equation exactly as as discussed in this Appendix for the case $h=0$.
The only difference is that in Eq.~\eqref{eq::pspin_to_compute_zn}, the Gaussian integral on the spin degrees of freedom has a linear term. Using the Gaussian integral
\begin{equation}
\int \left( \prod_{i=1}^N d x_i \right) e^{-\frac{1}{2} \sum_{i,j} x_i A_{ij} x_j + \sum_i b_i x_i} = \sqrt \frac{(2 \pi)^N}{\det A} e^{\frac{1}{2} \sum_{i,j} b_i (A^{-1})_{ij} b_j},
\end{equation}
and taking also the extra term into account, we obtain
\begin{equation}
\overline{Z^k} = e^{k N \log(2\pi)/2} \int DQ \, D\lambda \, e^{-N S(Q,\lambda)},
\end{equation}
as Eq.~\eqref{eq::pspin_zn_2}, with
\begin{multline}\label{eq:G_1RSB_full}
S(\mathbf{q},\boldsymbol{\lambda}) = -\frac{\beta^2}{4}\sum_{a, b=1}^k Q_{a b}^p -\frac{1}{2}\sum_{a, b = 1}^k \lambda_{a b} Q_{a b} +\frac{1}{2}\log \det \left(\lambda\right) - \frac{(\beta h)^2}{2}\sum_{a, b=1}^k \left(\lambda^{-1}\right)_{a b}\,.
\end{multline}
Derivation with respect to $\lambda_{a\beta}$ leads to the following saddle-point equations:
\begin{equation}
Q_{a b} - \left(\lambda^{-1}\right)_{a b} -(\beta h)^2 \sum_{\gamma,\delta=1}^k \left(\lambda^{-1}\right)_{\gamma a} \left(\lambda^{-1}\right)_{b \delta} = 0\,,
\label{eq:splambda}
\end{equation}
where we have used the identity
\begin{equation}
\frac{\partial \left(\bm{\lambda}^{-1}\right)_{\gamma \delta}}{\partial\lambda_{a b}} =- \left(\lambda^{-1}\right)_{\gamma a} \left(\lambda^{-1}\right)_{b \delta}\,.
\end{equation}
Equations \eqref{eq:splambda} are solved via successive contractions of the replica indices: a double summation over $a$, $b$ leads to an equation for the scalar $\sum_{a b}\left(\lambda^{-1}\right)_{a b}$ with solutions:
\begin{equation}
\sum_{a, b=1}^k\left(\lambda^{-1}\right)_{a b} = \frac{- 1\pm\sqrt{1+4(\beta h)^2 Q_s}}{2(\beta h)^2} \equiv l_{\pm}(Q_s)\,, \quad  Q_s=\sum_{a b}Q_{a b}\,.
\end{equation}
Similarly a single contraction gives (remember that $\lambda$, and so also its inverse, is symmetric):
\begin{equation}
\sum_{a}\left(\lambda^{-1}\right)_{a b} = \frac{\sum_{a} Q_{a b}}{1 + (\beta h)^2 l_{\pm}(Q_s)} = \frac{Q_r}{1+(\beta h)^2 l_{\pm}(Q_s)},
\end{equation}
where we defined $Q_r=\sum_{a} Q_{a b}$
and finally 
\begin{equation}
\left(\lambda^{-1}\right)_{a b}  = Q_{a b} - \frac{(\beta h)^2 \sum_{\gamma} Q_{\gamma a}\sum_{\delta} Q_{\delta b}}{\left[1 + (\beta h)^2 \, l_{\pm}(Q_s)\right]^2}\,.
\end{equation}
To find which of the signs of $l_\pm$ is the right one, we can consider the limit $k\to 0$ of this result. We obtain that
\begin{equation}
Q_s \to 0\,,\quad  Q_r \to 1 + (m - 1) q_1 - m q_0 \,, \quad l_\pm(Q_s) \to l_\pm(0) = \begin{cases}
- 1/(\beta h)^2 \,,\\
0\,.
\end{cases}
\end{equation}
The only finite limit is for $\hat{q}_-$, for which, in the $k\to0$ limit, we recover the result of \cite{Crisanti1992}.

Given the 1RSB ansatz Eq.~\eqref{eq::pspin_1rsb_q}, $Q_{ab}$ has $k$ elements 1 on the diagonal, $m (m-1) k/m$ elements $q_1$ in the diagonal blocks, the remaining $k^2 - k - k(m-1)$ elements $q_0$, so
\begin{equation}
Q_s = k + k (m - 1) q_1 + k(k - m) q_0
\end{equation}
Every row (column) contains the same elements, so
\begin{equation}
Q_r = 1 + (m - 1) q_1 + (k - m) q_0  \qquad \forall\, a\,.
\end{equation}
In this way, we arrive at the solution of the saddle-point equations for $\bm{\lambda}$:
\begin{equation}
\left({\lambda}^{-1}\right)_{a b}  = Q_{a b} - \hat{q}_-\,.
\end{equation}
where we have defined
\begin{equation}\label{eq::paperld_hatq}
\hat{q}_- = \frac{(\beta h)^2 Q_r^2}{\left[1 + (\beta h)^2 l_{-}(Q_s)\right]^2}\,.
\end{equation}
The structure of the matrix $\lambda^{-1}$ is therefore the same as the one of $Q$, with a constant added to each entry. Thus, the entries of ${\lambda}^{-1}$
can be written as
\begin{equation}
\lambda^{-1} = (1- q_1) \, \mathbb{I} + (q_1 - q_0) \, \mathbb{E} + (q_0 - \hat{q}_{-}) \,  \mathbb{C}.
\end{equation}
Thanks to this equation, we can compute one of the terms containing $\lambda$ in Eq.\eqref{eq:G_1RSB_full}:
\begin{equation}
	\sum_{ab} (\lambda^{-1})_{ab} = k (1- q_1) + k m  (q_1 - q_0) +  (q_0 - \hat{q}_{-}) k^2 = k (\eta_2 - k \hat{q}_{-}).
\end{equation}

Exploiting the fact that $\mathbb{E}/m$ and $\mathbb{C}/k$ are projectors, and using that $\mathbb{E} \cdot \mathbb{C} = \mathbb{C} \cdot \mathbb{E} = m \mathbb{C}$, one can check that the inverse of a matrix with the 1RSB structure is again a matrix with the same structure. In particular, we obtain:
\begin{equation}
\lambda = \frac{1}{\eta_0}  \, \mathbb{I} + \frac{q_0 - q_1}{\eta_0 \eta_1}  \, \mathbb{E} + \frac{\hat{q}_- - q_0}{\eta_1\left(\eta_2 - k \hat{q}_-\right)}   \,  \mathbb{C}
\end{equation}
where $\eta_0 = 1-q_1$, $\eta_1 = 1 - (1-m) q_1- m q_0$ and $\eta_2 = 1 - (1-m) q_1- (m-k) q_0 $ are the three different eigenvalues of $Q$.
Then, as we have done for $Q$, we can compute the eigenvalues of $\lambda$:
\begin{equation}
\begin{aligned}
\kappa_0 &= 1/\eta_0  &  & \text{deg.} = k(m-1)/m \,,\\
\kappa_1 &= 1/\eta_1 & & \text{deg.} = k/m-1 \,,\\
\kappa_2 &= 1/\left(\eta_2 - k \hat{q}_- \right) \qquad & & \text{deg.} = 1 \,.
\end{aligned}
\end{equation}
Using the eigenvalues, we can easily compute, as we did for the $h=0$ case, the term $\log\det(\lambda)$.

The last step is to evaluate the trace appearing in Eq.~\eqref{eq:G_1RSB_full}, by using again the properties of the matrices $\lambda$ and $Q$:
\begin{equation}
\Tr \left({\lambda} \cdot Q \right) = k \left(1 + \frac{\hat{q}_-}{ \eta_2 -  k \hat{q}_{-}} \right)\, .
\end{equation}
Now we have all the ingredients to write Eq.~\eqref{eq::paperld_g_h_non_0}, which has the expected limit for $k\to 0$.

\chapter{Supplemental material to Chapter \ref{chap::second}}
\chaptermark{Supplemental material to Chapter 3}
\section{Order statistics}\label{app::orderstat}
In this section we collect some techniques useful to perform averages on $N$ random points uniformly distributed on a segment of length $L$.
As first step, we notice that, if $f(x_1,\dots, x_N)$ is a symmetric function of its arguments
\begin{equation}
\begin{split}
	\int_0^L dx_1 \cdots & \int_0^L dx_N \, f(x_1,\dots, x_N) \\
	& = N! \int_{0}^{x_2} dx_1 \int_{x_1}^{x_3} dx_2 \cdots \int_{x_{N-2}}^{x_N} dx_{N-1}  \int_{x_{N-1}}^{L} dx_N \, f(x_1,\dots, x_N),
\end{split}
\end{equation}
because of the symmetry of $f$.
Therefore, we can use this result as follows
\begin{equation}
	\int_{0}^{x_2} dx_1 \cdots \int_{x_{k-2}}^{x_k} dx_{k-1} = \frac{1}{(k-1)!} \int_{0}^{x_{k}} dx_1 \cdots \int_{0}^{x_k} dx_{k-1} = \frac{x_k^{k-1}}{(k-1)!}
\end{equation}
and, similarly,
\begin{equation}
	\int_{x_\ell}^{x_{\ell+2}} dx_{\ell+1} \cdots \int_{x_{N-1}}^{L} dx_{N} = \frac{1}{(N-\ell)!} 	\int_{x_\ell}^{L} dx_{\ell+1} \cdots \int_{x_{N-1}}^{L} dx_{N} = \frac{(L - x_\ell)^{N-\ell}}{(N-\ell)!}
\end{equation}
or, more generally
\begin{equation}
	\int_{a}^{x_{m+1}} dx_m \cdots \int_{x_{m+t-1}}^{b} dx_{m+t} = \frac{(b-a)^{t+1}}{(t+1)!}.
\end{equation}
From this we can compute several useful quantities. For example, given $N$ points randomly chosen in the interval [0, 1] and labeled such that they are ordered, $\{x_1 \dots, x_N \}$, the probability that the $k$-th point is in the interval $[x, x+dx]$ is given by the probability that $k-1$ positions are smaller than $x_k$ and $N-k$ positions are larger than $x_k$, then
\begin{equation}\label{eq::ordstat_pk}
\begin{split}
	P_k(x) \, d x & = N! \int_0^{x_2} dx_1 \cdots \int_{x_{k-2}}^{x_k} dx_{k-1} \, d x \int_{x_k}^{x_{k+1}} dx_{k+1} \cdots \int_{x_{N-1}}^1 dx_{N}\\ 
	& = \frac{\Gamma(N+1)}{\Gamma(k) \, \Gamma(N-k+1)} \, x^{k-1} (1-x)^{N-k} d x.
\end{split}
\end{equation}
In a similar way, the probability that the $k$-th point is in $[x,x+dx]$ and the $\ell$-th is in $[y, y+dy]$ is ($k<\ell$ and so $x<y$)
\begin{equation}\label{eq::ordstat_pk_2}
\begin{split}
	P_{k,\ell}(x, y) \, d x \, d y & = N! \int_0^{x_2} dx_1 \cdots \int_{x_{k-2}}^{x_k} dx_{k-1} \, d x \int_{x_k}^{x_{k+1}} dx_{k+1} \cdots\\
	& \quad \cdots \int_{x_{\ell-2}}^{x_\ell} dx_{\ell-1} \, dy \int_{x_\ell}^{x_{\ell+2}} dx_{\ell+1} \cdots \int_{x_{N-1}}^1 dx_{N}\\
	& = \frac{\Gamma(N+1)}{\Gamma(k)\Gamma(\ell-k)\, \Gamma(N-\ell+1)} x^k (y-x)^{\ell - k - 1} (1-y)^{N-\ell}
\end{split}
\end{equation}

\section{Proofs for the traveling salesman problems}\label{app::tsp}

\subsection{Optimal cycle on the complete bipartite graph}\label{app::tsp_bip}
Consider the tour $\tilde{h}$ given by the Eqs.~\eqref{eq::tsp_bip_1d_even} for $N$ even and Eqs.~\eqref{eq::tsp_bip_1d_odd} for $N$ odd.
We will prove now that this cycle is optimal. To do this, we will suggest two moves that lower the energy of a tour and showing that the only Hamiltonian cycle that cannot be modified by these moves is $\tilde{h}$.

We shall make use of the following moves in the ensemble of Hamiltonian cycles.
Given $i, j\in [N]$ with $j>i$ we can partition each cycle as
\begin{equation}
h[(\sigma, \pi)] = (C_1 r_{\sigma(i)} b_{\pi(i)} C_2 b_{\pi(j)} r_{\sigma(j+1)} C_3),
\end{equation}
where the $C_i$ are open paths in the cycle, and we can define the operator $R_{ij}$ that exchanges two blue points $b_{\pi(i)}$ and $b_{\pi(j)}$ and reverses the path between them as
\begin{align}
\begin{split}
h[R_{ij} (\sigma, \pi)] &:=   (C_1 r_{\sigma(i)} [b_{\pi(i)} C_2 b_{\pi(j)}]^{-1} r_{\sigma(j+1)} C_3) \\
& = (C_1 r_{\sigma(i)} b_{\pi(j)} C_2^{-1} b_{\pi(i)} r_{\sigma(j+1)} C_3)  \,.
\end{split}
\end{align}
Analogously by writing
\begin{equation}
h[(\sigma, \pi)] = (C_1 b_{\pi(i-1)} r_{\sigma(i)}  C_2  r_{\sigma(j)} b_{\pi(j)} C_3)
\end{equation}
we can define the corresponding operator $S_{ij}$ that exchanges two red points $r_{\sigma(i)}$ and $r_{\sigma(j)}$ and reverses the path between them
\begin{align}
\begin{split}
h[S_{ij} (\sigma, \pi)] &:=   (C_1 b_{\pi(i-1)} [r_{\sigma(i)} C_2 r_{\sigma(j)}]^{-1} b_{\pi(j)} C_3) \\
& = (C_1  b_{\pi(i-1)} r_{\sigma(j)} C_2^{-1}  r_{\sigma(i)} b_{\pi(j)} C_3) \, .
\end{split}
\end{align}
Two couples of points $(r_{\sigma(k)}, r_{\sigma(l)})$ and $(b_{\pi(j)}, b_{\pi(i)})$ have the same orientation 
if $(r_{\sigma(k)} -r_{\sigma(l)})(b_{\pi(j)}-b_{\pi(i)}) >0$. Remark that as we have ordered both set of points this means also that $(\sigma(k), \sigma(l))$ and $(\pi(j), \pi(i))$ have the same orientation.

Then
\begin{lemma}\label{tsp_bip_lemma1}
	Let $E[(\sigma, \pi)]$ be the cost defined in Eq.~\eqref{eq::tsp_bip_costgen}. Then 
	$E[R_{ij}(\sigma, \pi)] - E[(\sigma, \pi)] >0$ if the couples $(r_{\sigma(j+1)}, r_{\sigma(i)})$ and $(b_{\pi(j)}, b_{\pi(i)})$ have the same orientation and
	$E[S_{ij}(\sigma, \pi)] - E[(\sigma, \pi)] >0$ if the couples $(r_{\sigma(j)}, r_{\sigma(i)})$ and $(b_{\pi(j)}, b_{\pi(i-1)})$ have the same orientation.
\end{lemma}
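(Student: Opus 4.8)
The plan is to reduce both moves to a single elementary four-point inequality expressing the optimality of the sorted pairing for a convex cost, and then to read off the sign of the cost variation directly from the orientation hypothesis.

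First I would pin down exactly which edges are modified by each move. Reversing the open path $C_2$ replaces it by $C_2^{-1}$, which contains the same unordered edges traversed backwards; since the weights $w_e=|r-b|^p$ are symmetric, the total contribution of the interior of $C_2$ is unchanged. Hence under $R_{ij}$ only the two boundary edges change: the pair $(r_{\sigma(i)},b_{\pi(i)})$, $(r_{\sigma(j+1)},b_{\pi(j)})$ is replaced by $(r_{\sigma(i)},b_{\pi(j)})$, $(r_{\sigma(j+1)},b_{\pi(i)})$, so that
\begin{equation}
E[R_{ij}(\sigma,\pi)]-E[(\sigma,\pi)] = |r_{\sigma(i)}-b_{\pi(j)}|^p+|r_{\sigma(j+1)}-b_{\pi(i)}|^p-|r_{\sigma(i)}-b_{\pi(i)}|^p-|r_{\sigma(j+1)}-b_{\pi(j)}|^p .
\end{equation}
An identical bookkeeping for $S_{ij}$ yields the analogous difference involving the two reds $r_{\sigma(i)},r_{\sigma(j)}$ and the two blues $b_{\pi(i-1)},b_{\pi(j)}$. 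In each case the right-hand side is the cost of the crossed pairing minus the cost of the parallel pairing of two reds with two blues.

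The analytic core is the four-point lemma: if $f(t)=|t|^p$ with $p>1$, then for $u_1\le u_2$ and $v_1\le v_2$
\begin{equation}
f(u_1-v_2)+f(u_2-v_1)\ge f(u_1-v_1)+f(u_2-v_2),
\end{equation}
with strict inequality when $u_1<u_2$ and $v_1<v_2$. I would prove this by setting $\delta=u_2-u_1\ge 0$, $\epsilon=v_2-v_1\ge 0$, $x=u_1-v_1$, and rewriting the claim as $h(x+\delta)\ge h(x)$ for $h(s)=f(s)-f(s-\epsilon)$. Since $f$ is convex, $f'$ is non-decreasing, so $h'(s)=f'(s)-f'(s-\epsilon)\ge 0$ and $h$ is non-decreasing; the strict convexity of $f$ for $p>1$ makes the inequality strict when $\delta,\epsilon>0$.

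Finally I would match the hypotheses to the sign. For $R_{ij}$ the assumption that $(r_{\sigma(j+1)},r_{\sigma(i)})$ and $(b_{\pi(j)},b_{\pi(i)})$ have the same orientation means $(r_{\sigma(j+1)}-r_{\sigma(i)})(b_{\pi(j)}-b_{\pi(i)})>0$; relabelling the two reds and two blues in increasing order, this says precisely that the \emph{pre-move} edges realise the parallel (sorted) pairing while the \emph{post-move} edges realise the crossed one. The four-point lemma then gives $E[R_{ij}(\sigma,\pi)]-E[(\sigma,\pi)]>0$, and the same argument with the roles of reds and blues interchanged settles $S_{ij}$. The only genuine obstacle is the careful edge bookkeeping — checking that the path reversal touches no interior edge and identifying the two replaced edges with the correct orientation — since the convexity estimate itself is routine. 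Once the moves are known to strictly lower the cost whenever the orientations are ``wrong'', one concludes the optimality proof by exhibiting $\tilde{h}$ as the unique cycle admitting no such cost-lowering move.
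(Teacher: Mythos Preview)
Your proposal is correct and follows essentially the same approach as the paper: identify the two boundary edges affected by the reversal, write the cost variation as a four-point matching difference, and conclude by the convexity of $t\mapsto|t|^p$. The only difference is that the paper cites \cite{McCannRobert1999,Caracciolo2014} for the four-point inequality, whereas you supply a short self-contained proof via the monotonicity of $h(s)=f(s)-f(s-\epsilon)$.
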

\begin{proof}
	\begin{multline}
	\begin{split}
	E[R_{ij}(\sigma, \pi)] - E[(\sigma, \pi)] &= w_{(r_{\sigma(i)}, b_{\pi(j)})}  + w_{(b_{\pi(i)}, r_{\sigma(j+1)})} \\
	&- w_{(r_{\sigma(i)}, b_{\pi(i)}) }- w_{(b_{\pi(j)}, r_{\sigma(j+1)})}
	\end{split}
	\end{multline}
	and this is the difference between two matchings which is positive if the couples $(r_{\sigma(j+1)}, r_{\sigma(i)})$ and $(b_{\pi(j)}, b_{\pi(i)})$ have the same orientation (as shown in~\cite{McCannRobert1999, Caracciolo2014} for a weight which is an increasing convex function of the Euclidean distance). 
	The remaining part of the proof is analogous.
\end{proof}

\begin{lemma}
	The only couples of permutations $(\sigma,\pi)$ with $\sigma(1)=1$ such that both $(\sigma(j+1), \sigma(i))$ have the same orientation  as $(\pi(j), \pi(i))$  and $(\pi(j), \pi(i-1))$ and $(\sigma(j), \sigma(i))$, for each $i, j\in [N]$ are $(\tilde{\sigma},\tilde{\pi})$ and its dual $(\tilde{\sigma},\tilde{\pi})^\star$.
\end{lemma}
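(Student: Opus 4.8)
The plan is to show that the two orientation requirements are so rigid that they force cyclically consecutive values of $\sigma$ to be nearly monotone, and then to enumerate the handful of permutations compatible with this rigidity.

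First I would translate each orientation condition into a purely numerical constraint by a counting argument (writing $[\,\cdot\,]$ for the indicator of a proposition). Fix $j$ and set $v=\sigma(j+1)$, $w=\pi(j)$. The first orientation condition says that for every $i\notin\{j,j+1\}$ one has $\sigma(i)<v \iff \pi(i)<w$; hence the sets $\{i\neq j,j+1:\sigma(i)<v\}$ and $\{i\neq j,j+1:\pi(i)<w\}$ coincide. Their cardinalities are $(v-1)-[\sigma(j)<\sigma(j+1)]$ and $(w-1)-[\pi(j+1)<\pi(j)]$ respectively, the corrections recording whether the excluded index $i=j$ (resp. $i=j+1$) lies below the threshold. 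Equating them yields $\sigma(j+1)-\pi(j)=[\sigma(j)<\sigma(j+1)]-[\pi(j+1)<\pi(j)]\in\{-1,0,1\}$. An identical computation applied to the second orientation condition gives $\sigma(j)-\pi(j)\in\{-1,0,1\}$. Thus $|\sigma(j+1)-\pi(j)|\le 1$ and $|\sigma(j)-\pi(j)|\le 1$ for every $j$, and therefore the crucial consequence $|\sigma(j+1)-\sigma(j)|\le 2$.

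Next I would classify all cyclic sequences $\sigma$ with $\sigma(1)=1$ whose cyclically consecutive values differ by at most $2$. Reading $\sigma$ as a Hamiltonian cycle on the graph $G$ with vertex set $[N]$ and an edge between $a,b$ whenever $|a-b|\le 2$, the vertex $1$ has degree exactly $2$ (its only neighbours are $2$ and $3$), so both edges $1$–$2$ and $1$–$3$ must belong to the cycle; the same forcing holds at vertex $N$. Propagating it (each newly reached vertex has a single admissible continuation that does not close a short cycle prematurely) shows that the cycle must ascend through the odd values and descend through the even ones, i.e.\ it is exactly the criss-cross pattern encoded by $\tilde{\sigma}$. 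Since $\sigma(1)=1$ is fixed, the only remaining freedom is the direction of traversal, giving precisely two sequences: $\tilde{\sigma}$ and its reversal, the latter being the one entering the dual pair through Eq.~\eqref{eq::tsp_bip_d1}. It then remains to recover $\pi$: the bounds $|\sigma(j)-\pi(j)|\le1$ and $|\sigma(j+1)-\pi(j)|\le1$ confine $\pi(j)$ to within distance $1$ of both $\sigma(j)$ and $\sigma(j+1)$, so wherever $|\sigma(j+1)-\sigma(j)|=2$ the value $\pi(j)$ is forced to their midpoint, which for the criss-cross $\sigma$ already fixes $\pi$ at all but the ``turning'' positions; the residual adjacent ambiguities are removed by feeding the candidates back into the full orientation conditions, after which $\pi$ is identified with $\tilde{\pi}=\tilde{\sigma}\circ I$ (and with $\tilde{\pi}\circ I$ for the reversed $\sigma$). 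This exhibits the two admissible couples as $(\tilde{\sigma},\tilde{\pi})$ and $(\tilde{\sigma},\tilde{\pi})^\star$.

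The main obstacle will be the careful bookkeeping at the boundary indices. In the counting step one must track exactly when $i=j$ or $i=j+1$ contributes to each threshold set, and in the final step one must rule out the ``near-criss-cross'' permutations that satisfy the difference bound $|\sigma(j+1)-\sigma(j)|\le2$ yet violate the sharper orientation conditions, while simultaneously handling the parity of $N$ (the turning point differs for $N$ even and odd, cf.~Eqs.~\eqref{eq::tsp_bip_1d_even}--\eqref{eq::tsp_bip_1d_odd}). Everything else is a direct, if slightly tedious, verification.
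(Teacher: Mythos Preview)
Your proposal is correct and takes a genuinely different route from the paper's proof. The paper argues directly by induction on the partial cycle: starting from $r_{\sigma(1)}=r_1$, it shows $\pi(N)=1$ (or dually $\pi(1)=1$) by locating $b_1$ in the cycle and exhibiting an opposite-orientation pair if $\pi(N)>1$; it then assumes the cycle has the form $(r_1 b_2 r_3\cdots x_k\, C\, y_k\cdots b_3 r_2 b_1)$ and uses the orientation conditions once more to force $y_{k+1}$ to sit immediately right of $x_k$ and $x_{k+1}$ immediately left of $y_k$, extending the criss-cross one layer at a time until $C$ is empty.

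Your approach instead first compresses both orientation conditions into the purely numerical bounds $|\sigma(j+1)-\pi(j)|\le 1$ and $|\sigma(j)-\pi(j)|\le 1$ via the counting identity, then deduces $|\sigma(j+1)-\sigma(j)|\le 2$ and reduces the determination of $\sigma$ to classifying Hamiltonian cycles in the graph on $[N]$ with edges $\{a,b\}$ for $|a-b|\le 2$; finally $\pi$ is recovered from the two distance-one bounds. This separation is conceptually clean and yields the pleasant byproduct that consecutive $\sigma$-values differ by at most $2$, a fact the paper's argument never isolates. The paper's induction, on the other hand, is self-contained and avoids the auxiliary graph-theoretic lemma. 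Two minor remarks on your write-up: the residual ambiguity in $\pi$ at the turning positions is in fact already resolved by the requirement that $\pi$ be a permutation (the values forced elsewhere use up all but one candidate), so you do not strictly need to revisit the orientation conditions there; and your propagation argument for the uniqueness of the Hamiltonian cycle in the step-$2$ graph is most cleanly phrased as an induction on $N$ after peeling off the degree-two endpoint $1$ (reducing to a unique Hamiltonian path problem of the same shape on $N-1$ vertices), which avoids the informal ``single admissible continuation'' language.
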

\begin{proof}
	We have to start  our Hamiltonian cycle from $r_{\sigma(1)} = r_1$.  
	Next we look at $\pi(N)$,  if we assume now that $\pi(N)>1$, there will be  a $j$ such that our cycle would have the form $(r_1 C_1 r_{\sigma(j)} b_1 C_2   b_{\pi(N)})$, if we assume $j>1$ then $(1, \sigma(j))$ and $(\pi(N),1)$ have opposite orientation, so that  necessarily $\pi(N)=1$. In the case $j=1$  our Hamiltonian cycle is of the form $(r_1 b_1 C)$, that is $(b_1 C r_1)$, and this is exactly of the other form if we exchange red and blue points.
	We assume that it is of the form $(r_1 C b_1)$; the other form would give, at the end of the proof, $(\tilde{\sigma},\tilde{\pi})^\star$.
	\\
	Now we shall proceed by induction. Assume that our Hamiltonian cycle is of the form $(r_1 b_2 r_3 \cdots x_k C y_k \cdots b_3 r_2 b_1)$ with $k<N$, where $x_k$ and $y_k$ are, respectively, a red point and a blue point when $k$ is odd and  viceversa when $k$ is even. 
	Then $y_{k+1}$ and $x_{k+1}$ must be in the walk $C$. 
	If $y_{k+1}$ it is not the point on the right of $x_k$ the cycle has the form 
	$(r_1 b_2 r_3 \cdots x_k y_s C_1 y_{k+1} x_l \cdots y_k \cdots b_3 r_2 b_1)$ 
	but then $(x_l , x_k)$ and $(y_{k+1}, y_s)$ have opposite orientation, 
	which is impossible, so that 
	$s=k+1$, that is the point on the right of $x_k$. Where is $x_{k+1}$? If it is not the point on the left of $y_k$ the cycle has the form $(r_1 b_2 r_3 \cdots x_k y_{k+1} \cdots y_l x_{k+1} C_1 x_s \cdots y_k \cdots b_3 r_2 b_1)$, but then $(x_s, x_{k+1})$ and $(y_k, y_l)$ have opposite orientation, which is impossible, so that $s =k+1$, that is the point on the left of $y_k$. We have now shown that the cycle has the form  $(r_1 b_2 r_3 \cdots y_{k+1} C x_{k+1} \cdots b_3 r_2 b_1)$ and can proceed until $C$ is empty.
\end{proof} 

Now that we have understood what is the optimal Hamiltonian cycle, we can look in more details at what are the two matchings which enter in the decomposition we used in Eq.~\eqref{eq::tsp_bip_dec}.
As $\tilde{\pi} = \tilde{\sigma} \circ I$ we have that
\begin{equation}
I = \tilde{\sigma}^{-1} \circ \tilde{\pi} = \tilde{\pi}^{-1} \circ \tilde{\sigma}.
\end{equation}
As a consequence both permutations associated to the matchings appearing inEq.~\eqref{eq::tsp_bip_dec} for the optimal Hamiltonian cycle are involutions:
\begin{subequations}
	\begin{align}
	\begin{split}\label{eq::tsp_bip_m1}
	\tilde{\mu}_1 \equiv \tilde{\pi} \circ \tilde{\sigma}^{-1} & = \tilde{\sigma} \circ I \circ \tilde{\sigma}^{-1} = \tilde{\sigma} \circ  \tilde{\pi}^{-1} \\
	& = \left[ \tilde{\pi} \circ \tilde{\sigma}^{-1}\right]^{-1}
	\end{split}
	\\
	\begin{split}\label{eq::tsp_bip_m2}
	\tilde{\mu}_2 \equiv \tilde{\pi} \circ \tau^{-1} \circ \tilde{\sigma}^{-1} & = \tilde{\sigma} \circ I \circ \tau^{-1} \circ I \circ \tilde{\pi}^{-1} \\ 
	& = \left[\tilde{\pi} \circ \tau^{-1} \circ \tilde{\sigma}^{-1} \right]^{-1}  ,
	\end{split}
	\end{align}
\end{subequations}
where we used Eq.~\eqref{eq::tsp_bip_Itau}.
This implies that those two permutations have at most cycles of period two, a fact which reflects a symmetry by exchange of red and blue points.

When $N$ is odd it happens that
\begin{equation}
I \circ \tilde{\sigma} \circ I = \tilde{\sigma} \circ \tau^{-\frac{N-1}{2}},
\end{equation}
so that
\begin{align}
\begin{split}
I \circ \tilde{\pi} \circ I & = I \circ \tilde{\sigma} \circ I \circ I = \tilde{\sigma} \circ \tau^{-\frac{N-1}{2}} \circ I\\ 
& = \tilde{\pi} \circ I \circ \tau^{-\frac{N-1}{2}} \circ I = \tilde{\pi} \circ \tau^{\frac{N-1}{2}} \,.
\end{split}
\end{align}
It follows that the two permutations in Eq.~\eqref{eq::tsp_bip_m1} and Eq.~\eqref{eq::tsp_bip_m2} are conjugate by $I$
\begin{equation}
I \circ \tilde{\pi} \circ \tau^{-1} \circ \tilde{\sigma}^{-1} \circ I  =  \tilde{\pi}\circ \tau^{\frac{N-1}{2}} \circ \tau \circ \tau^{\frac{N-1}{2}} \circ \tilde{\sigma}^{-1} = \tilde{\pi} \circ \tilde{\sigma}^{-1}
\end{equation}
so that, in this case, they have exactly the same numbers of cycles of order 2.
Indeed we have
\begin{subequations}
	\begin{align}
	\tilde{\mu}_1 = 
	& \, (2,1,4,3,6, \dots , N-1,N-2,N)\\
	\tilde{\mu}_2 = 
	&  \, (1,3,2,5,4, \dots N, N-1)
	\end{align} 
\end{subequations}
and they have $\frac{N-1}{2}$ cycles of order 2 and 1 fixed point. See Fig.~\ref{N5}  for the case $N=5$.

\begin{figure}
	\begin{center}
		\begin{tikzpicture}[scale=0.6]
		\node[draw,circle,inner sep=1.5pt,fill=black,text=white,label=above:{\footnotesize $r_1$}] (r1) at (1,3) {};
		\node[draw,circle,inner sep=1.5pt,fill=black,text=white,label=above:{\footnotesize $r_2$}] (r2) at (3,3) {};
		\node[draw,circle,inner sep=1.5pt,fill=black,text=white,label=above:{\footnotesize $r_3$}] (r3) at (5,3) {};
		\node[draw,circle,inner sep=1.5pt,fill=black,text=white,label=above:{\footnotesize $r_4$}] (r4) at (7,3) {};
		\node[draw,circle,inner sep=1.5pt,fill=black,text=white,label=above:{\footnotesize $r_5$}] (r5) at (9,3) {};
		\node[draw,circle,inner sep=1.5pt,fill=white,text=white,label=below:{\footnotesize $b_1$}] (b1) at (1,0) {};
		\node[draw,circle,inner sep=1.5pt,fill=white,text=white,label=below:{\footnotesize $b_2$}] (b2) at (3,0) {};
		\node[draw,circle,inner sep=1.5pt,fill=white,text=white,label=below:{\footnotesize $b_3$}] (b3) at (5,0) {};
		\node[draw,circle,inner sep=1.5pt,fill=white,text=white,label=below:{\footnotesize $b_4$}] (b4) at (7,0) {};
		\node[draw,circle,inner sep=1.5pt,fill=white,text=white,label=below:{\footnotesize $b_5$}] (b5) at (9,0) {};
		\draw[line width=1pt,gray]  (r2) to (b3);
		\draw[line width=1pt,gray]  (b2) to (r3);
		\draw[line width=1pt,gray]  (r5) to (b4);
		\draw[line width=1pt,gray]  (r4) to (b5);
		\draw[line width=1pt,gray] (r1) to[in=110, out=-110] (b1);
		\end{tikzpicture}
		\qquad\qquad
		\begin{tikzpicture}[scale=0.6]
		\node[draw,circle,inner sep=1.5pt,fill=black,text=white,label=above:{\footnotesize $r_1$}] (r1) at (1,3) {};
		\node[draw,circle,inner sep=1.5pt,fill=black,text=white,label=above:{\footnotesize $r_2$}] (r2) at (3,3) {};
		\node[draw,circle,inner sep=1.5pt,fill=black,text=white,label=above:{\footnotesize $r_3$}] (r3) at (5,3) {};
		\node[draw,circle,inner sep=1.5pt,fill=black,text=white,label=above:{\footnotesize $r_4$}] (r4) at (7,3) {};
		\node[draw,circle,inner sep=1.5pt,fill=black,text=white,label=above:{\footnotesize $r_5$}] (r5) at (9,3) {};
		\node[draw,circle,inner sep=1.5pt,fill=white,text=white,label=below:{\footnotesize $b_1$}] (b1) at (1,0) {};
		\node[draw,circle,inner sep=1.5pt,fill=white,text=white,label=below:{\footnotesize $b_2$}] (b2) at (3,0) {};
		\node[draw,circle,inner sep=1.5pt,fill=white,text=white,label=below:{\footnotesize $b_3$}] (b3) at (5,0) {};
		\node[draw,circle,inner sep=1.5pt,fill=white,text=white,label=below:{\footnotesize $b_4$}] (b4) at (7,0) {};
		\node[draw,circle,inner sep=1.5pt,fill=white,text=white,label=below:{\footnotesize $b_5$}] (b5) at (9,0) {};
		\draw[line width=1pt,gray]  (r1) to (b2);
		\draw[line width=1pt,gray]  (r3) to (b4);
		\draw[line width=1pt,gray]  (b1) to (r2);
		\draw[line width=1pt,gray]  (b3) to (r4);
		\draw[line width=1pt,gray] (r5) to[in=70, out=-70] (b5);
		\end{tikzpicture}
	\end{center}
	\caption{Decomposition of the optimal Hamiltonian cycle $\tilde{h}$ for $N=5$ in two disjoint matchings $\tilde{\mu}_2$ and $\tilde{\mu}_1$.} \label{N5}
\end{figure}
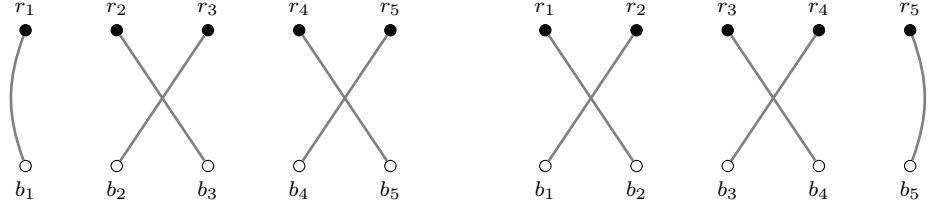

In the case of even $N$ the two permutations have not the same number of cycles of order 2, indeed one has no fixed point and the other has two of them. More explicitly
\begin{subequations}
	\begin{align}
	\tilde{\mu}_1 = 
	& \, (2,1,4,3,6, \dots ,N,N-1)\\
	\tilde{\mu}_2 = 
	&  \, (1,3,2,5,4, \dots N-1,N-2, N)
	\end{align} 
\end{subequations}
See Fig.~\ref{N4} for the case $N=4$.
\begin{figure}
	\begin{center}
		\begin{tikzpicture}[scale=0.6]
		\node[draw,circle,inner sep=1.5pt,fill=black,text=white,label=above:{\footnotesize $r_1$}] (r1) at (1,3) {};
		\node[draw,circle,inner sep=1.5pt,fill=black,text=white,label=above:{\footnotesize $r_2$}] (r2) at (3,3) {};
		\node[draw,circle,inner sep=1.5pt,fill=black,text=white,label=above:{\footnotesize $r_3$}] (r3) at (5,3) {};
		\node[draw,circle,inner sep=1.5pt,fill=black,text=white,label=above:{\footnotesize $r_4$}] (r4) at (7,3) {};
		\node[draw,circle,inner sep=1.5pt,fill=white,text=white,label=below:{\footnotesize $b_1$}] (b1) at (1,0) {};
		\node[draw,circle,inner sep=1.5pt,fill=white,text=white,label=below:{\footnotesize $b_2$}] (b2) at (3,0) {};
		\node[draw,circle,inner sep=1.5pt,fill=white,text=white,label=below:{\footnotesize $b_3$}] (b3) at (5,0) {};
		\node[draw,circle,inner sep=1.5pt,fill=white,text=white,label=below:{\footnotesize $b_4$}] (b4) at (7,0) {};
		\draw[line width=1pt,gray]  (r2) to (b3);
		\draw[line width=1pt,gray]  (b2) to (r3);
		\draw[line width=1pt,gray] (r1) to[in=110, out=-110] (b1);
		\draw[line width=1pt,gray] (r4) to[in=70, out=-70] (b4);
		\end{tikzpicture}
		\qquad\qquad
		\begin{tikzpicture}[scale=0.6]
		\node[draw,circle,inner sep=1.5pt,fill=black,text=white,label=above:{\footnotesize $r_1$}] (r1) at (1,3) {};
		\node[draw,circle,inner sep=1.5pt,fill=black,text=white,label=above:{\footnotesize $r_2$}] (r2) at (3,3) {};
		\node[draw,circle,inner sep=1.5pt,fill=black,text=white,label=above:{\footnotesize $r_3$}] (r3) at (5,3) {};
		\node[draw,circle,inner sep=1.5pt,fill=black,text=white,label=above:{\footnotesize $r_4$}] (r4) at (7,3) {};
		\node[draw,circle,inner sep=1.5pt,fill=white,text=white,label=below:{\footnotesize $b_1$}] (b1) at (1,0) {};
		\node[draw,circle,inner sep=1.5pt,fill=white,text=white,label=below:{\footnotesize $b_2$}] (b2) at (3,0) {};
		\node[draw,circle,inner sep=1.5pt,fill=white,text=white,label=below:{\footnotesize $b_3$}] (b3) at (5,0) {};
		\node[draw,circle,inner sep=1.5pt,fill=white,text=white,label=below:{\footnotesize $b_4$}] (b4) at (7,0) {};
		\draw[line width=1pt,gray]  (r1) to (b2);
		\draw[line width=1pt,gray]  (r3) to (b4);
		\draw[line width=1pt,gray]  (b1) to (r2);
		\draw[line width=1pt,gray]  (b3) to (r4);
		\end{tikzpicture}
	\end{center}
	\caption{Decomposition of the optimal Hamiltonian cycle $\tilde{h}$ for $N=4$ in he two disjoint matchings $\tilde{\mu}_2$ and $\tilde{\mu}_1$.}\label{N4}
\end{figure}

\subsection{Optimal cycle on the complete graph: proofs}\label{app::tsp_mono}
\subsubsection{Proof of the optimal cycle for $p>1$} \label{App::tsp_p>1}
Consider a $\sigma \in \mathcal{S}_N$ with $\sigma(1) =1$. Taking $\sigma(1) =1$ corresponds to the irrelevant choice of the starting point of the cycle.
Let us introduce now a new set of ordered points $\mathcal{B}:=\{ b_j\}_{j=1,\dots,N}\subset [0,1]$
such that
\begin{equation}
b_i = 
\begin{cases}
	r_1 & \hbox{for \, } i =1\\
	r_{i-1} & \hbox{otherwise } 
\end{cases}
\end{equation}
and consider the Hamiltonian cycle on the complete bipartite graph with vertex sets $\mathcal{R}$ and $\mathcal{B}$
\begin{equation}
\begin{split}
	& h[(\sigma, \pi_\sigma)] := (r_1, b_{\pi_\sigma(1)}, r_{\sigma(2)}, b_{\pi_\sigma(2)},\dots, r_{\sigma(N)}, b_{\pi_\sigma(N)}, r_{\sigma(1)})
\end{split}
\end{equation}
so that 
\begin{equation}\label{eq::app_mono_tsp_pi_s}
\pi_\sigma(i) = 
\begin{cases}
	2 & \hbox{for \, } i =1\\
	\sigma(i)+1 & \hbox{for \, } i < k\\
	\sigma(i+1)+1 & \hbox{for \, } i \geq k\\
	1 & \hbox{for \, } i = N\\
\end{cases}
\end{equation}
where $k$ is such that $\sigma(k) = N$.
We have therefore
\begin{equation}\label{eq::app_tsp_mono_b}
\begin{split}
	& (b_{\pi_\sigma(1)}, b_{\pi_\sigma(2)}, \dots, b_{\pi_\sigma(k-1)}, b_{\pi_\sigma(k)}, \dots, b_{\pi_\sigma(N-1)}, b_{\pi_\sigma(N)}) \\
	& = (r_1, r_{\sigma(2)}, \dots, r_{\sigma(k-1)},r_{\sigma(k+1)} ,\dots , r_{\sigma(N)},r_1).
\end{split}
\end{equation}
In other words we are introducing a set of blue points such that we can find a bipartite Hamiltonian tour which only use link available in our ``monopartite'' problem and has the same cost of $\sigma$. Therefore, by construction (using Eq.~\eqref{eq::app_tsp_mono_b}):
\begin{equation}
\begin{split}
	E_N(h[\sigma]) & = E_N(h[(\sigma, \pi_\sigma)] ) \geq E_N(h[(\tilde{\sigma}, \tilde{\pi})]) \\
	& = E_N(h[(\tilde{\sigma}, \pi_{\tilde{\sigma}})]) = E_N(h[\tilde{\sigma}]),
\end{split}
\end{equation}
where the fact that $\tilde{\pi}=\pi_{\tilde{\sigma}}$ can be checked using Eqs.~\eqref{eq::tsp_bip_sigmatilde} and \eqref{eq::tsp_bip_pitilde} and \eqref{eq::app_mono_tsp_pi_s}.

\subsubsection{Proof of the optimal cycle for $0<p<1$} \label{App::0<p<1}
As first step, we enunciate and demonstrate two lemmas that will be useful for the proof. The first one will help us in understand how to remove two crossing arcs without breaking the TSP cycle into multiple ones. The second one, instead will prove that removing a crossing between two arcs will always lower the total number of crossing in the TSP cycle.
\begin{lemma}
	\label{Lemma::tsp_mono_0<p<1}
	Given an Hamiltonian cycle with its edges drawn as arcs in the upper half-plane, let us consider two of the arcs that cannot be drawn without crossing each other. Then, this crossing can be removed only in one way without splitting the original cycle into two disjoint cycles; moreover, this new configuration has a lower cost than the original one.   
\end{lemma}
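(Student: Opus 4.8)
The plan is to treat the two assertions separately: first the \emph{topological} fact that among the ways to resolve the crossing only one keeps a single cycle, and then the \emph{metric} fact that this resolution strictly lowers the cost.

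For the first part I would label the two crossing arcs as $e_1 = (u_1,v_1)$ and $e_2 = (u_2,v_2)$ and delete them from the Hamiltonian cycle. A cycle minus two edges is a disjoint union of two paths $P_1$ and $P_2$; the key bookkeeping step is to observe that, travelling once around the original cycle in the order $\dots\, u_1\, v_1\, \dots\, u_2\, v_2\, \dots$, the deletion leaves $P_1$ with endpoints $\{v_1,u_2\}$ and $P_2$ with endpoints $\{v_2,u_1\}$, so that the two endpoints of each removed edge lie on \emph{different} paths. There are exactly three ways to pair the four loose ends $u_1,v_1,u_2,v_2$ with two new edges: the pairing $\{u_1,v_1\},\{u_2,v_2\}$ restores the original (crossing) tour; the pairing $\{v_1,u_2\},\{u_1,v_2\}$ closes $P_1$ and $P_2$ each into its own loop and hence yields two disjoint cycles; and the remaining pairing $\{u_1,u_2\},\{v_1,v_2\}$ joins an endpoint of $P_1$ to an endpoint of $P_2$ in both new edges, producing a single Hamiltonian cycle. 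This shows the cycle-preserving uncrossing is unique. To connect with the geometric picture I would relabel the four endpoints by increasing position as $r_i,r_k,r_j,r_\ell$ with $i<k<j<\ell$ (the crossing condition): then the crossing matching is $\{r_i,r_j\},\{r_k,r_\ell\}$, while the two surviving pairings are the \emph{disjoint} one $\{r_i,r_k\},\{r_j,r_\ell\}$ and the \emph{nested} one $\{r_i,r_\ell\},\{r_k,r_j\}$, both of which are drawn without crossing.

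For the second part I would show that, for $0<p<1$, the crossing matching is strictly more expensive than \emph{either} non-crossing matching, so that whichever of the two the topological step selects the cost must drop. Writing the four positions as $a<b<c<d$, the comparison with the disjoint matching is immediate and termwise, since $(c-a)^p>(b-a)^p$ and $(d-b)^p>(d-c)^p$. For the comparison with the nested matching I would use the monotonicity of $x\mapsto (x-a)^p-(x-b)^p$, which is decreasing for $x>b$ because $p-1<0$; evaluating at $x=c$ and $x=d$ with $c<d$ gives $(c-a)^p-(c-b)^p>(d-a)^p-(d-b)^p$, i.e.\ $(c-a)^p+(d-b)^p>(d-a)^p+(c-b)^p$, which is exactly crossing $>$ nested. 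Equivalently, one may simply invoke the matching inequality of \cite{McCannRobert1999,Boniolo2014,Caracciolo2017} already used in the main text, that a crossing matching costs more than a non-crossing one when $0<p<1$. Since all edges of the tour other than $e_1,e_2$ are untouched, the total cost decreases by exactly this (strictly positive) matching gain.

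The part I expect to be most delicate is the endpoint bookkeeping in the first step: one must keep careful track of which loose ends belong to which path in order to certify that precisely one of the two non-crossing reconnections is a single cycle while the other splits the tour. The cost inequalities, by contrast, are elementary once the two non-crossing candidates have been identified, and the fact that they hold for \emph{both} candidates is what makes the conclusion robust to the orientation in which the tour is traversed.
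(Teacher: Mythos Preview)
Your proposal is correct and arrives at the same conclusion as the paper, but via a genuinely different route in both halves of the argument.

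For the topological half, the paper orients the cycle and does a two-case analysis on the relative direction of traversal of the second crossing edge, writing the cycle as $(C_1 r_1 r_3 C_2 r_2 r_4 C_3)$ or $(C_1 r_1 r_3 C_2 r_4 r_2 C_3)$ and checking in each case which of the two non-crossing rewirings splits the cycle and which does not; this yields an explicit orientation rule (``join the two couples of lines with opposite orientation''). You instead delete both edges, reduce to two paths $P_1,P_2$ with identified endpoint sets, and classify the three pairings once and for all. Your argument is case-free and cleaner; the paper's gives slightly more information, namely which of the two geometric rewirings (disjoint or nested) is selected in each orientation.

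For the metric half, the paper simply invokes~\cite{Boniolo2014} for the statement that a crossing matching costs more than a non-crossing one when $0<p<1$. You supply elementary direct proofs that the crossing matching is strictly more expensive than \emph{both} the disjoint and the nested non-crossing matchings. This extra strength is exactly what lets you avoid determining which rewiring the topology picks: since both are cheaper, you need not know. The monotonicity step for the nested comparison (via the sign of $p-1$) is correct and is indeed the heart of that inequality.

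Both approaches are valid; yours is more self-contained, the paper's more explicit about which rewiring occurs.
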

\begin{proof}
	Let us consider a generic oriented Hamiltonian cycle and let us suppose it contains a matching as in figure:
	\begin{figure}[h!]
		\centering
		\includegraphics[width=0.4\columnwidth]{./second/figures/CrossingsProof.pdf}
	\end{figure}

	\noindent There are two possible orientations for the matching that correspond to this two oriented Hamiltonian cycles: 
	
	\begin{enumerate}
		\item $(C_1r_1r_3C_2r_2r_4C_3)\,,$
		\item $(C_1r_1r_3C_2r_4r_2C_3)\,,$
	\end{enumerate}
	where $C_1$, $C_2$ and $C_3$ are paths (possibly visiting other points of our set).
	The other possibilities are the dual of this two, and thus they are equivalent. In both cases, a priori, there are two choices to replace this crossing matching $(r_1, r_3)$, $(r_2,r_4)$ with a non-crossing one: $(r_1, r_2)$, $(r_3, r_4)$ or $(r_1, r_4)$, $(r_2, r_3)$. We now show, for the two possible prototypes of Hamiltonian cycles, which is the right choice for the non-crossing matching, giving a general rule. Let us consider case 1: here, if we replace the crossing matching with $(r_1, r_4)$, $(r_2, r_3)$, the cycle will split; in fact we would have two cycles:  $(C_1r_1r_4C_3)$ and $(r_3C_2r_2)$. Instead, if we use the other non-crossing matching, we would have: $(C_1r_1r_2[C_2]^{-1}r_3r_4C_3)$. This way we have removed the crossing without splitting the cycle. Let us consider now case 2: in this situation, using $(r_1, r_4)$, $(r_2, r_3)$ as the new matching, we would have: $(C_1r_1r_4[C_2]^{-1}r_3r_2C_3)$; the other matching, on the contrary, gives: $(C_1r_1r_2C_3)$ and $(r_3C_2r_4)$.
	
	The general rule is the following: given the oriented matching, consider the four oriented lines going inward and outward the node. Then, the right choice for the non-crossing matching is obtained joining the two couples of lines with opposite orientation.
	
	Since the difference between the cost of the original cycle and the new one simply consists in the difference between a crossing matching and a non-crossing one,  this is positive when $0<p<1$, as shown in~\cite{Boniolo2014}.
\end{proof}

Now we deal with the second point: given an Hamiltonian cycle, in general it is not obvious that replacing non-crossing arcs with a crossing one, the total number of intersections increases. Indeed there could be the chance that one or more crossings are removed in the operation of substituting the matching we are interested in. 
Notice that two arcs forms a matching of 4 points. Therefore, from now on, we will use expressions like ``crossing matching'' (``non-crossing matching'') and ``two crossing arcs'' (``two non-crossing arcs'') indifferently.
We now show that it holds the following

\begin{lemma}
	\label{Lemma::tsp_mono_intersections}
	Given an Hamiltonian cycle with a matching that is non-crossing, if it is replaced by a crossing one, the total number of intersections always increases. Vice versa, if a crossing matching is replaced by a non-crossing one, the total number of crossings always decreases.  
\end{lemma}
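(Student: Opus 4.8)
The plan is to reduce the statement to a finite, purely combinatorial comparison. First I would relabel the four points carrying the swapped matching as $1<2<3<4$ and recall that on these four points there are exactly three perfect matchings: the two non-crossing ones, $\{(1,2),(3,4)\}$ and $\{(1,4),(2,3)\}$, and the crossing one, $\{(1,3),(2,4)\}$. Writing the total number of crossings of the whole Hamiltonian cycle as the sum of three contributions --- the crossing internal to the two swapped arcs, the ``mixed'' crossings between the two swapped arcs and every other edge of the cycle, and the crossings among all the remaining edges --- only the first two can change, while the third is manifestly invariant under the swap. Since the internal contribution equals $0$ for either non-crossing matching and $1$ for the crossing one, it suffices to prove that the mixed contribution does not decrease when we pass from a non-crossing matching to the crossing one.

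Next I would localize the mixed contribution edge by edge. For a generic other edge $e=(u,v)$ of the cycle, $e$ crosses a given arc $a$ of the matching if and only if exactly one of $u,v$ lies strictly between the endpoints of $a$; hence the number of matching arcs crossed by $e$ equals $\sum_a \lvert c_a(u)-c_a(v)\rvert$, where $c_a(x)=1$ when $x$ lies strictly between the endpoints of $a$ and $0$ otherwise. The function $c_a$ is constant on each of the five gaps $g_0,\dots,g_4$ determined by the points $1,2,3,4$, so this sum is just the Hamming distance between the two-bit ``coverage vectors'' attached to the gaps containing $u$ and $v$. I would then tabulate these coverage vectors for each of the three matchings and verify, over the ten unordered pairs of gaps, the two pointwise inequalities $d_{M_C}\ge d_{M_A}$ and $d_{M_C}\ge d_{M_B}$, where $M_A,M_B$ denote the non-crossing matchings and $M_C$ the crossing one. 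This is a short finite check, and it shows that for every such $e$ the number of crossings with the matching cannot drop; summing over all edges $e$ gives the desired monotonicity of the mixed contribution.

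The remaining case, which I expect to be the main technical nuisance, is when an endpoint (or both endpoints) of $e$ coincides with one of the four special points: then $e$ shares an endpoint with the matching arc incident there and so never crosses it, which breaks the clean gap-coverage bookkeeping. I would dispose of this by a direct inspection: fixing the shared point $m\in\{1,2,3,4\}$ and letting the other endpoint range over the gaps, one checks in each case that the single possible crossing, namely with the non-incident matching arc, can only be created and never destroyed in passing to the crossing matching; the left--right reflection exchanging $1\leftrightarrow 4$ and $2\leftrightarrow 3$ halves the work and fixes both $M_A$ and $M_C$. Once monotonicity of the mixed contribution is established in all cases, the conclusion follows at once: the total number of crossings increases by the strict amount $1$ coming from the internal term, which proves the first assertion, and reading the same computation in reverse yields the companion statement that replacing a crossing matching by a non-crossing one strictly decreases the number of crossings.
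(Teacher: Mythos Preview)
Your argument is correct and is, at bottom, the same finite case analysis as the paper's proof: both amount to checking, for every possible topological position of a third cycle edge relative to the four matched points, that the number of crossings it makes with the crossing matching is at least the number it makes with either non-crossing one. The paper carries this out pictorially, placing the points on a circle and drawing all configurations in a single figure, whereas you make it explicit and figure-free via the decomposition into internal/mixed/external crossings and the gap coverage vectors with their Hamming distances. Your version is more self-contained (no reliance on a picture) and makes the strict increase transparent through the $+1$ from the internal term; the paper's version is shorter but leaves the verification to the reader's inspection of the figure.
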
 
\begin{proof}
	This is a topological property we will prove for cases. To best visualize crossings,
	we change the graphical way we use to represent the complete graph that underlies the problem: now the nodes are organized along a circle, in such a way that they are ordered clockwise (or, equivalently, anti-clockwise) according to the natural ordering given by the positions on the segment $[0,1]$. Links between points here are represented as straight lines. It is easy to see that a crossing as defined in Sec.~\ref{sec::tsp_mono_1d} corresponds to, in this picture, a crossing of lines.
	All the possibilities are displayed in Fig.~\ref{Fig::app_tsp_mono_Crossings}, where we have represented with red lines the edges involved in the matching, while the other lines span all the possible topological configurations.
	
	\begin{figure*}[ht]
		\centering
		\includegraphics[width=1\columnwidth]{./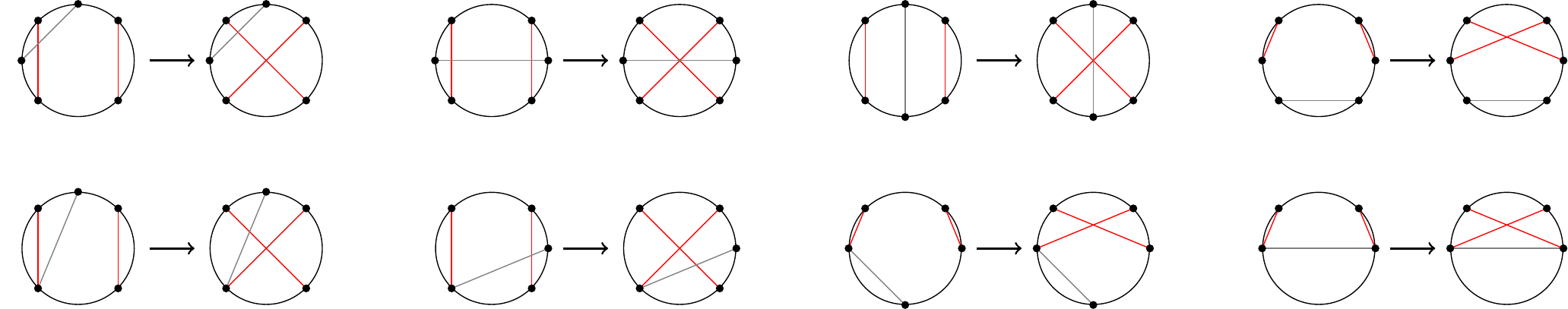}
		\caption{Replacing a non-crossing matching with a crossing one in an Hamiltonian cycle always increase the number of crossings. Here we list all the possible topological configurations one can have.}
		\label{Fig::app_tsp_mono_Crossings}
	\end{figure*}
\end{proof}

Now we can prove that the cycle $h^*$ given in Eq.~\eqref{eq::tsp_mono_0<p<1} is the optimal one:
\begin{proof}
	Consider a generic Hamiltonian cycle and draw the connections between the points in the upper half-plane. Suppose to have an Hamiltonian cycle where there are, let us say, $n$ intersections between edges. Thanks to Lemma~\ref{Lemma::tsp_mono_0<p<1}, we can swap two crossing arcs with a non-crossing one without splitting the Hamiltonian cycle. As shown in Lemma~\ref{Lemma::tsp_mono_intersections}, this operation lowers always the total number of crossings between the edges, and the cost of the new cycle is smaller than the cost of the starting one. Iterating this procedure, it follows that one can find a cycle with no crossings. 
	Now we prove that there are no other cycles out of $h^*$ and its dual with no crossings. This can be easily seen, since $h^*$ is the only cycle that visits all the points, starting from the first, in order. This means that all the other cycles do not visit the points in order and, thus, they have a crossing, due to the fact that the point that is not visited in a first time, must be visited next, creating a crossing.
\end{proof}

\subsubsection{Proof of the optimal cycle for $p<0$, odd $N$} \label{App::p<0 odd N}
To complete the proof given in the main text, we need to discuss two points. Firstly, we address which is the correct move that swap a non-crossing matching with a crossing one; thanks to Lemma~\ref{Lemma::tsp_mono_intersections}, by performing such a move one always increases the total number of crossings. Secondly we prove that there is only one Hamiltonian cycle to which this move cannot be applied (and so it is the optimal solution).

We start with the first point: consider an Hamiltonian cycle with a matching that is non-crossing, then the possible situations are the following two:
\begin{figure}[h!]
	\centering
	\includegraphics[width=0.4\columnwidth]{./second/figures/NonCrossingsProof1.pdf} 
\end{figure}
\begin{figure}[h]
	\centering
	\includegraphics[width=0.4\columnwidth]{./second/figures/NonCrossingsProof2.pdf}
\end{figure}

\noindent For the first case there are two possible independent orientations:

\begin{enumerate}
	\item $(r_1r_4C_2r_2r_3C_3)\,,$
	\item $(r_1r_4C_2r_3r_2C_3)\,.$
\end{enumerate}

If we try to cross the matchings in the first cycle, we obtain $(r_1r_3C_3)(r_2[C_2]^{-1}r_4)$, and this is not anymore an Hamiltonian cycle. On the other hand, in the second cycle, the non-crossing matching can be replaced by a crossing one without breaking the cycle: $(r_1r_3[C_2]^{-1}r_4r_2C_3)$. For the second case the possible orientations are:
\begin{enumerate}
	\item $(r_1r_2C_2r_4r_3C_3)\,,$
	\item $(r_1r_2C_2r_3r_4C_3)\,.$
\end{enumerate}
By means of the same procedure used in the first case, one finds that the non-crossing matching in the second cycle can be replaced by a crossing one without splitting the cycle, while in the first case the cycle is divided by this operation.\\

The last step is the proof that the Hamiltonian cycle given in Eq.~\eqref{eq::tsp_mono_p<0_odd} has the maximum number of crossings.

Let us consider an Hamiltonian cycle $h[\sigma] = \left( r_{\sigma(1)}, \dots, r_{\sigma(N)} \right)$ on the complete graph $\mathcal{K}_N$. We now want to evaluate what is the maximum number of crossings an edge can have depending on the permutation $\sigma$. Consider the edge connecting two vertices $r_{\sigma(i)}$ and $r_{\sigma(i+1)}$: obviously both the edges $(r_{\sigma(i-1)}, r_{\sigma(i)})$ and $(r_{\sigma(i+1)}, r_{\sigma(i+2)})$ share a common vertex with ($r_{\sigma(i)}, r_{\sigma(i+1)}$), therefore they can never cross it. So, if we have $N$ vertices, each edge has $N-3$ other edges that can cross it. Let us denote with $\mathcal{N}\left[\sigma(i)\right]$ the number of edges that cross the edge  $(r_{\sigma(i)}, r_{\sigma(i+1)})$ and let us define the sets:
\begin{equation}
A_j:=
\begin{cases}
	\{r_k\}_{k=\sigma(i)+1\ (\mathrm{mod}\ N),\dots,\sigma(i+1)-1(\mathrm{mod}\ N)}  & \hbox{for \, } j=1\\
	\{r_k\}_{k=\sigma(i+1)+1(\mathrm{mod}\ N),\dots,\sigma(i)-1(\mathrm{mod}\ N)}  & \hbox{for \, } j=2\\	 
\end{cases}
\label{Sets}
\end{equation}
These two sets contain the points between $r_{\sigma(i)}$ and $r_{\sigma(i+1)}$. In particular, the maximum number of crossings an edge can have is given by:
\begin{equation}	
\max(\mathcal{N}\left[\sigma(i)\right])=
\begin{cases}
	2\min_{j}|A_j|  & \hbox{for \, } |A_1|\not = |A_2|\\
	2|A_1| - 1  & \hbox{for \, } |A_1| = |A_2|\\	
\end{cases}\label{max}
\end{equation}
This is easily seen, since the maximum number of crossings an edge can have is obtained when all the points belonging to the smaller between $A_1$ and $A_2$ contributes with two crossings. This cannot happen when the cardinality of $A_1$ and $A_2$ is the same because at least one of the edges departing from the nodes in $A_1$ for example, must be connected to one of the ends of the edge $(r_{\sigma(i)}, r_{\sigma(i+1)})$, in order to have an Hamiltonian cycle. Note that this case, i.e. $|A_1| = |A_2|$ can happen only if $N$ is even.

Consider the particular case such that $\sigma(i)=a$ and $\sigma(i+1)=a+\frac{N-1}{2}\pmod{N}$ or $\sigma(i+1)=a+\frac{N+1}{2}\pmod{N}$. Then \eqref{max} in this cases is exactly equal to $N-3$, which means that the edges $(r_a, r_{a+\frac{N-1}{2}\pmod{N}})$ and $(r_a, r_{a+\frac{N+1}{2}\pmod{N}})$ can have the maximum number of crossings if the right configuration is chosen.\\
Moreover, if there is a cycle such that every edge has $N-3$ crossings, such a cycle is unique, because the only way of obtaining it is connecting the vertex $r_a$ with $r_{a+\frac{N-1}{2}\pmod{N}}$ and $r_{a+\frac{N+1}{2}\pmod{N}}, \forall a$.\\

\subsection{Optimal TSP and 2-factor for $p<0$ and $N$ even}\label{app::tsp_2f_mono}
We start considering here the 2-factor problem (see Sec.~\ref{sec::2factor} for a definition) for $p<0$ in the even-$N$ case. We will use the shape of its solution to prove that one among the cycles given in Eq.~\eqref{eq::tsp_mono_p<0_even} is the solution of the TSP. 

In the following we will say that, given a permutation $\sigma \in \mathcal{S}_{N}$, the edge $(r_{\sigma(i)}, r_{\sigma(i+1)})$ has length $L \in \mathbb{N}$ if: 
\begin{equation}
L=\mathcal{L}(i):=\min_j|A_j(i)| 
\end{equation}
where $A_j(i)$ was defined in Eq.~\eqref{Sets}.

\subsubsection{$N$ is a multiple of 4}
Let us consider the sequence of points $\mathcal{R} = \{r_i\}_{i=1,\dots,N}$ of $N$ points, with $N$ a multiple of 4, in the interval $[0,1]$, with $r_1 \le \dots \le r_N$, consider the permutations $\sigma_j$, $j=1, 2$ defined by the following cyclic decomposition:
\begin{subequations}\label{N|4}
	\begin{equation}
	\begin{split}
	\sigma_1 & = (r_1, r_{\frac{N}{2}+1}, r_2, r_{\frac{N}{2}+2}) \dots (r_a, r_{a+\frac{N}{2}}, r_{a+1}, r_{a+\frac{N}{2}+1}) \dots(r_{\frac{N}{2}-1}, r_{N-1}, r_{\frac{N}{2}}, r_{N}) \\
	\end{split}
	\end{equation}
	\begin{equation}
	\begin{split}
	\sigma_2 & = (r_1, r_{\frac{N}{2}+1}, r_N, r_{\frac{N}{2}})\dots(r_a, r_{a+\frac{N}{2}}, r_{a-1}, r_{a+\frac{N}{2}-1}) \dots(r_{\frac{N}{2}-1}, r_{N-1}, r_{\frac{N}{2}-2}, r_{N-2})
	\end{split}
	\end{equation}
\end{subequations}
for integer $a = 1, \dots,  \frac{N}{2}-1$. Defined $h^*_1:=h[\sigma_1]$ and $h^*_2:=h[\sigma_2]$, it holds the following:

\begin{pros}\label{b1}
	$h^*_1$ and $h^*_2$ are the 2-factors that contain the maximum number of crossings between the arcs.
\end{pros}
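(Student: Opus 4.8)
The plan is to exploit, exactly as in the $0<p<1$ and odd-$N$ cases, the equivalence between minimizing the cost and a purely combinatorial optimization over crossings: since for $p<0$ the cost of a non-crossing matching always exceeds that of the crossing one~\cite{Caracciolo2017}, the minimum-cost 2-factor is precisely the 2-factor maximizing the total number of crossings, so it suffices to establish the stated claim about crossings. Throughout I would draw the edges as arcs (equivalently, as chords of a circle carrying the points in their natural order, as in the proof of Lemma~\ref{Lemma::tsp_mono_intersections}), so that a crossing is a topological interleaving of two arcs, and I would measure each edge by its length $\mathcal{L}(i)=\min_j|A_j(i)|$ via the sets of \eqref{Sets}.

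The engine of the argument is an exchange move together with its controlled effect on the crossing count. Given a 2-factor and two of its disjoint edges $(r_i,r_j)$ and $(r_k,r_l)$, replacing them by another pairing of the same four endpoints preserves the degree-2 condition and hence again yields a 2-factor, possibly with a different number of loops — which, in contrast with the Hamiltonian-cycle case, is harmless here. By Lemma~\ref{Lemma::tsp_mono_intersections} such a move strictly increases the total number of crossings whenever it turns a non-crossing pair into a crossing one. Consequently any crossing-maximal 2-factor must be \emph{stable}, in the sense that no disjoint non-crossing pair of its edges can be recrossed by an admissible swap. The proposition then reduces to the combinatorial statement that the only stable 2-factors are $h_1^*$ and $h_2^*$.

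To characterize the stable configurations I would proceed in three steps. First, using the per-edge bound \eqref{max} on the maximal number of crossings an edge of a given length can carry, I would argue that in a stable configuration every loop uses only edges of near-maximal length $\mathcal{L}(i)\in\{N/2-1,\,N/2-2\}$; this forces the loops to be built on consecutive antipodal quadruples $\{r_a,r_{a+1},r_{a+N/2},r_{a+N/2+1}\}$ and, together with stability against the recrossing move, rules out loops of length $\neq 4$ and identifies each loop with one of the crossing $4$-cycles on such a quadruple. Second, I would regard the $N/2$ antipodal pairs $\{r_a,r_{a+N/2}\}$ as the sites of a ring of length $N/2$; a stable 2-factor then corresponds to a tiling of this ring by ``dominoes'', each gluing two consecutive antipodal pairs into a $4$-loop. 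Since $N/2$ is even (as $4\mid N$), such a ring admits exactly two perfect domino tilings, which are precisely the groupings realized by $\sigma_1$ and $\sigma_2$ in \eqref{N|4}. Third, I would verify directly from \eqref{N|4} that both $h_1^*$ and $h_2^*$ saturate the resulting upper bound on the total crossing number and are related by a global shift by one antipodal pair.

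The main obstacle is the first step of the characterization: proving rigorously that maximality forbids long loops and off-antipodal edges. The per-edge bound \eqref{max} is not additive under the 2-factor constraint — an edge attaining $N-3$ crossings forces shorter edges elsewhere, because of the parity gap $2|A_1|-1$ arising when $|A_1|=|A_2|$ — so a naive summation over edges overcounts and cannot close the argument by itself. I expect the clean route is instead to show that whenever a configuration contains a loop of length $\geq 6$, or a $4$-loop not built on a consecutive antipodal quadruple, one can exhibit an explicit crossing-increasing exchange of the type controlled by Lemma~\ref{Lemma::tsp_mono_intersections} (possibly after first reversing an arc as in Lemma~\ref{Lemma::tsp_mono_0<p<1}), contradicting stability. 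Enumerating the topological configurations for this case analysis — the even-$N$, 2-factor analogue of Fig.~\ref{Fig::app_tsp_mono_Crossings} — is where the bulk of the work lies.
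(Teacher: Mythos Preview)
Your approach via stability under crossing-increasing swaps is a genuinely different route from the paper's. The paper does not argue through local moves at all: instead it bounds the per-edge crossing count via \eqref{max}, observes that the $N/2$ ``diameters'' $(r_a,r_{a+N/2})$ are the only edges attaining the maximum $N-3$, and asserts that a crossing-maximal 2-factor must therefore use all $N/2$ diameters together with $N/2$ edges of length $N/2-2$. Uniqueness is then obtained by a short \emph{cascade} argument: once the diameters are in place and one near-diameter, say $(r_1,r_{N/2})$, is fixed, the requirement that every remaining edge also have length $N/2-2$ forces each subsequent edge in turn, and the wrong initial choice is shown to terminate in an edge of length zero. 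The two admissible initial choices produce exactly $h_1^*$ and $h_2^*$. This sidesteps entirely the case analysis you anticipate in your Step~1, and it never invokes Lemma~\ref{Lemma::tsp_mono_intersections}; your domino-tiling picture on the ring of $N/2$ antipodal pairs is essentially the same content as the paper's cascade, repackaged.

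There is also a subtlety in your exchange-move framework that you pass over and that the paper's argument avoids. The swap of a disjoint non-crossing pair to the crossing pairing need not yield a simple 2-factor: if the crossing pair already appears among the existing edges, the swap creates a multi-edge. This is not a pathology --- it occurs in $h_1^*$ itself, where in each 4-cycle $(r_a,r_{a+N/2},r_{a+1},r_{a+N/2+1})$ the two near-diameters $(r_a,r_{a+N/2+1})$ and $(r_{a+1},r_{a+N/2})$ are nested (hence non-crossing), and their crossing swap is precisely the pair of diameters $(r_a,r_{a+N/2}),(r_{a+1},r_{a+N/2+1})$ already present. So your notion of ``stable'' must be stated as ``no non-crossing disjoint pair admits a crossing swap \emph{that stays simple}'', and the characterization of stable 2-factors then has to rule out other configurations with similarly blocked swaps. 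This makes your Step~1 harder than it first appears; the paper's edge-length-profile argument bypasses the issue.
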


\begin{proof}
	
	An edge can be involved, at most, in $N-3$ crossing matchings. In the even N case, this number is achieved by the edges of the form $(r_a, r_{a+\frac{N}{2}\pmod{N}})$, i.e. by the edges of length $\frac{N}{2}-1$. There can be at most $\frac{N}{2}$ edges of this form in a 2-factor. Thus, in order to maximize the number of crossings, the other $\frac{N}{2}$ edges must be of the form $(r_a, r_{a+\frac{N}{2}+1\pmod{N}})$ or $(r_a, r_{a+\frac{N}{2}-1\pmod{N}})$, i.e. of length $\frac{N}{2}-2$. It is immediate to verify that both $h^*_1$ and $h^*_2$ have this property; we have to prove they are the only ones with this property.\\
	Consider, then, to have already fixed the $\frac{N}{2}$ edges $(r_a, r_{a+\frac{N}{2}\pmod{N}}),   \forall a\in [N]$. Suppose to have fixed also the edge $(r_1, r_{\frac{N}{2}})$ (the other chance is to fix the edge $(r_1, r_{\frac{N}{2}+2})$: this brings to the other 2-factor). Consider now the point $r_{\frac{N}{2}+1}$: suppose it is not connected to the point $r_N$, but to the point $r_2$, i.e., it has a different edge from the cycle $h^*_2$. We now show that this implies it is not possible to construct all the remaining edges of length $\frac{N}{2}-2$. Consider, indeed, of having fixed the edges $(r_1, r_{\frac{N}{2}})$ and $(r_2, r_{\frac{N}{2}+1})$ and focus on the vertex $r_{\frac{N}{2}+2}$: in order to have an edge of length $\frac{N}{2}-2$, this vertex must be connected either with $r_1$ or with $r_3$, but $r_1$ already has two edges, thus, necessarily, there must be the edge $(r_{\frac{N}{2}+2}, r_3)$. By the same reasoning, there must be the edges $(r_{\frac{N}{2}+3}, r_4)$, $(r_{\frac{N}{2}+4}, r_5), \dots, (r_{N-2}, r_{\frac{N}{2}-1})$. Proceeding this way, we have constructed $N-1$ edges; the remaining one is uniquely determined, and it is $(r_{N-1}, r_{N})$, which has null length.\\
	Therefore the edge $(r_2, r_{\frac{N}{2}+1})$ cannot be present in the optimal 2-factor and so, necessarily, there is the edge $(r_{\frac{N}{2}+1}, r_N)$; this creates the cycle $(r_1, r_{\frac{N}{2}}, r_N, r_{\frac{N}{2}+1})$. Proceeding the same way on the set of the remaining vertices $\{r_2, r_3,\dots,r_{\frac{N}{2}-1}, r_{\frac{N}{2}+2},\dots, r_{N-1}\}$, one finds that the only way of obtaining $\frac{N}{2}$ edges of length $\frac{N}{2}-1$ and $\frac{N}{2}$ edges of length $\frac{N}{2}-2$ is generating the loop coverings of the graph $h^*_1$ or $h^*_2$.
\end{proof}

Proposition \ref{b1}, together with the fact that the optimal 2-factor has the maximum number of crossing matchings, guarantees that the optimal 2-factor is either $h^*_1$ or $h^*_2$.

\begin{figure*}[ht]
	\begin{subfigure}[t]{0.49\linewidth}
		\centering
		\includegraphics[width=1\columnwidth]{./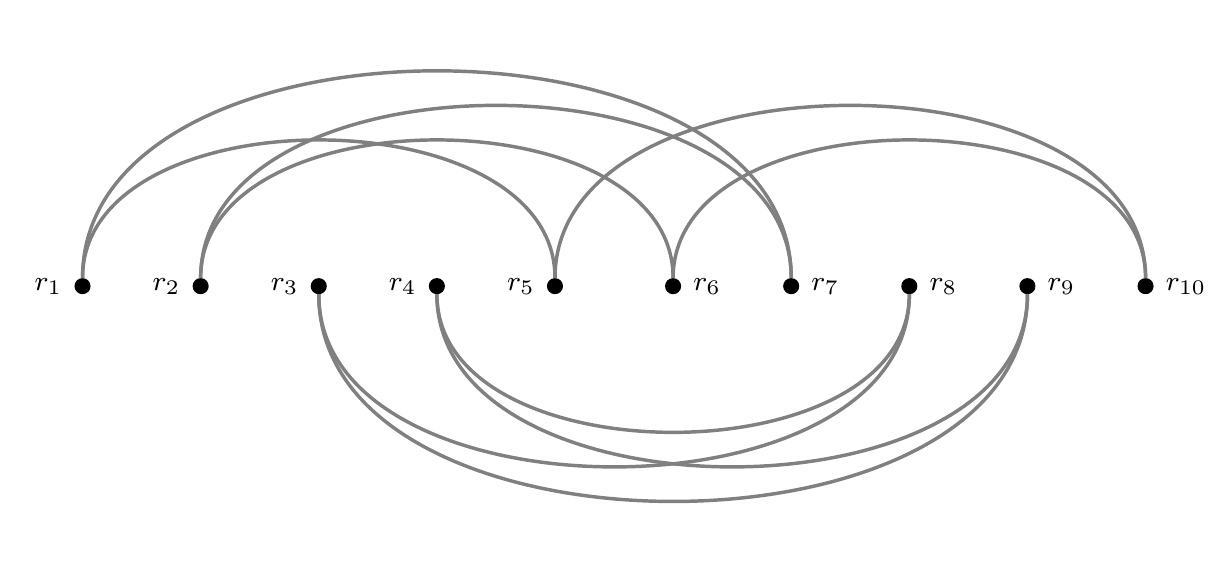}
		\caption{One of the optimal 2-factor solutions for $N=10$ and $p<0$; the others are obtainable cyclically permuting this configuration}
		\label{Fig::2factor_N10_line}
	\end{subfigure} \hfill
	\begin{subfigure}[t]{0.49\linewidth}
		\centering
		\includegraphics[width=0.5\columnwidth]{./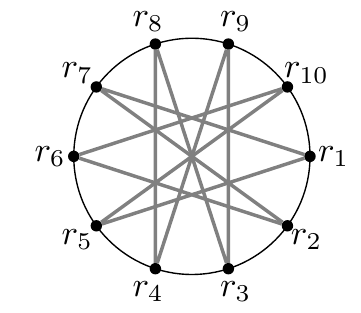}
		\caption{The same optimal 2-factor solution, but represented on a circle, where the symmetries of the solutions are more easily seen} \label{Fig::2factor_N10_circle}
	\end{subfigure}
	\caption{}
\end{figure*}

\subsubsection{$N$ is not a multiple of 4}
Let us consider the usual sequence $\mathcal{R} = \{r_i\}_{i=1,\dots,N}$ of $N$ points, with even $N$ but not a multiple of 4, in the interval $[0,1]$, with $r_1 \le \dots \le r_N$, consider the permutation $\pi$ defined by the following cyclic decomposition:
\begin{equation}
\begin{split}\label{Nnot|4}
\pi & =(r_1, r_{\frac{N}{2}}, r_N, r_{\frac{N}{2}+1}, r_2, r_{\frac{N}{2}+2})(r_3, r_{\frac{N}{2}+3}, r_4, r_{\frac{N}{2}+4}) \dots(r_{\frac{N}{2}-2}, r_{N-1}, r_{\frac{N}{2}-1}, r_{N-2})
\end{split}
\end{equation}
Defined 
\begin{equation}
\pi_k(i):=\pi(i)+k \ (\mathrm{mod}\ N), \; k\in [0, N-1] \label{permut}
\end{equation}
and
\begin{equation}
h^*_k:=h[\pi_k]
\end{equation}
the following proposition holds:
\begin{pros}\label{b2}
	$h^*_k$ are the 2-factors that contain the maximum number of crossings between the arcs.
\end{pros}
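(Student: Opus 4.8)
The plan is to run the same scheme used for Proposition~\ref{b1}, inserting one genuinely new ingredient—a parity obstruction caused by $N/2$ being odd. First I would recall that for $p<0$ the cost difference between a non-crossing and a crossing matching is positive, so that, exactly as in Lemma~\ref{Lemma::tsp_mono_intersections} and in the odd-$N$ analysis, minimizing the cost is equivalent to maximizing the total number of crossings; it therefore suffices to characterize the $2$-factors of $\mathcal{K}_N$ with the maximal crossing number. By Eq.~\eqref{max} every edge carries at most $N-3$ crossings, with equality exactly for the \emph{diameters} $(r_a,r_{a+N/2})$ (indices modulo $N$) of length $N/2-1$, whereas an edge of length $N/2-2$ carries at most $N-4$. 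Since each vertex has a unique antipode, the diameters form a partial matching, so a $2$-factor can use at most $N/2$ of them.

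The key new step is the parity count. Writing $d$ for the number of diameters and $T$ for the total crossing number, the remaining $N-d$ edges have length $\le N/2-2$, whence
\begin{equation*}
2T=\sum_{i}\mathcal{N}[\sigma(i)]\le d(N-3)+(N-d)(N-4).
\end{equation*}
For $d=N/2$ the right-hand side is $\frac{N}{2}(2N-7)$, which is odd (a product of the odd integers $N/2$ and $2N-7$), while $2T$ is even; for $d=N/2-1$ it equals $\frac{N}{2}(2N-7)-1$, which is even. Hence in every case $2T\le\frac{N}{2}(2N-7)-1$. I would then check that the cycles of Eq.~\eqref{Nnot|4} attain this bound: the $6$-cycle $(r_1,r_{N/2},r_N,r_{N/2+1},r_2,r_{N/2+2})$ contributes two diameters and four length-$(N/2-2)$ edges, and each of the $(N-6)/4$ quadrilaterals contributes two diameters and two such edges, giving $d=N/2-1$ and, by a direct crossing count, $2T=\frac{N}{2}(2N-7)-1$. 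Thus each rotation $h^*_k=h[\pi_k]$ of Eq.~\eqref{permut} is a maximizer.

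It remains to prove these are the \emph{only} maximizers, and this is where the bulk of the work—and the main obstacle—lies, because the parity count fixes the \emph{value} of the optimum but not the configuration. In particular one must exclude the configurations that use all $N/2$ diameters ($d=N/2$): although they share the same per-edge bound, one has to show they cannot actually reach $2T=\frac{N}{2}(2N-7)-1$—equivalently, that a second perfect matching made entirely of length-$(N/2-2)$ edges cannot sit alongside the full diameter matching with every edge simultaneously attaining $N-4$ crossings. For the $d=N/2-1$ family I would then argue as in Proposition~\ref{b1} by forcing: fixing the unique antipodal pair left un-joined by a diameter, say $(r_1,r_{N/2+1})$, maximality forces its two endpoints onto the neighboring length-$(N/2-2)$ edges, which reconstructs the $6$-cycle of Eq.~\eqref{Nnot|4}, and iterating the same step on the remaining points rebuilds the quadrilaterals with no residual freedom, leaving only the cyclic shifts $h^*_k$. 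Carrying out this crossing bookkeeping rigorously on the circle representation introduced for Lemma~\ref{Lemma::tsp_mono_intersections}, and especially ruling out the $d=N/2$ family, is the principal difficulty; the parity inequality only supplies the value against which these configurations are tested.
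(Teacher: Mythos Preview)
Your parity observation—that $(N/2)(2N-7)$ is odd when $N\equiv 2\pmod 4$, so $2T$ cannot attain it—is a clean way to pin down the upper bound $2T\le(N/2)(2N-7)-1$. The paper reaches the same number by a direct case comparison (keep all $N/2$ diameters versus drop one), so for the \emph{value} of the optimum your route and the paper's are equivalent, yours arguably more elegant. Your forcing sketch for $d=N/2-1$ is also exactly what the paper does: the one omitted antipodal pair forces the $6$-cycle, and the remaining $N-6$ vertices (now a multiple of $4$) fall into quadrilaterals by the reasoning of Proposition~\ref{b1}.

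The gap you flag in the $d=N/2$ case, however, is not the obstacle you make it out to be, and your framing of it is slightly off. You describe it as a crossing-attainment question—whether a length-$(N/2-2)$ perfect matching can sit alongside the full diameter matching ``with every edge simultaneously attaining $N-4$ crossings''—but no crossing bookkeeping is needed at all. The paper disposes of $d=N/2$ purely structurally: if all $N/2$ diameters are present and one tries to complete with $N/2$ edges of length $N/2-2$, the forcing argument of Proposition~\ref{b1} shows the resulting $2$-factor would have to be a union of $4$-cycles, which requires $4\mid N$. Since $4\nmid N$, at least one non-diameter edge must have length $\le N/2-3$, contributing at most $N-6$ crossings, and therefore
\[
2T\le \tfrac{N}{2}(N-3)+\bigl(\tfrac{N}{2}-1\bigr)(N-4)+(N-6)=\tfrac{N}{2}(2N-7)-2,
\]
strictly below the optimum. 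So $d=N/2$ is eliminated outright; your parity bound for that case is correct but not sharp, and the ``principal difficulty'' dissolves once you invoke the structural impossibility rather than chase crossings.
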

\begin{proof}
	Also in this case the observations done in the proof of Proposition~\ref{b1} holds. Thus, in order to maximize the number of crossing matchings, one considers, as in the previous case, the $\frac{N}{2}$ edges of length $\frac{N}{2}-1$, i.e. of the form $(r_a, r_{a+\frac{N}{2}\pmod{N}})$, and then tries to construct the remaining $\frac{N}{2}$ edges of length $\frac{N}{2}-2$, likewise the previous case. Again, if one fixes the edge $(r_1, r_{\frac{N}{2}})$, the edge $(r_2, r_{\frac{N}{2}+1})$ cannot be present, by the same reasoning done in the proof of Proposition B.1. The fact that, in this case, $N$ is not a multiple of 4 makes it impossible to have a 2-factor formed by 4-vertices loops, as in the previous case. The first consequence is that, given $\frac{N}{2}$ edges of length $\frac{N}{2}-1$, it is not possible to have $\frac{N}{2}$ edges of length $\frac{N}{2}-2$. In order to find the maximum-crossing solution, one has the following options:
	\begin{itemize}
		\item to take a 2-factor with $\frac{N}{2}$ edges of length $\frac{N}{2}-1$, $\frac{N}{2}-1$ edges of length $\frac{N}{2}-2$ and one edge of length $\frac{N}{2}-2$: in this case the theoretical maximum number of crossing matchings is $\frac{N(N-3)}{2}+(\frac{N}{2}-1)(N-4)+N-6=N^2-\frac{7N}{2}-2$;
		\item to take a 2-factor with $\frac{N}{2}-1$ edges of length $\frac{N}{2}-1$, $\frac{N}{2}+1$ edges of length $\frac{N}{2}-2$: in this case the theoretical maximum number of crossing matchings is $(\frac{N}{2}-1)(N-3)+(\frac{N}{2}+1)(N-4)=N^2-\frac{7N}{2}-1$.
	\end{itemize}
	Clearly the second option is better, at least in principle, than the first one. The cycles $h^*_k$ belong to the second case and saturate the number of crossing matchings. Suppose, then, to be in this case. Let us fix the $\frac{N}{2}-1$ edges of length $\frac{N}{2}-1$; this operation leaves two vertices without any edge, and this vertices are of the form $r_a$, $r_{a+\frac{N}{2}\pmod{N}}, a\in[1,N]$ (this is the motivation for the degeneracy of solutions). By the reasoning done above, the edges that link this vertices must be of length $\frac{N}{2}-2$, and so they are uniquely determined. They form the 6-points loop $(r_a$, $r_{a-1+\frac{N}{2}\pmod{N}}$, $r_{N-1+a\pmod{N}}$, $r_{a+\frac{N}{2}\pmod{N}}$, $r_{a+1\pmod{N}}$, $r_{a+1+\frac{N}{2}\pmod{N}})$. The remaining $N-6$ points, since $4|(N-6)$, by the same reasoning done in the proof of Proposition~\ref{b1}, necessarily form the $\frac{N-6}{4}$ 4-points loops given by the permutations~\eqref{permut}.
\end{proof}

Proposition \ref{b2}, together with the fact that the optimal 2-factor has the maximum number of crossing matchings, guarantees that the optimal 2-factor is such that $h^*\in \{h^*_k\}_{k=1}^N$. 

\subsubsection{Proof of the optimal cycles for $p<0$, odd $N$}
\begin{proof}
	Let us begin from the permutations that define the optimal solutions for the 2-factor, that is those given in Eqs.~\ref{N|4} if is $N$ a multiple of 4 and in Eq.~\ref{Nnot|4} otherwise.
	In both cases, the optimal solution is formed only by edges of length $\frac{N}{2}-1$ and of length $\frac{N}{2}-2$. Since the optimal 2-factor is not a TSP, in order to obtain an Hamiltonian cycle from the 2-factor solution, couples of crossing edges need to became non-crossing, where one of the two edges belongs to one loop of the covering and the other to another loop. Now we show that the optimal way of joining the loops is replacing two edges of length $\frac{N}{2}-1$ with other two of length $\frac{N}{2}-2$.
	Let us consider two adjacent 4-vertices loops, i.e. two loops of the form:
	\begin{equation}
	(r_a, r_{a+\frac{N}{2}}, r_{a+1}, r_{a+\frac{N}{2}+1}), (r_{a+2}, r_{a+2+\frac{N}{2}}, r_{a+3}, r_{a+\frac{N}{2}+3})
	\end{equation}
	and let us analyze the possible cases:
	\begin{enumerate}
		\item to remove two edges of length $\frac{N}{2}-2$, that can be replaced in two ways:
		\begin{itemize}
			\item either with an edge of length $\frac{N}{2}-2$ and one of length $\frac{N}{2}-4$; in this case the maximum number of crossings decreases by 4;
			\item or with two edges of length $\frac{N}{2}-3$; also in this situation the maximum number of crossings decreases by 4.
		\end{itemize}
		\item to remove one edge of length $\frac{N}{2}-2$ and one of length $\frac{N}{2}-1$, and also this operation can be done in two ways:
		\begin{itemize}
			\item either with an edge of length $\frac{N}{2}-2$ and one of length $\frac{N}{2}-3$; in this case the maximum number of crossings decreases by 3;
			\item or with an edge of length $\frac{N}{2}-3$ and one of length $\frac{N}{2}-4$; in this situation the maximum number of crossings decreases by 7.
		\end{itemize}
		\item the last chance is to remove two edges of length $\frac{N}{2}-1$, and also this can be done in two ways:
		\begin{itemize}
			\item either with two edges of length $\frac{N}{2}-3$; here the maximum number of crossings decreases by 6;
			\item or with two edges of length $\frac{N}{2}-2$; in this situation the maximum number of crossings decreases by 2. This happens when we substitute two adjacent edges of length $\frac{N}{2}-1$, that is, edges of the form $(r_a,r_{\frac{N}{2}+a\pmod{N}})$ and $(r_{a+1},r_{\frac{N}{2}+a+1\pmod{N}})$, with the non-crossing edges $(r_a,r_{\frac{N}{2}+a+1\pmod{N}})$ and $(r_{a+1},r_{\frac{N}{2}+a\pmod{N}})$
		\end{itemize}
	\end{enumerate}
	The last possibility is the optimal one, since our purpose is to find the TSP with the maximum number of crossings, in order to conclude it has the lower cost. Notice that the cases discussed above holds also for the 6-vertices loop and an adjacent 4-vertices loop when N is not a multiple of 4. We have considered here adjacent loops because, if they were not adjacent, then the difference in maximum crossings would have been even bigger.\\
	Now we have a constructive pattern for building the optimal TSP. Let us call $\mathcal{O}$ the operation described in the second point of (3). Then, starting from the optimal 2-factor solution, if it is formed by $n$ points, $\mathcal{O}$ has to be applied $\frac{N}{4}-1$ times if N is a multiple of 4 and $\frac{N-6}{4}$ times otherwise. In both cases it is easily seen that $\mathcal{O}$ always leaves two adjacent edges of length $\frac{N}{2}-1$ invariant, while all the others have length $\frac{N}{2}-2$. The multiplicity of solutions is given by the $\frac{N}{2}$ ways one can choose the two adjacent edges of length $\frac{N}{2}-1$. In particular, the Hamiltonian cycles $h^*_k$ saturates the maximum number of crossings that can be done, i.e., every time that $\mathcal{O}$ is applied, exactly 2 crossings are lost.\\
	We have proved, then, that $h^*_k$ are the Hamiltonian cycles with the maximum number of crossings. Now we prove that any other Hamiltonian cycle has a lower number of crossings. Indeed any other Hamiltonian cycle must have
	\begin{itemize}
		\item either every edge of length $\frac{N}{2}-2$;
		\item or at least one edge of length less than or equal to $\frac{N}{2}-3$.
	\end{itemize} 
	This is easily seen, since it is not possible to build an Hamiltonian cycle with more than two edges or only one edge of length $\frac{N}{2}-1$ and all the others of length $\frac{N}{2}-2$. It is also impossible to build an Hamiltonian cycle with two non-adjacent edges of length $\frac{N}{2}-1$ and all the others of length $\frac{N}{2}-2$: the proof is immediate.
	Consider then the two cases presented above: in the first case the cycle (let us call it $H$) is clearly not optimal, since it differs from $h^*_k, \forall k$ by a matching that is crossing in $h^*_k$ and non-crossing in $H$. Let us consider, then, the second case and suppose the shortest edge, let us call it $b$, has length $\frac{N}{2}-3$: the following reasoning equally holds if the considered edge is shorter. The shortest edge creates two subsets of vertices: in fact, called $x$ and $y$ the vertices of the edge considered and supposing $x<y$, there are the subsets defined by:
	\begin{equation}
	A=\{r\in \mathcal{V}: x<r<y\}
	\end{equation}
	\begin{equation}
	B=\{r\in \mathcal{V}: r<x \vee r>y\}
	\end{equation}
	Suppose, for simplicity, that $|A|<|B|$: then, necessarily $|A|=\frac{N}{2}-3$ and $|B|=\frac{N}{2}+1$. As an immediate consequence, there is a vertex in $B$ whose edges have both vertices in $|B|$. As a consequence, fixed an orientation on the cycle, one of this two edges and $b$ are obviously non-crossing and, moreover, have the right relative orientation so that they can be replaced by two crossing edges without splitting the Hamiltonian cycle. Therefore also in this case the Hamiltonian cycle considered is not optimal.
\end{proof}

\subsection{Second moment of the optimal cost distribution on the complete graph}\label{app::tsp_mono_2Moment}
Here we compute the second moment of the optimal cost distribution. We will restrict for simplicity to the $p>1$ case, where
\begin{equation}\label{eq::app_tsp_mono_OptimalCost}
E_N[h^*]=|r_{2}-r_1|^p+|r_{N}-r_{N-1}|^p+\sum_{i=1}^{N-2}|r_{i+2}-r_{i}|^p \,.
\end{equation}
We begin by writing the probability distribution for $N$ ordered points
\begin{equation}
\rho_N(r_1,\dots,r_N)=N! \prod_{i=0}^{N}\theta(r_{i+1}-r_i)
\end{equation}
where we have defined $r_0\equiv0$ and $r_{N+1}\equiv 1$. The joint probability distribution of their spacings
\begin{equation}
\varphi_i \equiv r_{i+1} - r_i \,,
\end{equation}
is, therefore
\begin{equation}
\rho_N(\varphi_0, \dots , \varphi_{N}) = N! \, \delta\left[ \sum_{i=0}^{N} \varphi_i = 1\right]\, \prod_{i=0}^{N} \theta(\varphi_i)\, .
\end{equation}
If $\{i_1, i_2, \dots, i_k\}$ is a generic subset of $k$ different indices in $\{0,1,\dots,N\}$, we soon get the marginal distributions
\begin{equation}\label{eq::app_tsp_mono_DistributionSpacings}
\rho_N^{(k)} (\varphi_{i_1}, \dots , \varphi_{i_k}) = \frac{N!}{(N-k)!} \left( 1 - \sum_{n=1}^{k}\varphi_{i_n} \right)^{N-k} \theta \left( 1 - \sum_{n=1}^{k}\varphi_{i_n} \right) \, \prod_{n=1}^{k}\theta(\varphi_{i_n}) \,.
\end{equation}
Developing the square of Eq.~\eqref{eq::app_tsp_mono_OptimalCost} one obtains $N^2$ terms, each one describing a particular configuration of two arcs connecting some points on the line. We will denote by $\chi_1$ and $\chi_2$ the length of these arcs; they can only be expressed as a sum of 2 spacings or simply as one spacing. Because the distribution~\eqref{eq::app_tsp_mono_DistributionSpacings} is independent of $i_1$, $\dots$, $i_k$, these terms can be grouped together on the base of their topology on the line with a given multiplicity. All these terms have a weight that can be written as
\begin{equation}\label{eq::app_tsp_mono_GeneralTerm}
\int_{0}^{1} d \chi_1 \, d \chi_2 \; \chi_1^p \, \chi_2^p \, \rho(\chi_1,\chi_2)
\end{equation}
where $\rho$ is a joint distribution of $\chi_1$ and $\chi_2$. Depending on the term in the square of Eq.~\eqref{eq::app_tsp_mono_OptimalCost} one is taking into account, the distribution $\rho$ takes different forms, but it can always be expressed as in function of the distribution Eq.~\eqref{eq::app_tsp_mono_DistributionSpacings}. As an example, we show how to calculate $\overline{|r_{3}-r_1|^{p}|r_{4}-r_2|^{p}}$. In this case $\rho(\chi_1, \chi_2)$ takes the form
\begin{multline}
\rho(\chi_1, \chi_2) = \int d\varphi_1 \, d\varphi_2 \, d\varphi_3 \, \rho^{(3)}_N(\varphi_1, \varphi_2, \varphi_3) \delta\left( \chi_1 - \varphi_1 - \varphi_2 \right) \delta\left( \chi_2 - \varphi_2 - \varphi_3 \right) \\
= N(N-1) \left[ (1-\chi_1)^{N-2}\theta(\chi_1) \theta(\chi_2- \chi_1) \theta(1-\chi_2) \right. \\
+ (1-\chi_2)^{N-2}\theta(\chi_2) \theta(\chi_1- \chi_2) \theta(1-\chi_1) \\
\left. - (1-\chi_1-\chi_2)^{N-2} \theta(\chi_1) \theta(\chi_2) \theta(1-\chi_1-\chi_2)  \right]\,,
\end{multline}
that, plugged into Eq.~\eqref{eq::app_tsp_mono_GeneralTerm} gives
\begin{equation}
\overline{|r_{3}-r_1|^{p}|r_{4}-r_2|^{p}} = \frac{\Gamma(N+1) \left[\Gamma (2p+3)- \Gamma (p+2)^2\right]}{(p+1)^2 \Gamma (N+2 p+1)} \,.
\end{equation}
All the other terms contained can be calculated the same way; in particular there are 7 different topological configurations that contribute. After having counted how many times each configuration appears in $(E_N[h^*])^2$, the final expression that one gets is
\begin{multline}\label{eq::app_tsp_mono_SecondMoment}
\overline{(E_N[h^*])^2}=\frac{\Gamma(N+1)}{\Gamma (N+2 p+1)} \Biggl[ 4(N-3) \Gamma (p+2) \Gamma (p+1) \Biggr. \\
+ \left((N-4) (N-3) (p+1)^2-2 N+8\right) \Gamma(p+1)^2 +\\
\left.+\frac{[N (2 p+1)(p+5)-4 p (p+5)-8]\, \Gamma (2 p+1)}{(p+1)} \right] \,.
\end{multline}

\section{2-factor problem and the plastic constant}\label{app::2factor}
\subsection{The Padovan numbers}
According to the discussion in Sec.~\ref{sec::2factor}, in the optimal 2-factor configuration of the complete bipartite graph there are only loops of length 2 and 3. Here we will count the number of possible optimal solutions for each value of $N$.
Let $f_N$ be the number of ways in which the integer $N$ can be written as a sum in which the addenda are only 2 and 3. For example, $f_4 = 1$ because $N=4$ can be written only as $2+2$, but $f_5=2$ because $N=5$ can be written as $2+3$ and $3+2$. We simply get the recursion relation
\begin{equation}
f_N = f_{N-2} + f_{N-3} \label{app::2f_rec}
\end{equation}
with the initial conditions $f_2 = f_3 = f_4 = 1$. The $N$-th {\em Padovan number} $\Pad(N)$ is defined as $f_{N+2}$. Therefore it satisfies the same recursion relation Eq.~\eqref{app::2f_rec} but with the initial conditions $\Pad(0) = \Pad(1) =\Pad(2) =1$.

A generic solution of Eq.~\eqref{app::2f_rec} can be written in terms of the roots of the equation
\begin{equation}
x^3 = x+1 \,.
\end{equation}
There is one real root 
\begin{equation} \label{app::2f_plastic}
\plas = \frac{(9 + \sqrt{69})^\frac{1}{3} + (9 - \sqrt{69})^\frac{1}{3}}{ 2^\frac{1}{3} 3^\frac{2}{3}} \approx 1.324717957244746\dots 
\end{equation}
known as the {\em plastic} constant and two complex conjugates roots
\begin{equation}
\begin{aligned}
	z_\pm & = \frac{ ( -1 \pm  i\, \sqrt{3}) (9 + \sqrt{69})^\frac{1}{3} +  ( -1 \mp  i\, \sqrt{3})(9 - \sqrt{69})^\frac{1}{3}}{ 2^\frac{4}{3} 3^\frac{2}{3}} \\
	& \approx -0.662359\,\text{\dots} \pm i \, 0.56228\dots 
\end{aligned}
\end{equation}
of modulus less than unity. Therefore
\begin{equation}
\Pad(N) = a\, \plas^N + b \, z_+^N + b^* \, z_-^N
\end{equation}
and by imposing the initial conditions we get
\begin{equation}
\Pad(N) = \frac{(z_+-1)(z_--1)}{(\plas-z_+)(\plas-z_-)}\, \plas^N + \frac{(\plas-1)(z_--1)}{(z_+-\plas)(z_+-z_-)}\, z_+^N +\frac{(\plas-1)(z_+-1)}{(z_--\plas)(z_--z_+)}\, z_-^N\,.
\end{equation}
For large $N$ we get
\begin{equation}
\Pad(N)  \sim \lambda \, \plas^N \label{asym}
\end{equation}
with $\lambda\approx 0.722124\dots$ the real solution of the cubic equation
\begin{equation}
23 \,t^3 - 23\, t^2 + 6\, t -1 = 0 \, .
\end{equation}
In Fig.~\ref{app::2f_fig1} we plot the Padovan sequence for a range of values of $N$ and its asymptotic expression.
\begin{figure}[h!]
	\centering
	\includegraphics[width=0.7\columnwidth]{./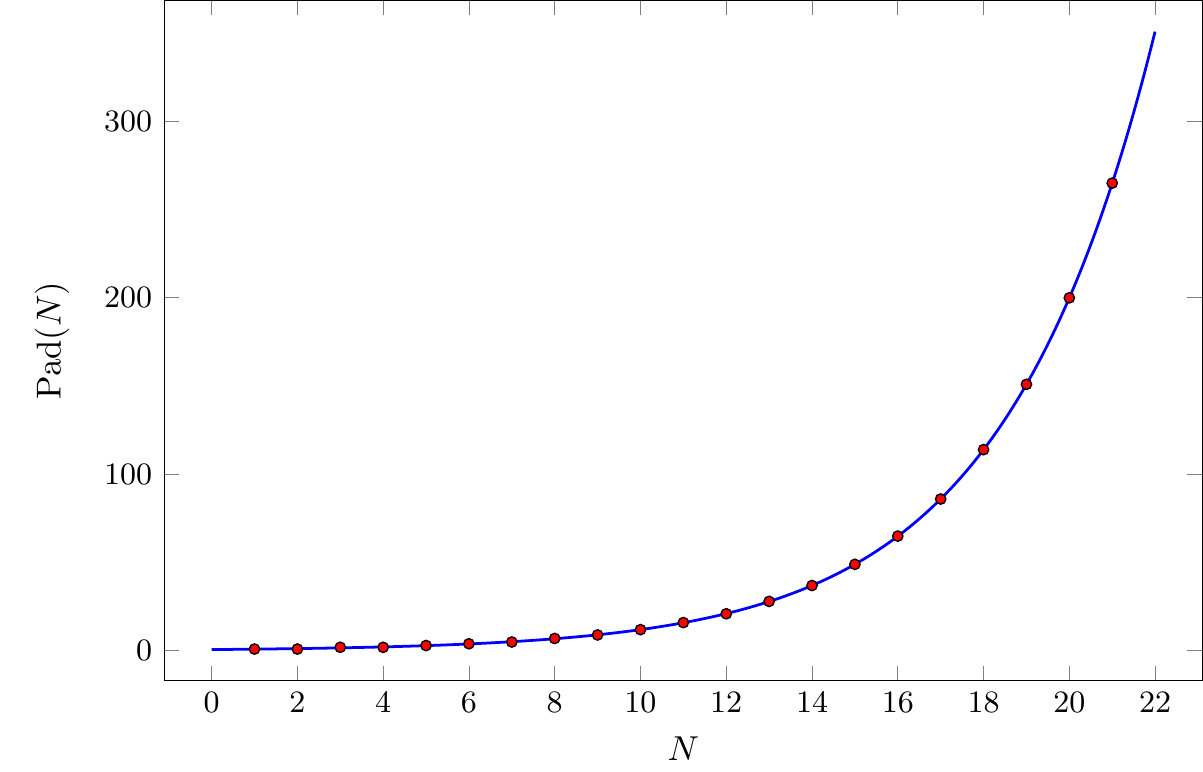} 
	\caption{Padovan numbers and their asymptotic expansion.} \label{app::2f_fig1}
\end{figure}

There is a relation between the Padovan numbers and the Binomial coefficients.
If we consider $k$ addenda equal to 3 and $s$ addenda equal to 2, there are $\binom{k+s}{k}=\binom{k+s}{s}$ possible different orderings. If we fix $N= 3 \, k + 2\, s$ we easily get that
\begin{equation}
\Pad(N-2) = \sum_{k\ge 0} \sum_{s\ge 0} \delta_{N, 3 \, k + 2\, s } \, \binom{k+s}{k} = \sum_{m\ge 0} \sum_{k\ge 0} \delta_{N,  k + 2\, m } \, \binom{m}{k} \,.
\end{equation}

\subsection{The recursion on the complete graph} 
A recursion relation analogous to Eq.~\eqref{app::2f_rec} can be derived for the number of possible solution of the 2-factor problem on the complete graph $\mathcal{K}_N$. Let $g_N$ be the number of ways in which the integer $N$ can be expressed as a sum of 3, 4 and 5. Then $g_N$ satisfies the recursion relation given by
\begin{equation}
g_N = g_{N-3} + g_{N-4} + g_{N-5} \,,
\end{equation}
with the initial conditions $g_3=g_4=g_5=g_6=1$ and $g_7 = 2$. The solution of this recursion relation can be written in function of the roots of the 5-th order polynomial
\begin{equation}
x^5-x^2-x-1=0 \,.
\end{equation}
This polynomial can be written as $(x^2+1)(x^3-x-1)=0$. Therefore the roots will be the same of the complete bipartite case ($\plas$, and $z_{\pm}$) and in addition 
\begin{equation}
y_{\pm} = \pm i \,.
\end{equation}
$g_N$ can be written as
\begin{equation}
g_N = \alpha_1 \plas^N + \alpha_2 z_+^N + \alpha_3 z_-^N + \alpha_4 y_+^N + \alpha_5 y_-^N \,,
\end{equation}
where the constants $\alpha_1$, $\alpha_2$, $\alpha_3$, $\alpha_4$, and $\alpha_5$ are fixed by the initial conditions $g_3=g_4=g_5=g_6=1$ and $g_7 = 2$. When $N$ is large the dominant contribution comes from the plastic constant
\begin{equation}
g_N \simeq \alpha_1 \plas^N \,.
\end{equation}
with $\alpha_1 \approx 0.262126..$.

\subsection{The plastic constant}  \label{app::plastic_constant}
In 1928,  shortly after abandoning his architectural studies and becoming a novice monk of the Benedictine Order,
Hans~van~der~Laan discovered a new, unique system of architectural proportions. Its construction is completely based on a single irrational value which he called the plastic number (also known as the plastic constant)~\cite{Maroni2012}.
This number was originally studied in 1924 by a French engineer, G. Cordonnier, when he was just 17 years old, calling it "radiant number". However, Hans van der Laan was the first who explained how it relates to the human perception of differences in size between three-dimensional objects and demonstrated his discovery in (architectural) design. His main premise was that the plastic number ratio is truly aesthetic in the original Greek sense, i.e. that its concern is not beauty but clarity of perception~\cite{Padovan2002}. 
The word plastic was not intended, therefore, to refer to a specific substance, but rather in its adjectival sense, meaning something that can be given a three-dimensional shape~\cite{Padovan2002}. 
The golden ratio or divine proportion
\begin{equation}
\phi = \frac{1+\sqrt{5}}{2} \approx 1.6180339887 \,,
\end{equation}
which is a solution of the equation
\begin{equation}
x^2 = x+1 \label{app::2f_qe} \,,
\end{equation}
has been studied by Euclid, for example for its appearance in the regular pentagon, and has been used to analyze the most aestetich proportions in the arts.
For example, the golden rectangle, of size $(a+b)\times a$ which may be cut into a square of size $a \times a$ and a smaller rectangle of size $b \times a$ with the same aspect ratio
\begin{equation}
\frac{a+b}{a} = \frac{a}{b} = \phi \, .
\end{equation}
This amounts to the subdivision of the interval $AB$ of length $a+b$ into $AC$ of length $a$ and $BC$ of length $b$. By fixing $a+b=1$ we get
\begin{equation}
\frac{1}{a} = \frac{a}{1-a} = \phi \,,
\end{equation}
which implies that $\phi$ is the solution of Eq.~\eqref{app::2f_qe}. The segments $AC$ and $BC$, of length, respectively $\frac{1}{\phi^2}(\phi, 1)$ are sides of a golden rectangle.

But the golden ratio fails to generate harmonious relations within and between three-dimensional objects. Van~der~Laan therefore elevates definition of the golden rectangle in terms of space dimension.
Van~der~Laan breaks segment $AB$  in a similar manner, but in three parts. If C and D are points of subdivision, plastic number $\plas$  is defined with
\begin{equation}
\frac{ AB }{ AD } = \frac{AD}{BC} = \frac{BC}{ AC} = \frac {AC}{ CD} = \frac{CD}{BD} = \plas
\end{equation} 
and by fixing $AB=1$, from $AC = 1 -BC$, $BD = 1 - AD$ we get
\begin{equation}
\plas^3 = \plas +1 \,.
\end{equation}
The segments $AC$, $CD$ and $BD$, of length, respectively, $\frac{1}{(\plas+1)\plas^2} (\plas^2, \plas, 1)$ can be interpreted as sides of a cuboid analogous to the golden rectangle.

\chapter{Supplemental material to Chapter \ref{chap::third}}
\chaptermark{Supplemental material to Chapter 4}
\section{Time-to-solution}\label{app::parameter_tts}
The time to solution (TTS) is a widely accepted empiric measure of algorithmic performances. It is defined as the time needed to solve an instance of a problem with high probability (here we take the 99\%).
In particular, given the probability $p(t)$ of solving the instance in time $t$, the TTS is given by
\begin{equation}\label{tts_definition}
\mathrm{TTS}(t) = t \frac{\log(0.01)}{\log(1-p(t))}.
\end{equation}
One is usually interested in the minimum TTS, given by
\begin{equation}
\mathrm{TTS} = \min_t \mathrm{TTS}(t).
\end{equation}

When we want to test our algorithm on a set of instances $\mathcal{I}$ and a probability distribution $p$ is defined on such a set, the measure of performance can be the average
\begin{equation}\label{mean_tts_1}
\langle \mathrm{TTS} \rangle = \sum_{I \in \mathcal{I}} p(I) \, \mathrm{TTS}_I,
\end{equation}
where $\mathrm{TTS}_I$ is the $TTS$ for the instance $I$. This average is typically computed as an empirical average on a large number of instances generated with probability $p$.
For some problems and some algorithms, there are instances that are never solved for reasonable running time. What TTS should one use in Eq.~\eqref{mean_tts_1} for them? To avoid this problem, it is often used, instead of the average TTS, the 50-percentile of the TTSs computed for a large set of instance. Typically (and this is the approach used throughout Sec.~\ref{sec::qaa_parset}) this 50-percentile is shown together with the 35-percentile and 65-percentile.

\section{Hamming weight example}\label{app::parameter_toy}
Here we analyze in detail the annealing (both classical and quantum) of a toy problem, to give a concrete example of the effect of choosing the penalty-term parameter.
Consider a cost function defined on $x \in \{1,0\}^N$ with the symmetry $E(x)=E(\sigma(x))$ for each $\sigma\in S_N$ permutation of $N$ objects. This kind of cost functions characterize the so-called Hamming weight problems, since the only thing they can depend on is the Hamming weight (i.e. number of 1) of the configurations.\\
These problems have been extensively used to explore the properties of thermal and quantum annealing, mainly because of their simplicity: their high level of symmetry often allows for exact computations, and the specific form of the cost function can be chosen such that the required annealing time is either polynomial or exponential (or even exponential for classical thermal annealing and polynomial for the quantum version).\\
Here we introduce a constrained version of the problem, with cost function
\begin{equation}\label{eq::para_hamming_ham}
E(x) = \frac{1}{N} (W(x) - N/3)^2,
\end{equation}
where $W(x)$ is the Hamming weight of the configuration $x$. The normalization is chosen to make the cost function an extensive quantity. Indeed, if we define the intensive Hamming weight as $w(x) = W(x)/N$, we have the density-of-cost function
\begin{equation}
e(x) = E(x) / N = (w(x) - 1/3)^2.
\end{equation}
Let us suppose that we have the following constraint: only configurations with density of cost in $\left[0, 1/4 \right] \cup \left[1/2, 1\right]$ are acceptable. 
To implement this constraint, we consider the penalty term (that we write directly as function of the intensive Hamming weight $w$)
\begin{equation}
p(w) = \left\{ 
\begin{array}{ll}
\left(1-4\left( w - \frac{1}{4}\right) \right) \left( w - \frac{1}{4}\right)-4\left( w - \frac{1}{4}\right)\left( w - \frac{1}{2}\right) & \textrm{if $w\in\left[\frac{1}{4},\frac{1}{2}\right]$,}\\
0 & \textrm{otherwise,}\\
\end{array} 
\right.
\end{equation}
where the non-zero term is simply a linear interpolation between $x-1/4$ (linear cost increasing as we break the constraint of having $x>1/4$) and $-x+1/2$ (linear cost increasing as we break the constraint of having $x<1/2$). This is one of the many possible choices of a suitable penalty term for this problem.
Therefore the total cost function to minimize is
\begin{equation}\label{eq::para_hamming_tot}
e_{tot}(w;\lambda) = e(w) + \lambda \, p(w), 
\end{equation}
and our goal is to find the minimum and the optimum value of $\lambda$. 
Notice that this cost function is not given as a local Hamiltonian, and in particular it is not in QUBO form; this is not relevant for our discussion here, since we are only interested in understanding in a simple example the role of the coupling parameter for the penalty term. \\
One can consider both SA and QAA to solve this problem: in both case, as shown in Appendix \ref{app::parameter_toy}, if a too high penalty term $\lambda$ is chosen the system remains trapped an exponentially long time in a local minimum.

\subsection{Classical annealing}
In the simulated annealing algorithm, the probability of a configuration is its free energy $F(W; \lambda)$, defined at temperature $\beta$ by
\begin{equation}
\exp \left(-\beta F(W; \lambda) \right) = \binom{N}{W} \exp \left( - \beta E_{tot}(\lambda) \right).
\end{equation}
Expanding the binomial for large $N$, and keeping only the dominant term, we obtain the following density of free energy
\begin{equation}\label{f_toy}
f(w, \lambda) = \frac{1}{\beta}\left[ (1-w)\log(1-w) + w \log w \right] + e_{tot}(w;\lambda).  
\end{equation}
\begin{figure}[b]
	\centering
	\includegraphics[width=\columnwidth, height=\textheight, 
	keepaspectratio]{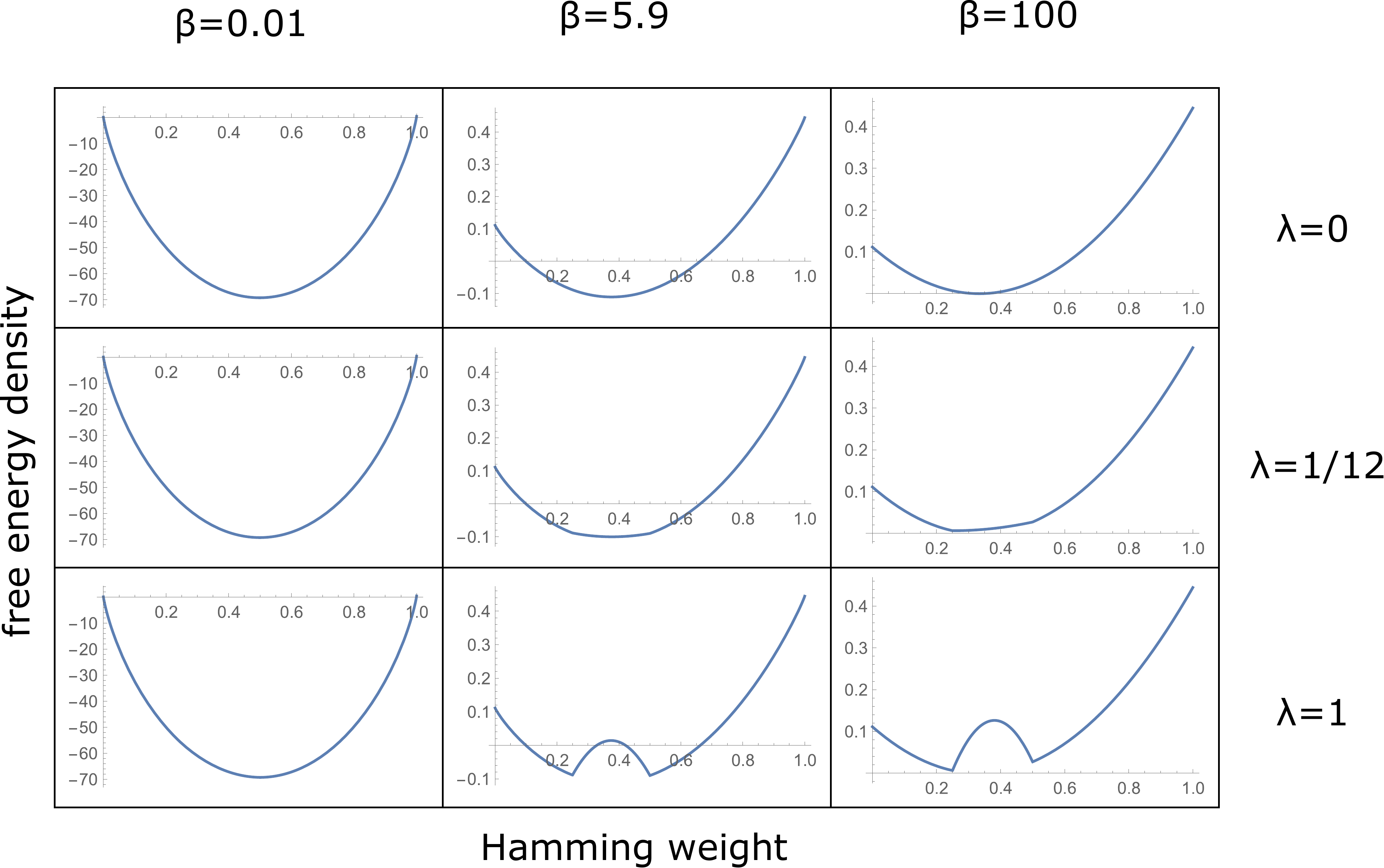} 
	\caption{Free energy landscape for the Hamming weight problem defined in Sec.~\ref{sec::qaa_parset}, for various values of inverse temperature $\beta$ and of the penalty term parameter $\lambda$. In particular, the second column shows to the temperature in which the role of the local and global minimum is exchanged, and the second row shows what happens if we use the minimum value for $\lambda$. Notice that there is no local minimum in this last case.} \label{fig_toy}
\end{figure}

Fig.~\ref{fig_toy} shows the different shapes of the free energy density, varying $\lambda$ and $\beta$. Notice that the probability of configurations which are not corresponding to the minimum of Eq.~\eqref{f_toy} is $\sim \exp(- N \Delta)$, where $\Delta$ is the free energy density of such configurations minus the minimum free energy density.
Therefore the (local) SA algorithm takes exponential time to leave the minimum of $f$, and for $\lambda$ too large (such as $\lambda=1$ in Fig.~\ref{fig_toy}) at $\beta \simeq 5.9$ the SA needs exponential time to pass from the minimum at larger $w$ to the one at smaller $w$, which will become the global minimum. If a lower $\lambda$ is chosen, such as $\lambda=1/12$ (which is the minimum), the previous situation never happens, consisting in an annealing that can proceed in polynomial time. Notice that if $\lambda$ is further decreased the final minimum will be in the forbidden interval, and again we will need to wait an exponential time to reach the minimum acceptable configuration (because it is not a local minimum anymore).

\subsection{Quantum annealing}
For the quantum case, we will consider the following annealing procedure:
\begin{itemize}
	\item the quantum problem Hamiltonian is defined by its action on the computational basis, thus starting from Eq.~\eqref{eq::para_hamming_tot}; 
	\item the quantum driving term of the Hamiltonian (the one which provides quantum fluctuations) is $\sum_i \sigma_i^x$, so the system is initialized in the ground state of this term and the problem Hamiltonian is slowly turned on, while this driving term is slowly turned off.
\end{itemize}

Because of the symmetry of the problem, we can define a semi-classical potential and suppose that its minimum describes the instantaneous ground state of the system during the quantum annealing schedule. The idea is that, because we initialize the system in the factorized superposition state 
\begin{equation}
\ket{+}_N = \otimes_i \frac{\ket{0}_i + \ket{1}_i}{\sqrt{2}}
\end{equation}
and the Hamiltonian is symmetric with respect to qubit permutations, we suppose that all the evolution takes place in the subspace of the Hilbert space spanned by symmetric factorized states of the form\footnote{notice that there are also other possible states, which are entangled; however, this form of quasi-classical potential has been profitably used in many examples (see \cite{Muthukrishnan2016}), so we use it also here.}
\begin{equation}
\ket{\theta} = \bigotimes_{i=1,\dots,N} \left(\cos\theta \ket{0} + \sin\theta\ket{1} \right).
\end{equation}
The semi-classical potential we use is 	$\bra{\theta} H_{tot} \ket{\theta}$,
with
\begin{equation}
H_{tot} = s (H_0 + \lambda H_P) - (1-s) \sum_i \sigma_i^x,
\end{equation}
where $s=s(t)$ is the parameter which defines the annealing schedule, with $s(0)=0$, $s(T)=1$ and $T$ is the total annealing time; $H_0$ and $H_P$ are defined by their action on the computational basis. Therefore, one has
\begin{equation}
\begin{split}
\bra{\theta} H_0 \ket{\theta} & = \sum_{a,b} \braket{\theta}{a} \braket{b}{\theta} \bra{a}H_0\ket{b} \\
& = \sum_{W=0}^N \binom{N}{W} (\sin^2\theta)^W (\cos^2\theta)^{N-W} \left(\frac{1}{N} \left(W-\frac{N}{3}\right)^2\right)\\
& = N \left( \sin^2\theta - \frac{1}{3} \right).
\end{split}
\end{equation}
Notice that one can naturally introduce the ``Hamming-weight operator'' as $\sum_i \frac{1+\sigma_i^z}{2}$ and
\begin{equation}
\bra{\theta} \sum_i \frac{1+\sigma_i^z}{2} \ket{\theta} = N \sin^2 \theta,
\end{equation}
therefore the semi-classical potential is identical to the classical one, where the Hamming weight becomes the expectation value of the Hamming-weight operator.
It is slightly more tricky to deal with the penalty term (we need to take into account also sub-leading terms of the Stirling approximation):
\begin{equation}\label{pen_ham_quantum}
\begin{split}
\bra{\theta} H_P \ket{\theta} & = \sum_{W = N/4}^{N/2} \binom{N}{W} (\sin^2\theta)^W (\cos^2\theta)^{N-W} N p(w)\\
& \sim N^{3/2} \int_{\frac{1}{4}}^{\frac{1}{2}} \mathrm{d}w \, e^{-N g(w, \theta)} \frac{p(w)}{\sqrt{w (1-w)}},
\end{split}
\end{equation}
where
\begin{equation}
g(w, \theta) = w \log \left(w/\sin^2\theta\right) + (1-w)\log\left((1-w)/\cos^2\theta\right).
\end{equation}
Now the integral in Eq.~\eqref{pen_ham_quantum} is done by saddle-point method, where the solution of the saddle point equations is (using the fact that $1/4 \leq w \leq 1/2$)
\begin{equation}
w=\left\{ 
\begin{split}
& \frac{1}{1+\cot(\theta)^2} & \quad \frac{\pi}{6} \leq \theta \leq \frac{\pi}{4} \textrm{ or }  \frac{3\pi}{4} \leq \theta \leq \frac{\pi}{6},\\
& 1/4 & 0 <\theta<\frac{\pi}{6} \textrm{ or } \frac{\pi}{6}<\theta<\pi,\\
& 1/2 & \frac{\pi}{4} <\theta <\frac{3\pi}{4}.\\
\end{split} 
\right.
\end{equation}
Notice that when $w=1/4$ or $w=1/2$ the value of Eq. \eqref{pen_ham_quantum} is exponentially suppressed. However, $g(1/(1+\cot(\theta),\theta)=0$, therefore we have
\begin{equation}
\bra{\theta} H_P \ket{\theta} \sim \left\{
\begin{split}
& N \sqrt{2 \pi} \, \cos(2\theta) \, (1-2\cos(2\theta)) & \quad \frac{\pi}{6} \leq \theta \leq \frac{\pi}{4} \textrm{ or }  \frac{3\pi}{4} \leq \theta \leq \frac{\pi}{6},\\
& 0 & \textrm{otherwise.}
\end{split}
\right.
\end{equation}
The last term, that is the one which provides quantum fluctuation, is
\begin{equation}
\bra{\theta} \sum_i\sigma_i^x \ket{\theta} = N \sin(2\theta).
\end{equation}
Putting all the terms together one can easily see that considerations analogue to those done for the classical case hold also here.

\section{Failure of our method for an instance of the minor embedding problem}\label{app::parameter_failure}

The results obtained for the matching problem raise the question if a similar analysis can be extended to all constrained problems. However, this is not the case. Remember that Eq.~\eqref{eq::para_general_eq_for_par} can be used to obtain efficiently an estimate for the minimum parameter if the chain of inequalities~\eqref{eq::para_general_order} is true (this is a necessary but not sufficient condition: we also need to be able to approximate or solve efficiently the problem under analysis). But there are problems, such as the minor embedding problem, where these inequalities are false.\\
To show that,	 we briefly introduce the minor embedding problem in the formulation that is relevant for us. Then we will choose a specific instance of the problem where one can explicitly see that the inequalities~\eqref{eq::para_general_order} are false.

When a problem is written in QUBO form, an underlying weighted graph can be defined looking at the couplings $J_{i,j}$ in the Hamiltonian: each qubit is associated to a vertex of the graph, and an edge of the graph is present between qubits $i$ and $j$ if $J_{i,j} \neq 0 $. This graph is of great importance, because in real quantum annealing devices there is an effective hardware graph with qubits as vertices, and qubits can interact only if they correspond to two connected vertices in this hardware graph. If the former graph (``problem graph'') and the latter (``hardware graph'') are different, an extra-step is need: the minor embedding.
In the minor embedding problem, we have a QUBO problem defined on a graph $\mathcal{G}$ and we want to embed $\mathcal{G}$ in another graph (which is typically a fixed hardware graph) $\mathcal{U}$, such that we have a QUBO problem on the graph $\mathcal{U}$ whose ground state corresponds through a known map to the ground state of our original problem.
To do so we define a function $\phi: u \to g$, where $g$ and $u$ are the vertices of respectively $\mathcal{G}$ and $\mathcal{U}$, such that if we contract all those vertices in $u$ which are sent by $\phi$ to the same vertex in $g$, we obtain from the graph $\mathcal{U}$ the graph $\mathcal{G}$. In other words, the function $\phi$ defines subsets of $u$ which correspond to the same vertex in $g$. These subsets of spins are often called ``chains''. Then the hardware graph $\mathcal{U}$ can be used for the QAA, with problem Hamiltonian
\begin{equation}\label{minor}
H' = H + J \sum_{\substack{i, j \in u \\ \phi(i) = \phi(j)}} \sigma_i \sigma_j,
\end{equation}
where $H$ is the Hamiltonian of the original problem, where the interaction among two vertices $a$ and $b$ connected in the graph $\mathcal{G}$ is now between two qubits of $u$, $k$ and $\ell$, which are connected in $u$ and such that $\phi(k) = a$ and $\phi(\ell) =b$. \\
The minor embedding problem in general consists in finding a suitable function $\phi$. Here we take another point of view: given a suitable $\phi$, we are interested in finding the minimum value for the parameter $J$ in Eq.~\eqref{minor}. The term with coupling $J$ is a kind of penalty term, whose contribution is minimum when all the spins inside the same chain have equal sign. Notice that only in this situation is possible to go back from the solution of the problem on the hardware graph to the original graph $\mathcal{G}$. Therefore the search for the ground state of $H'$ is a constrained problem, where the only acceptable configurations are those with chains composed of spins with the same sign. This is the problem we are interested in, and that we would like to address with the technique developed in Sec.~\ref{sec::qaa_parset} and used for the matching problem. 

Let us consider a specific (and trivial) example in which the order condition given in~\eqref{eq::para_general_order} is not respected. The starting problem graph and the hardware graph are those given in Fig. \ref{fig_min_embed_prob}, where the couplings in the starting problems are $\pm 1$: the black continuous edges corresponds to $-1$ (ferromagnetic) interactions and the dashed edge correspond to $+1$ (antiferromagnetic) interaction. 
\begin{figure}[htb]
	\begin{subfigure}[t]{0.34\linewidth}
		\centering
		\includegraphics[width=0.95\columnwidth,keepaspectratio, height=0.15\textheight]{./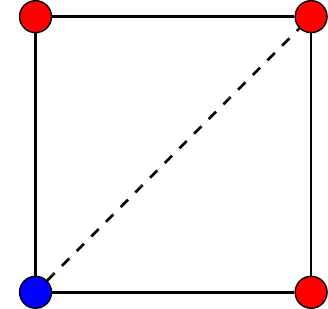}
	\end{subfigure} \hfill
	\begin{subfigure}[t]{0.64\linewidth}
		\centering
		\includegraphics[width=0.95\columnwidth,keepaspectratio,height=0.15\textheight]{./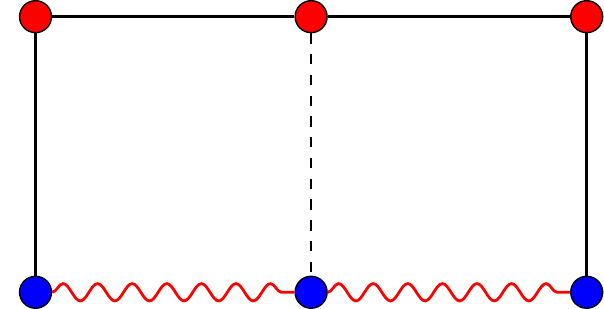}
	\end{subfigure}
	\caption{Example of minor embedding: on the left there is the problem graph, on the right the hardware graph. Each vertex is a spin, the black continuous edges are ferromagnetic couplings, the black dashed edge is a antiferromagnetic couplings and the red wavy edges are the couplings used for the minor embedding. Therefore the blue spin in the problem graph corresponds to the three blue spins in the hardware graph.}
	\label{fig_min_embed_prob}
\end{figure}
Here a constraint breaking is a ``kink'' (two consecutive spins with different sign) in the chain with wavy links, and the analogous of Eq.~\eqref{eq::para_general_eq_for_par} is 
\begin{equation}
\lambda > \max_{k\in\{ 1,2 \}} \frac{\mathcal{E}_0 - \mathcal{E}_k}{k} = \max_{k\in\{ 1,2 \}} \lambda_k,
\end{equation}
where now $\mathcal{E}_k$ is the energy of the problem on the graph without wavy lines, when $k$ kinks are permitted on the wavy lines. Therefore it is easy to check that $\lambda_1=0$ and $\lambda_2=1/2$, therefore $\lambda_2>\lambda_1$ and the inequality given in~\eqref{eq::para_general_order} is not fulfilled. 

This allows us to observe that there are problems (as the matching problems) in which the order relation~\eqref{eq::para_general_order} holds and our method can be efficiently use to estimate the minimum parameter, while in other problems (such as the minor embedding problem) this is not true. This brings in turn an interesting consequence: whatever method one wants to use to decide the value of the parameter, a value that prevents the breaking of (only) one constraint is in general a too weak condition. Indeed, one has to prevent the breaking of any number of constraints.

\section{Other methods to find the minimum penalty term weight}\label{app::parameter_other}
\sectionmark{Other methods}
To the best of our knowledge, there are two ways to choose parameters for penalty terms: one is the ``trial-and-error'' method, the other is the one described by Choi in \cite{Choi2008}.
The first method consists basically in trying many different values and solving the problem with these values to see if the constraints are broken in the ground state. 
A limitation of this method is that, even if the problem can be solved efficiently (which is not the case for hard computational problems of relevant size), one cannot be sure to have found the real minimum parameter if the number of attempts with different parameters is small. 
On the other hand, the strength of this method is that it can be run using the same heuristic algorithm used to solve the problem (so there is no need in principle for additional problem-dependent algorithms). However, especially for large size instances of hard problems, since the heuristic method fails often, to have a reasonable precision on the parameter the algorithms has to be run many times, so it becomes more and more inefficient as the system size increases. Moreover, as we have discussed in the main text, for many constrained problems it is reasonable to expect that even a small error in the parameter setting cause a slowdown which is more and more relevant as the system size increases.

The second method consists in pre-processing the instance and choosing a penalty-term weight high enough to ensure the constraints. We will not review that method in general, but we will discuss how to apply it to the matching problem in the next section. Here we will only highlight the main differences with our method:
\begin{itemize}
	\item it is built to work with the minor embedding problem, but the same idea can easily be applied to other problems; however, always in \cite{Choi2008} the author refines the method in a way that only applies to the minor embedding problem, so we will not discuss that refinement here;
	\item it has a different parameter for each constraint of the problem, and these parameters are individually tuned;
	\item it uses no information about the solution (even approximate) of the instance. 
\end{itemize}
To conclude, this method could be applied to problems where the algorithm we described in Sec.~\ref{sec::qaa_parset} fails (that is, when condition~\eqref{eq::para_general_order} is not true or the problem cannot be approximated in an acceptable way), but the results are often quite far from the real minimum value of the parameters. 

\subsection*{Choi's method applied to the matching problem}\label{sec_examples_choi}

Another interesting feature of the matching problem is that here we can quantify the how good the Choi's upper bound for the minimum parameters is.
Choi's method can be applied to Hamiltonians of the form
\begin{equation}
H = H_0 + \sum_i \mu_i H_P^{(i)},
\end{equation}
where $H_P^{(i)}$ enforces the local $i$-th constraint. The method consists in choosing the values of the $\mu_i$ singularly, in such a way that the $i$-th constraint is never broken, irrespectively of the solution of the specific instance.
More concretely, for the matching problem one would have the following term to ensure that one and only one edge connects the vertex $\nu$ to another vertex:
\begin{equation}
H_P^{(\nu)} = \left( 1 - \sum_{e \in \partial\nu} x_e \right)^2
\end{equation}
and to be sure that independently on the solution of the instance this constraint is not broken, one has to choose a value for $\mu_\nu$ such that
\begin{equation}
\mu_\nu > \max_{e \in \partial\nu} w_e/2,
\end{equation}
where the factor 1/2 is because we have a contribution from two of penalty terms each time we break a constraint.\\
Consider now a specific example: the Euclidean matching in one dimension. In this case, for each instance $2N$ points are randomly thrown on a segment of length 1. The graph of the problem is a complete graph where each vertex corresponds to a point on the segment, and the link weights are the distances on the segment between the two points which correspond to the link endpoints. 
Since each vertex corresponds to a point in one dimension, we can order the points and it can be seen that the distance between the first and last point on the typical instance is going to 1 with $N$. 
Therefore
\begin{equation}
\langle \mu_i \rangle \sim \max \left( \frac{i}{2N+1}, 1 - \frac{i}{2N+1} \right),
\end{equation}
where $\lambda_i$ is the coupling for the $i-$th point once points have been ordered and the angled brackets denotes average on the disorder. Summing all the $\lambda_i$s we obtain
\begin{equation}\label{scaling_choi}
\langle \sum_i \mu_i \rangle \sim N \int_{0}^{1/2} \mathrm{d}x \, (1-x) + \int_{1/2}^{1} \mathrm{d}x \, x = \frac{3 N}{4}.
\end{equation}

On the opposite, the minimum parameter given by Eq.~\eqref{eq::para_min_param_1} is going to zero with $N$. Indeed for this very simple problem $\langle E_0^{(0)} \rangle = O(1)$ (that is, of the same order of the length of the segment) and by removing a couple of points we cannot change this limiting behavior, therefore $\lim_{N\to\infty} \lambda_1 = 0$ and so the sum of $N$ of those parameters is scaling differently from Eq.~\eqref{scaling_choi}, and it is definitely lower than that. In particular, from numerical simulations we see that the minimum parameter $\lambda_1 = O(1/N)$, therefore the sum of $N$ of these gives a constant, rather than going to infinity.

\section{Proof of the inequality \eqref{eq::para_gainorder} for the matching problem}\label{app::parameter_proof}
\sectionmark{Proof for the matching problem}
%
%
In this appendix we give the full proof of~\eqref{eq::para_general_gainorder} for the matching problem. We will use the notation introduced in Sec.~\ref{sec::qaa_parset}. As first step, we introduce the concept of signed path, which will be of used many times in the proof. Then we will proceed with the actual proof.
\paragraph{Definition: signed paths.}
Consider an instance of the problem, that is a given weighted graph with $2N$ vertexes.
Take $m \neq \ell \leq N$ and consider $\mathcal{E}_{\ell}$ and $\mathcal{E}_{m}$. In general, since $\ell\neq m$, the matchings of which $\mathcal{E}_{\ell}$ and $\mathcal{E}_{m}$ are the costs
(with a slight abuse of notation, from now on we will simply say ``the matchings $\mathcal{E}_{\ell}$ and $\mathcal{E}_{m}$'') can be done over two completely different sets of points. Indeed, if for example we have $m=\ell-1$, in $\mathcal{E}_{m}$ we are using 2 points less than those used in $\mathcal{E}_{\ell}$. But this does not necessarily mean that some of the points which are used in $\mathcal{E}_{\ell}$ are used also in $\mathcal{E}_{m}$, so the matching $\mathcal{E}_{m}$ can be completely different from $\mathcal{E}_{\ell}$.
Take a vertex used in $\mathcal{E}_{\ell}$ but not in $\mathcal{E}_{m}$, $x$. Consider the path on the instance graph which starts in $x$ and which is built by using links alternatively of $\mathcal{E}_{\ell}$ and $\mathcal{E}_{m}$. Let us call $y=y(x)$ the ending vertex of this path. We define the ``signed path'' $P_{\ell, m}(x, y)$ as the weight of that path
, which is obtained by summing the weight of each edge of the path used in $\mathcal{E}_{m}$ and by subtracting the weight of each edge of the path used in $\mathcal{E}_{\ell}$.

\paragraph{Proof, part I: stability.}
Let us denote with $\{x\}$ the set of vertexes used in $\mathcal{E}_{n}$, with $\{y\}$ those used in $\mathcal{E}_{n+1}$ but not in $\mathcal{E}_{n}$ and with $\{z\}$ those used in $\mathcal{E}_{n}$ but not in $\mathcal{E}_{n+1}$. 
We prove that $\{z\} = \varnothing$, using a \emph{reductio ad absurdum}.
To do so, we take one among the $z$'s, $z_\star$, and build the signed path $P_{n+1,n}(z_\star, w)$, where $w$ is the point from which we cannot proceed further with the path. Notice that by construction $w$ can only be another point of $\{z\}$ or one of $\{y\}$. These two cases require separate discussions.\\
\emph{Case 1:} consider that $P_{n+1, n}(z_\star, w) = P_{n+1, n}(z_\star, y)$, with $y = y(z_\star)\in\{y\}$. Since $P_{n+1, n}(z_\star, y)$ starts with a link of $\mathcal{E}_{n}$ and the last link is of $\mathcal{E}_{n+1}$, it has the same number of links of both the matchings. Moreover,
\begin{equation}
\mathcal{E}_{n} - P_{n+1, n}(z_\star, y)
\end{equation}
is again an acceptable matching of $2n$ points (although not the same points used in $\mathcal{E}_{n}$), so it has to be greater than or equal to $\mathcal{E}_{n}$ because $\mathcal{E}_{n}$ is the optimal matching of $2n$ points. So we have $P_{n+1, n}(z_\star, y) \geq 0$.
On the other side, also
\begin{equation}
\mathcal{E}_{n+1} + P_{n+1, n}(z_\star, y)
\end{equation}
is an acceptable matching of $2 (n + 1)$ points, which similarly leads to \\\mbox{$P_{n+1, n}(z_\star, y) \leq 0$}.
Therefore we have $P_{n+1, n}(z_\star, y) = 0$, which is the \emph{absurdum}. Notice that actually we can have paths equal to zero and so $\{z\}\neq \varnothing$ if there are ``compatible sub-matchings'' with the same cost. However this kind of degeneracy can be easily taken into account with a slight modification of our arguments (for simplicity we will consider here the non-degenerate case only).

\emph{Case 2:} now we consider $P_{n+1, n}(z_\star, w) = P_{n+1, n}(z_\star, z')$, with $z' = z'(z_\star)\in\{z\}$.
Take $y_1 \in \{y\}$ such that the signed path $\tilde{P}_{n, n+1}(y_1, y_2)$ ends in $y_2\in\{y\}$. A such point $y_1$ have to exist: indeed, a path starting from $y\in\{y\}$ can only end in another point of $\{y\}$ or a point of $\{z\}$. However, since  $\mathcal{E}_{n+1}$ has two points more than $\mathcal{E}_{n}$, the set $\{y\}$ has two more point that the set $\{z\}$, so at least one of the paths starting from points in $\{y\}$ has to finish in $\{y\}$.
As in the case 1, $P_{n+1, n}(z_\star, z') - \tilde{P}_{n, n+1}(y_1, y_2)$ has the same number of links of both the matchings and
\begin{equation}
\mathcal{E}_{n} - P_{n+1, n}(z_\star, z') + \tilde{P}_{n, n+1}(y_1, y_2)
\end{equation}
and 
\begin{equation}
\mathcal{E}_{n+1} + P_{n+1, n}(z_\star, z') - \tilde{P}_{n, n+1}(y_1, y_2)
\end{equation}
are acceptable matchings of, respectively, $n$ and $n+1$ points. So the proof proceeds as the previous case.

\paragraph{Proof, part II: order.}
We want to prove Equation~\eqref{eq::para_gainorder}. Let us denote with $\{x_1, x_2, x_3, x_4\}$ the four points of $\mathcal{E}_{n+1}$ which are not used in $\mathcal{E}_{n-1}$. Let $x_i$ be such that $P_{n-1, n+1}(x_i, x_j)$, that is the ending point of the signed path starting in $x_i$ is $x_j$. Two such points $x_i$ and $x_j$ have to exist for definition of path and because of the stability property (the path starting in $x_i$ cannot end somewhere else than in another of the $x$'s). Then
\begin{equation}
\mathcal{E}_{n+1} - P_{n-1, n+1}(x_i, x_j) \geq \mathcal{E}_{n},
\end{equation}
because this is an acceptable matching of $n$ points and $\mathcal{E}_{n}$ is the optimal among these matchings.
But also 
\begin{equation}
\mathcal{E}_{n-1} + P_{n-1, n+1}(x_i, x_j) \geq \mathcal{E}_{n},
\end{equation}
because this is an acceptable matching of $n$ points and $\mathcal{E}_{n}$ is the optimal among these matchings.
Therefore equation~\eqref{eq::para_gainorder} follows.

\bibliography{mybib}
\bibliographystyle{alpha}

\end{document}